\newcommand{\problemname}[1]{\gdef\@problemname{#1}}
\newcommand{\probleminput}[1]{\gdef\@probleminput{#1}}
\newcommand{\problemoutput}[1]{\gdef\@problemoutput{#1}}
\newcommand{\problemparameter}[1]{\gdef\@problemparameter{#1}}
  \par\addvspace{.5\baselineskip}
  \par\addvspace{.5\baselineskip}
\tikzstyle{vertex}=[circle,solid,draw=black,minimum size=14pt,inner sep=0pt]
\tikzstyle{cut}=[solid,draw=red]
\tikzset{>=stealth',every on chain/.append style={join},every join/.style={->}}
\tikzset{every picture/.style={draw=black, line width=.5pt},font=\sffamily}
\tikzset{every node/.style={circle, inner sep=0pt, minimum size=14},font=\sffamily}
\tikzset{>={Latex[color=,length=6pt,width=2.5pt 3]},font=\sffamily}
\tikzstyle{graph} = [fill=black!3, fill opacity=1, draw=black,font=\sffamily]
\tikzstyle{source} = [fill=yellow!30, fill opacity=1, draw=black,font=\sffamily]
\tikzstyle{joint} = [fill=black!3, fill opacity=1, draw=black,font=\sffamily]
\tikzstyle{tree} = [fill=red!30, fill opacity=1, draw=black,font=\sffamily]
\tikzstyle{stain} = [fill=orange, draw=black, line width=.7pt, fill opacity=.15, densely dotted]
\tikzstyle{stainEdge} = [fill=green, draw=black, line width=.7pt, fill opacity=.15, densely dotted]
\newtheorem{theorem}{Theorem}
\newtheorem{lemma}[theorem]{Lemma}
\newtheorem{claim}[theorem]{Claim}
\newtheorem{fact}[theorem]{Fact}
\newtheorem{observation}[theorem]{Observation}
\newtheorem{definition}[theorem]{Definition}
\newtheorem*{conjecture*}{Conjecture}
\newtheorem*{question*}{Open Problem}
\newcommand{\N}{\ensuremath{\mathbb{N}}}
\newcommand{\Q}{\ensuremath{\mathbb{Q}}}
\newcommand{\R}{\ensuremath{\mathbb{R}}}
\newcommand{\scB}{\mathcal{B}}
\newcommand{\scC}{\mathcal{C}}
\newcommand{\scG}{\mathcal{G}}
\newcommand{\scH}{\mathcal{H}}
\newcommand{\ccFPT}{\ensuremath{\mathsf{FPT}}}
\newcommand{\W}[1]{\ensuremath{\mathsf{W[#1]}}}
\mathchardef\mhyphen="2D
\renewcommand{\to}{\!\rightarrow\!}
\newcommand{\homs}[2]{\mbox{\ensuremath{\mathsf{Hom}(#1 \to #2)}}}
\newcommand{\subs}[2]{\mbox{\ensuremath{\mathsf{Sub}(#1 \to #2)}}}
\newcommand{\indsubs}[2]{\mbox{\ensuremath{\mathsf{IndSub}(#1 \to #2)}}}
\newcommand{\homsprob}{\ensuremath{\textsc{Hom}}}
\newcommand{\dirhomsprob}{\ensuremath{\textsc{DirHom}}}
\newcommand{\dirhomsprobd}{\ensuremath{\textsc{DirHom}_\mathrm{d}}}
\newcommand{\dirsubsprobd}{\ensuremath{\textsc{DirSub}_\mathrm{d}}}
\newcommand{\dirsubsprob}{\ensuremath{\textsc{DirSub}}}
\newcommand{\dirindsubsprobd}{\ensuremath{\textsc{DirIndSub}_\mathrm{d}}}
\newcommand{\dirindsubsprob}{\ensuremath{\textsc{DirIndSub}}}
\newcommand{\cphomsproba}{\ensuremath{\textsc{cp-Hom}_{\mathrm{a}}}}
\newcommand{\cpdirhomsprobd}{\ensuremath{\textsc{cp-DirHom}_{\mathrm{d}}}}
\newcommand{\indsubsprob}{\ensuremath{\textsc{IndSub}}}
\newcommand{\subsprob}{\ensuremath{\textsc{Sub}}}
\newcommand{\CSP}{\ensuremath{\textsc{CSP}}}
\newcommand{\fptred}{\ensuremath{\leq^{\mathrm{fpt}}_{\mathrm{T}}}}
\newcommand{\lovasz}{Lov{\'{a}}sz}
\newcommand{\arity}{\ensuremath{\mathsf{a}}}
\newcommand{\aw}{\operatorname{aw}} 
\newcommand{\scE}{\mathcal{E}}
\newcommand{\reach}[1]{\mathcal{R}(#1)}
\newcommand{\reachv}[1]{R(#1)}
\newcommand{\dsplit}[1]{\vec{#1}^2}
\newcommand{\scount}[1]{\alpha_s(#1)}
\newcommand{\reduct}[1]{\Gamma(#1)}
\newcommand{\fwidth}[1]{{#1}\mhyphen\mathsf{width}}
\newcommand{\asub}{{\mathsf{sub}}}
\newcommand{\aindsub}{{\mathsf{indsub}}}
\newcommand{\SCCquot}[1]{\vec{#1}\mathbin{{/}{\sim}}\xspace}
\newcommand{\Hsim}{\SCCquot{H}}
\title{The Complexity of Pattern Counting in Directed Graphs, Parameterised by the Outdegree\footnote{This research was funded in whole, or in part, by the Royal Society project "RAISON DATA" (Project reference: RP\textbackslash R1\textbackslash 201074) and by the Google Focused Award ``Algorithms and Learning for AI'' (ALL4AI). For the purpose of Open Access, the authors have applied a CC BY public copyright licence to any Author Accepted Manuscript version arising from this submission. All data is provided in full in the results section of this paper.}}
\begin{document}

\author{Marco Bressan \\ Department of Computer Science\\ University of Milan\\ Italy 
\and
Matthias Lanzinger \\ Department of Computer Science \\ University of Oxford\\ United Kingdom 
\and
Marc Roth \\ Department of Computer Science \\ University of Oxford\\ United Kingdom 
}

\maketitle
\begin{abstract}
We study the fixed-parameter tractability of the following fundamental problem: given two directed graphs $\vec H$ and $\vec G$, count the number of copies of $\vec H$ in $\vec G$. The standard setting, where the tractability is well understood, uses only $|\vec H|$ as a parameter. In this paper we take a step forward, and adopt as a parameter $|\vec H|+d(\vec G)$, where $d(\vec G)$ is the maximum outdegree of $|\vec G|$. Under this parameterization, we completely characterize the fixed-parameter tractability of the problem in both its non-induced and induced versions through two novel structural parameters, the \emph{fractional cover number} $\rho^*$ and the \emph{source number} $\alpha_s$. On the one hand we give algorithms with running time $f(|\vec H|,d(\vec G)) \cdot |\vec G|^{\rho^*\!(\vec H)+O(1)}$ and $f(|\vec H|,d(\vec G)) \cdot |\vec G|^{\alpha_s(\vec H)+O(1)}$ for counting respectively the copies and induced copies of $\vec H$ in $\vec G$; on the other hand we show that, unless the Exponential Time Hypothesis fails, for any class $\vec C$ of directed graphs the (induced) counting problem is fixed-parameter tractable if and only if $\rho^*(\vec C)$ ($\alpha_s(\vec C)$) is bounded.
These results explain how the orientation of the pattern can make counting easy or hard, and prove that a classic algorithm by Chiba and Nishizeki and its extensions (Chiba, Nishizeki SICOMP 85; Bressan Algorithmica 21) are optimal unless ETH fails.

Our proofs consist of several layers of parameterized reductions that preserve the outdegree of the host graph. To start with, we establish a tight connection between counting \emph{homomorphisms} from $\vec H$ to $\vec G$ to \#CSP, the problem of counting solutions of constraint satisfactions problems, for special classes of patterns that we call \emph{canonical DAGs}. To lift these results from canonical DAGs to arbitrary directed graphs, we exploit a combination of several ingredients: existing results for \#CSPs (Marx JACM 13; Grohe, Marx TALG 14), an extension of graph motif parameters (Curticapean, Dell, Marx STOC 17) to our setting, the introduction of what we call \emph{monotone reversible minors}, and careful analysis of quotients of directed graphs in order to relate their adaptive width and fractional hypertree width as a function to our novel parameters.
Along the route we establish a novel bound of the integrality gap for the fractional independence number of hypergraphs based on adaptive width, which might be of independent interest.
\end{abstract}

\thispagestyle{empty}

\pagebreak

\setcounter{page}{1}

\section{Introduction}
We study the complexity of the following fundamental counting problem: given two directed graphs, $\vec H$ (the ``pattern'') and $\vec G$ (the ``host''), count the number of occurrences or induced occurrences of $\vec H$ in $\vec G$. This problem, known as \emph{subgraph counting}, \emph{motif counting}, or \emph{pattern counting}, has gained great popularity because of its apparent ubiquity in a diverse selection of fields, from social network analysis~\cite{UganderBK13} to network science~\cite{Miloetal02,Miloetal04}, and from database theory~\cite{DurandM15,ChenM16,DellRW19,Arenasetal21,FockeGRZ22} and data mining~\cite{AhmedNRD15,Babis17} to bioinformatics~\cite{Nogaetat08,Schilleretal15}, phylogeny~\cite{Kuchaievetal10}, and genetics~\cite{ShenOrrMM02,Tranetal13}. For this reason, subgraph counting in general has received significant attention from the theoretical community in the last two decades, with a flurry of novel techniques and exciting results~\cite{ArvindR02,FlumG04,Curticapean13,CurticapeanM14,JerrumM15,Meeks16,CurticapeanDM17,BrandDH18,LokshtanovBSZ21,BressanR21,BeraGLSS22,FockeR22}.

Since subgraph counting in general is hard (think of counting cliques), it is common to \emph{parameterise} the problem so as to allow for a ``bad'' dependence on some quantity that is believed to be small in practice~\cite{FlumG04,CurticapeanDM17}. The standard parameterisation is by the size of $\vec H$, that is, $|\vec H|=|V(\vec H)|+|E(\vec H)|$. In that case, one says the problem is \emph{fixed-parameter tractable}, or in the class \ccFPT, if for some (computable) function $f$ it admits an algorithm that runs in time $f(|\vec H|) \cdot |\vec G|^{O(1)}$ for all $\vec H$ and $\vec G$. This means one considers as efficient an algorithm with running time, say, $2^{|\vec H|} \cdot |\vec G|$, but not one with running time $|\vec G|^{|\vec H|}$. The rationale is that in practice $\vec H$ is often very small compared to $\vec G$, thus a running time of $2^{|\vec H|}\cdot |\vec G|$ is better than one of $|\vec G|^{|\vec H|}$. Under this parameterisation, the tractability of the problem is well understood: for the undirected version, both the induced and non-induced versions are in \ccFPT\ if and only if certain invariants of $H$ are bounded~\cite{ChenTW08,CurticapeanM14, CurticapeanDM17}, and it is not hard to show that the same holds for the directed case as well (see Section~\ref{sec:results}).

While the parameterisation by $|\vec H|$ is standard, it is also quite restrictive. Consider for instance the problem of counting the \emph{induced} copies of $\vec{H}$ in $\vec{G}$: when parameterised by $|\vec H|$, it is well-known that the problem is in $\ccFPT$ if and only if the pattern size $|\vec{H}|$ is bounded (see~\cite{ChenTW08} and Appendix~\ref{app:unbounded_degree}). Thus, under this parameterization, one can efficiently count the induced copies of just a \emph{finite} number of patterns. Suppose instead the parameter is $|\vec H|+d(\vec G)$, where $d(\vec G)$ is the maximum outdegree of $\vec G$; the problem is then considered tractable if for some (computable) function $f$ it admits an algorithm that runs in time $f(|\vec H|,d(\vec G)) \cdot |\vec G|^{O(1)}$ for all $\vec H$ and $\vec G$. It is not hard to see that, under this parameterization, the problem becomes $\ccFPT$ even for \emph{infinite} families of patterns. Let indeed $\vec H$ be the acyclic orientation of a $k$-clique: since $\vec H$ has only one source $s$ (a vertex of indegree $0$), one can first guess the image of $s$ in $\vec{G}$ and then iterate over all $(k-1)$-vertex subsets in the out-neighbourhood of $s$, which yields an algorithm with running time $O(d(\vec G)^{|\vec H|} \cdot |\vec{G}|)$. This idea was in fact extended to counting subgraphs in degenerate host graphs, which have orientations with bounded outdegree~\cite{Bera-ITCS20,BeraS20,Bera-SODA21,BressanR21,GishbolinerLSY22,BeraGLSS22} (see Section~\ref{sec:related_work} for a detailed discussion). Thus, adopting $|\vec H|+d(\vec G)$ as a parameter can open the door to a richer landscape of tractability.

The goal of the present work is to understand precisely what that landscape is; that is, to understand when the aforementioned problems, parameterised by $|\vec H|+d(\vec G)$, are in \ccFPT\ as a function of the pattern $\vec H$. In addition to the aforementioned example of pattern counting in degenerate graphs, there is another reason to consider $d(\vec G)$ as part of the parameter when counting directed subgraphs: several ``real-world'' directed graphs that are natural ``hosts'' have small or constant outdegree. This is true for many web graphs or online social network graphs, where the maximum outdegree is much smaller than the average degree or the maximum indegree; and it is true by construction in graphs produced by generative models such as preferential attachment~\cite{PrefAttach}.
As is customary, to express the dependence on the structure of $\vec H$, we formulate the problems as a function of a \emph{class} $\vec C$  of patterns --- for instance, one may let $\vec C$ be the class of all directed complete graphs, or of all directed trees. Let then $\vec C$ denote an arbitrary family of directed graphs, and for any $\vec H$ and $\vec G$ let $\#\subs{\vec H}{\vec G}$ and $\#\indsubs{\vec H}{\vec G}$ denote respectively the number of copies and induced copies of $\vec H$ in $\vec G$. 

\noindent Our parameterised counting problems are formally defined as follows:
\begin{parameterizedproblem}
\problemname{$\#\dirsubsprobd(\vec C)$}
\probleminput{a pair of digraphs $(\vec H, \vec G)$ with $\vec H \in \vec C$}
\problemoutput{\#\subs{\vec H}{\vec G}}
\problemparameter{$|\vec H|+d(\vec{G})$}
\end{parameterizedproblem}
\begin{parameterizedproblem}
\problemname{$\#\dirindsubsprobd(\vec C)$}
\probleminput{a pair of digraphs $(\vec H, \vec G)$ with $\vec H \in \vec C$}
\problemoutput{\#\indsubs{\vec H}{\vec G}}
\problemparameter{$|\vec H|+d(\vec{G})$}
\end{parameterizedproblem}
\noindent The goal of the present work is to understand which structural properties of the elements of $\vec C$ determine whether $\#\dirsubsprobd(\vec C)$ and $\#\dirindsubsprobd(\vec C)$ are in $\ccFPT$. 

The rest of this manuscript is organised as follows. Section~\ref{sec:results} gives a concise overview of our results and their significance. Section~\ref{sec:overview} gives a detailed overview of our proofs and their key technical insights. The complete proofs of all our claims can be found in the remaining sections.

\section{Results}\label{sec:results}
We give complete complexity classifications for $\#\dirsubsprobd(\vec C)$ and $\#\dirindsubsprobd(\vec C)$, into \ccFPT\ versus non-\ccFPT\ cases, as a function of $\vec C$. These complexity classifications, which are formally stated below, have the succinct form ``The problem is in \ccFPT\ if and only if $p(\vec C)$ is bounded'', where $p$ is some parameter measuring the structural complexity of the graphs in $\vec C$. The definition of those parameters is not elementary and requires the introduction of some ancillary notation and definitions, which we are going to do next. In order to understand \emph{why} those parameters are the right ones, instead, one should take the technical tour of Section~\ref{sec:overview}.

Let us then introduce our structural parameters. First, we need to define \emph{reachability hypergraphs} and \emph{contours}. Let $\vec H$ be a directed graph, and let $\mathcal S$ be the set of its strongly connected components. Denote by $\sim$ the equivalence relation over $V(\vec H)$ given by $\mathcal S$, and let $\Hsim$ be the quotient of $\vec H$ w.r.t.\ $\sim$; with a little abuse of notation we let $\mathcal S$ be the vertex set of $\Hsim$. A strongly connected component $S \in V(\Hsim)$ is a \emph{source} if it has indegree $0$ in $\Hsim$. Let $S_1,\ldots,S_k$ be the set of all such sources. For any $S \in V(\Hsim)$ let $R(S)$ be the set of vertices reachable from $S$ in $\vec H$.
\begin{definition}
The \emph{reachability hypergraph} of $\vec{H}$, denoted by $\reach{\vec{H}}$, is the hypergraph with vertex set $V(\vec{H})$ and edge set $\{\reachv{S_i} : i \in [k]\}$.
\end{definition}
\noindent Intuitively, $\reach{\vec{H}}$ measures the complexity of $\vec H$ in terms of ``reachability relationships''. However, to state our classifications correctly, we need to consider a slight modification of $\reach{\vec{H}}$.
\begin{definition}\label{def:contour_intro}
The \emph{contour} of $\vec{H}$, denoted by $\reduct{\vec{H}}$, is the hypergraph $\reach{\vec{H}} \setminus \cup_{i \in [k]}S_i$.
\end{definition}
\noindent For instance, if $\vec H_n$ is obtained by orienting the edges of the $1$-subdivision of the complete graph $K_n$ towards the original vertices, then $\reduct{\vec H_n}=K_n$.

\begin{figure}[t]
    \centering
    \includegraphics[scale=0.72]{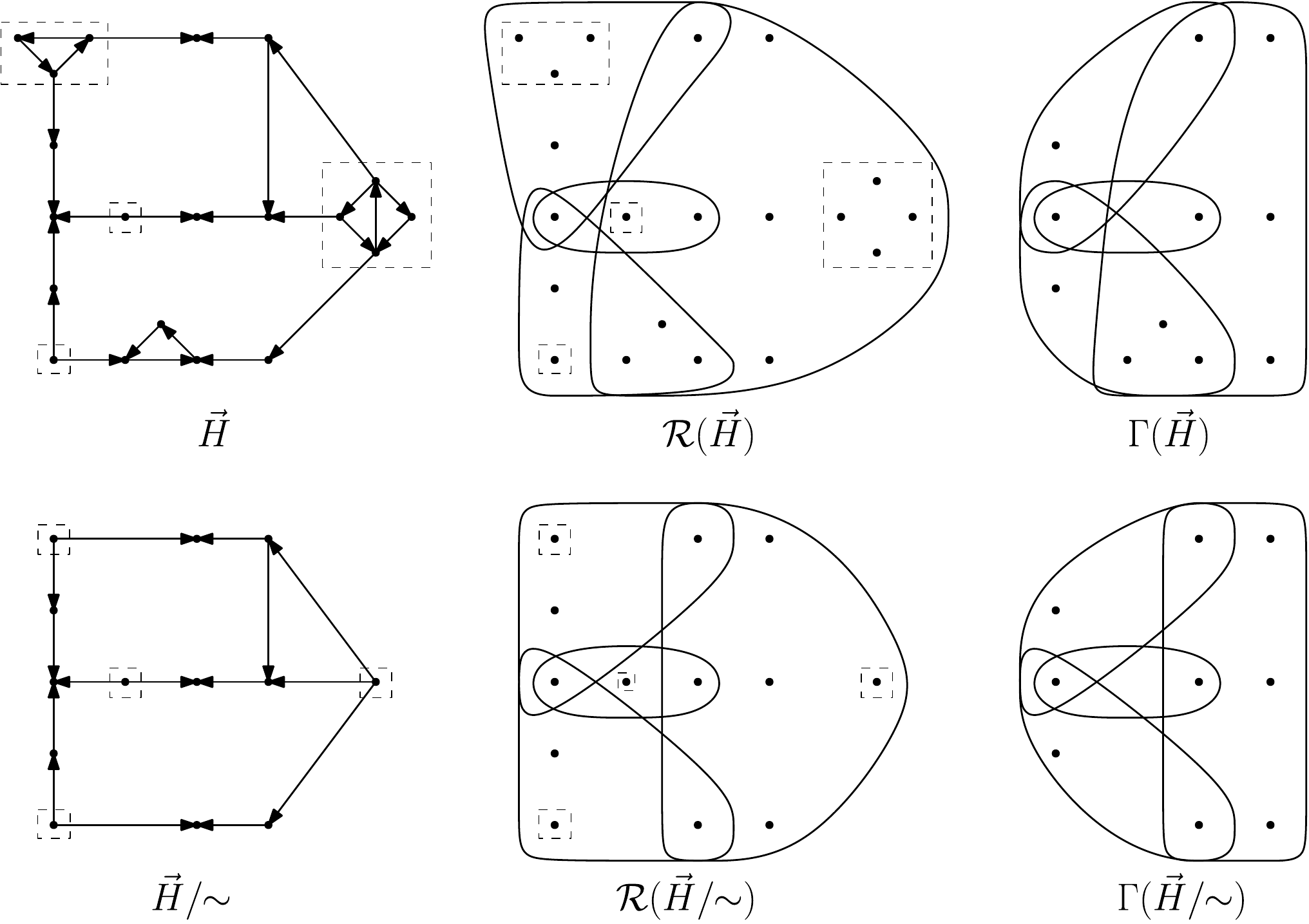}
    \caption{\small \emph{(Top:)} A directed graph $\vec{H}$, the sources of which are highlighted in dashed boxes, its reachability hypergraph $\reach{\vec{H}}$, and its contour $\reduct{\vec{H}}$. \emph{(Bottom:)} The same constructions for the DAG $\Hsim$ obtained from $\vec{H}$ by identifying all strongly connected components. Clearly, the source number is invariant under taking the quotient w.r.t.\ $\sim$, that is, $\scount{\vec{H}} =\scount{\Hsim}$. We will see that the same is true for the fractional cover number, that is, $\rho^\ast(\vec{H})=\rho^\ast(\Hsim)$. Consequently, it \emph{always} suffices to consider the DAG $\Hsim$ for determining the complexity of counting copies and induced copies of $\vec{H}$.}
    \label{fig:main_intro}
\end{figure}

Finally, we introduce our directed graph invariants, the \emph{fractional cover number} and the \emph{source number}. Let $\scH$ be a hypergraph. A function $\gamma:E(\scH) \to [0,\infty)$ is a \emph{fractional edge cover} of $\scH$ if for every $v \in V(\scH)$
\begin{equation}\label{eq:intro_frac_number}
    \sum_{e \in E(\scH) : v\in e} \gamma(e) \ge 1 \,.
\end{equation}
The weight of $\gamma$ is $\sum_{e \in E(\scH)} \gamma(e)$. The fractional edge cover number of $\scH$, denoted by $\rho^*(\scH)$, is the smallest weight of any fractional edge cover of $\scH$.
Then:
\begin{definition}
The \emph{fractional cover number} of $\vec H$ is $\rho^*(\vec H)=\rho^*(\reduct{\vec H})$. The \emph{source number} of $\vec H$ is the number of sources in $\mathcal S$, denoted by $\alpha_s(\vec H)$; in other words, the number of strongly connected components of $\vec H$ that are not reachable from any other connected component.

\end{definition}
\noindent Intuitively, both $\rho^*$ and $\alpha_s$ measure the complexity of covering $\vec H$ through its sources. Our main result, the following dichotomy theorem, says that such a ``covering complexity'' determines precisely the fixed-parameter tractability of our problems. For any class $\vec C$ of directed graphs let $\rho^*(\vec C) = \sup_{\vec H \in \vec C} \rho^*(H)$ and $\alpha_s(\vec C) = \sup_{\vec H \in \vec C} \alpha_s(H)$.
\begin{theorem}\label{thm:main_intro}
If the Exponential Time Hypothesis holds, then:
\begin{enumerate}\itemsep0pt
    \item $\#\dirsubsprobd(\vec{C}) \in \ccFPT$ if and only if $\rho^*(\vec{C}) < \infty$
    \item $\#\dirindsubsprobd(\vec{C}) \in \ccFPT$ if and only if $\alpha_s(\vec{C}) < \infty$
\end{enumerate}
\end{theorem}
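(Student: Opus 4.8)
The plan is to prove both biconditionals by chaining parameterised Turing reductions that never increase the maximum outdegree of the host, ending in the \#CSP dichotomies of Marx and Grohe--Marx, and then translating the hypergraph width parameters that govern \#CSP back into $\rho^*$ and $\alpha_s$. First I would set up, following the graph-motif-parameter technique of Curticapean--Dell--Marx adapted to directed graphs and to the parameter $|\vec H|+d(\vec G)$, the linear-combination identities $\#\subs{\vec H}{\vec G} = \sum_{\vec F} a_{\vec F}\cdot\#\homs{\vec F}{\vec G}$, with the sum ranging over the spasm of $\vec H$ (quotients of $\vec H$ by partitions into independent sets) and all $a_{\vec F}\neq 0$, and $\#\indsubs{\vec H}{\vec G} = \sum_{\vec F} b_{\vec F}\cdot\#\homs{\vec F}{\vec G}$, with the sum ranging over the union of the spasms of all supergraphs of $\vec H$ on $V(\vec H)$ and all $b_{\vec F}\neq 0$. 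These reductions preserve $d(\vec G)$, and by interpolation each individual $\#\homs{\vec F}{\cdot}$ is recoverable (again preserving outdegree) from finitely many instances of $\#\subs{\cdot}{\cdot}$. Hence, for a class $\vec C$, the problem is in \ccFPT\ if and only if counting homomorphisms from $\vec F$ into bounded-outdegree hosts is in \ccFPT\ for every $\vec F$ in the relevant (extended) spasm of every $\vec H\in\vec C$.

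\textbf{Stage~2 (homomorphisms to \#CSP).}
The structural heart is the observation that, when $d(\vec G)\le d$, fixing the image of a source strongly connected component of $\vec F$ confines the images of everything it reaches to a region of size $d^{O(|\vec F|)}$, so the reachable sets $R(S_i)$ act as bounded-arity constraints, while the source components themselves can be summed out cheaply: one \emph{counts} rather than \emph{enumerates} their extensions, exactly as one evaluates $\sum_v \deg^-(v)^{\underline{n}}$ without listing $n$-tuples. In the ``canonical DAG'' normal form (obtained by contracting each source SCC), this yields a tight two-way, outdegree-preserving reduction between $\#\homs{\vec F}{\cdot}$ on bounded-outdegree hosts and \#CSP whose constraint hypergraph is the contour $\reduct{\vec F}$. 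Plugging in Grohe--Marx gives the algorithm running in time $|\vec G|^{\rho^*(\reduct{\vec F})+O(1)}$, while Marx's hardness results show that unbounded adaptive width (and hence unbounded fractional hypertree width) of the class of contours makes the counting problem non-\ccFPT\ under ETH; the \emph{monotone reversible minors} are the combinatorial device that pulls this hardness back from abstract hypergraphs to concrete directed patterns living in bounded-outdegree hosts.

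\textbf{Stage~3 (back to $\rho^*$ and $\alpha_s$).}
It remains to evaluate $\sup_{\vec F}\rho^*(\reduct{\vec F})$ (and the analogous adaptive-width supremum) over the relevant (extended) spasm. For the non-induced problem one shows that identifying vertices never increases the fractional edge cover number of the contour and that $\rho^*(\vec H)=\rho^*(\Hsim)$, so $\vec H$ (equivalently $\Hsim$) is the worst element of its own spasm and the supremum is exactly $\rho^*(\vec H)$; part~(1) follows. For the induced problem the extended spasm also contains patterns whose reachable sets are pairwise independent, pushing $\rho^*$ of the contour up to its combinatorial maximum, the number of sources; bounding the supremum from both sides by $\alpha_s(\vec H)$ needs a careful analysis of how quotients and edge-additions affect the source count, together with the new integrality-gap bound showing that bounded adaptive width of a hypergraph forces its fractional independence number to be bounded by a function of the width. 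Combining the three stages gives both biconditionals, with running times $f(|\vec H|,d(\vec G))\cdot|\vec G|^{\rho^*(\vec H)+O(1)}$ and $f(|\vec H|,d(\vec G))\cdot|\vec G|^{\alpha_s(\vec H)+O(1)}$ for the two \ccFPT\ directions.

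\textbf{Main obstacle.}
I expect Stage~3 to be the crux: relating the \emph{worst} hypergraph width appearing in the (extended) spasm to the new parameters, and in particular proving the quotient-invariance of $\rho^*$ and $\alpha_s$ and the integrality-gap bound between adaptive width and the fractional independence number. These are the genuinely new combinatorial ingredients; the reduction to \#CSP and the invocation of the Marx and Grohe--Marx dichotomies, though delicate because the outdegree of the host must be controlled at every step, are comparatively routine.
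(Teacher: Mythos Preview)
Your overall architecture matches the paper: express (induced) subgraph counts as linear combinations of homomorphism counts via M\"obius inversion, use Dedekind-style interpolation to recover individual terms (preserving outdegree), reduce homomorphism counting to \#CSP on the reachability hypergraph for the algorithm, and for hardness reduce from \#CSP on contours via canonical DAGs and monotone reversible minors. Two points are misplaced, however.

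First, you have the integrality-gap bound assigned to the wrong dichotomy. In the paper it is the \emph{non-induced} lower bound that needs it. When $\rho^*(\vec C)=\alpha^*(\reduct{\vec C})=\infty$, one cannot conclude directly that the contours have unbounded adaptive width (bounded adaptive width is strictly weaker than bounded $\alpha^*$). The paper therefore splits into two cases on the \emph{integral} independence number $\alpha(\reduct{\vec C})$: if it is unbounded, one finds large ``induced matching gadgets'' in $\Hsim$ and builds quotients admitting directed splits of cliques as MR minors; if it is bounded, then the inequality $\alpha(\scH)\ge \tfrac12+\alpha^*(\scH)/(4\aw(\scH))$ forces $\aw(\reduct{\vec C})=\infty$, and one passes to canonical-DAG MR minors (via sink deletions) that witness this. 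Your Stage~3 for subgraphs only handles the algorithmic direction (quotients cannot increase $\rho^*$) and leaves this case split implicit.

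Second, the induced lower bound is actually simpler than you suggest and does \emph{not} use the integrality gap. With $\alpha_s(\vec C)=\infty$ one can, given any target graph $F$, add arcs among $|V(F)|+|E(F)|$ source components of $\vec H$ so that the resulting arc supergraph $\vec H'$ has $\dsplit{F}$ as an MR minor (contract SCCs, then delete sinks); since arc supergraphs always survive with nonzero coefficient in the induced expansion, Lemma~\ref{lem:main_hardness_homs} applies directly. Relatedly, your claim that ``all $b_{\vec F}\neq 0$'' in the induced expansion is too strong: the paper only guarantees nonzero coefficients for arc supergraphs of $\vec H$ (coefficients of proper quotients of arc supergraphs may cancel), and this weaker statement is exactly what the hardness argument uses.
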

\noindent Note that ETH is used only by the ``only if'' direction. While the statement of Theorem~\ref{thm:main_intro} is simple, its proof is nontrivial --- virtually all of this manuscript is devoted to it. To put the theorem into perspective, Table~\ref{tab:dichotomy} compares it to dichotomies for the other variants of the problem. We also observe that Theorem~\ref{thm:main_intro} can be slightly strengthened: we can show the hardness direction even for acyclic host graphs.

As a consequence of Theorem~\ref{thm:main_intro}, we can claim the optimality (in an FPT sense) of the well-known approach to counting the induced copies of a DAG $\vec H$ in a host $\vec G$ of bounded outdegree, used in several recent works on counting in hosts of bounded degeneracy~\cite{Bera-ITCS20,BeraS20,Bera-SODA21,BressanR21,GishbolinerLSY22,BeraGLSS22}. This approach consists in guessing the images of the sources of $\vec H$ in $\vec G$, and has running time $f(|\vec H|,d(\vec G)) \cdot |\vec G|^{\alpha_s(\vec H)+O(1)}$. By Theorem~\ref{thm:main_intro}, unless ETH fails the dependence on $\alpha_s(\vec H)$ at the exponent cannot be avoided, hence that approach is optimal in an FPT sense.

\begin{table}[t]\centering
    \begin{tabularx}{0.97\textwidth}{lccc}
        \toprule
        \begin{tabular}{@{}l}
             $~$
        \end{tabular}&\begin{tabular}{@{}l}
             $\#\subsprob(C)$
        \end{tabular} &\begin{tabular}{@{}l}
             $\#\dirsubsprob(\vec{C})$
        \end{tabular}& \begin{tabular}{@{}l}
            $\#\dirsubsprobd(\vec{C})$
        \end{tabular}\\
        \cmidrule(r{2ex}){2-4}
        \begin{tabular}{l}
             FPT criterion
        \end{tabular} & \begin{tabular}{c}
             $\mathsf{vc}(C)<\infty$ \\
             Curticapean, Marx~\cite{CurticapeanM14}
        \end{tabular} & \begin{tabular}{c}
             $\mathsf{vc}(\vec{C})<\infty$ \\
             Appendix~\ref{app:unbounded_degree}
        \end{tabular} & \begin{tabular}{c}
             $\rho^\ast(\vec{C})<\infty$ \\
             Theorem~\ref{thm:main_intro}
        \end{tabular} \\[8pt]
        \toprule
        \begin{tabular}{@{}l}
             $~$
        \end{tabular}&\begin{tabular}{@{}l}
             $\#\indsubsprob(C)$
        \end{tabular} &\begin{tabular}{@{}l}
             $\#\dirindsubsprob(\vec{C})$
        \end{tabular}& \begin{tabular}{@{}l}
            $\#\dirindsubsprobd(\vec{C})$
        \end{tabular}\\
        \cmidrule(r{2ex}){2-4}
        \begin{tabular}{l}
             FPT criterion
        \end{tabular} & \begin{tabular}{c}
             $|C|<\infty$ \\
             Chen, Thurley, Weyer~\cite{ChenTW08}
        \end{tabular} & \begin{tabular}{c}
             $|\vec{C}|<\infty$ \\
             Appendix~\ref{app:unbounded_degree}
        \end{tabular} & \begin{tabular}{c}
             $\scount{\vec{C}}<\infty$ \\
             Theorem~\ref{thm:main_intro}
        \end{tabular} \\[8pt]
        \bottomrule
        \end{tabularx}
     \caption[caption]{\small Our results for $\#\dirsubsprobd(\vec{C})$ and $\#\dirindsubsprobd(\vec{C})$ compared against $\#\indsubsprob(\vec{C})$ and $\#\dirindsubsprob(\vec{C})$, which are their counterparts parameterised by $|\vec{H}|$, and against $\#\subsprob(C)$ and $\#\indsubsprob(C)$, which are their \emph{undirected} counterparts parameterised by $|H|$. Here $\mathsf{vc}$ denotes the vertex cover number; for directed graphs this is is just the vertex cover number of the underlying undirected graph. The results for $\#\indsubsprob(\vec{C})$ and $\#\dirindsubsprob(\vec{C})$ are  folklore in the community.}
      \label{tab:dichotomy}
\end{table}

It shall be noted that, for $\#\dirindsubsprobd(\vec{C})$, the non-\ccFPT case in Theorem~\ref{thm:main_intro} also yields $\#\W{1}$-hardness (see Section~\ref{sec:prelims} for a definition of $\#\W{1}$). For $\#\dirsubsprobd(\vec{C})$ instead we do not prove $\#\W{1}$-hardness; the reason is that our proof uses a reduction from certain families of $\#\textsc{CSP}$ instances which by~\cite{Marx13} we know to be not in $\ccFPT$ if ETH holds, but we do not know if they are $\#\W{1}$-hard too.

When the problems in Theorem~\ref{thm:main_intro} are in $\ccFPT$, we can show simple algorithms that solve them in time $f(|\vec H|,d(\vec G)) \cdot |\vec G|^{p(\vec H)+O(1)}$ where $p \in \{\rho^*,\alpha_s\}$. Formally, we prove:
\begin{theorem}\label{thm:main_intro_bounds}
For some computable function $f$ there is an algorithm solving $\#\dirsubsprobd(\vec{C})$ in time $f(|\vec{H}|,d(\vec G))\cdot |\vec{G}|^{\rho^*\!(\vec H)+O(1)}$. The same holds for $\#\dirindsubsprobd(\vec{C})$ with $\alpha_s$ in place of $\rho^*$.
\end{theorem}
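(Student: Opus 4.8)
The plan is to reduce, in both cases, the counting of (induced) copies of $\vec H$ to counting \emph{homomorphisms} from fixed digraphs into $\vec G$, and then to count homomorphisms from a fixed digraph into a bounded‑outdegree host $\vec G$ by encoding the task as a weighted $\#\textsc{CSP}$ instance whose constraint hypergraph is the contour $\reduct{\vec H}$ and whose relations have size linear in $|\vec G|$.

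\emph{Step 1 (subgraphs to homomorphisms).} Via the digraph analogue of the graph‑motif‑parameter framework of Curticapean, Dell and Marx, $\#\embs{\vec H}{\vec G}=\sum_{\sigma}\mu(\sigma)\cdot\#\homs{\vec H/\sigma}{\vec G}$, where $\sigma$ ranges over the partition lattice of $V(\vec H)$ and $\vec H/\sigma$ is the quotient digraph (terms whose quotient has a loop vanish since $\vec G$ is loopless), and $\#\subs{\vec H}{\vec G}=\#\embs{\vec H}{\vec G}/|\auts{\vec H}|$; likewise $\#\indsubs{\vec H}{\vec G}$ is a fixed linear combination of terms $\#\homs{\vec F}{\vec G}$ over the digraphs $\vec F$ obtained from $\vec H$ by adding arcs and then taking a quotient (the spasms of $\vec H$). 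As there are only $f(|\vec H|)$ such $\vec F$, it suffices to (i) compute $\#\homs{\vec F}{\vec G}$ in time $f(|\vec F|,d(\vec G))\cdot|\vec G|^{\rho^*(\vec F)+O(1)}$, and (ii) bound $\rho^*(\vec F)\le\rho^*(\vec H)$ when $\vec F$ is a quotient of $\vec H$, and $\rho^*(\vec F)\le\scount{\vec F}\le\scount{\vec H}$ when $\vec F$ is a spasm of $\vec H$. The first inequality of the latter chain is immediate: giving weight $1$ to each of the $\scount{\vec F}$ hyperedges $\reachv{S_i}\setminus\bigcup_jS_j$ of $\reduct{\vec F}$ is a fractional cover, since every vertex of $\vec F$ is reachable from some source strongly connected component. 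The monotonicity of $\rho^*$ under quotients and of $\scount{\cdot}$ under adding arcs and quotients follows from analysing how $\Hsim$, the reachability sets, and hence the contour transform under these operations.

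\emph{Step 2 (homomorphisms via a weighted $\#\textsc{CSP}$).} Fix $\vec H$, let $S_1,\dots,S_k$ be its source strongly connected components, pick $v_i\in S_i$, and set $e_i:=\reachv{S_i}\setminus\bigcup_{j}S_j$, a hyperedge of $\reduct{\vec H}$. Since each $\reachv{S_i}$ is closed under reachability, contains $S_i$ as its only source component, and $\bigcup_i\reachv{S_i}=V(\vec H)$, a homomorphism $\phi:\vec H\to\vec G$ amounts to a choice of $\phi$ on the contour vertices $V(\vec H)\setminus\bigcup_jS_j$ together with, independently for each $i$, a hom‑compatible choice on $S_i$; equivalently,
\begin{equation*}
\#\homs{\vec H}{\vec G}=\sum_{\phi:\,V(\vec H)\setminus\bigcup_jS_j\,\to\,V(\vec G)}\ \prod_{i=1}^{k} w_i\!\left(\phi|_{e_i}\right),\qquad w_i(\mathbf x):=\bigl|\{\psi:\vec H[\reachv{S_i}]\to\vec G \ :\ \psi|_{e_i}=\mathbf x\}\bigr| .
\end{equation*}
Because $\vec H[\reachv{S_i}]$ is reachable in its entirety from the single vertex $v_i$ and $\vec G$ has outdegree at most $d:=d(\vec G)$, fixing $\psi(v_i)$ leaves only $d^{O(|\vec H|)}$ candidates for $\psi$ (grow a BFS tree from $v_i$, branching at most $d$ at each of at most $|\vec H|$ steps, then verify the non‑tree arcs). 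Hence every $w_i$, stored as a table, is computable in time $|\vec G|\cdot d^{O(|\vec H|)}$ and has at most $N:=|\vec G|\cdot d^{O(|\vec H|)}$ nonzero entries. The displayed expression is a counting/aggregate query over the hypergraph $\reduct{\vec H}$ with relations of size at most $N$, and the standard $\#\textsc{CSP}$ machinery (AGM‑type size bounds with worst‑case optimal joins, or the algorithm of Grohe and Marx) evaluates it in time $N^{\rho^*(\reduct{\vec H})+O(1)}=f(|\vec H|,d)\cdot|\vec G|^{\rho^*(\vec H)+O(1)}$: materialise the join of the supports of the $w_i$ in time $\tilde O\!\bigl(N^{\rho^*(\vec H)}\bigr)$ and sum $\prod_iw_i(\phi|_{e_i})$ over its tuples, with empty $e_i$ contributing a multiplicative constant $w_i()$. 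This establishes (i), and with Step 1 it yields Theorem~\ref{thm:main_intro_bounds}.

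\emph{Main obstacle.} The algorithmic core is routine given known $\#\textsc{CSP}$ algorithms; bounded outdegree is used precisely to obtain the bound $N\le f(|\vec H|,d(\vec G))\cdot|\vec G|$, which turns the AGM exponent $\rho^*(\reduct{\vec H})$ into the claimed exponent on $|\vec G|$. The genuinely delicate point is part (ii): that passing to quotients and spasms never increases $\rho^*$ (respectively $\scount{\cdot}$), i.e.\ that the contour behaves monotonically under identifying vertices and adding arcs — this is where the structural analysis of quotients of directed graphs announced in the introduction enters.
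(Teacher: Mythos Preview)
Your proposal is correct and follows the paper's overall strategy: expand (induced) subgraph counts in the homomorphism basis, bound the relevant width parameter of every surviving term by $\rho^*(\vec H)$ (resp.\ $\scount{\vec H}$), and solve each homomorphism term as a $\#\textsc{CSP}$ instance whose relations have size $f(|\vec H|,d)\cdot|\vec G|$ thanks to bounded outdegree. You also correctly flag the monotonicity of $\rho^*$ under quotients as the nontrivial structural lemma --- this is exactly the paper's Lemma~\ref{lem:main_technical_subs}, proved by a case analysis on how merging two vertices affects the sources of $\Hsim$.

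The one genuine difference is in Step~2. The paper encodes $\#\homs{\vec F}{\vec G}$ as an \emph{unweighted} $\#\textsc{CSP}$ over the full reachability hypergraph $\reach{\vec F}$ (keeping the sources as variables), and then shows $\mathsf{fhtw}(\reach{\vec F})\le\rho^*(\vec F)$ via an explicit star-shaped tree decomposition whose centre bag is the set of non-source vertices (Lemma~\ref{lem:subs_quotients_small_fhtw}); the Grohe--Marx algorithm then gives the exponent. You instead marginalise the sources out first, obtaining a \emph{weighted} $\#\textsc{CSP}$ directly over the contour $\reduct{\vec F}$, and appeal to the AGM bound on that hypergraph. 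Your route is slightly more direct: it bypasses the tree-decomposition lemma entirely, since $\rho^*(\reduct{\vec F})=\rho^*(\vec F)$ by definition and the AGM bound applies to the full join without further decomposition. The paper's route, on the other hand, keeps everything unweighted and relies only on the standard $\mathsf{fhtw}$ machinery. (A minor inaccuracy: the paper allows loops in $\vec G$, so quotients with loops need not vanish; but this is harmless, as your monotonicity bound covers those terms too.)
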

\noindent We point out that theorems \ref{thm:main_intro} and \ref{thm:main_intro_bounds} remain true in the (edge or vertex) weighted setting, too.

A simple example shows Theorem~\ref{thm:main_intro} and Theorem~\ref{thm:main_intro_bounds} in action.
Let $\Delta_1$ and $\Delta_2$ be respectively the cyclic and acyclic orientations of $K_3$, and for each $k \in \N$ let $\Delta_1^k$ and $\Delta_2^k$ consist of $k$ disjoint copies of respectively $\Delta_1$ and $\Delta_2$. Finally, let $\vec{C}_1=\{\Delta_1^k : k \in \N\}$ and $\vec{C}_2=\{\Delta_2^k : k \in \N\}$. Although the patterns are rather elementary, establishing the tractability of $\#\dirsubsprobd(\vec{C}_1)$ and $\#\dirsubsprobd(\vec{C}_2)$ ``by hand'' can be laborious. Theorem~\ref{thm:main_intro} and Theorem~\ref{thm:main_intro_bounds} answer immediately: $\rho^*(\Delta_1^k)=0$, since in $\Delta_1^k$ every vertex belongs to some source, hence $\#\dirsubsprobd(\vec{C}_1)$ is fixed-parameter tractable and solvable in time $f(|\vec H|,d(\vec G)) \cdot |\vec G|^{O(1)}$; but $\rho^*(\Delta_2^k)=k$, since $\reduct{\Delta_2^k}$ has $k$ disjoint hyperedges, hence $\#\dirsubsprobd(\vec{C}_2)$ is not fixed-parameter tractable unless ETH fails. One can also see that $\alpha_s(\Delta_1^k)=\alpha_s(\Delta_2^k)=k$; therefore, by Theorem~\ref{thm:main_intro}, under ETH both $\#\dirindsubsprobd(\vec{C}_1)$ and $\#\dirindsubsprobd(\vec{C}_2)$ are not fixed-parameter tractable.

Another example helps appreciating the different between our parameterization and the standard one, as well as the necessity of $\rho^*$ being fractional. Let $H_k$ be the graph defined as follows. The vertices of $H_k$ are $U_k \cup D_k$ where $U_k=\{1,\dots,2k\}$ and $D_k=\{A \subseteq U_k ~|~ |A|=k\}$; and for each $i\in U_k$, there is an edge between $i$ and $D\in D_k$ if and only if $i\in D$. 
Let $C$ be the class of all $H_k$. It is not hard to show that $H_k$ contains the subdivision of the $k$-clique as induced subgraph. Thus the vertex-cover number of $C$ is unbounded and, assuming ETH, Table~\ref{tab:dichotomy} yields that $\#\subsprob(C)$ and $\#\dirsubsprob(\vec{C})$ are not fixed-parameter tractable for any class $\vec{C}$ obtained by orienting the graphs in $C$. However, if we parameterise also by the outdegree of the host, then the situation becomes much more subtle. Let $\vec{C}$ be the class of digraphs obtained by orienting the edges in the $H_k$ from $U_k$ to $D_k$; an argument similar to~\cite[Example~4.2]{GroheM14} shows that $\rho^\ast(\vec{H}_k)\leq 2$ for each $\vec{H}_k \in \vec{C}$, thus $\#\dirsubsprobd(\vec{C})$ is fixed-parameter tractable by Theorem~\ref{thm:main_intro_bounds}. Moreover,~\cite[Example~4.2]{GroheM14} show that any non-fractional cover of $\vec H_k$ has super-constant weight; this proves that considering the \emph{fractional} cover number $\rho^*$ is crucial; its integral counterpart cannot work. 

We conclude this section with a result of independent interest developed in our proofs. Let $\scH$ be a hypergraph. The independence number $\alpha(\scH)$ of $\scH$ is the size of the largest subset of $V(\scH)$ such that no two of its elements are contained in a common edge. The natural relaxation of this definition yields the \emph{fractional} independence number $\alpha^*(\scH)$. Our result is that the ratio between $\alpha^*(\scH)$ and $\alpha(\scH)$, i.e.\ the integrality gap of $\alpha$, is bounded by the \emph{adaptive width}\footnote{Note that adaptive width is equivalent to submodular width~\cite{Marx13}.} of~$\scH$~\cite{Marx10}.
\begin{theorem}\label{thm:independence_Gap_Intro}
Every hypergraph $\scH$ satisfies
$\alpha(\scH) \ge \frac{1}{2}+\frac{\alpha^{\ast}(\scH)}{4\aw(\scH)}$.
\end{theorem}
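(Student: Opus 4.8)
The plan is to isolate a single recursive statement about tree decompositions and then solve the recursion. Recall that the adaptive width $\aw(\scH)$ is the maximum, over all fractional independent sets $\mu$ of $\scH$ (weightings $\mu\colon V(\scH)\to[0,\infty)$ with $\sum_{v\in e}\mu(v)\le 1$ for every edge $e$), of the \emph{$\mu$-width} of $\scH$, i.e.\ the minimum over tree decompositions $(T,\beta)$ of $\max_{t\in V(T)}\mu(\beta(t))$, where $\mu(\beta(t)):=\sum_{v\in\beta(t)}\mu(v)$. I would assume throughout that $\scH$ has no isolated vertices (the standard setting; an isolated vertex forces $\alpha^\ast(\scH)=\aw(\scH)=\infty$ and makes the inequality vacuous), so that every fractional independent set satisfies $\mu(v)\le 1$ and the LP defining $\alpha^\ast$ attains its optimum. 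First I would fix an \emph{optimal} fractional independent set $\mu$, so $\mu(V(\scH))=\alpha^\ast(\scH)$. Since $\mu$ is in particular \emph{some} fractional independent set, the definition of $\aw$ yields a tree decomposition $(T,\beta)$ of $\scH$ in which every bag has $\mu$-weight at most $W:=\aw(\scH)$. This is the only place where $\aw$ is used; from here on everything is a statement about a hypergraph equipped with a fractional independent set and a tree decomposition whose bags are ``$\mu$-light''.

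The heart of the proof is the following claim, established by induction on $|V(\scH)|$: \emph{if $\scH$ is nonempty, $\mu$ is a fractional independent set of $\scH$ with $\mu(V(\scH))=M$, and $\scH$ has a tree decomposition all of whose bags have $\mu$-weight at most $W$, then $\alpha(\scH)\ge \tfrac{M}{4W}+\tfrac12$.} For the base case $M\le 2W$ one uses that a single vertex is an independent set and $1\ge\tfrac{M}{4W}+\tfrac12$. For the inductive step, assume $M>2W$ and choose a \emph{balanced separator bag}: root $T$, let $g(t)$ be the $\mu$-weight of the set of vertices occurring in bags at or below $t$, and let $t^\ast$ be a lowest node with $g(t^\ast)\ge M/2$. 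A standard argument shows that $V(\scH)$ is the disjoint union of $\beta(t^\ast)$ and the sets $V_C$ of vertices occurring \emph{only} in bags of $C$, as $C$ ranges over the components of $T-t^\ast$, and that $\mu(V_C)\le M/2$ for each such $C$. Let $b:=\mu(\beta(t^\ast))\le W$ and let $C_1,\dots,C_k$ be the components with $V_{C_i}\ne\emptyset$, with $M_i:=\mu(V_{C_i})$, so $\sum_i M_i=M-b$. The crucial observation is that $k\ge 2$: if $V(\scH)\setminus\beta(t^\ast)$ were contained in a single component, then $M-b\le M/2$, forcing $b\ge M/2>W$, a contradiction. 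Each restricted hypergraph $\scH_i$ (vertex set $V_{C_i}$, edges the nonempty traces $e\cap V_{C_i}$) inherits a $\mu$-light tree decomposition from the subtree of $T$ on $C_i$, and $\mu|_{V_{C_i}}$ is still a fractional independent set, so by induction $\scH_i$ has an independent set $I_i$ with $|I_i|\ge\tfrac{M_i}{4W}+\tfrac12$. I would then verify that $I:=\bigcup_i I_i$ is independent in $\scH$: any edge $e$ lies in some bag $\beta(t')$; if $t'=t^\ast$ then $e\cap I=\emptyset$, and otherwise $t'$ lies in some $C_i$, whence $e\subseteq V_{C_i}\cup\beta(t^\ast)$ and $e\cap I=e\cap I_i$, which has at most one vertex since $I_i$ is independent in $\scH_i$. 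Summing, $|I|=\sum_i|I_i|\ge\tfrac{M-b}{4W}+\tfrac{k}{2}\ge\tfrac{M-W}{4W}+1=\tfrac{M}{4W}+\tfrac34\ge\tfrac{M}{4W}+\tfrac12$. Applying the claim with $M=\alpha^\ast(\scH)$ and $W=\aw(\scH)$ gives the theorem.

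Two ingredients are routine and I would dispatch them quickly: the properties of $t^\ast$ (that $V(\scH)=\beta(t^\ast)\sqcup\bigsqcup_C V_C$ and $\mu(V_C)\le M/2$, which follow from $g$ being monotone along root-paths together with the minimality of $t^\ast$), and the fact that edges restrict correctly to the subtree tree decompositions. The step I expect to be the real crux — and the one I would be most careful to get exactly right — is the observation that, as soon as $M>2W$, a balanced separator has at least two nonempty sides. This is precisely what converts a per-component gain of $\tfrac12$ into a \emph{net} gain after paying the at most $\tfrac{b}{4W}\le\tfrac14$ cost of deleting the separator bag, and is therefore what makes the induction close with the additive constant $\tfrac12$ intact rather than degrading it.
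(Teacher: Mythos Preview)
Your proof is correct and follows the same underlying idea as the paper's (Lemma~\ref{lem:alpha_aw}): induct via a separator of $\mu$-weight at most $\aw(\scH)$ in a tree decomposition witnessing that width, ensure the separator leaves at least two nonempty pieces, and observe that the $\tfrac12$ gained per piece outweighs the at most $\tfrac14$ lost to the separator. The mechanics differ in two places. First, the paper inducts on $|E(\scH)|$ and applies the inductive hypothesis to $\scH[C]$ directly, using the monotonicity $\aw(\scH[C])\le\aw(\scH)$; you instead isolate an auxiliary claim with a fixed parameter $W$ and induct on $|V(\scH)|$, which sidesteps that monotonicity fact entirely. Second, to guarantee at least two components the paper takes a tree decomposition with the \emph{fewest nodes} among those of $\mu$-width at most $\aw(\scH)$ and cuts along an arbitrary tree edge (minimality forbids one bag from being contained in the other), whereas you locate a balanced bag and use the weight inequality $M>2W$ to force $k\ge 2$. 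Both devices are standard; the paper's is slightly slicker, yours is slightly more self-contained.
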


\section{Technical Overview}\label{sec:overview}
This section gives an overview of the tools and techniques behind the results of Section~\ref{sec:results}. The overview focuses on $\#\dirsubsprobd(\vec C)$, but similar arguments apply to $\#\dirindsubsprobd(\vec C)$. Before digging into the most technical part, let us give the high-level idea of our proof strategy.

At the root of all our results is a standard connection between copies and homomorphisms, explained in Section~\ref{sec:hom_basis}. It is well known indeed that $\#\subs{\vec H}{\vec G}$ can be expressed as a linear combination of homomorphism counts, $\sum_{\vec F} a_{\vec H}(\vec F) \cdot \#\homs{\vec F}{\vec G}$, where $a_{\vec H}(\vec F) > 0$ and $\vec F$ ranges over a certain set of \emph{quotients} of $\vec H$. Here, a quotient of $\vec{H}$ is a directed graph obtained from $\vec{H}$ by contracting (not necessarily connected) vertex subsets into single vertices (see Section~\ref{sec:prelims} for the formal definition). It is also known that the complexity of computing $\#\subs{\vec H}{\vec G}$ equals, up to $f(|\vec H|)$ factors, that of computing the hardest $\#\homs{\vec F}{\vec G}$ term. Therefore we can reduce $\#\dirsubsprobd(\vec C)$ \emph{to} and \emph{from} its homomorphism counting version $\#\dirhomsprobd(\vec Q)$, where $\vec Q$ consists of certain quotients of $\vec C$. Armed with these results, we proceed as follows.

First, in Section~\ref{sec:ub_intro} we prove that $\rho^*(\vec C) < \infty$ implies $\#\dirsubsprobd(\vec C) \in \ccFPT$. To this end we prove that if $\vec F$ is a quotient of $\vec H$ then the \emph{fractional hypertreewidth} of the contour of $\vec F$ satisfies $\mathsf{fhtw}(\reduct{\vec F)} \le \rho^*(\vec H)$. Therefore, $\mathsf{fhtw}(\reduct{\vec Q)} \le \rho^*(\vec C) < \infty$. We then show that computing $\#\homs{\vec F}{\vec G}$ can be reduced in FPT time to counting the homomorphisms from $\reduct{\vec F}$ to a hypergraph $\scG$ obtained from $\vec F$ and $\vec G$. As hypergraph homomorphism counting is in $\ccFPT$ when the pattern has bounded fractional hypertreewdith, this proves the claim.

Next, in Section~\ref{sec:lb_intro} we prove that $\rho^*(\vec C) = \infty$ implies $\#\dirsubsprobd(\vec C) \notin \ccFPT$, or ETH fails.
To start with, we suppose $\vec C$ contains only \emph{canonical DAGs}, directed graphs of a particularly simple type. We can prove that the aforementioned problem of counting homomorphisms between hypergraphs can be reduced to $\#\dirhomsprobd(\vec C)$ if the considered hypergraph patterns belong to the contours of $\vec C$. By existing results this implies that, unless ETH fails, $\#\dirhomsprobd(\vec C) \notin \ccFPT$ whenever the contours of $\vec C$ have unbounded \emph{adaptive width}~\cite{MarxSW17}.
It remains to lift these results from canonical DAGs to abitrary DAGs and, ultimately, to arbitrary directed graphs. To this end, we introduce what we call \emph{monotone reversible minors} (MRMs). Intuitively, $\vec H'$ is an MRM of $\vec H$ if there exists an FPT reduction from counting copies $\vec H'$ to counting copies of $\vec H$, and if $\vec H'$ preserves some parameters of interest (like $\rho^*$). We show that \emph{every} directed graph $\vec H$ has an MRM $\vec H'$ that is a canonical DAG, so counting $\vec H$ is at least as hard as counting $\vec H'$. Next, we show that counting copies of $\vec H'$ is hard. To this end we show that, if $\rho^*(\vec H')$ is large, then its reduct $\reduct{\vec H'}$ has large adaptive width or large independence number. By employing arguments from the homomorphism connection above and from~\cite{BressanR21}, this implies that counting the copies of $\vec H'$ is hard unless ETH fails, which concludes our proof.

In what follows we use standard terminology as much as possible; in any case, all concepts and terms are defined formally in Section~\ref{sec:prelims}.

\subsection{The Directed Homomorphism Basis}\label{sec:hom_basis}
The first ingredient of our work is the so-called \emph{homomorphism basis} introduced by Curticapean, Dell, and Marx~\cite{CurticapeanDM17}, which establishes a common connection between (undirected) parameterised pattern counting problems. 
Although the original framework is for undirected graphs, it can be equally well be formulated for the directed case, as we are going to do.
Let $\vec{H}$ be a digraph. There is a function $\asub_{\vec{H}}$ of finite support from digraphs to rationals such that for each digraph~$\vec{G}$:
\begin{align}
    \#\subs{\vec{H}}{\vec{G}} &= \sum_{\vec{F}} \asub_{\vec{H}}(\vec{F}) \cdot \#\homs{\vec{F}}{\vec{G}} \label{eq:subs_to_homs}
\end{align}
This identity follows by well-known transformations based on inclusion-exclusion and M\"obius inversion (see e.g.\ Chapter 5.2.3.\ in ~\lovasz~\cite{Lovasz12}). 
It is also well known that $\asub_{\vec{H}}(\vec{F})\neq 0$ if and only if $\vec{F}$ is a quotient of $\vec{H}$.

These facts allow us to construct a reduction from the parameterized problem of computing $\#\subs{\vec{H}}{\vec{G}}$ to the parameterized problem of computing $\#\homs{\vec{H}}{\vec{G}}$ and vice versa. More precisely, one can show that computing $\#\subs{\vec{H}}{\vec{G}}$ is precisely as hard (in FPT-equivalence terms) as computing the hardest term $\#\homs{\vec{F}}{\vec{G}}$ in the summation of~\eqref{eq:subs_to_homs}.
One direction is obvious --- the time to compute $\#\subs{\vec{H}}{\vec{G}}$ is the sum of the times to compute all terms $\#\subs{\vec{F}}{\vec{G}}$, whose number is a function of $\vec H$.
The other direction is nontrivial, and was established for multiple variants of subgraph counting over the past years~\cite{ChenM16,CurticapeanDM17,DellRW19,RothSW21,BeraGLSS22}. Rather than extending those results to yet another variant (directed graphs), we observe that the constructive version of Dedekind's Theorem on the linear independence of characters yields a general interpolation method that subsumes all those results, including the one for directed graphs. We prove what follows (see Theorem~\ref{thm:dedekind} for a more complex but complete version):
\begin{theorem}[Simplified Version]\label{thm:dedekind_intro}
Let $(\mathrm{G},\ast)$ be a semigroup. Let furthermore $(\varphi_i)_{i\in[k]}$ with $\varphi_i: \mathrm{G} \to \Q$ be pairwise distinct and non-zero semigroup homomorphisms of $(\mathrm{G},\ast)$ into $(\Q,\cdot)$, that is, $\varphi_i(g_1\ast g_2)= \varphi_i(g_1)\cdot \varphi_i(g_2)$ for all $i\in[k]$ and $g_1,g_2\in \mathrm{G}$. Let $\phi : \mathrm{G} \to \Q$ be a function
\begin{equation}\label{eq:dedekind_intro}
    \phi : g \mapsto \sum_{i=1}^k a_i \cdot \varphi_i(g)\,, 
\end{equation}
where the $a_i$ are rational numbers. Then there is an efficient algorithm $\hat{\mathbb{A}}$ which is equipped with oracle access to $\phi$ and which computes the coefficients $a_1,\dots, a_k$.
\end{theorem}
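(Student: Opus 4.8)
The plan is to turn the reconstruction of $a_1,\dots,a_k$ into solving one linear system over $\Q$. Concretely, I will produce semigroup elements $g_1,\dots,g_k\in\mathrm{G}$ for which the $k\times k$ matrix $M:=\big(\varphi_i(g_j)\big)_{i,j\in[k]}$ is nonsingular; then $k$ oracle queries yield the right-hand side of the system $M^{\top}(a_1,\dots,a_k)^{\top}=(\phi(g_j))_{j\in[k]}$ (using $\phi(g_j)=\sum_i a_i\varphi_i(g_j)$), which has a unique solution, recoverable by Gaussian elimination over $\Q$. So the entire content is (i) that such ``witnessing'' elements exist and (ii) that they can be found efficiently, together with bounds on the numbers involved.

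For (i), the existence of $g_1,\dots,g_k$ with $M$ nonsingular is equivalent to the evaluation vectors $\{(\varphi_1(g),\dots,\varphi_k(g)):g\in\mathrm{G}\}$ spanning $\Q^k$, which is in turn equivalent to the $\Q$-linear independence of $\varphi_1,\dots,\varphi_k$ as functions $\mathrm{G}\to\Q$: a linear form vanishing on all evaluation vectors is exactly a linear relation among the $\varphi_i$. Linear independence is the classical Dedekind--Artin theorem on characters, and I will reprove it using the semigroup operation so that the proof is constructive. If $\sum_{i\in I}a_i\varphi_i=0$ is a relation with $I\subseteq[k]$ of minimal size, all $a_i\neq 0$ and $|I|\ge 2$, fix $p,q\in I$ with $p\neq q$; since $\varphi_p\neq\varphi_q$ there is $g_0$ with $\varphi_p(g_0)\neq\varphi_q(g_0)$. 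Evaluating the relation at $g_0\ast g$ and using multiplicativity gives $\sum_{i\in I}a_i\varphi_i(g_0)\varphi_i(g)=0$ for all $g$; subtracting $\varphi_p(g_0)$ times the original relation eliminates the $i=p$ term and leaves $\sum_{i\in I\setminus\{p\}}a_i(\varphi_i(g_0)-\varphi_p(g_0))\varphi_i=0$, a shorter relation whose coefficient at $q$ is $a_q(\varphi_q(g_0)-\varphi_p(g_0))\neq 0$, contradicting minimality. Hence no nontrivial relation exists.

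For (ii), I build the $g_j$ greedily, which makes the above effective: maintain elements whose evaluation vectors $(\varphi_i(g))_{i\in[k]}$ are $\Q$-linearly independent, starting from any $g_1$ with $\varphi_1(g_1)\neq 0$ (which exists since $\varphi_1$ is nonzero). If only $m<k$ have been found, a suitable $(m+1)\times(m+1)$ minor determinant, expanded along its last column, is a $\Q$-linear combination $\sum_{\ell}c_\ell\varphi_{i_\ell}$ of $m+1$ of the characters in which the coefficient of the ``new'' character is the nonzero minor coming from the current elements; by (i) this combination is not identically zero, so some $g\in\mathrm{G}$ enlarges the independent family. After $k$ rounds $M$ is nonsingular. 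Each round performs a bounded search over $\mathrm{G}$ and $\poly(k)$ rational arithmetic, and the final step is $k$ oracle calls plus a $\poly(k)$-time linear solve; uniqueness of the output follows from nonsingularity of $M$ (coefficients $a_i$ that happen to vanish are returned as $0$).

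The main obstacle --- and the reason the paper states a ``more complex but complete'' Theorem~\ref{thm:dedekind} --- is making ``efficient'' and ``bounded search over $\mathrm{G}$'' precise, and in particular bounding the \emph{size} of the witnessing elements and the bit-length of the entries $\varphi_i(g_j)$. In our application $\mathrm{G}$ is an infinite semigroup of directed graphs under disjoint union and the $\varphi_i$ are homomorphism counts $\vec G\mapsto\#\homs{\vec F_i}{\vec G}$; there, distinctness of the $\varphi_i$ corresponds to pairwise non-isomorphism of the $\vec F_i$, and instead of an abstract greedy search one obtains the nonsingular matrix explicitly by a Lov\'asz-type / Curticapean--Dell--Marx argument (evaluating at suitable disjoint unions of the $\vec F_i$ and invoking triangularity after ordering the patterns by size), which keeps all element sizes and all arithmetic bounded by a function of $\max_i|\vec F_i|$ --- exactly what the FPT reduction needs.
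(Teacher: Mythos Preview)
Your argument is correct but organised differently from the paper. The paper gives a direct recursive algorithm: for $k=1$ it returns $\varphi_1(g_1)^{-1}\phi(g_1)$; for $k>1$ it picks $g_{1,k}$ with $\varphi_1(g_{1,k})\neq\varphi_k(g_{1,k})$ and observes that
\[
\phi(g_{1,k}\ast g)-\varphi_k(g_{1,k})\,\phi(g)=\sum_{i=1}^{k-1}a_i\big(\varphi_i(g_{1,k})-\varphi_k(g_{1,k})\big)\varphi_i(g),
\]
which is an oracle for a linear combination of only $k-1$ characters; recursion yields $a_1$, and subtracting $a_1\varphi_1$ from $\phi$ gives an oracle for $\sum_{i\geq 2}a_i\varphi_i$ on which one recurses again. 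This is exactly the Artin elimination step you used to \emph{prove} linear independence, but run as the algorithm itself rather than as a non-constructive existence argument followed by a greedy search for a nonsingular matrix. The advantage of the paper's organisation is that it sidesteps your ``bounded search over $\mathrm{G}$'' issue entirely: the full Theorem~\ref{thm:dedekind} simply \emph{assumes} computable maps $i\mapsto g_i$ with $\varphi_i(g_i)\neq 0$ and $(i,j)\mapsto g_{i,j}$ with $\varphi_i(g_{i,j})\neq\varphi_j(g_{i,j})$, and the algorithm only ever queries products built from these. Your greedy approach would need the same sort of assumption to become effective, so the two routes ultimately coincide in what they require.

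One concrete slip in your final paragraph: in the application the semigroup operation is the \emph{tensor product} of digraphs, not disjoint union. Homomorphism counts $\vec G\mapsto\#\homs{\vec F}{\vec G}$ are multiplicative under $\otimes$ (this is Lemma~\ref{lem:semigroup_hom}) but not under disjoint union unless $\vec F$ is connected, so disjoint union would not make the $\varphi_i$ semigroup homomorphisms into $(\Q,\cdot)$. The triangularity-by-size argument you allude to is a different (also valid) way to get a nonsingular system, but it is not what the paper does; the paper just instantiates Theorem~\ref{thm:dedekind} on $(\vec{\mathcal U}^\circ,\otimes)$.
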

\noindent In our setting, Theorem~\ref{thm:dedekind_intro} yields what follows. First, let $\mathrm{G}$ be the set of all digraphs and $*$ be the directed tensor product; one can check that $(\mathrm{G},*)$ is indeed a semigroup. Second, for any fixed $\vec{H}$ consider the function $\vec{G}\mapsto\#\homs{\vec{H}}{\vec{G}}$; one can check this is a semigroup homomorphism into $\mathbb{Q}$. Using Theorem~\ref{thm:dedekind_intro}, we can prove:
\begin{lemma}\label{lem:complexity_monotonicity_intro}
There exists a deterministic algorithm $\mathbb{A}$ with the following specifications:
\begin{itemize}\itemsep0pt
    \item The input of $\mathbb{A}$ is a pair $(\vec{{G}}',\iota)$ where $\vec{{G}}'$ is a digraph and $\iota : \mathrm{G} \to \mathbb{Q}$.
    \item $\mathbb{A}$ is equipped with oracle access to the function 
    \[ \vec{G} \mapsto \sum_{\vec{F}} \iota(\vec{F}) \cdot \#\homs{\vec{F}}{\vec{G}} \,,\]
    where the sum is over all (isomorphism classes of) digraphs.
    \item The output of $\mathbb{A}$ is the list with elements $(\vec{F},\#\homs{\vec{F}}{\vec{{G}}'})$ for each $\vec{F}$ with $\iota(\vec{F})\neq 0$.
    \item For some computable function $f$ the running time of $\mathbb{A}$ is bounded by $f(|\iota|)\cdot |\vec{{G}}'|^{O(1)}$
    \item The outdegree of every digraph $\vec{G}$ on which $\mathbb A$ invokes the oracle is at most $f(|\iota|)\cdot d(\vec{{G}}')$ where $d(\vec{{G}}')$ is the maximum outdegree of $\vec{{G}}'$.
\end{itemize}
\end{lemma}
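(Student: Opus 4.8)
The plan is to reduce the task to the constructive-interpolation statement of Theorem~\ref{thm:dedekind_intro} (more precisely, to its complete form Theorem~\ref{thm:dedekind}), applied to the semigroup $(\mathrm{G},\times)$ of digraphs under the directed tensor product. First I would record the two properties that make homomorphism counts fit that framework. For every fixed digraph $\vec F$, the map $\varphi_{\vec F}\colon\vec G\mapsto\#\homs{\vec F}{\vec G}$ is a semigroup homomorphism from $(\mathrm{G},\times)$ to $(\Q,\cdot)$, because $\#\homs{\vec F}{\vec G_1\times\vec G_2}=\#\homs{\vec F}{\vec G_1}\cdot\#\homs{\vec F}{\vec G_2}$ (a homomorphism into a tensor product is, coordinatewise, a pair of homomorphisms). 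Moreover $\varphi_{\vec F}$ is non-zero, since $\#\homs{\vec F}{\vec F}\ge 1$, and $\varphi_{\vec F}=\varphi_{\vec F'}$ forces $\vec F\cong\vec F'$ by the directed analogue of \lovasz's theorem that the homomorphism vector determines a (di)graph up to isomorphism. Hence, for a finite-support $\iota$ whose support is $\vec F_1,\dots,\vec F_k$, the maps $\varphi_{\vec F_1},\dots,\varphi_{\vec F_k}$ are pairwise distinct non-zero semigroup homomorphisms, and the oracle handed to $\mathbb A$ computes exactly $\phi(\vec G)=\sum_{i=1}^k\iota(\vec F_i)\,\varphi_{\vec F_i}(\vec G)$, which is of the shape~\eqref{eq:dedekind_intro}.

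The key manipulation is to ``twist'' $\phi$ by the fixed host $\vec G'$. Define $\psi\colon\vec G\mapsto\phi(\vec G'\times\vec G)$. By multiplicativity, $\psi(\vec G)=\sum_{i=1}^k\iota(\vec F_i)\,\#\homs{\vec F_i}{\vec G'}\cdot\varphi_{\vec F_i}(\vec G)=\sum_{i=1}^k b_i\,\varphi_{\vec F_i}(\vec G)$, with $b_i=\iota(\vec F_i)\cdot\#\homs{\vec F_i}{\vec G'}$. So $\psi$ is again a rational combination of the same $\varphi_{\vec F_i}$, and $\mathbb A$ can simulate an oracle for $\psi$ using its oracle for $\phi$: on a query $\vec G$, it constructs the directed tensor product $\vec G'\times\vec G$ (again a digraph) and forwards it. Running the interpolation algorithm of Theorem~\ref{thm:dedekind} on $\psi$, supplied with the list $\vec F_1,\dots,\vec F_k$ read off from $\iota$, returns the coefficients $b_1,\dots,b_k$; since $\iota(\vec F_i)\ne 0$ for each $i$ in the support, $\mathbb A$ outputs $\bigl(\vec F_i,\;b_i/\iota(\vec F_i)\bigr)=\bigl(\vec F_i,\;\#\homs{\vec F_i}{\vec G'}\bigr)$ for every $i$, as required. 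Everything here is deterministic: tensor products, the interpolation algorithm, and the final divisions.

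It remains to verify the two complexity bullets. The interpolation algorithm makes a number of queries bounded by a computable function of $k$, and it queries $\psi$ only on digraphs $\vec G$ of size at most a computable function $g(|\iota|)$; concretely one may let the query graphs range over all digraphs on at most $\max_i|V(\vec F_i)|$ vertices, since homomorphism counts into digraphs of that size already separate the pairwise non-isomorphic $\vec F_i$. Each simulated query then costs one construction of $\vec G'\times\vec G$ (of size at most $g(|\iota|)\cdot|\vec G'|$) plus one oracle call, so the overall running time is $f(|\iota|)\cdot|\vec G'|^{O(1)}$. For the outdegree: in the directed tensor product the out-neighbourhood of $(u,v)$ is the product of the out-neighbourhoods of $u$ and $v$, so $d(\vec G'\times\vec G)=d(\vec G')\cdot d(\vec G)\le d(\vec G')\cdot|V(\vec G)|\le g(|\iota|)\cdot d(\vec G')$, which is the claimed bound $f(|\iota|)\cdot d(\vec G')$ with $f=g$.

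Main obstacle: the substance is not the algebraic twist --- that is routine once multiplicativity is in hand --- but guaranteeing that the interpolation of Theorem~\ref{thm:dedekind} queries the oracle only on digraphs whose size is bounded purely in terms of $|\iota|$, independently of $\vec G'$. Both the running-time bound and, crucially, the outdegree bound hinge on this. It rests on an \emph{effective} version of the directed \lovasz\ theorem: pairwise non-isomorphic digraphs of at most $n$ vertices are already distinguished by their homomorphism counts into digraphs of $O(n)$ vertices --- equivalently, one can pick small ``evaluation'' digraphs $\vec G_1,\dots,\vec G_k$ for which the matrix $\bigl(\varphi_{\vec F_i}(\vec G_j)\bigr)_{i,j}$ is invertible. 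This is provable by the standard transfer between homomorphism, injective-homomorphism and subgraph counts (the subgraph-count matrix over graphs of bounded size is triangular with positive diagonal under the ordering by number of vertices and edges), carried over verbatim to digraphs. All remaining ingredients --- the semigroup structure on digraphs, multiplicativity of $\#\homs{\cdot}{\cdot}$ under $\times$, and closure of digraphs under tensor products --- are bookkeeping.
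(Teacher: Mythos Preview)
Your proposal is correct and follows essentially the same route as the paper: tensor with the fixed host $\vec G'$ to turn the oracle into one of the form $\sum_i b_i\,\varphi_{\vec F_i}$ with $b_i=\iota(\vec F_i)\cdot\#\homs{\vec F_i}{\vec G'}$, invoke Theorem~\ref{thm:dedekind} to recover the $b_i$, and divide out by $\iota(\vec F_i)$; the outdegree bound comes from multiplicativity of outdegree under the tensor product, exactly as you wrote. One remark: the ``main obstacle'' you flag is already absorbed by Theorem~\ref{thm:dedekind}, whose statement guarantees that the oracle queries $\hat g$ depend only on the $\varphi_i$ (hence only on $\iota$), so you do not need the separate matrix-invertibility argument --- the paper's proof simply quotes this property.
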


\noindent To understand the meaning of Lemma~\ref{lem:complexity_monotonicity_intro}, let $\iota(\vec F)=\asub_{\vec{H}}(\vec F)$ for all $\vec F \in \mathrm{G}$, see~\eqref{eq:subs_to_homs}. Then Lemma~\ref{lem:complexity_monotonicity_intro} says that, if $\mathbb{A}$ has oracle access to $\#\subs{\vec H}{\cdot}$, then $\mathbb{A}$ can compute $\#\homs{\vec H}{\vec G'}$ efficiently and by computing $\#\subs{\vec H}{\vec G}$ only for $\vec G$ of outdegree not larger than that of $\vec G'$. This yields a parameterised reduction from $\#\dirsubsprobd(\vec C)$ to $\#\dirhomsprobd(\vec C')$, where $\vec C'$ is the set of all digraphs $\vec F$ such that $\asub_{\vec{H}}(\vec F) \ne 0$ for some $\vec H \in \vec C$. As stated above, $\asub_{\vec{H}}(\vec F) \ne 0$ if and only if $\vec F$ is a quotient of $\vec H$. We conclude that computing $\#\subs{\vec{H}}{\vec{G}}$ is at least as hard as computing $\#\homs{\vec{F}}{\vec{G}}$ for each $\vec F$ that is a quotient of $\vec H$. In other words we have a parameterised reduction from $\#\dirsubsprobd(\vec C)$ to $\#\dirhomsprobd(\vec C')$ where $\vec C'$ is the set of all quotients of $\vec C$. Together with the converse reduction (see above) this tells us that $\#\dirsubsprobd(\vec C)$ is \emph{precisely as hard as} $\#\dirhomsprobd(\vec C')$ where $\vec C'$ is the set of all quotients of $\vec C$. Thus, classifying the complexity of $\#\dirsubsprobd(\vec C)$ boils down to understanding the complexity of $\#\dirhomsprobd(\vec C')$ where $C'$ is again the set of all quotients of $\vec C$. Answering this question turns out to be the most challenging task in this work.

\subsection{Upper bounds: a reduction to \#CSP}\label{sec:ub_intro}
To understand the complexity of $\#\dirhomsprobd(\vec C')$ where $\vec C$ is the set of all quotients of $\vec C$, we take two steps. First, we show that the problem can be reduced to \#CSP, the problem of counting the solutions to a constraint satisfaction problem. Second, we show that the fractional cover number of $\vec C$ bounds the fractional hypertree width of the \#CPS instances obtained from $\vec C'$, which makes the problem fixed-parameter tractable by existing results.

\subsubsection{A reduction to \#CSP} Let $\vec{H}$ and $\vec{G}$ be digraphs and let $d$ be the maximum outdegree of $\vec{G}$. Let furthermore $k=|\vec{H}|$ and $n=|\vec{G}|$. Recall that a source $S$ of $\vec{H}$ is a strongly connected component of $\vec{H}$ such that $S$ cannot be reached from any other strongly connected component. Let $S_1,\dots,S_\ell$ be the sources of $\vec{H}$, and let $s_i \in S_i$ for each $i\in[\ell]$. Finally, let $R_i$ be the set of all vertices of $\vec{H}$ that can be reached from $s_i$ via a directed path --- note that $S_i$ is fully contained in $R_i$.
Clearly each arc of $\vec{H}$ is fully contained in at least one of the $R_i$. Writing $\vec{H}[R_i]$ for the subgraph of $\vec{H}$ induced by $R_i$, one can see that every map $\varphi: V(\vec{H}) \to V(\vec{G})$ satisfies:
\begin{equation}\label{eq:towards_CSP}
    \varphi \in \homs{\vec{H}}{\vec{G}} \;\Leftrightarrow\; \forall i \in [\ell]: \varphi|_{R_i} \in \homs{\vec{H}[R_i]}{\vec{G}}\,,
\end{equation}
where $\varphi|_{R_i}$ is the restriction of $\varphi$ on $R_i$. In other words, $\varphi$ is a homomorphism if and only if it induces a partial homomorphism from $\vec{H}[R_i]$ for each $i\in[\ell]$.

The observation above allows us to reduce the computation of $\homs{\vec{H}}{\vec{G}}$ to counting the solutions of a certain constraint satisfaction problem. Start by fixing an arbitrary order over $V(\vec H)$, so that every $R_i$ appears as an ordered tuple. Now, for each $i\in[\ell]$, we enumerate all partial homomorphisms $\varphi|_{R_i} \in \homs{\vec{H}[R_i]}{\vec{G}}$. It is well known that this can be done in time $f(k,d) \cdot n^{O(1)}$: simply guess the image $v$ of $s_i$ in $V(\vec G)$, and perform a brute force search over the $d^{O(k)}$ vertices of $\vec G$ reachable from $v$ in $k$ steps~\cite{Chiba&1985,Bressan21,BeraGLSS22}. Now for every $i\in[\ell]$ consider the set of all (the images of) the maps in $\homs{\vec{H}[R_i]}{\vec{G}}$. This is a set of ordered tuples of vertices of $\vec G$, i.e., a relation over $V(\vec G)$. We denote this relation by $\mathrm{R}_i$. It is not hard to see that the homomorphisms from $\vec H$ to $\vec G$ are precisely those maps from $V(\vec H)$ to $V(\vec G)$ that for every $i \in [\ell]$ send $R_i$ to an element of $\mathrm{R}_i$, and that counting those maps is an instance of a counting constraint satisfaction problem (\#CSP). 

\subsubsection{Bounding the cost of solving \#CSP over quotients}
Recall the reachability hypergraph $\reach{\vec H}$ of $\vec H$: the hypergraph whose vertex set is $V(\vec H)$ and whose edge set is $\{R_i:i \in [\ell]\}$. A well-known result due to Grohe and Marx~\cite{GroheM14} states that counting the solutions to the CSP instance above is fixed-parameter tractable whenever $\reach{\vec H}$ has bounded fractional hypertreewidth, where the parameter is $|\vec H|$; in fact,~\cite{GroheM14} shows that there is an algorithm that solves the problem in time $f(k,d) \cdot |V(\vec G)|^{\mathsf{fhtw}(\mathcal{R})+O(1)}$.
Now recall from Section~\ref{sec:results} the fractional cover number $\rho^\ast(\vec{H})$ of $\vec{H}$. We prove:
\begin{lemma}\label{lem:subs_upper_intro}
Let $\vec{H}$ be a digraph, let $\vec{F}$ be a quotient graph of $\vec{H}$, and let $\reach{\vec{F}}$ be the reachability hypergraph of $\vec{F}$. Then $\mathsf{fhtw}(\reach{\vec{F}})\leq \rho^\ast(\vec{H})$.
\end{lemma}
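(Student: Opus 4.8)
The plan is to understand how taking a quotient $\vec{F}$ of $\vec{H}$ affects the reachability hypergraph, and then relate the fractional hypertree width of $\reach{\vec{F}}$ to the fractional cover number $\rho^*(\vec{H}) = \rho^*(\reduct{\vec{H}})$. The key structural observation I would establish first is that quotients only \emph{merge} sources and can only \emph{shrink} reachability sets in a controlled way: if $\vec{F}$ is obtained from $\vec{H}$ by contracting vertex subsets, then every strongly connected component of $\vec{F}$ is a union of contracted images of strongly connected components of $\vec{H}$, and the source SCCs of $\vec{F}$ correspond (surjectively, after contraction) to a subset of the source SCCs of $\vec{H}$. Consequently each reachability edge $\reachv{S'}$ of $\vec{F}$ is the image, under the quotient map $\pi : V(\vec{H}) \to V(\vec{F})$, of a union $\bigcup_{i \in I} R_i$ of reachability edges of $\vec{H}$ for some index set $I$ of sources — indeed, of a \emph{single} $R_i$ in the simplest case, and at worst a union when two sources of $\vec{H}$ get identified. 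This is the part I expect to require the most care: tracking exactly which $R_i$'s get glued, and checking that contraction never \emph{creates} new reachability (it can only add edges by identifying vertices, which can only increase reachability among the identified classes, but the image of $R_i$ still covers the image of everything reachable from $S_i$).

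Given this, I would build a fractional hypertree decomposition of $\reach{\vec{F}}$ directly from an optimal fractional edge cover $\gamma$ of $\reduct{\vec{H}}$. Recall $\reduct{\vec{H}} = \reach{\vec{H}} \setminus \bigcup_i S_i$, i.e.\ the reachability hypergraph with the source SCCs removed from every edge. The idea is: the "hard part" of any reachability edge $R_i$ is $R_i \setminus S_i$ (the vertices strictly reachable \emph{beyond} the source component), since the source component itself, in $\vec{F}$, can be handled by a single bag after the sources are fixed. So I would take a decomposition that is essentially a star (or path) of bags, one per source of $\vec{F}$, where the bag for a source $S'$ of $\vec{F}$ is its reachability edge $\pi(R_{i(S')})$, together with a shared "core" structure covering the non-source vertices; the fractional cover number of each such bag restricted to the non-source vertices is bounded by $\rho^*(\reduct{\vec{H}})$ because $\gamma$ pushed forward along $\pi$ remains a fractional cover (contraction only identifies vertices, which can only help a cover), and it covers the image $\pi(R_i \setminus S_i)$ using images of edges of $\reduct{\vec{H}}$. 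The source vertices within each bag are covered "for free" with weight at most the number of sources merged, but since $\fwidth{\rho}$ of $\reach{\vec{F}}$ allows different bags, and source vertices of $\vec{F}$ sit in their own bag whose single edge $\reachv{S'}$ covers them with weight $1$, this does not inflate the width beyond $\max(1, \rho^*(\reduct{\vec{H}}))$, and in the degenerate case $\rho^* = 0$ one checks the bound holds trivially since then $\vec{H}$ has no non-source vertices.

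Concretely the steps are: (1) prove the SCC/source correspondence between $\vec{H}$ and its quotient $\vec{F}$ and deduce $E(\reach{\vec{F}}) = \{\pi(\bigcup_{i \in I_{S'}} R_i) : S' \text{ a source of } \vec{F}\}$; (2) show that pushing forward a fractional edge cover of $\reduct{\vec{H}}$ along $\pi$ yields a fractional edge cover of $\reduct{\vec{F}}$ of no greater weight, so $\rho^*(\reduct{\vec{F}}) \le \rho^*(\reduct{\vec{H}}) = \rho^*(\vec{H})$; (3) exhibit an explicit tree decomposition of $\reach{\vec{F}}$ — I would try a "star" whose center bag is $V(\reduct{\vec{F}}) \cap (\text{common reachable part})$ and whose leaves are the individual reachability edges, or if that center is too big, a path ordering the sources — and verify that each bag is fractionally covered with weight $\le \rho^*(\vec{H})$ using step (2); (4) conclude $\mathsf{fhtw}(\reach{\vec{F}}) \le \rho^*(\vec{H})$. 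The main obstacle, as noted, is step (1) together with making the decomposition in step (3) genuinely valid (the connectivity/running-intersection condition for the tree decomposition), since the reachability edges of $\vec{F}$ can overlap in complicated ways on the non-source vertices; I would handle this by observing that \emph{all} reachability edges contain, or can be arranged along a path so that the running intersection is contained in, the contour vertices, and that these are exactly what $\rho^*(\vec{H})$ controls. An alternative fallback for step (3), if the explicit decomposition is messy, is to invoke directly that $\mathsf{fhtw}(\mathcal{A}) \le \rho^*(\mathcal{A})$ for any hypergraph $\mathcal{A}$ together with a separate argument that the source vertices of $\vec{F}$ contribute only an additive/multiplicative constant — but I believe the explicit star decomposition is cleanest and avoids any loss.
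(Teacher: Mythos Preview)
Your two-step strategy is exactly the paper's: first show $\rho^*(\vec{F}) \le \rho^*(\vec{H})$ for every quotient $\vec{F}$ (the paper's Lemma~\ref{lem:main_technical_subs}), then bound $\mathsf{fhtw}(\reach{\vec{F}}) \le \rho^*(\vec{F})$ via the star decomposition whose centre bag is the set of non-source vertices and whose leaves are the individual reachability edges (the paper's Lemma~\ref{lem:subs_quotients_small_fhtw}). Your step (3) is precisely that decomposition and is correct as described.

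The gap is in step (2). The naive ``push forward along $\pi$'' is not well-defined: a source $S_i$ of $\Hsim$ can become, after quotienting, a non-source reachable from \emph{several} sources of $\SCCquot{F}$ (e.g.\ take $\vec{H}$ with arcs $a\to x$, $b\to x$, $c\to y$ and identify $c$ with $x$: now $\pi(c)$ is a non-source reachable from both $a$ and $b$), so there is no canonical edge of $\reduct{\vec{F}}$ to receive the weight $\gamma(e_i)$. Relatedly, your description of $\reachv{S'}$ as the $\pi$-image of $\bigcup_{i\in I}R_i$ with $I$ indexing only sources that get \emph{merged} into $S'$ is too restrictive; one needs all $i$ with $\pi(S_i)$ reachable from $S'$, which can include sources that were never identified with anything. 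The paper sidesteps this by reducing to the identification of a single pair $(u,v)$ and doing a four-way case split on whether each lies in a source SCC; in the delicate case (a source $S_1$ absorbed into a non-source reachable from other sources $S_2,\dots,S_t$) it \emph{redistributes} $\psi(S_1)$ equally among them, setting $\psi'(S_j)=\psi(S_j)+\psi(S_1)/(t-1)$, and verifies this remains a fractional cover of the same total weight. Your plan goes through once you supply such a redistribution rule, but the push-forward as you have sketched it does not.
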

\noindent The intuition behind the proof of Lemma~\ref{lem:subs_upper_intro} is that (i) taking the quotient of a digraph cannot increase its fractional cover number, and (ii) the fractional hypertreewidth of a hypergraph is bounded by its fractional edge cover number (which is the fractional cover number of $\vec H$).

Together with the observations above, this implies that we can compute $\#\subs{\vec H}{\vec G}$ in time $f(|\vec H|,d(\vec g)) \cdot |\vec G|^{\rho^*(\vec H)}$, thus proving Theorem~\ref{thm:main_intro_bounds} and the tractability part of Theorem Theorem~\ref{thm:main_intro} for $\#\dirsubsprobd$. It remains to prove the intractability part of Theorem~\ref{thm:main_intro}, which we do in the next sections.

Let us again consider $\vec{H}=\Delta_1^k$ as a toy example, that is, $\vec{H}$ is the disjoint union of $k$ triangles, each of which is cyclically oriented. We can use the principle of inclusion and exclusion to reduce the computation of $\#\subs{\vec{H}}{\vec{G}}$ to the computation of terms $\#\homs{\vec{F}}{\vec{G}}$ where $\vec{F}$ is a quotient of $\vec{H}$. Now, it can easily be observed that each quotient of $\vec{H}$ is a disjoint union of strongly connected components $S_1,\dots,S_\ell$. Unfolding our general reduction to $\#\textsc{CSP}$, for each of the strongly connected components $S$, we only have to guess the image $v$ of one vertex $s\in S$ in $\vec{G}$. Then the image of each additional vertex in $S$ must be reachable from $v$ by a directed path of length at most $k$. Since the outdegree of $\vec{G}$ is at most $d$, there are thus at most $d^{O(k)}$ possibilities for the images of the remaining vertices. Thus, for each strongly connected components $S$, we can compute $\#\homs{\vec{F}[S]}{\vec{G}}$ in time $d^{O(k)}\cdot |\vec{G}|$. Finally, we have $\#\homs{\vec{F}}{\vec{G}} = \prod_{i=1}^\ell \#\homs{\vec{F}[S_i]}{\vec{G}}$.

\subsection{Lower bounds}\label{sec:lb_intro}
The goal of this section is to prove that, roughly speaking, if $\rho^*(\vec{H})$ is large then $\vec{H}$ has a quotient $\vec{F}$ such that computing $\#\homs{\vec{F}}{\vec{G}}$ is hard when parameterized by $|\vec F|+d(\vec G)$. To this end we seek a reduction from \#CSP to $\#\dirhomsprobd$, i.e., in the opposite direction of Section~\ref{sec:ub_intro}. However, while that direction was relatively easy, since every digraph can be easily encoded as a set of relations, the direction we seek here is significantly harder. Indeed, it is not clear at all how an instance of \#CSP can be ``encoded'' as a pair of directed graphs $(\vec H,\vec G)$ if we can choose $\vec H$ only from the class $\vec C$ for which we want to prove hardness.

\subsubsection{Encoding \#CSP instances via canonical DAGs}
To bypass the obstacle above, we start by considering classes of \emph{canonical DAGs}. A digraph $\vec{H}$ is a canonical DAG if it is acyclic and every vertex is either a source (i.e., it has indegree $0$) or a sink (i.e., it has outdegree $0$). Note that this implies that $\vec H$ is bipartite, with (say) all sources on the left and all sinks on the right. If $\vec C$ is a class of canonical DAGs, then it is easy to reduce \#CSP to $\#\dirhomsprobd(\vec C)$ while preserving all parameters. To see why, let $(\scH,\scG)$ be a pair of hypergraphs (the instance of \#CSP).  Define $\vec H$ by letting $V(\vec H)=V(\scH) \cup \{x_e: e \in E(\scH)\}$, and adding $(x_e,u)$ to $E(\vec H)$ for every $e \in E(\vec \scH)$ and every $u \in e$. Define $\vec G$ similarly as a function of $\scG$. One can then show, using the \emph{color-prescribed} version of homomorphism counting (defined in Section~\ref{sec:prelims}), that $\#\homs{\scH}{\scG}$ can be computed in FPT time with $|\scH|$ as a parameter if we can compute $\#\homs{\vec H}{\vec G}$ in FPT time with $|\vec H|+d(\vec G)$ as a parameter.

Recall then the \emph{contour} $\reach{\vec H}$ of $\vec H$ from Section~\ref{sec:results}. It is immediate to see that, if $\vec H$ is a canonical DAG obtained from $\scH$ as described above, then $\reach{\vec H}=\scH$. This is precisely the intuitive role of the contour --- to encode the structure of the reachability sets of $\vec H$ (ignoring sources).
Indeed, using contours we can then state our main reduction. Let $\vec C$ be a class of canonical DAGs, and let $\#\textsc{CSP}(\reduct{\vec{C}})$ be the restriction of $\#\textsc{CSP}$ to instances whose left-hand hypergraph (i.e., $\scH$) is isomorphic to a contour of $\reduct{\vec{C}}$. Using as a starting point a reduction due to Chen et al.\ \cite{Chenetal19}, we prove that $\#\textsc{CSP}(\reduct{\vec{C}})$ reduces to $\#\dirhomsprobd(\vec C)$ via parameterised Turing reductions. Now, under ETH, $\#\textsc{CSP}(\reduct{\vec{C}}) \notin \ccFPT$ when the adaptive width of $\reduct{\vec{C}}$ is unbounded~\cite{Marx13}, unless ETH fails. By the reduction above, then, we obtain:
\begin{lemma}\label{lem:hom_hard_cont_aw}
$\#\dirhomsprobd(\vec C) \notin \ccFPT$ for every class $\vec C$ of canonical DAGs whose contours have unbounded adaptive width, unless ETH fails.
\end{lemma}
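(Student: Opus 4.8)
The plan is to derive the lower bound by a short chain of parameterised Turing reductions ending in the known $\#\textsc{CSP}$ lower bound. Write $\reduct{\vec C}=\{\reduct{\vec H}:\vec H\in\vec C\}$ for the class of contours of $\vec C$, and let $\#\textsc{CSP}(\reduct{\vec C})$ denote the restriction of $\#\textsc{CSP}$ (i.e.\ of counting hypergraph homomorphisms) to instances $(\scH,\scG)$ whose left-hand side $\scH$ is isomorphic to a member of $\reduct{\vec C}$. Since $\vec C$ is recursively enumerable and the contour is computable, $\reduct{\vec C}$ is recursively enumerable; by hypothesis it has unbounded adaptive width, so by Marx's CSP lower bound~\cite{Marx13} --- which is phrased in terms of adaptive width (equivalently submodular width) and transfers \emph{a fortiori} to counting, as counting solutions is at least as hard as deciding satisfiability --- we get $\#\textsc{CSP}(\reduct{\vec C})\notin\ccFPT$ unless ETH fails. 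It therefore suffices to produce a parameterised Turing reduction $\#\textsc{CSP}(\reduct{\vec C})\fptred\#\dirhomsprobd(\vec C)$; the delicate requirement is that every digraph pair $(\vec H,\vec G)$ queried by the reduction has $|\vec H|+d(\vec G)$ bounded, in terms of $|\scH|$, by a computable function.

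Given an instance with left-hand side $\scH$, the first step is to enumerate $\vec C$ until a canonical DAG $\vec H\in\vec C$ with $\reduct{\vec H}\cong\scH$ is found; the search time and $|\vec H|$ are then bounded by computable functions of $|\scH|$. (The naive encoding that builds a canonical DAG straight from $\scH$ is unavailable here, since that DAG need not belong to $\vec C$; a genuine witness $\vec H\in\vec C$ may carry extra structure.) Using the canonical-DAG structure --- every source $s$ is a single vertex whose out-neighbourhood $N^+(s)$ is a hyperedge of $\reduct{\vec H}$, and, after deleting the isolated vertices of $\vec H$ (which multiplies homomorphism counts by an easily computable constant), the vertices of $\scH\cong\reduct{\vec H}$ are exactly the sinks of $\vec H$ --- one encodes the constraints of the instance into an $\vec H$-coloured host digraph $\vec G$ with colouring $c$: the colour class of a sink $v$ consists of one vertex $(v,w)$ per vertex $w$ of $\scG$, the colour class of a source $s$ consists of one vertex $(s,\mu)$ per admissible partial assignment $\mu\colon N^+(s)\to V(\scG)$ (i.e.\ per $\mu$ whose image lies inside some hyperedge of $\scG$), and arcs from $(s,\mu)$ to the vertices $(v,\mu(v))$ record $\mu$. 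A colour-prescribed homomorphism $\vec H\to\vec G$ then picks a value $f(v)$ for each vertex $v$ of $\scH$ and is forced, at each source $s$, to use $\mu=f|_{N^+(s)}$, which is admissible iff $f(N^+(s))$ lies inside some hyperedge of $\scG$; hence colour-prescribed homomorphisms biject with hypergraph homomorphisms $\scH\to\scG$, so $\#\cphoms{\vec H}{c}{\vec G}=\#\homs{\scH}{\scG}$. By construction $d(\vec G)\le\max_s|N^+(s)|\le|V(\scH)|$, and $|\vec H|$ is bounded in $|\scH|$, so the target parameter is controlled.

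It remains to recover the colour-prescribed quantity $\#\cphoms{\vec H}{c}{\vec G}$ from uncoloured homomorphism counts. This is the standard colour-prescribed-to-uncoloured step, carried out by inclusion--exclusion over the partition lattice of $V(\vec H)$ and following the reduction of Chen et al.~\cite{Chenetal19}: $\#\cphoms{\vec H}{c}{\vec G}$ is a fixed integer linear combination of values $\#\homs{\vec H}{\vec G'}$, where each $\vec G'$ is obtained from $(\vec G,c)$ by a bounded operation on its colour classes, the number of such $\vec G'$ depends only on $|\vec H|$, $|\vec G'|$ is polynomial in $|\vec G|$, and $d(\vec G')$ is at most a computable function of $|\vec H|$ times $d(\vec G)$. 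Each term $\#\homs{\vec H}{\vec G'}$ is thus a legitimate query to $\#\dirhomsprobd(\vec C)$ (note $\vec H\in\vec C$) whose parameter $|\vec H|+d(\vec G')$ is bounded in $|\scH|$; the oracle answers it, and the linear combination returns $\#\homs{\scH}{\scG}$. Chaining the two steps gives $\#\textsc{CSP}(\reduct{\vec C})\fptred\#\dirhomsprobd(\vec C)$, and the lemma follows.

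I expect the crux to be the reduction itself, and within it two intertwined points. First, one must keep $d(\vec G')$ controlled by the parameter throughout the colour-prescribed-to-uncoloured inclusion--exclusion: in the parameterised \emph{directed} setting the out-degree of the host is part of the parameter, so a construction that merges colour classes carelessly would be fatal. Second, one must handle the extra combinatorial structure a genuine witness $\vec H\in\vec C$ may carry --- several sources sharing an out-neighbourhood, and sinks lying in many hyperedges, so that ``accidental'' homomorphisms appear whenever one hyperedge is contained in another --- without inflating $|\vec H|$. Securing both of these uniformly over all classes of canonical DAGs is exactly what the adaptation of the Chen et al.~\cite{Chenetal19} reduction is for.
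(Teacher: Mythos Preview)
Your proposal follows the same architecture as the paper --- reduce from a CSP lower bound to colour-prescribed DAG homomorphisms, then strip colours --- but there is a gap at the starting point. You define $\#\textsc{CSP}(\reduct{\vec C})$ as ``counting hypergraph homomorphisms'' and invoke Marx~\cite{Marx13}. Marx's lower bound, however, is for CSP over \emph{relational structures}: the input is $(\mathcal A,\mathcal B)$ with $\mathcal H(\mathcal A)\in\reduct{\vec C}$, where $\mathcal A$ may carry several relation symbols, or several tuples, inducing the same hyperedge. Your encoding (one source per hyperedge of $\reduct{\vec H}$, with $\mu$ admissible whenever its image lies in \emph{some} hyperedge of $\scG$) captures only hypergraph homomorphisms; this is a strict special case of CSP, and Marx's theorem does not assert hardness of that restricted problem. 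Put differently, the bijection you set up is correct, but what it computes is not what Marx tells you is hard.

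The paper closes exactly this gap by invoking not Marx directly but the refinement due to Chen et al.\ (Theorem~\ref{thm:homS_hard}), which gives hardness for a class $\mathcal S$ of \emph{specific} structures that are minimal under homomorphic equivalence. Taking $\mathcal S=\{\mathcal A[\mathcal H]:\mathcal H\in\reduct{\vec C}\}$, where $\mathcal A[\mathcal H]$ has one relation with one tuple per hyperedge, yields structures of canonical shape; Lemma~\ref{lem:cphoms_hard} then reduces $\homsprob(\mathcal S)$ to colour-prescribed hypergraph homomorphisms, after which an encoding equivalent to yours (Lemma~\ref{lem:cphom_to_cpdirhom}) and colour removal (Lemma~\ref{lem:colour_removal}) finish the job. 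Your outline becomes correct once this intermediate step is inserted. One further small point: the colour-removal step is inclusion--exclusion over \emph{subsets} of $V(\vec H)$ (restricting $\vec G$ to induced subgraphs, which cannot raise outdegree), not over the partition lattice as you wrote; a partition-based construction would take quotients of $\vec G$ and could blow up the outdegree.
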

\noindent We now seek to lift this result from canonical DAGs to arbitrary directed graphs.

\subsubsection{Lifting hardness to arbitrary digraphs via Monotone Reversible Minors}
Starting from Lemma~\ref{lem:hom_hard_cont_aw}, we prove a hardness result for $\#\dirhomsprobd(\vec C)$ for general classes of digraphs $\vec C$. To this end we need to reduce from $\#\dirhomsprobd(\vec C)$ to $\#\dirhomsprobd(\vec C')$ where $\vec C'$ is a class of canonical DAGs, so that we can apply Lemma~\ref{lem:hom_hard_cont_aw}; clearly, the reduction must imply that $\#\dirhomsprobd(\vec C')$ has unbounded adaptive width.

Towards this end we introduce a kind of graph minors for digraphs, which we call \emph{monotone reversible (MR) minors}. A digraph $\vec H'$ is a MR minor of $\vec H$ if it is obtained from $\vec H$ by a sequence of the following operations:
\begin{itemize}\itemsep0pt
    \item deleting a sink, i.e., a strongly connected component from which no other vertices can be reached
    \item deleting a loop
    \item contracting an arc
\end{itemize}
\noindent Note that, unlike standard minors, deletion of arbitrary vertices and arbitrary arcs are not allowed. This allows us to prove:
\begin{lemma}\label{lem:intro_MR}
Let $\vec{C}$ be a class of digraphs and let $\vec{D}$ be a class of MR minors of $\vec{C}$. Then there exists a parameterised Turing reduction from $\#\dirhomsprobd(\vec D)$ to $\#\dirhomsprobd(\vec C)$.
\end{lemma}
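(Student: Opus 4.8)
The plan is to build the reduction in three layers, keeping as the crucial accounting invariant that every host graph handed to the oracle has outdegree bounded by a function of $|\vec H|$ times the outdegree of the original host; this is precisely what makes the construction a legitimate reduction for the \emph{outdegree-parameterised} problems, rather than just for their classical analogues. First I would pass from $\#\dirhomsprobd$ to its colour-prescribed variant $\cpdirhomsprobd$, then handle the MR operations one at a time by local gadget constructions on the host in the colour-prescribed world, then pass back.

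\emph{Passing to the colour-prescribed problem.} There is a standard Turing equivalence, preserving outdegree up to a factor of $|\vec H|$, between $\#\dirhomsprobd(\vec C)$ and $\cpdirhomsprobd(\vec C)$: from $(\vec H',\vec G')$ one gets $\#\homs{\vec H'}{\vec G'}=\cphom(\vec H'\to\vec G'\times\vec H')$ under the projection colouring, where $\vec G'\times\vec H'$ is the directed tensor product and has outdegree at most $d(\vec G')\cdot|\vec H'|$; conversely a colour-prescribed count is recovered from uncoloured counts into the colour-class-induced subgraphs of $\vec G'$ (which have outdegree at most $d(\vec G')$) by inclusion--exclusion over subsets of $V(\vec H)$. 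So it suffices to reduce $\cpdirhomsprobd(\vec D)$ to $\cpdirhomsprobd(\vec C)$.

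\emph{One MR operation at a time.} Fix $\vec H'\in\vec D$; by hypothesis it arises from some $\vec H\in\vec C$ through a sequence $\vec H=\vec H_0,\dots,\vec H_m=\vec H'$ of MR operations with $m\le 3|V(\vec H)|$, and we may assume this witnessing data is computable from $\vec H'$. I claim that for each $i$ there is a host-side reduction from $\cpdirhomsprobd(\{\vec H_i\})$ to $\cpdirhomsprobd(\{\vec H_{i-1}\})$; composing these and then taking the union over $\vec H'\in\vec D$ yields the reduction sought in the previous paragraph. Writing $\vec H'=\vec H_i$, $\vec H''=\vec H_{i-1}$, given $(\vec H',\vec G',c')$ I would construct $(\vec H'',\vec G'',c'')$ with $\cphom(\vec H''\to\vec G'')=\cphom(\vec H'\to\vec G')$ and $d(\vec G'')\le 2\,d(\vec G')+|V(\vec H'')|$. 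If the operation deleted a loop at $v$, take $\vec G''=\vec G'$ with a loop added at every $v$-coloured vertex: a colour-prescribed homomorphism of $\vec H''$ must send $v$ to a $v$-coloured vertex, which now carries the required loop, and all other constraints are unchanged. If it deleted a strongly connected sink $X$, adjoin to $\vec G'$ a fresh copy $\{\hat x:x\in X\}$, with $\hat x$ coloured $x$, internal arcs copying $\vec H''[X]$, and an arc $(p,\hat x)$ for every arc $(u,x)$ of $\vec H''$ into $X$ and every $u$-coloured $p$ of $\vec G'$; since no vertex of $\vec G'$ carries a colour from $X$, every colour-prescribed homomorphism of $\vec H''$ must send each $x\in X$ to $\hat x$, restricts to one of $\vec H'$ into $\vec G'$, and every such restriction extends uniquely (the into-$X$ arc constraints hold by construction regardless of the image of $u$). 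If it contracted an arc $(u,v)$ to $w$, split the colour $w$: replace each $w$-coloured vertex $x$ of $\vec G'$ by two vertices $x^u,x^v$ (coloured $u,v$) with the arc $(x^u,x^v)$ (and $(x^v,x^u)$ as well, should $(v,u)\in E(\vec H'')$), keep all other vertices with their colours, and re-route each arc of $\vec G'$ incident to a $w$-coloured endpoint according to which of $(u,z),(v,z)$ (resp.\ $(z,u),(z,v)$) lie in $E(\vec H'')$; the map sending a colour-prescribed homomorphism $\varphi'$ of $\vec H'$ into $\vec G'$ to the homomorphism of $\vec H''$ that agrees with $\varphi'$ off $\{u,v\}$ and sends $u,v$ to $(\varphi'(w))^u,(\varphi'(w))^v$ should be the required bijection.

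\emph{Composition and the main obstacle.} Chaining the $m$ single-operation reductions multiplies the host outdegree by at most $2^m$ and increases it additively by at most $m\cdot|V(\vec H)|$, both functions of $|\vec H|$ alone, while every intermediate pattern has size at most $|V(\vec H)|+|E(\vec H)|$; together with the two standard equivalences of the second paragraph this gives the claimed parameterised Turing reduction from $\#\dirhomsprobd(\vec D)$ to $\#\dirhomsprobd(\vec C)$. I expect the contraction gadget to be the delicate point: one has to verify that the re-routing of arcs incident to the contracted vertex produces an exact bijection on colour-prescribed homomorphisms---paying particular attention to the cases where both $(u,z)$ and $(v,z)$, or a reverse arc $(v,u)$, lie in $E(\vec H'')$, so that one neither over- nor under-counts---while simultaneously keeping the outdegree blow-up bounded by a function of $|\vec H|$, which is exactly the feature that distinguishes this reduction from its unparameterised analogue. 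The loop- and sink-deletion gadgets are routine by comparison.
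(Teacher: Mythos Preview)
Your proposal is correct and follows essentially the same three-layer strategy as the paper: pass to the colour-prescribed variant, handle each MR operation by a local host-side gadget, and compose; the paper's Lemmas~\ref{lem:technical_minors_deletions}--\ref{lem:technical_minors_self-loops} give exactly your sink-deletion, contraction, and loop-deletion gadgets, Lemma~\ref{lem:minor_reduction} chains them, and Lemma~\ref{lem:colour_removal} is your inclusion--exclusion step. Your explicit handling of a possible reverse arc $(v,u)$ in the contraction gadget is in fact slightly more careful than the paper's write-up; the only minor slip is the additive term in your composed outdegree bound (the recurrence $d_i\le 2d_{i-1}+|V(\vec H)|$ gives $(2^m-1)|V(\vec H)|$ rather than $m|V(\vec H)|$), but this is still a function of $|\vec H|$ and does not affect the argument.
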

\noindent Lemma~\ref{lem:intro_MR} explains the ``reversible'' part of MR minors---we can efficiently ``revert'' the operations that yielded a MR of a digraph; for the ``monotone'' see the next section. The heart of the proof of Lemma~\ref{lem:intro_MR} proves the claim for the color-prescribed version of the problems; this implies the reduction for the original problems via standard interreducibility arguments arguments. 
As a consequence of Lemma~\ref{lem:intro_MR} we obtain:
\begin{lemma}\label{lem:main_hardness_homs_intro}
Let $\vec{C}$ be a recursively enumerable class of digraphs and let $\vec{C}'$ be a class of canonical DAGs that are MR minors of digraphs in $\vec{C}$. If $\reduct{\vec{C}'}$ has unbounded adaptive width then $\#\dirhomsprobd(\vec{C}) \notin \ccFPT$, unless ETH fails.
\end{lemma}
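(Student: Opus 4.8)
The plan is to obtain Lemma~\ref{lem:main_hardness_homs_intro} by chaining the two reductions already at hand: Lemma~\ref{lem:intro_MR}, which reverses MR-minor operations, and Lemma~\ref{lem:hom_hard_cont_aw}, which gives hardness for classes of canonical DAGs with wide contours. The only point requiring care is keeping the composition effectively computable, which is exactly where the hypothesis that $\vec{C}$ is recursively enumerable enters. Concretely, I would prove the contrapositive: assuming $\#\dirhomsprobd(\vec{C}) \in \ccFPT$, I derive that ETH fails.

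First I would replace $\vec{C}'$ by a recursively enumerable class of canonical DAGs that is still hard. Enumerating $\vec{C}$ as $\vec{H}_1, \vec{H}_2, \dots$, for each $j$ I compute the finite, effectively constructible set $M_j$ of all canonical DAGs obtainable from $\vec{H}_j$ by sequences of MR-minor operations (deleting a sink, deleting a loop, contracting an arc); each such operation strictly decreases $|V|+|E|$, so $M_j$ is finite and can be found by exhaustive search. Put $\vec{C}'' = \bigcup_{j\ge 1} M_j$. Then $\vec{C}''$ is recursively enumerable, consists of canonical DAGs, and every member of $\vec{C}''$ is an MR minor of some digraph in $\vec{C}$. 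Moreover $\vec{C}' \subseteq \vec{C}''$, since every $\vec{H}'\in\vec{C}'$ is by hypothesis a canonical DAG and an MR minor of some $\vec{H}_j$, hence lies in $M_j$. Therefore $\reduct{\vec{C}'} \subseteq \reduct{\vec{C}''}$, and the assumption that $\reduct{\vec{C}'}$ has unbounded adaptive width carries over to $\reduct{\vec{C}''}$.

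Next I would feed $\vec{C}''$ to both lemmas. By Lemma~\ref{lem:hom_hard_cont_aw}, since $\vec{C}''$ is a (recursively enumerable) class of canonical DAGs whose contours have unbounded adaptive width, $\#\dirhomsprobd(\vec{C}'') \notin \ccFPT$ unless ETH fails. By Lemma~\ref{lem:intro_MR} applied with $\vec{D}=\vec{C}''$, which is a class of MR minors of $\vec{C}$, there is a parameterised Turing reduction from $\#\dirhomsprobd(\vec{C}'')$ to $\#\dirhomsprobd(\vec{C})$; recursive enumerability of $\vec{C}$ is precisely what makes this reduction uniformly computable, as on any input $\vec{H}'\in\vec{C}''$ one can search $\vec{C}$ for a digraph $\vec{H}$ having $\vec{H}'$ as an MR minor and then run the reversal procedure from the proof of Lemma~\ref{lem:intro_MR}. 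Composing this reduction with the assumed FPT algorithm for $\#\dirhomsprobd(\vec{C})$ would put $\#\dirhomsprobd(\vec{C}'')$ in $\ccFPT$, contradicting the previous sentence unless ETH fails. Hence $\#\dirhomsprobd(\vec{C}) \notin \ccFPT$ unless ETH fails, which is the claim.

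I expect the main obstacle to be entirely one of effectiveness, not of substance: one has to ensure both that the class handed to Lemma~\ref{lem:hom_hard_cont_aw} is recursively enumerable (so the underlying $\#\textsc{CSP}$ lower bound of~\cite{Marx13} is available) and that the MR-minor reversal of Lemma~\ref{lem:intro_MR} can actually locate a preimage of a given minor within $\vec{C}$. The detour through $\vec{C}''$ handles both simultaneously; apart from that, the argument is a direct composition of Lemma~\ref{lem:intro_MR} and Lemma~\ref{lem:hom_hard_cont_aw}, so no new technical machinery should be needed here.
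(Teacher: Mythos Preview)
Your proposal is correct and follows essentially the same route as the paper: chain the hardness of $\#\dirhomsprobd$ for classes of canonical DAGs with unbounded-adaptive-width contours (Lemma~\ref{lem:hom_hard_cont_aw}, which in the full proof is established via the colour-prescribed version and the $\#\CSP$ reduction) together with the MR-minor reversal (Lemma~\ref{lem:intro_MR}, proved in the paper as Lemma~\ref{lem:minor_reduction} for the colour-prescribed problem plus Lemma~\ref{lem:colour_removal}). Your detour through the recursively enumerable superclass $\vec{C}''$ is a nice bit of extra care---the paper's own proof applies Lemma~\ref{lem:coloured_CSP_reduction} directly to $\vec{C}'$ without explicitly verifying its recursive enumerability, and your construction of $\vec{C}''$ (using that each MR-minor operation strictly decreases $|V|+|E|$) fills exactly that gap.
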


\subsubsection{Lifting hardness from homomorphisms to subgraphs}
Recall the arguments of Section~\ref{sec:hom_basis}: to prove that $\#\dirsubsprob(\vec C)$ is hard when $\rho^*(\vec C)=\infty$, we essentially have to prove that every digraph $\vec{H}$ with high fractional cover number $\rho^*(\vec H)$ has a quotient $\vec{F}$ that is hard. By the arguments of the previous section, to show that such a quotient $\vec{F}$ is hard it is enough to show that $\vec{F}$ has an MR minor $\vec{F}'$ which is a canonical DAG whose contour $\reduct{\vec{F}'}$ has high adaptive width. We indeed prove that such a quotient exists. To this end, we consider two cases. Recall that $\alpha(\scH)$ and $\alpha^*(\scH)$ denote respectively the independence number and the fractional independence number of a hypergraph $\scH$.
\begin{itemize}
    \item[(a)] $\alpha(\reduct{\vec{H}})$ is large. In this case we can show that $\vec{H}$ contains a large matching whose edges are ``isolated enough'' for us to construct a quotient $\vec{F}$ that admits, as MR minor, the $1$-subdivision $\vec{F}'$ of a large complete graph, where the arcs of $\vec F'$ are directed away from the subdivision vertices. It is easy to see that $\vec F'$ is a canonical DAG, and that $\reduct{\vec F'}$ is the complete graph itself, which has large adaptive width.
    \item[(b)] $\alpha(\reduct{\vec{H}})$ is small. We then choose as quotient $\vec{F}$ the graph $\vec{H}$ itself. Recall that, by definition, $\rho^*(\vec{H})=\rho^*(\reach{\vec H})$. By LP duality the fractional cover number equals the fractional independence number, that is, $\rho^*(\reach{\vec H})=\alpha^*(\reach{\vec H})$. Using Theorem~\ref{thm:independence_Gap_Intro}, we deduce that the adaptive width of $\reach{\vec H}$ is within constant factors of $\alpha^*(\reach{\vec H})$, and thus of $\rho^*(\vec H)$. By carefully exploiting this fact, we can explicitly construct an MR minor $\vec{F}'$ of $\vec{H}$ that is both a canonical DAG and has high adaptive width.
\end{itemize}
Thus, in both cases we can show that if $\rho^*(\vec H)$ is large then $\vec H$ admits an MR minor that is a canonical DAG of large treewidth. Formally, we obtain:
\begin{lemma}\label{lem:subs_to_MR_aw}
Let $\vec C$ be a class of digraphs such that $\rho^*(\vec C )=\infty$. Then the class $\vec C'$ of all canonical DAGs that are MR minors of quotients of $\vec C$ has unbounded adaptive width.
\end{lemma}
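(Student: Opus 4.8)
The plan is to exhibit, for every $n\in\N$, a canonical DAG in $\vec{C}'$ whose contour has adaptive width at least $n$; since every element of $\vec{C}'$ is a canonical DAG, this is exactly the statement that the contours $\reduct{\vec{C}'}$ have unbounded adaptive width. Fix $n$. As $\rho^*(\vec{C})=\infty$, choose $\vec{H}\in\vec{C}$ with $\rho^*(\vec{H})=\rho^*(\reduct{\vec{H}})$ as large as we wish; by LP duality the fractional edge cover number of a hypergraph equals its fractional independence number, so also $\alpha^{\ast}(\reduct{\vec{H}})$ is large. Fix a threshold $t=t(n)$ and split into two cases according to the integral independence number $\alpha(\reduct{\vec{H}})$.

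First I would handle the case $\alpha(\reduct{\vec{H}})\ge t$. An independent set of $\reduct{\vec{H}}$ is a set $I$ of non-source vertices of $\vec{H}$, no two of which lie in a common reachability set $\reachv{S_i}$; in particular every source of $\vec{H}$ reaches at most one vertex of $I$. Using $I$ together with the sources and the source-to-$I$ paths witnessing reachability, I extract --- after discarding a bounded fraction of the data --- a ``spread'' substructure of $\vec{H}$ that can be folded into a clique: take the quotient of $\vec{H}$ that identifies the relevant vertices in the right way, contract the routing arcs, and delete the resulting loops and the remaining sinks; what survives, as an MR minor, is the digraph $\vec{F}'$ obtained from the $1$-subdivision of $K_m$ (with $m=m(t)\to\infty$) by orienting every arc away from the subdivision vertex. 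Then $\vec{F}'$ is a canonical DAG --- the subdivision vertices are its sources, the $m$ branch vertices its sinks --- and $\reduct{\vec{F}'}=K_m$, which has $\aw(K_m)=\Theta(m)$ (a uniform fractional independent set of weight $1/2$ forces $\mu$-width $m/2$ on the all-vertex bag that any tree decomposition of $K_m$ must contain). Choosing $t$ large enough yields $\aw(\reduct{\vec{F}'})\ge n$.

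In the case $\alpha(\reduct{\vec{H}})<t$ I would use the trivial quotient $\vec{F}=\vec{H}$. By Theorem~\ref{thm:independence_Gap_Intro},
\[
  \aw(\reduct{\vec{H}})\ \ge\ \frac{\alpha^{\ast}(\reduct{\vec{H}})}{4\bigl(\alpha(\reduct{\vec{H}})-\tfrac12\bigr)}\ \ge\ \frac{\alpha^{\ast}(\reduct{\vec{H}})}{4t},
\]
so taking $\rho^*(\vec{H})=\alpha^{\ast}(\reduct{\vec{H}})$ large relative to $t$ makes the contour of $\vec{H}$ itself have large adaptive width. It remains to turn $\vec{H}$ into a canonical DAG by MR-minor operations without letting the adaptive width of the contour drop. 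Step one: contracting every strongly connected component of $\vec{H}$ (a sequence of arc contractions and loop deletions, all MR-admissible) gives the acyclic MR minor $\Hsim$, which has the same sources, the same reachability sets, and hence the same contour as $\vec{H}$. Step two: processing the non-source vertices of $\Hsim$ in reverse topological order, contract each internal vertex ``downward'' into (a cluster of) the sinks it can reach --- contracting its out-arcs and deleting the loops created --- so that every surviving non-source vertex becomes a sink while the sources remain untouched; the outcome $\vec{F}'$ is a canonical DAG. Its contour is obtained from $\reduct{\vec{H}}$ by merging certain vertices into bounded clusters, and by choosing carefully which internal vertices to collapse --- exploiting that the high-adaptive-width core of $\reduct{\vec{H}}$ can be placed ``near the sinks'' --- one keeps $\aw(\reduct{\vec{F}'})$ within a constant factor of $\aw(\reduct{\vec{H}})$, hence at least $n$ once $\vec{H}$ is chosen large enough.

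In both cases $\vec{F}'$ is a canonical DAG that is an MR minor of a quotient of an element of $\vec{C}$, so $\vec{F}'\in\vec{C}'$, and $\aw(\reduct{\vec{F}'})\ge n$; letting $n\to\infty$ proves the lemma. The hard part is the second case: realizing a hypergraph with adaptive width comparable to $\reduct{\vec{H}}$ as the contour of a canonical DAG, using only the restricted operations delete-sink, delete-loop and contract-arc. The difficulty is that the reachability sets of a general DAG overlap along shared internal vertices, so collapsing the routing structure of one source can corrupt that of another; one must order and target the contractions so that each internal vertex is merged consistently for all sources that reach it, bound the sizes of the resulting sink-clusters, and then argue that quotienting $\reduct{\vec{H}}$ by these bounded clusters cannot destroy its adaptive width. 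The first case carries a milder instance of the same issue: extracting from a large independent set of $\reduct{\vec{H}}$ a family of source-to-sink ``fans'' that are vertex-disjoint enough to be folded into a subdivided clique.
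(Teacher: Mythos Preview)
Your overall two-case split along the integral independence number $\alpha(\reduct{\vec H})$ matches the paper, and Case~1 is essentially right: the paper formalises your ``spread substructure'' via \emph{induced matching gadgets} --- arcs $(s_i,w_i)$ from sources to immediate out-neighbours with no two $w_i,w_j$ co-reachable from a single source --- proves that $\alpha(\reduct{\Hsim})$ equals the maximum size of such a gadget, and then builds the quotient yielding $\dsplit{K_m}$ explicitly (identify pairs of sources to form the subdivision vertices, identify groups of $w_i$'s to form the branch vertices, then clean up with sink deletions and contractions). Your sketch would need that mechanism spelled out, but the idea is the same.

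Case~2 has a genuine gap. You try to turn $\Hsim$ into a canonical DAG by \emph{contracting} each internal vertex downward into the sinks, and then assert that this merges contour vertices into ``bounded clusters'' so that $\aw(\reduct{\vec F'})$ stays within a constant factor of $\aw(\reduct{\vec H})$. Neither claim is justified: contracting an internal vertex $u$ into one of its sinks preserves all arcs from $u$ to its other out-neighbours, so after a chain of such contractions a single resulting sink can absorb an unbounded number of original vertices and inherit the union of their reachability relationships; there is no reason the clusters are bounded, and no general lemma says adaptive width is stable under such quotients. You yourself flag this as the hard part, but what you describe is a hope, not an argument.

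The paper's move is different and much cleaner: rather than contracting, it \emph{deletes sinks}. Concretely, whenever $\Hsim$ has an arc $(U,V)$ with both $U,V$ non-sources and $V$ a sink, one first shows (by pushing the weight of $V$ onto its non-source in-neighbour $U$) that there is a maximum-weight fractional independent set of $\reduct{\vec H}$ putting weight~$0$ on $V$; hence deleting the sink $V$ does not decrease $\alpha^{\ast}$ of the contour. Iterating, one reaches a digraph $\vec F$ whose quotient $\SCCquot{F}$ is a canonical DAG with $\alpha^{\ast}(\reduct{\vec F})\ge\alpha^{\ast}(\reduct{\vec H})$. Sink deletion corresponds to deleting vertices from the contour, so $\alpha(\reduct{\vec F})\le\alpha(\reduct{\vec H})<t$. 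Only now is the integrality-gap bound applied --- to $\reduct{\vec F}$, not to $\reduct{\vec H}$ --- yielding large $\aw(\reduct{\vec F})$. Contracting the strongly connected components (which leaves the contour unchanged) then gives the canonical DAG $\SCCquot{F}$ as an MR minor. The key point you are missing is that tracking $\alpha^{\ast}$ through sink deletions is easy, whereas tracking $\aw$ through arc contractions is not.
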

As a consequence, this implies that $\#\dirsubsprobd(\vec C) \notin \ccFPT$ whenever $\rho^*(\vec C )=\infty$, unless ETH fails. This concludes the overview of the proof of the lower bounds for $\#\dirsubsprobd(\vec C)$. 

Before proceeding with the next section, we wish to point out that, in the course of our proofs, we will also see that for computing the fractional cover number and the source count of a directed graph $\vec{H}$, it is \emph{always} sufficient to consider the DAG $\Hsim$ (see Figure~\ref{fig:main_intro}). In other words, the complexity of $\#\dirsubsprobd(\vec C)$ and $\#\dirindsubsprobd(\vec{C})$ solely depends on the structure of the DAGs obtained by contracting the strongly connected components of the patterns in $\vec{C}$.

\subsection{Related Work and Outlook}\label{sec:related_work}
Our results are closely related to the recent surge of works on pattern counting in degenerate graphs~\cite{Bera-ITCS20,BeraS20,Bera-SODA21,BressanR21,GishbolinerLSY22,BeraGLSS22}: An undirected graph $G$ has degeneracy $d$ if there is an acyclic orientation $\vec{G}$ of $G$ with outdegree at most $d$. In the context of pattern counting in degenerate graphs, one is given undirected graphs $H$ and $G$, and the goal is to compute the number of copies (resp.\ induced copies) of $H$ in $G$, parameterised by the size of the pattern $H$ and the degeneracy $d$ of $G$. These problems have been completely classified with respect to linear time tractability~\cite{BeraGLSS22} and with respect to fixed-parameter tractability~\cite{BressanR21}. 

The crucial difference to the results in this work is that, in our setting, the orientations of $H$ and $G$ are already fixed. Notably, this increases the set of tractable instances when compared to the degenerate setting: For example, counting copies of an undirected graph $H$ in an undirected graph $G$, parameterised by $|H|$ and the degeneracy $d$ of $G$, is FPT if and only if the induced matching number of $H$ is small~\cite{BressanR21}. Now fix acyclic orientations $\vec{H}$ and $\vec{G}$ of $H$ and $G$ such that the outdegree of $\vec{G}$ is at most $d$. One might think that the directed problem also is FPT if and only if $H$ (i.e., the underlying undirected graph of $\vec{H}$) has small induced matching number. However, this is not true: We have shown in this work that we can count copies of a digraph $\vec{H}$ in a digraph $\vec{G}$ in FPT time (parameterised by $|\vec{H}|$ and $d(\vec{G})$) if and only if the fractional cover number of $\vec{H}$ is small --- we will see that this holds even if the host $\vec{G}$ is a DAG. While $H$ having small induced matching number certainly implies that the fractional cover number of $\vec{H}$ is small, the other direction does not hold: For example, if $\vec{H}$ contains a source that is adjacent to all non-sources, then the fractional cover number is $1$, although the induced matching number can be arbitrarily large.

This work also sheds some new light on the problem of counting homomorphisms into degenerate graphs: Bressan~\cite{Bressan21} has shown that we can count homomorphisms from $H$ to $G$ in FPT time (parameterised by $|H|$ and the degeneracy of $G$) if the so-called \emph{dag treewidth} of $H$ is small; it is currently open whether the other direction holds as well~\cite{BressanR21,BeraGLSS22}. The dag treewidth of $H$ is just the maximum (non-fractional) hypertreewidth of the reachability hypergraph of any acyclic orientation of $\vec{H}$. Our reduction to $\#\textsc{CSP}$ implies that it is sufficient for the reachability hypergraphs to have small fractional hypertreewidth, which yields a fractional version of dag treewidth. However, it is not clear whether unbounded dag treewidth and unbounded fractional dag treewidth are equivalent, the reason for which is the fact that all acyclic orientations have to be considered. We leave this as an open problem for future work.

\pagebreak

\section{Preliminaries}\label{sec:prelims}
Given a set $S$, we set $S^2= S \times S$, and we write $S^{(2)}$ for the set of all unordered pairs of distinct elements of $S$. Let $f:A\times B \to C$ be a function and let $a\in A$. We write $f(a,\ast) : B \to C$ for the function that maps $b\in B$ to $f(a,b)$.

\subsection{Graphs and Directed Graphs}
We denote graphs by $F,G,H$, and directed graphs by $\vec{F},\vec{G},\vec{H}$. Graphs and digraphs are encoded via adjacency lists, and we write $|G|$ (resp.\ $|\vec{G}|$) for the length of the encoding. In the remainder of the paper we will call directed graphs just ``digraphs''. We use $\{u,v\}$ for undirected edges, and $(u,v)$ for directed edges, which we also call \emph{arcs}. Furthermore, we will use $C$ to denote classes of graphs, and $\vec{C}$ to denote classes of digraphs. Our graphs do not contain multi-edges; however, we allow forward-backward arcs $(u,v)$ and $(v,u)$ in digraphs. Additionally, our undirected graphs do not contain loops (edges from a vertex to itself) unless stated otherwise. For technical reasons, we will allow loops in digraphs.\footnote{We will see and state explicitly, that all of our hardness results will also entail corresponding hardness in the restricted case of digraphs without loops.} 

A directed acyclic graph (DAG) is a digraph without directed cycles. A \emph{source} of a DAG is a vertex with in-degree $0$, and a \emph{sink} of a DAG is a vertex with outdegree $0$ (and an isolated vertex is simultaneously a source and and a sink).
Given a directed (not necessarily acyclic) graph $\vec{H}$ and a vertex $v\in V(\vec{H})$, we write $\reachv{v}$ for the set of vertices reachable from $v$ by a directed path; this includes $v$ itself. Given a set of vertices $S\subseteq V(\vec{H})$, we set
\begin{align*}
    \reachv{S}:=\bigcup_{v\in S} \reachv{v}\,.
\end{align*}
Let $H$ be a graph and $\sigma$ a partition of $V(H)$. The \emph{quotient graph} of $H$ w.r.t.\ $\sigma$, denoted by $H/\sigma$, is defined as follows: $V(H/\sigma)$ consists of the blocks of $\sigma$, and $\{B_1,B_2\} \in E(H/\sigma)$ if and only if $\{v_1,v_2\}\in E(H)$ for some $v_1\in B_1$ and $v_2\in B_2$. If $H/\sigma$ does not contain loops then it is called a \emph{spasm}~\cite{CurticapeanDM17}. These definitions can be adapted in the obvious way for digraphs.

\begin{definition}[The DAG $\Hsim$]
Let $\vec{H}$ be a digraph and let $x,y\in V(\vec{H})$. We denote by $\sim$ the equivalence relation over $V(\vec H)$ whose classes are the strongly connected components of $\vec H$. 
We denote by $\Hsim$ the DAG obtained by deleting loops from the quotient of $\vec H$ with respect to the partition of $V(\vec{H})$ given by $\sim$.
\end{definition}

Next we introduce some notions that will be used in our classifications. 
\begin{definition}[Directed split]
The \emph{directed split} $\dsplit{H}$ of a graph $H$ is the digraph obtained from the $1$-subdivision of $H$ by orienting all edges towards $V(H)$. 
\end{definition}

\paragraph*{Homomorphisms and Colourings}
A \emph{homomorphism} from $H$ to $G$ is a map $\varphi : V(H) \to V(G)$ such that $\{\varphi(u),\varphi(v)\} \in E(G)$ whenever $\{u,v\} \in E(H)$. The set of all homomorphisms from $H$ to $G$ is denoted by $\homs{H}{G}$. An \emph{$H$-colouring} of $G$ is a homomorphism $c \in \homs{G}{H}$. An $H$-\emph{coloured} graph is a pair $(G,c)$ where $G$ is a graph and $c$ an $H$-colouring of $G$. A homomorphism $\varphi \in \homs{H}{G}$ is color-prescribed (by $c$) if $c(\varphi(v))=v$ for all $v \in V(H)$. We write $\homs{H}{(G,c)}$ for the set of all homomorphisms from $H$ to $G$ color-prescribed by $c$. These definitions can be adapted for digraphs in the obvious way; we emphasise that a homomorphism $\varphi$ between digraphs must preserve the direction of the arcs, i.e., $(\varphi(u),\varphi(v)) \in E(\vec{G})$ whenever $(u,v) \in E(\vec{H})$.

\paragraph*{Subgraphs and Induced Subgraphs}
A \emph{subgraph} of a graph $G=(V,E)$ is a graph with vertices $\hat{V}\subseteq V$ and edges $\hat{E} \subseteq \hat{V}^{(2)}\cap E$. We write $\subs{H}{G}$ for the set of all subgraphs of $G$ that are isomorphic to $H$. Similarly, a subgraph of a digraph $\vec{G}=(V,E)$ is a digraph with vertices $\hat{V}\subseteq V$ and arcs $\hat{E} \subseteq \hat{V}^{2}\cap E$, and we denote by $\subs{\vec{H}}{\vec{G}}$ the set of all subgraphs of $\vec{G}$ that are isomorphic to $\vec{H}$.

Given a subset of vertices $S$ of a graph $H$, we write $H[S]$ for the graph \emph{induced} by $S$, that is $V(H[S]):=S$ and $E(H[S]):= E(H)\cap S^{(2)}$. The subgraph $\vec{H}[S]$ of a digraph induced by a vertex set $S$ is defined correspondingly: $V(\vec{H}[S]):=S$ and $E(\vec{H}[S]):= E(\vec{H})\cap S^{2}$. We write $\indsubs{H}{G}:=\{S \subseteq V(G)~|~G[S]\cong H\}$ for the set of all induced subgraphs of $G$ that are isomorphic to $H$. Again, the notion $\indsubs{\vec{H}}{\vec{G}}$ is defined similarly for digraphs.

\subsection{Hypergraphs}
A hypergraph is a pair $\scH=(V,E)$ where $V$ is a finite set and $E \subseteq 2^V \setminus \{\emptyset\}$. The elements of $E$ are called hyperedges or simply edges. Given $X \subseteq V$, the subhypergraph of $\scH$ induced by $X$ is the hypergraph $\scH[X]$ with vertex set $X$ and edge set $\{e \cap X : e \in \scE\} \setminus \{\emptyset\}$. The \emph{arity} of a hypergraph $\mathcal{G}$, denoted by $a(\mathcal{G})$ is the maximum cardinality of a hyperedge. We denote hypergraphs with the symbols $\scH,\scG,\ldots$. 

\begin{definition}[Reachability Hypergraph]
Let $\vec{H}$ be a digraph, let $S_1,\dots,S_k\subseteq V(\vec{H})$ be the sources of $\Hsim$, and for each $i\in[k]$ let $s_i\in S_i$. The \emph{reachability hypergraph} $\reach{\vec{H}}$ of $\vec{H}$ has vertex set $V(\vec{H})$ and edge set $\{e_i=\reachv{s_i} : i \in [k]\}$.
\end{definition}
\noindent Note that $\reach{\vec H}$ is well defined, since $S_i$ is a strongly connected component of $\vec H$ and so the choice of $s_i \in S_i$ is irrelevant. If $\vec H$ is a DAG, then $\reach{\vec H}$ is the reachability hypergraph in the usual sense.

The following special case of DAGs, defined via reachability hypergraphs, will turn out to be crucial for our lower bounds.
\begin{definition}[Canonical DAGs]
Let $\mathcal{R}$ be a reachability hypergraph. For every $e \in E(\mathcal{R})$ fix some $s_e \in V(\mathcal{R})$ such that $s_e$ is contained only in $e$. The \emph{canonical DAG} $\vec{H}$ of $\mathcal{R}$ is defined by $V(\vec{H})=V(\mathcal{R})$ and $E(\vec{H})=\{(s_e,v) : e \in E(\mathcal R), v \in e \setminus \{s_e\}\}$. \end{definition}

Note that if $\vec H$ is a canonical DAG of $\mathcal R$ then $\mathcal{R}$ is the reachability hypergraph of $\vec H$. If a DAG $\vec H$ is the canonical DAG of its own reachability hypergraph, then we just say $\vec H$ is a canonical DAG. Equivalently, a DAG $\vec{H}$ is a canonical DAG if every vertex is either a source or a sink.

\begin{definition}[Contour]\label{def:contour}
Let $\vec{H}$ be a digraph, let $S_1,\dots,S_k\subseteq V(\vec{H})$ be the sources of $\Hsim$, and for each $i\in[k]$ let $s_i\in S_i$.
The \emph{contour} $\reduct{\vec{H}}$ of $\vec{H}$ is the hypergraph obtained from $\reach{\vec{H}}$ by deleting $S_i$ from $e_i$ for each $i\in[k]$.
\end{definition}

\paragraph*{Invariants and Width Measures}
In what follows, we are using the notation of~\cite{GroheM14} and~\cite{Marx13}, and we recall the most important definitions.

\begin{definition}[Tree decompositions]
Let $\mathcal{H}$ be a hypergraph. A \emph{tree decomposition} of $\mathcal{H}$ is a pair of a tree $\mathcal{T}$ and a set of subsets of $V(\mathcal{H})$, called \emph{bags}, $\mathcal{B}=\{B_t\}_{t\in V(\mathcal{T})}$ such that the following conditions are satisfied:
\begin{enumerate}
    \item $\bigcup_{t\in V(\mathcal{T})} B_t = V(\mathcal{H})$.
    \item For every hyperedge $e\in E(\mathcal{H})$ there is a bag $B_t$ such that $e\subseteq B_t$.
    \item For every vertex $v\in V(\mathcal{H})$, the subgraph $\mathcal{T}[\{t~|~v\in B_t\}]$ of $\mathcal{T}$ is connected.
\end{enumerate}
\end{definition}

\begin{definition}[$f$-width]
Let $\mathcal{H}$ be a hypergraph, let $f:2^{V(\mathcal{H})}\rightarrow\mathbb{R}_{+}$, and let $(\mathcal{T},\mathcal{B})$ be a tree decomposition of $\mathcal{H}$. The $f$-\emph{width} of $(\mathcal{T},\mathcal{B})$ is defined as follows:
\[\fwidth{f}(\mathcal{T},\mathcal{B}) := \max_{t\in V(\mathcal{T})}f(B_t) \,.\]
The $f$-width of $\mathcal{H}$ is the minimum $f$-width of any tree decomposition of $\mathcal{H}$.
\end{definition}
\noindent For example, the \emph{treewidth} of a (hyper-)graph is just its $f$-width for the function $f(B):=|B|-1$.

Given a hypergraph $\mathcal{H}$ and a vertex-subset $X$ of $\mathcal{H}$, the \emph{edge cover number} $\rho_\mathcal{H}(X)$ of $X$ is the minimum number of hyperedges of $\mathcal{H}$ required to cover each vertex in $X$. The \emph{edge cover number} of $\mathcal{H}$, denoted by $\rho(\mathcal{H})$, is defined as $\rho_\mathcal{H}(V(\mathcal{H}))$.

A \emph{fractional} version of the edge cover number of defined similarly: Given $\mathcal{H}$ and $X$ as above, a function $\gamma:E(\mathcal{H})\rightarrow[0,\infty]$ is a fractional edge cover of $X$ if for each $v\in X$ we have $\sum_{e: v\in e} \gamma(e) \geq 1$. The fractional edge cover number $\rho_\mathcal{H}^\ast(X)$ of $X$ is defined to be the minimum of $\sum_{e\in E(\mathcal{H})} \gamma(e)$ among all fractional edge covers $\gamma$ of $X$. The \emph{fractional edge cover number} of $\mathcal{H}$, denoted by $\rho^\ast(\mathcal{H})$, is defined as $\rho^\ast_\mathcal{H}(V(\mathcal{H}))$.

\begin{definition}[Generalised and Fractional Hyper-Treewidth]
The generalised hyper-treewidth of $\mathcal{H}$, denoted by $\mathsf{htw}(\mathcal{H})$, is the $\rho_\mathcal{H}$-width of $\mathcal{H}$. The fractional hyper-treewidth of $\mathcal{H}$, denoted by $\mathsf{fhtw}(\mathcal{H})$, is the $\rho_\mathcal{H}^\ast$-width of $\mathcal{H}$.
\end{definition}

An \emph{independent set} of a hypergraph $\mathcal{H}$ is a set $I$ of vertices such that, for each $u,v\in I$ with $u\neq v$, there is no hyperedge containing $u$ and $v$. The \emph{independence number} of $\mathcal{H}$, denoted by $\alpha(\mathcal{H})$, is the size of a maximum independent set of $\mathcal{H}$.
A \emph{fractional independent set} of a hypergraph $\mathcal{H}$ is a mapping $\alpha^\ast: V(\mathcal{H}) \rightarrow [0,1]$ such that for each $e\in E(\mathcal{H})$ we have
\[\sum_{v\in e} \alpha^\ast(v) \leq 1 \,. \]
The \emph{fractional independence number} of $\mathcal{H}$, denoted by $\alpha^\ast(\mathcal{H})$ is the maximum of $\sum_{v\in e} \alpha^\ast(v)$ among all fractional independent sets $\alpha^\ast$.
For a subset $X$ of vertices in $V(\mathcal{H})$, we set $\alpha^\ast(X)=\sum_{v\in X}\alpha^\ast(v)$. We remark that, by LP duality, the fractional independence number and the fractional edge cover number are equal (see,~\cite{fractionalGraphTheory}):

\begin{fact}\label{fact:LP_duality}
Let $\mathcal{H}$ be a hypergraph. We have $\alpha^\ast(\mathcal{H})=\rho^\ast(\mathcal{H})$.
\end{fact}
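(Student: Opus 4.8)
The plan is to identify Fact~\ref{fact:LP_duality} as an instance of strong linear-programming duality between a covering program and a packing program, and to dispatch the easy direction by hand. First I would rewrite the two quantities as linear programs. By definition, $\rho^\ast(\scH)$ is the optimum of the \emph{covering LP}
\[
  \min \ \textstyle\sum_{e \in E(\scH)} \gamma(e) \quad\text{s.t.}\quad \textstyle\sum_{e \ni v} \gamma(e) \ge 1 \ \ (v \in V(\scH)), \quad \gamma(e) \ge 0 \ \ (e \in E(\scH)),
\]
while $\alpha^\ast(\scH)$ is the optimum of the \emph{packing LP}
\[
  \max \ \textstyle\sum_{v \in V(\scH)} x(v) \quad\text{s.t.}\quad \textstyle\sum_{v \in e} x(v) \le 1 \ \ (e \in E(\scH)), \quad x(v) \ge 0 \ \ (v \in V(\scH)).
\]
Here I would first observe that the box constraint $x(v) \le 1$ present in the definition of a fractional independent set is redundant as soon as every vertex lies in at least one hyperedge, since then $x(v) \le \sum_{u \in e} x(u) \le 1$ for any $e \ni v$; this is exactly the situation for the reachability hypergraphs $\reach{\vec{H}}$ to which the Fact is applied (in a DAG every vertex is reachable from some source), so passing to the clean packing LP above costs nothing. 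The degenerate case of a vertex covered by no hyperedge is handled by the usual convention and affects no application.

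The second step is the easy inequality $\alpha^\ast(\scH) \le \rho^\ast(\scH)$, i.e.\ weak duality, which I would prove directly: for any fractional independent set $x$ and any fractional edge cover $\gamma$,
\[
  \textstyle\sum_{v} x(v) \;\le\; \sum_{v} x(v) \sum_{e \ni v} \gamma(e) \;=\; \sum_{e} \gamma(e) \sum_{v \in e} x(v) \;\le\; \sum_{e} \gamma(e),
\]
where the first inequality uses $\sum_{e \ni v}\gamma(e) \ge 1$ and $x(v) \ge 0$, the equality is just reordering the double sum, and the last inequality uses $\sum_{v\in e} x(v) \le 1$ and $\gamma(e)\ge 0$. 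Taking the maximum over $x$ and the minimum over $\gamma$ yields $\alpha^\ast(\scH) \le \rho^\ast(\scH)$.

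For the reverse inequality I would invoke the strong LP-duality theorem: the covering LP and the packing LP above form a primal--dual pair (the constraint matrix of one is the transpose of that of the other, with matching all-ones right-hand sides and all-ones objectives), both are feasible --- take $x \equiv 0$ for the packing LP and $\gamma \equiv 1$ for the covering LP, again using that every vertex is covered --- and both are bounded (by weak duality, or simply since $\alpha^\ast(\scH) \le |V(\scH)|$). Hence the two optima coincide, giving $\alpha^\ast(\scH) = \rho^\ast(\scH)$, which is the statement recorded in~\cite{fractionalGraphTheory}. There is no real obstacle here: the entire content is strong LP duality, and the only point demanding a moment's care is matching the paper's definitions --- in particular the $[0,1]$ box constraint on fractional independent sets and the isolated-vertex corner case --- to the standard covering/packing form.
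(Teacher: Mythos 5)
Your proof is correct and is essentially the argument the paper itself relies on: the paper states Fact~\ref{fact:LP_duality} without proof, citing it as an instance of LP duality from~\cite{fractionalGraphTheory}, and your write-up simply spells out that covering/packing primal--dual pair, with weak duality by hand and strong duality for equality. Your extra care about the $[0,1]$ box constraint and isolated vertices is sound and consistent with how the fact is applied to contours and reachability hypergraphs in the paper.
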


We continue with the notion of adaptive width, which is equivalent\footnote{Here, ``equivalent'' means that a class of hypergraphs has bounded adaptive width if and only if it has bounded submodular width.} to submodular width as shown by Marx~\cite{Marx13}. 

\begin{definition}[Adaptive width]
Let $\mathcal{H}$ be a hypergraph. The \emph{adaptive width} of $\mathcal{H}$ is 
\[\aw(\mathcal{H}) := \sup\left\{\alpha^\ast\text{-width}(\mathcal{H}) \mid \alpha^\ast \text{ is a fractional independent set of } \mathcal{H} \right\} \,.\]
\end{definition}

\begin{lemma}[\cite{Marx13}]\label{lem:width_measures}
Let $\mathcal{C}$ be a class of hypergraphs. Then
\begin{align*}
    ~&~~ \mathcal{C} \text{ has unbounded adaptive width}\\
    \Rightarrow &~~ \mathcal{C} \text{ has unbounded fractional hyper-treewidth}\\
    \Rightarrow &~~ \mathcal{C} \text{ has unbounded generalised hyper-treewidth}\,.
\end{align*}
Furthermore, all of the above implications are strict, that is, there are classes $\mathcal{C}_1$ and $\mathcal{C}_2$ such that $\mathcal{C}_1$ has bounded adaptive width but unbounded fractional hyper-treewidth, and $\mathcal{C}_2$ has bounded fractional hyper-treewidth but unbounded generalised hyper-treewidth.
\end{lemma}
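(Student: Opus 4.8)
The plan is to treat the two positive implications, which are elementary, separately from the strictness claims, which are not. For ``unbounded fractional hyper-treewidth $\Rightarrow$ unbounded generalised hyper-treewidth'' I would establish the pointwise inequality $\mathsf{fhtw}(\mathcal{H}) \le \mathsf{htw}(\mathcal{H})$: for any $X \subseteq V(\mathcal{H})$ a minimum integral edge cover of $X$ is in particular a feasible fractional edge cover of $X$, so $\rho^\ast_{\mathcal{H}}(X) \le \rho_{\mathcal{H}}(X)$; hence every tree decomposition of $\mathcal{H}$ has $\rho^\ast_{\mathcal{H}}$-width at most its $\rho_{\mathcal{H}}$-width, and minimising over all tree decompositions gives $\mathsf{fhtw}(\mathcal{H}) \le \mathsf{htw}(\mathcal{H})$. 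For ``unbounded adaptive width $\Rightarrow$ unbounded fractional hyper-treewidth'' I would prove the analogous pointwise bound $\aw(\mathcal{H}) \le \mathsf{fhtw}(\mathcal{H})$. The key observation is that if $\alpha^\ast$ is a fractional independent set of $\mathcal{H}$ and $B$ is any bag of any tree decomposition of $\mathcal{H}$, then the restriction of $\alpha^\ast$ to $B$ is a feasible solution of the linear program dual to ``fractionally cover $B$ with the edges of $\mathcal{H}$'', since $\sum_{v \in e \cap B}\alpha^\ast(v) \le \sum_{v \in e}\alpha^\ast(v) \le 1$ for every $e \in E(\mathcal{H})$; by LP duality --- the same duality underlying Fact~\ref{fact:LP_duality}, applied now to the vertex subset $B$ --- this yields $\alpha^\ast(B) \le \rho^\ast_{\mathcal{H}}(B)$. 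Hence every tree decomposition has $\alpha^\ast$-width at most its $\rho^\ast_{\mathcal{H}}$-width, so $\alpha^\ast\text{-width}(\mathcal{H}) \le \mathsf{fhtw}(\mathcal{H})$ for every fractional independent set $\alpha^\ast$, and taking the supremum yields $\aw(\mathcal{H}) \le \mathsf{fhtw}(\mathcal{H})$.

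It then remains to prove strictness by exhibiting the separating classes $\mathcal{C}_1$ and $\mathcal{C}_2$, and here I would reuse known constructions rather than build new ones. For $\mathcal{C}_2$ (bounded $\mathsf{fhtw}$, unbounded $\mathsf{htw}$) I would take the incidence-type hypergraphs of \cite[Example~4.2]{GroheM14}: their fractional edge cover number is bounded by a constant, so the trivial one-bag tree decomposition already shows that $\mathsf{fhtw}$ stays bounded, whereas one argues that every tree decomposition has a bag so ``spread out'' that it cannot be covered by a bounded number of edges, forcing $\mathsf{htw}$ to be unbounded. For $\mathcal{C}_1$ (bounded $\aw$, unbounded $\mathsf{fhtw}$) I would invoke the family of hypergraphs of bounded submodular width but unbounded fractional hyper-treewidth constructed by Marx~\cite{Marx13}, together with the equivalence of bounded submodular width and bounded adaptive width established in the same paper.

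The hard part will be exactly this last step. Upgrading the easy lower bound ``the integral edge cover number is unbounded'' to ``$\mathsf{htw}$ is unbounded'' for $\mathcal{C}_2$ needs an honest (if by now standard) argument that the hypergraphs are dense enough to force a large bag; and --- more seriously --- verifying that Marx's family has bounded adaptive width amounts to producing, for \emph{every} fractional independent set, a tree decomposition that is good for \emph{it}, even though no single tree decomposition is good for all of them simultaneously. This is precisely the phenomenon that makes submodular/adaptive width strictly more permissive than fractional hyper-treewidth, and I do not expect to shortcut the argument of~\cite{Marx13}.
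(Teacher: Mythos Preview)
The paper does not prove this lemma; it is stated with a citation to~\cite{Marx13} and no argument is given. So there is no ``paper's own proof'' to compare your proposal against. Your plan --- pointwise inequalities $\aw(\mathcal{H}) \le \mathsf{fhtw}(\mathcal{H}) \le \mathsf{htw}(\mathcal{H})$ via LP duality for the forward implications, and the known separating families from~\cite{GroheM14} and~\cite{Marx13} for strictness --- is exactly the standard route and is correct as outlined; you are essentially reconstructing what the cited reference contains rather than doing anything the present paper does.
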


Throughout this paper, we will be interested in the independence number $\alpha$, the fractional independence number $\alpha^\ast$, and the adaptive width $\aw$ of the contours of digraphs $\vec{H}$. The following lemma shows that it is equivalent to consider the contour of $\Hsim$ for those invariants. 
\begin{lemma}\label{lem:invariants_are_invariant}
Let $\vec{H}$ be a digraph. We have
\begin{enumerate}
    \item $\alpha(\reduct{\vec{H}})=\alpha(\reduct{\Hsim})$.
    \item $\alpha^\ast(\reduct{\vec{H}})=\alpha^\ast(\reduct{\Hsim})$ and $\rho^\ast(\reduct{\vec{H}})=\rho^\ast(\reduct{\Hsim})$.
    \item $\aw(\reduct{\vec{H}})=\aw(\reduct{\Hsim})$.
\end{enumerate}
\end{lemma}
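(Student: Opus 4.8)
The plan is to show that the contours $\reduct{\vec H}$ and $\reduct{\Hsim}$ are "essentially the same" hypergraph up to a transformation that preserves all three invariants $\alpha$, $\alpha^\ast$ (equivalently $\rho^\ast$ by Fact~\ref{fact:LP_duality}), and $\aw$. Concretely, write $\sim$ for the strongly-connected-component equivalence on $V(\vec H)$, so that $V(\Hsim)$ is the set of SCCs. The key observation is that reachability in $\vec H$ is constant on each SCC: if $u\sim u'$ then $\reachv{u}=\reachv{u'}$, and moreover $\reachv{u}$ is a union of SCCs. Hence the sources $S_1,\dots,S_k$ of $\Hsim$ are exactly (the SCCs corresponding to) the source-SCCs of $\vec H$ used in the definition of $\reach{\vec H}$, and for a chosen representative $s_i\in S_i$ the edge $e_i=\reachv{s_i}$ of $\reach{\vec H}$ is a union of SCCs. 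Passing to the contour removes $S_i$ from $e_i$; the resulting edge is still a union of SCCs (the SCC $S_i$ itself having been removed entirely). Therefore $\reduct{\vec H}$ has a natural partition of its vertex set into the blocks $V(\vec H)/\!\!\sim$ such that every hyperedge is a union of blocks, and the "block-quotient" of $\reduct{\vec H}$ is exactly $\reduct{\Hsim}$ (the bijection on edges $e_i\mapsto$ its image being index-preserving, since the source SCCs match up).

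So the combinatorial heart of the lemma is the following general statement, which I would isolate as a sub-claim: if $\scH'$ is obtained from a hypergraph $\scH$ by partitioning $V(\scH)$ into blocks such that each edge of $\scH$ is a union of blocks, and then collapsing each block to a single vertex (keeping the induced edge set), then $\alpha(\scH)=\alpha(\scH')$, $\alpha^\ast(\scH)=\alpha^\ast(\scH')$, and $\aw(\scH)=\aw(\scH')$. For $\alpha$: from an independent set of $\scH'$ pick one vertex per chosen block to get an independent set of $\scH$ of the same size; conversely an independent set of $\scH$ meets each block in at most one vertex (two vertices in the same block lie in a common edge whenever that block is nonempty in some edge — and if a block is in no edge, its vertices are isolated and contribute freely on both sides, so handle that degenerate case separately), and projects to an independent set of $\scH'$ of the same size. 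For $\alpha^\ast$: given a fractional independent set of $\scH$, replacing each vertex's value by the \emph{sum} over its block and placing that sum on the single quotient vertex yields a fractional independent set of $\scH'$ with the same total (the edge constraints coincide since edges are unions of blocks); conversely split the quotient value arbitrarily, e.g.\ concentrate it on one representative. For $\aw$: one shows the tree decompositions of $\scH$ and $\scH'$ correspond — given a decomposition of $\scH'$, replace each quotient vertex in each bag by its whole block to get a decomposition of $\scH$ with the same edges covered and the same connectivity property; conversely, given a decomposition of $\scH$, one may assume (by the standard argument that each vertex of a block behaves identically w.r.t.\ the edges, so their bag-subtrees can be made equal without increasing width — or simply take a decomposition realizing the optimum and use that optimal decompositions can be chosen "block-uniform") that each block occupies the same set of bags, then collapse. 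Under these correspondences an $\alpha^\ast$-width computation on one side matches an $(\alpha^\ast)'$-width computation on the other via the $\alpha^\ast\leftrightarrow(\alpha^\ast)'$ bijection above, so the suprema defining $\aw$ agree.

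I expect the main obstacle to be the $\aw$ part, specifically making rigorous the claim that an optimal (or near-optimal) tree decomposition of $\scH$ can be taken to be block-uniform — i.e.\ that for each block $B$ the set $\{t : B\cap B_t\neq\emptyset\}$ can be refined so that all vertices of $B$ occupy exactly the same bags. The cleanest route is probably: given any tree decomposition $(\scT,\bags)$ of $\scH$, define a new decomposition by replacing, in every bag, any occurrence of any vertex of a block $B$ by all of $B$; conditions (1) and (2) are immediate, condition (3) holds because each $\scT[\{t : v\in B_t\}]$ for $v\in B$ is connected and we are taking a \emph{union} of these connected subtrees over $v\in B$, which need not be connected in general — so instead one should first argue that for an \emph{optimal} $\alpha^\ast$-width decomposition the blocks are already uniform, using that two vertices $u,u'$ of the same block lie in exactly the same edges and hence can be "merged" in the decomposition without changing $\alpha^\ast(B_t)$ for the relevant fractional independent sets (here one uses that $\alpha^\ast$ assigns a value to a \emph{set}, so duplicating a vertex's coordinate does not change anything once we are on the contour side). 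I would flag this merging/uniformization argument as the one place requiring genuine care; everything else is bookkeeping about reachability being SCC-invariant together with the easy direction of each correspondence. Finally, the three numbered equalities follow by instantiating the sub-claim with $\scH=\reduct{\vec H}$, $\scH'=\reduct{\Hsim}$, and for part~2 the equality $\rho^\ast=\alpha^\ast$ is then immediate from Fact~\ref{fact:LP_duality} applied on both sides.
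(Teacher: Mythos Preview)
Your overall approach is essentially the same as the paper's: reduce to a general sub-claim that collapsing a block $B$ of vertices which all lie in exactly the same set of hyperedges preserves $\alpha$, $\alpha^\ast$, and $\aw$. Your treatments of $\alpha$ and $\alpha^\ast$ are fine and match the paper.

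The gap is in the $\aw$ part, and it is exactly the place you flag. You correctly note that, when transforming a tree decomposition of $\scH$ into one of $\scH'$ by replacing any occurrence of a vertex of $B$ by $v_B$, one must argue that $\bigcup_{v\in B}\scT_v$ is connected; you then say this ``need not be connected in general'' and retreat to a vaguer argument about block-uniform optimal decompositions. But in the present setting the union \emph{is} connected, for a simple reason you overlook: in the contour $\reduct{\vec H}$ every vertex lies in a non-source SCC, hence is reachable from some source, hence is contained in at least one hyperedge. In particular the common edge-set $E_B$ of the block $B$ is \emph{non-empty}. Pick any $e\in E_B$; then $B\subseteq e$, and since $(\scT,\bags)$ is a tree decomposition there is a bag $B_t\supseteq e\supseteq B$. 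Thus all the subtrees $\scT_v$ for $v\in B$ share the node $t$, and their union is connected. This is precisely how the paper closes the argument, and it removes the need for your ``block-uniformisation of optimal decompositions'' detour---which, as you half-acknowledge, is not obviously well-defined for adaptive width (there is no single ``optimal'' decomposition, since $\aw$ quantifies over all fractional independent sets).
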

\begin{proof}
Recall that $\Hsim$ is obtained from $\vec{H}$ by contracting each strongly connected component into a single vertex. In the reachability hypergraph, and thus in the contour, this corresponds to identifying blocks of vertices $B=\{v_1,\dots,v_\ell\}$ satisfying that all vertices in $B$ are contained in the same (non-empty) set of hyperedges, that is, there is a non-empty set of hyperedges $E_B$ such that for each $i\in[\ell]$, the set of hyperedges containing $v_i$ is $E_B$. 

Hence it is sufficient to show that identifying the vertices in $B$ --- we call the resulting vertex $v_B$ --- does not change any of the invariants $\alpha$, $\alpha^\ast$, and $\aw$. For what follows, let us write $\mathcal{H}$ for the contour $\reduct{\vec{H}}$ of $\vec{H}$, and let us write $\mathcal{H}'$ for the hypergraph obtained from $\mathcal{H}$ obtained by contracting $B$ to $v_B$. Furthermore, let $E'_B$ be the set of hyperedges that contain $v_B$ and observe that $E'_B$ can be obtained from $E_B$ by contracting $B$ to $v_B$ for each edge $e\in E_B$. 
\begin{enumerate}
    \item Goal: $\alpha(\mathcal{H})=\alpha(\mathcal{H}')$. Let $S\subseteq V(\mathcal{H})$ be a maximum independent set of $\mathcal{H}$. Note that at most one vertex in $B$ can be contained in $S$. If no vertex of $S$ is contained in $B$, then $S$ is an independent set of $\mathcal{H}'$. Otherwise, assume $v_i\in S$ for some $i\in[\ell]$. Then $S':= (S\setminus\{v_i\})\cup \{v_B\}$ is an independent set of $\mathcal{H}'$. This shows that $\alpha(\mathcal{H})\leq \alpha(\mathcal{H}')$. 
    
    For the other direction, let $S'$ be a maximum independent set of $\mathcal{H}'$. If $v_B\in S'$, then we set $S:=(S'\setminus\{v_B\})\cup \{v_1\}$. Otherwise, we set $S:=S'$. Clearly, $S$ is an independent set of $\mathcal{H}$ and thus  $\alpha(\mathcal{H})\geq \alpha(\mathcal{H}')$.
    \item \label{bul:2} Goal: $\alpha^\ast(\mathcal{H})=\alpha^\ast(\mathcal{H}')$ (Note that this is equivalent to $\rho^\ast(\mathcal{H})=\rho^\ast(\mathcal{H}')$ by Fact~\ref{fact:LP_duality}). Let $\mu$ be a fractional independent set of $\mathcal{H}$ of maximum weight. Define
    \[\mu'(v) := \begin{cases} \sum_{i=1}^\ell \mu(v_i) & v=v_B \\ \mu(v) & v\neq v_B \end{cases} \,.\]
    We claim that $\mu'$ is a fractional independent set of $\mathcal{H}'$. Let $e'\in E(\mathcal{H}')$ and let $e$ be the corresponding edge in $\mathcal{H}$, that is, $e=e'$ if $e'\notin E'_B$, and $e'$ is obtained from $e$ by contracting $B$ into $v_B$ otherwise. Depending on whether $e\in E_B$ we have that either $B\subseteq e$ or $B\cap e =\emptyset$. In both cases, by definition of $\mu'$, we have that $\sum_{v'\in e'} \mu'(v') = \sum_{v \in e} \mu(v) \leq 1$. Thus $\mu'$ is a fractional independent set of $\mathcal{H}'$. Since, clearly, $\mu'$ has the same total weight as $\mu$, we have that $\alpha^\ast(\mathcal{H})\leq\alpha^\ast(\mathcal{H}')$.
    
    For the other direction, let $\mu'$ be a fractional independent set of $\mathcal{H}'$ of maximum weight. Define 
    \[ \mu(v) := \begin{cases} \mu'(v_B) & v=v_1 \\ 0 & v\in B\setminus\{v_1\} \\ \mu'(v) & v\notin B \end{cases} \,.\]
    We claim that $\mu$ is a fractional independent set of $\mathcal{H}$. Similarly as in the first direction, let $e\in E(\mathcal{H})$ and let $e'$ be the corresponding edge in $\mathcal{H}'$. Again, depending on whether $e\in E_B$, we have that either $B\subseteq e$ or $B\cap e =\emptyset$, and in both cases, by definition of $\mu$, we have $ \sum_{v \in e} \mu(v) = \sum_{v'\in e'} \mu'(v') \leq 1$. Thus $\mu$ is a fractional independent set of $\mathcal{H}$. Since, clearly, $\mu$ has the same total weight as $\mu'$, we have that $\alpha^\ast(\mathcal{H})\geq\alpha^\ast(\mathcal{H}')$.
    \item Goal: $\aw(\mathcal{H})=\aw(\mathcal{H}')$. Any tree decomposition $(\mathcal{T},\mathcal{B})$ of $\mathcal{H}$ can be transformed to a tree decomposition $(\mathcal{T}',\mathcal{B}')$ of $\mathcal{H}'$ as follows: In each bag that contains a vertex in $B$, we delete all vertices in $B$ and add $v_B$. Clearly, the union over all bags in $\mathcal{B}'$ is the set of all vertices of $\mathcal{H}'$, and each hyperedge of $\mathcal{H}'$ is fully contained in some bag. For the last condition necessary for $(\mathcal{T}',\mathcal{B}')$ being a tree decomposition of $\mathcal{H}'$, we have to show that for each $v'\in V(\mathcal{H}')$, the subgraph $\mathcal{T}'_v$ of $\mathcal{T}$ consisting of the bags containing $v'$ is connected. If $v'\neq v_B$ then this property immediately follows from $(\mathcal{T},\mathcal{B})$ being a tree decomposition. For $v=v_B$ we use that $\mathcal{T}_{v_i}$ is connected for each $i\in[\ell]$ since $(\mathcal{T},\mathcal{B})$ is a tree decomposition of $\mathcal{H}$. Now $\mathcal{T}'_{v_B}$ corresponds, by definition of $(\mathcal{T}',\mathcal{B}')$, to the union of all the $\mathcal{T}_{v_i}$. However, since $E_B\neq \emptyset$, there is a hyperedge $e$ of $\mathcal{H}$ fully containing $B$. Since $e$ must be fully contained in a bag of $(\mathcal{T},\mathcal{B})$, all of the $\mathcal{T}_{v_i}$ overlap in at least one vertex, and thus $\mathcal{T}'_{v_B}$ is connected, proving that $(\mathcal{T}',\mathcal{B}')$ is indeed a tree decomposition of $\mathcal{H}'$. For what follows, let $\tau$ denote the function that maps a tree decomposition $(\mathcal{T},\mathcal{B})$ of $\mathcal{H}$ to a tree decomposition $(\mathcal{T}',\mathcal{B}')$ of $\mathcal{H}'$ as defined above.
    
    In the other direction, each tree decomposition $(\mathcal{T}',\mathcal{B}')$ of $\mathcal{H}'$ can be made a tree decomposition of $\mathcal{H}$ by substituting $v_B$ with the vertices in $B$. In this direction, it is clear that this yields a tree decomposition $(\mathcal{T},\mathcal{B})$ of $\mathcal{H}$. For what follows, let $\tau'$ denote the function that maps a tree decomposition $(\mathcal{T}',\mathcal{B}')$ of $\mathcal{H}'$ to a tree decomposition $(\mathcal{T},\mathcal{B})$ of $\mathcal{H}$ as defined above.
    
    Now let $\aw(\mathcal{H})= a$. We prove that $\aw(\mathcal{H}')\leq a$. Let $\mu'$ be a fractional independent set of $\mathcal{H}'$. We have to show that there is a tree decomposition $(\mathcal{T}',\mathcal{B}')$ of $\mathcal{H}'$ with $\mu'\mhyphen\mathsf{width}(\mathcal{T}',\mathcal{B}')\leq a$. To this end, let $\mu$ be the fractional independent of $\mathcal{H}$ obtained from $\mu$ as in~\ref{bul:2}. Since $\aw(\mathcal{H})= a$, there exists a tree decomposition $(\mathcal{T},\mathcal{B})$ of $\mathcal{H}$ with $\mu\mhyphen\mathsf{width}(\mathcal{T},\mathcal{B})\leq a$. Let $(\mathcal{T}',\mathcal{B}') := \tau(\mathcal{T},\mathcal{B})$. By definition of $\tau$ and $\mu$, the $\mu'\mhyphen\mathsf{width}$ of $(\mathcal{T}',\mathcal{B}')$ is at most the $\mu\mhyphen\mathsf{width}$ of $(\mathcal{T},\mathcal{B})$, which is at most $a$, concluding the first direction.
    
    For the second direction, let $\aw(\mathcal{H}')= a'$. We prove that $\aw(\mathcal{H})\leq a'$ similarly as in the first direction: Starting with a fractional independent set $\mu$ of $\mathcal{H}$, we consider $\mu'$ as constructed in~\ref{bul:2}, and we obtain a tree decomposition $(\mathcal{T}',\mathcal{B}')$ of $\mathcal{H}'$ with $\mu'\mhyphen\mathsf{width}(\mathcal{T}',\mathcal{B}')\leq a'$. We set $(\mathcal{T},\mathcal{B}) := \tau'(\mathcal{T}',\mathcal{B}')$ and observe that by definition of $\tau'$ and $\mu'$, the $\mu\mhyphen\mathsf{width}$ of $(\mathcal{T},\mathcal{B})$ is at most the $\mu'\mhyphen\mathsf{width}$ of $(\mathcal{T}',\mathcal{B}')$, which is at most $a'$, concluding the second direction and thus the proof.
\end{enumerate}
\end{proof}

\subsection{Relational Structures}\label{sec:rel_structs_and_CSP}
A \emph{signature} $\tau$ is a (finite) tuple of relation symbols $(R_i)_{i\in[\ell]}$ with arities $(a_i)_{i\in[\ell]}$. The arity of $\tau$, denoted by $\arity(\tau)$ is the maximum of the $a_i$. A \emph{relational structure} $\mathcal{A}$ of signature $\tau$ is a tuple $(V,R_1^\mathcal{A},\dots,R_\ell^\mathcal{A})$ where $V$ is a finite set of elements, called the \emph{universe} of $\mathcal{A}$, and $R_i^\mathcal{A}$ is a relation on $V$ of arity $a_i$ for each $i\in[\ell]$. We emphasize that $R_i^\mathcal{A}$ is not necessarily symmetric, and that tuples might contain repeated elements. We will mainly use the symbols $\mathcal{A}$ and $\mathcal{B}$ to denote relational structures. Further, we assume that a structure $\mathcal{A}$ is encoded in the standard way, i.e., the universe and the relations are encoded as lists. We denote by $|\mathcal{A}|$ the length of the encoding of $\mathcal{A}$.

Given two relational structures $\mathcal{A}$ and $\mathcal{B}$ over the same signature $\tau$ with universes $U$ and $V$, a \emph{homomorphism} from $\mathcal{A}$ to $\mathcal{B}$ is a mapping $\varphi:U \rightarrow V$ such that, for each $i\in [\arity(\tau)]$ and for each tuple $t\in U^{a_i}$ we have
\[t\in R_i^\mathcal{A} \Rightarrow \varphi(t)\in R_i^\mathcal{B} \,.\]
We write $\homs{\mathcal{A}}{\mathcal{B}}$ for the set of homomorphisms from $\mathcal{A}$ to $\mathcal{B}$.

The hypergraph $\mathcal{H}(\mathcal{A})$ of $\mathcal{A}$ has as vertices the universe $V$ of $\mathcal{A}$, and for each tuple $t=(v_1,\dots,v_a)$ of elements of $V$, we add an hyperedge $e_t=\{v_1,\dots,v_a\}$ if and only if $t$ is an element of a relation of $\mathcal{A}$. To avoid notational clutter, we will define the treewidth, the hypertreewidth, the fractional hypertreewidth and the submodular width of a structure as the respective width measure of its hypergraph. Similarly, a tree decomposition of a structure refers to a tree decomposition of its hypergraph.

\subsection{Parameterised and Fine-Grained Complexity Theory}
A \emph{parameterised counting problem} is a pair $(P,\kappa)$ of a counting problem $P:\{0,1\}^\ast \to \mathbb{N}$ and a computable function $\kappa:\{0,1\}^\ast \to \mathbb{N}$, called the \emph{parameterisation}. Consider for example the parameterised clique counting problem:

\begin{parameterizedproblem}
\problemname{$\#\textsc{Clique}$}
\probleminput{a pair of a graph $G$ and a positive integer $k$}
\problemoutput{the number of $k$-cliques in $G$}
\problemparameter{$k$, that is, $\kappa(G,k):= k$}
\end{parameterizedproblem}

An algorithm for a parameterised (counting) problem is called a \emph{fixed-parameter tractable} (FPT) algorithm if there is a computable function $f$ such that, on input $x$, its running time can be bounded by $f(\kappa(x))\cdot |x|^{O(1)}$. A parameterised (counting) problem is called \emph{fixed-parameter tractable} if it can be solved by an FPT algorithm.

A \emph{parameterised Turing-reduction} from $(P,\kappa)$ to $(P',\kappa')$ is an FPT algorithm for $(P,\kappa)$ with oracle access to $P'$, additionally satisfying that there is a computable function $g$ such that, on input $x$, the parameter $\kappa'(y)$ of any oracle query is bounded by $g(\kappa(x))$. We write $(P,\kappa) \fptred (P',\kappa')$ if a parameterised Turing-reduction exists.

We say that $(P,\kappa)$ is $\#\W{1}$\emph{-hard} if $\#\textsc{Clique}\fptred (P,\kappa)$. The class $\#\W{1}$ can be considered a parameterised counting equivalent of $\mathrm{NP}$, and we refer the interested reader to Chapter~14 in the standard textbook of Flum and Grohe~\cite{FlumG06}
for a comprehensive introduction. It is known that $\#\W{1}$-hard problems are not fixed-parameter tractable unless standard assumptions, such as ETH, fail:

\begin{definition}[The Exponential Time Hypothesis (ETH)~\cite{ImpagliazzoP01}]
The Exponential Time Hypothesis (ETH) asserts that $3$-$\textsc{SAT}$ cannot be solved in time $\mathsf{exp}(o(n))$, where $n$ is the number of variables.
\end{definition}

\begin{theorem}[Chen et al.\ \cite{Chenetal05,Chenetal06}]
Assume that ETH holds. Then there is no function $f$ such that $\#\textsc{Clique}$ can be solved in time $f(k)\cdot |G|^{o(k)}$. 
\end{theorem}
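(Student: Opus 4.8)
The theorem concerns the counting problem $\#\textsc{Clique}$, but a counting algorithm also decides existence of a $k$-clique (a graph has one iff it has a positive number of them), so it suffices to prove the statement for the \emph{decision} version of $k$-\textsc{Clique}. The plan is the classical reduction from sparse $3$-$\textsc{SAT}$. First, by the Sparsification Lemma of Impagliazzo, Paturi and Zane~\cite{ImpagliazzoP01}, ETH implies that $3$-$\textsc{SAT}$ cannot be solved in time $2^{o(n)}$ even on instances with $m=O(n)$ clauses. Fix such an instance $\varphi$ with variables $x_1,\dots,x_n$ and clauses $C_1,\dots,C_m$, assuming without loss of generality that every variable occurs in some clause.

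Next, I would build the clique gadget. Let $k=k(n)$ be a parameter, to be chosen later. Partition the clauses into $k$ blocks $P_1,\dots,P_k$, each of size at most $\lceil m/k\rceil$, and let $X_p$ be the set of variables occurring in $P_p$, so $|X_p|\le 3\lceil m/k\rceil$. Define a $k$-partite graph $G_\varphi$ with classes $V_1,\dots,V_k$: the class $V_p$ is the set of assignments $X_p\to\{0,1\}$ satisfying every clause of $P_p$, and there is an edge between $u\in V_p$ and $v\in V_q$ (with $p\ne q$) exactly when $u$ and $v$ agree on $X_p\cap X_q$. A $k$-clique of $G_\varphi$ selects one vertex from each class, and these partial assignments are pairwise consistent on their overlaps, hence glue into a single assignment of $x_1,\dots,x_n$ satisfying all of $\varphi$ (each clause lies in some $P_p$ and is satisfied by the corresponding restriction); conversely, a satisfying assignment of $\varphi$ restricts to one vertex per class, forming a clique. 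Hence $G_\varphi$ has a $k$-clique iff $\varphi$ is satisfiable, $N:=|V(G_\varphi)|\le k\cdot 2^{3\lceil m/k\rceil}$, and $G_\varphi$ is computable in time $\poly(N,m)$.

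Finally, suppose $k$-\textsc{Clique} were solvable in time $f(k)\cdot N^{g(k)}$ with $g(k)=o(k)$, say $g(k)\le k/s(k)$ for a computable nondecreasing unbounded $s$. Running this algorithm on $G_\varphi$ costs at most
\[
  f(k)\cdot\bigl(k\cdot 2^{3\lceil m/k\rceil}\bigr)^{k/s(k)}\;\le\;f(k)\cdot k^{k}\cdot 2^{O((m+k)/s(k))}.
\]
I would then choose $k=k(n)$ to be the largest value with $f(k)\cdot k^{k}\le 2^{\sqrt n}$: since $f$ is a fixed computable function such $k$ exists for all large $n$, it satisfies $k(n)\to\infty$, and it forces $k(n)=o(n)$. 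Since $m=O(n)$ and $s(k)\to\infty$, the displayed bound is then $2^{\sqrt n}\cdot 2^{o(n)}=2^{o(n)}$; adding the $\poly(N,m)=2^{o(n)}$ cost of constructing $G_\varphi$ (here $k(n)\to\infty$ is exactly what keeps $N$ subexponential), we obtain a $2^{o(n)}$ algorithm for sparse $3$-$\textsc{SAT}$, contradicting ETH. The one delicate point is this last balancing step: interpreting the ``$o(k)$'' in the exponent correctly and tuning the growth rate of $k(n)$ so that the \emph{arbitrary} computable blow-up $f(k)$ and the $k^{k/s(k)}$ factor are both absorbed into $2^{o(n)}$ while $k(n)$ still tends to infinity. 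The sparsification black box and the block-consistency gadget are routine by comparison.
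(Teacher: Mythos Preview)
The paper does not give its own proof of this statement; it is quoted as a known result of Chen et al.\ \cite{Chenetal05,Chenetal06} and used as a black box. Your sketch is the standard argument from those sources: use the Sparsification Lemma to reduce to instances with $m=O(n)$ clauses, partition the clauses into $k$ blocks, build a $k$-partite graph whose vertices are block-satisfying partial assignments with edges encoding consistency on shared variables, and then balance $k$ against $n$ so that a hypothetical $f(k)\cdot N^{o(k)}$ clique algorithm yields a subexponential $3$-\textsc{SAT} algorithm. So there is nothing to compare against, and your reconstruction is essentially the textbook proof.

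One point you flag as ``delicate'' deserves a remark. You define $k(n)$ as the largest value with $f(k)\cdot k^{k}\le 2^{\sqrt n}$ and justify this by calling $f$ ``a fixed computable function''; but the theorem as stated allows arbitrary $f$, so as written your $k(n)$ need not be computable and you have not actually exhibited an algorithm for $3$-\textsc{SAT}. The clean way around this is to use the more standard formulation of ETH---there exists $\delta>0$ such that $3$-\textsc{SAT} requires time $2^{\delta n}$---rather than the $2^{o(n)}$ phrasing the paper adopts: under that formulation one simply fixes $k$ large enough (depending only on $\delta$) that the clique exponent satisfies $g(k)\le \delta k/C$ for a suitable constant $C$, and then $f(k)$ is a constant absorbed into the running time. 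For the purposes of this paper the distinction is immaterial anyway, since the theorem is only used to conclude that $\#\W{1}$-hard problems are not in $\ccFPT$, and the paper's definition of $\ccFPT$ already restricts to computable $f$.
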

Note that the previous theorem rules out an FPT algorithm for $\#\textsc{Clique}$ (and thus all $\#\W{1}$-hard problems), unless ETH fails.

\subsubsection{Parameterised Counting Problems}\label{sec:prelims_probs}
The following parameterized problems are central to the present work. In what follows $\vec{C}$ denotes a class of directed graphs, and $\mathcal{C}$ denotes a class of hypergraphs.

\begin{parameterizedproblem}
\problemname{$\#\dirhomsprobd(\vec C)$}
\probleminput{a pair of digraphs $(\vec H, \vec G)$ with $\vec H \in \vec C$}
\problemoutput{\#\homs{\vec H}{\vec G}}
\problemparameter{$|\vec H|+d$ where $d$ is the maximum outdegree of $\vec G$}
\end{parameterizedproblem}

\begin{parameterizedproblem}
\problemname{$\#\dirsubsprobd(\vec C)$}
\probleminput{a pair of digraphs $(\vec H, \vec G)$ with $\vec H \in \vec C$}
\problemoutput{\#\subs{\vec H}{\vec G}}
\problemparameter{$|\vec H|+d$ where $d$ is the maximum outdegree of $\vec G$}
\end{parameterizedproblem}

\begin{parameterizedproblem}
\problemname{$\#\dirindsubsprobd(\vec C)$}
\probleminput{a pair of digraphs $(\vec H, \vec G)$ with $\vec H \in \vec C$}
\problemoutput{\#\indsubs{\vec H}{\vec G}}
\problemparameter{$|\vec H|+d$ where $d$ is the maximum outdegree of $\vec G$}
\end{parameterizedproblem}

\begin{parameterizedproblem}
\problemname{$\#\cpdirhomsprobd(\vec{C})$}
\probleminput{a digraph $\vec{H}\in \vec{C}$ and an $\vec{H}$-coloured digraph $(\vec{G},c)$}
\problemoutput{$\#\homs{\vec{H}}{(\vec{G},c)} $}
\problemparameter{$|\vec{H}|+d$ where $d$ is the maximum outdegree of $\vec G$}
\end{parameterizedproblem}

\begin{parameterizedproblem}
\problemname{$\#\textsc{CSP}(\mathcal{C})$}
\probleminput{a pair of relational structures $(\mathcal{A}, \mathcal{B})$ over the same signature with $\mathcal{H}(\mathcal{A}) \in \mathcal{C}$}
\problemoutput{$\#\homs{\mathcal{A}}{\mathcal{B}} $}
\problemparameter{$|\mathcal{A}|$}
\end{parameterizedproblem}

It was shown by Grohe and Marx~\cite{GroheM14} that the decision version of $\#\textsc{CSP}(\mathcal{C})$ can be solved in polynomial time if the fractional hypertreewidth of $\mathcal{C}$ is bounded. More precisely, they discovered an algorithm that solves the decision problem in time
\[(|\mathcal{A}| + |\mathcal{B}|)^{r+O(1)} \,,\]
assuming that a tree decomposition of $\mathcal{A}$ of $\rho^\ast$-width at most $r$ is given (see Theorem 3.5 and Lemma 4.9 in~\cite{GroheM14}); recall that the $\rho^\ast$-width of a tree decomposition is the maximum fractional edge cover number of a bag. In particular, they show that the partial solutions of each bag can be enumerated in time $(|\mathcal{A}| + |\mathcal{B}|)^{r+O(1)}$. Thus the dynamic programming algorithm that solves the decision version immediately extends to counting. Finally, since computing such an optimal tree decomposition can be done in time only depending on $\mathcal{A}$, we obtain the following overall running time for the counting problem:
\begin{theorem}\label{thm:csp_algo}
Let $\mathcal{A}$ and $\mathcal{B}$ be relational structures over the same signature and let $r$ be the fractional hypertreewidth of $\mathcal{A}$. There is a computable function $f$ such that we can compute $\#\homs{\mathcal{A}}{\mathcal{B}}$ in time
\[f(|\mathcal{A}|)\cdot |\mathcal{B}|^{r+O(1)} \,.\]
In particular, $\#\textsc{CSP}(\mathcal{C})$ is fixed-parameter tractable if $\mathcal{C}$ has bounded fractional hypertreewidth.
\end{theorem}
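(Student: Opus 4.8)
The plan is to realise $\#\homs{\mathcal{A}}{\mathcal{B}}$ by a counting dynamic program over an optimal tree decomposition of $\mathcal{A}$, in which the work spent at each bag is controlled by the fractional edge cover bound. First I would compute a suitable tree decomposition of the hypergraph $\mathcal{H}(\mathcal{A})$. Since $\mathcal{H}(\mathcal{A})$ has at most $|\mathcal{A}|$ vertices, only finitely many tree decompositions need to be inspected (one may restrict to decompositions with at most $|V(\mathcal{H}(\mathcal{A}))|$ nodes), and for each one the $\rho^\ast$-width — an LP on each bag — is computable in time polynomial in $|\mathcal{A}|$; hence in time $f_1(|\mathcal{A}|)$ one obtains a tree decomposition $(\mathcal{T},\mathcal{B})$ of $\mathcal{H}(\mathcal{A})$ whose $\rho^\ast$-width equals $\mathsf{fhtw}(\mathcal{A})=r$. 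After a standard normalisation (again in time depending only on $|\mathcal{A}|$) I would assume $(\mathcal{T},\mathcal{B})$ is nice and rooted, so that each element of the universe is forgotten at exactly one node, each tuple of each relation of $\mathcal{A}$ sits in the bag of exactly one ``constraint'' node, and the root bag is empty.

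The crux is the per-bag enumeration. Fix a node $t$ with bag $B_t$. Since the $\rho^\ast$-width is $r$, there is a fractional edge cover of $B_t$ by hyperedges of $\mathcal{H}(\mathcal{A})$ of weight at most $r$, and each such hyperedge is the scope of some tuple of some relation of $\mathcal{A}$. Invoking Theorem~3.5 and Lemma~4.9 of Grohe and Marx~\cite{GroheM14}, the set $P_t$ of assignments $\psi:B_t\to V(\mathcal{B})$ that satisfy every constraint of $\mathcal{A}$ whose scope is contained in $B_t$ has size at most $|\mathcal{B}|^{r}$ and can be listed in time $(|\mathcal{A}|+|\mathcal{B}|)^{r+O(1)}$ (this is exactly the worst-case optimal join / fractional cover size bound, and it is the only step that uses the hypothesis on the fractional hypertree width).

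Next I would run the counting dynamic program. For a node $t$ let $U_t$ be the union of the bags in the subtree rooted at $t$, and let $\mathcal{A}_t$ be the substructure of $\mathcal{A}$ on $U_t$ consisting exactly of the tuples processed at constraint nodes of that subtree. For $\psi\in P_t$ let $N_t(\psi)$ be the number of homomorphisms from $\mathcal{A}_t$ to $\mathcal{B}$ whose restriction to $B_t$ is $\psi$. Using the connectedness condition of tree decompositions, $N_t$ is computed bottom-up: at a leaf $N_t(\psi)=1$; at an introduce or forget node the update is an immediate extension or projection; at a join node $t$ with children $t_1,t_2$ sharing the bag of $t$, one sets $N_t(\psi)=N_{t_1}(\psi)\cdot N_{t_2}(\psi)$; at a constraint node one discards the $\psi\notin P_t$. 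Correctness holds because any element shared between two subtrees lies in the separating bag, so extensions of $\psi$ in disjoint subtrees combine independently and without double counting; the output is $N_{\mathrm{root}}(\emptyset)=\#\homs{\mathcal{A}}{\mathcal{B}}$.

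For the running time, $\mathcal{T}$ has $|\mathcal{A}|^{O(1)}$ nodes, each $P_t$ has size $|\mathcal{B}|^{r+O(1)}$ and is produced in time $(|\mathcal{A}|+|\mathcal{B}|)^{r+O(1)}$, and each DP update costs time polynomial in $\sum_t|P_t|$ (at a join node, by hashing the $\psi$'s on the shared bag); collecting the $|\mathcal{A}|$-dependent factors into a computable $f$ gives the bound $f(|\mathcal{A}|)\cdot|\mathcal{B}|^{r+O(1)}$. If $\mathcal{C}$ has fractional hypertree width bounded by a constant $r$, then every instance of $\#\textsc{CSP}(\mathcal{C})$ with left structure $\mathcal{A}$ has $\mathsf{fhtw}(\mathcal{A})\le r$, so the running time is $f(|\mathcal{A}|)\cdot|\mathcal{B}|^{O(1)}$, an $\ccFPT$ running time for the parameter $|\mathcal{A}|$, proving the ``in particular'' clause. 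The main obstacle is precisely the per-bag enumeration in time $|\mathcal{B}|^{r+O(1)}$; the counting layer on top of the Grohe--Marx decision procedure is routine and is the only part one genuinely has to check.
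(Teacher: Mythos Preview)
Your proposal is correct and follows essentially the same approach as the paper: the paper's argument (given in the paragraph preceding the theorem statement) is precisely to compute an optimal tree decomposition in time depending only on $\mathcal{A}$, invoke Theorem~3.5 and Lemma~4.9 of Grohe and Marx~\cite{GroheM14} to enumerate the partial solutions at each bag in time $(|\mathcal{A}|+|\mathcal{B}|)^{r+O(1)}$, and observe that the standard decision dynamic program extends verbatim to counting. Your write-up simply spells out the details of that sketch.
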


\section{The Directed Homomorphism Basis and Dedekind's Theorem}
In this section, we will revisit the interpolation technique for evaluating linear combinations of homomorphism counts due to Curticapean, Dell and Marx~\cite{CurticapeanDM17}, and we will extend their framework from undirected graphs to digraphs. We wish to point out that most of the results presented in this section are easy consequences and generalisations of methods known in the literature~\cite{Lovasz12,ChenM16,CurticapeanDM17,DellRW19}; and we only provide the details for reasons of self-containment. For the purpose of this section, we assume that finitely supported functions $\iota$ from digraphs to rationals are encoded as a list of elements $(\vec{F},\iota(\vec{F}))$ for all $\vec{F}$ with $\iota(\vec{F})\neq 0$. We write $|\iota|$ for the encoding length of $\iota$. 

To begin with, given a digraph $\vec{H}$, recall that $\#\subs{\vec{H}}{\star}$ and $\#\indsubs{\vec{H}}{\star}$ denote the functions that map a digraph $\vec{G}$ to $\#\subs{\vec{H}}{\vec{G}}$ and $\#\indsubs{\vec{H}}{\vec{G}}$, respectively.

For our reductions, we express both functions $\#\subs{\vec{H}}{\star}$ and $\#\indsubs{\vec{H}}{\star}$ as linear combinations of homomorphism counts from digraphs. We start with $\#\subs{\vec{H}}{\star}$ and point out that, similarly to the argument in~\cite{CurticapeanDM17}, the existence of the following transformation follows from M\"obius Inversion over the partition lattice as shown by \lovasz~(see Chapter 5.2.3 in~\cite{Lovasz12}).\footnote{In fact, Lemma~\ref{lem:subs_transformation} and Lemma~\ref{lem:indsubs_transformation} are special cases of more general transformations for counting answers to conjunctive queries with disequalities and negations~\cite{DellRW19}.}

\begin{lemma}\label{lem:subs_transformation}
Let $\vec{H}$ be a digraph. There exists a (unique and computable) function $\asub_{\vec{H}}$ from digraphs to rationals such that
\[\#\subs{\vec{H}}{\star} = \sum_{\vec{F}} \asub_{\vec{H}}(\vec{F}) \cdot \#\homs{\vec{F}}{\star}\,,\]
where the sum is over all (isomorphism classes of) digraphs $\vec{F}$.
Moreover, the function $\asub_{\vec{H}}$ has finite support, and satisfies $\asub_{\vec{H}}(\vec{F})\neq 0$ if and only if $\vec{F}$ is a quotient graph of $\vec{H}$.  
\end{lemma}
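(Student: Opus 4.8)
The plan is to run the classical Möbius-inversion argument over the partition lattice, due to \lovasz~(Chapter~5.2.3 in~\cite{Lovasz12}), and to check that every step transfers verbatim to digraphs---including those with loops, since the quotient digraph $\vec{H}/\tau$ may acquire a loop whenever $\tau$ merges the endpoints of an arc. First I would record the identity $\#\embs{\vec{F}}{\vec{G}} = |\auts{\vec{F}}|\cdot\#\subs{\vec{F}}{\vec{G}}$, valid for every digraph $\vec{G}$: an injective homomorphism $\vec{F}\to\vec{G}$ has as image a subgraph of $\vec{G}$ isomorphic to $\vec{F}$, and each such subgraph is the image of exactly $|\auts{\vec{F}}|$ injective homomorphisms (if $\vec{F}$ has a loop not realised in $\vec{G}$, both sides are $0$). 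Next I would fix $\vec{H}$ and observe that, for a partition $\tau$ of $V(\vec{H})$ and the canonical projection $\vec{H}\to\vec{H}/\tau$, a homomorphism $\varphi\colon\vec{H}\to\vec{G}$ factors through this projection if and only if $\tau$ refines the kernel partition $\sigma_\varphi$ of $\varphi$, and the resulting map $\vec{H}/\tau\to\vec{G}$ is again a homomorphism (immediate from the definition of the quotient digraph). Hence $\#\homs{\vec{H}/\tau}{\vec{G}} = \sum_{\sigma\ge\tau}\#\{\varphi\in\homs{\vec{H}}{\vec{G}}:\sigma_\varphi=\sigma\}$, and Möbius inversion over the partition lattice, specialised to the finest partition $\hat{0}$, gives
\[
\#\embs{\vec{H}}{\vec{G}} \;=\; \sum_{\tau}\mu(\hat{0},\tau)\cdot\#\homs{\vec{H}/\tau}{\vec{G}}, \qquad \mu(\hat{0},\tau)=\prod_{B\in\tau}(-1)^{|B|-1}(|B|-1)!\,,
\]
the sum ranging over all partitions $\tau$ of $V(\vec{H})$.

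Combining the two steps and grouping the partitions $\tau$ by the isomorphism type of $\vec{H}/\tau$ then yields the claimed identity with
\[
\asub_{\vec{H}}(\vec{F}) \;=\; \frac{1}{|\auts{\vec{H}}|}\sum_{\tau\,:\ \vec{H}/\tau\cong\vec{F}}\mu(\hat{0},\tau)\,.
\]
This function has finite support (there are finitely many partitions of $V(\vec{H})$) and is computable (enumerate the partitions, form the quotients, test digraph isomorphism, and sum), and any $\vec{F}$ with $\asub_{\vec{H}}(\vec{F})\ne 0$ is by construction a quotient of $\vec{H}$.

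The step I expect to be the real content is the converse: showing $\asub_{\vec{H}}(\vec{F})\ne 0$ for \emph{every} quotient $\vec{F}$ of $\vec{H}$, i.e.\ that no cancellation occurs in the sum above. The key observation is that $\mathrm{sgn}\,\mu(\hat{0},\tau) = (-1)^{|V(\vec{H})|-|\tau|}$, where $|\tau|$ denotes the number of blocks of $\tau$, so the sign of $\mu(\hat{0},\tau)$ depends only on the number of blocks. Since every $\tau$ with $\vec{H}/\tau\cong\vec{F}$ has exactly $|V(\vec{F})|$ blocks, all terms in the sum defining $\asub_{\vec{H}}(\vec{F})$ carry the same nonzero sign, and the sum is nonempty because $\vec{F}$ is a quotient of $\vec{H}$; therefore it is nonzero.

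Finally, for uniqueness I would invoke the linear independence of the functions $\vec{G}\mapsto\#\homs{\vec{F}}{\vec{G}}$ ranging over pairwise non-isomorphic digraphs $\vec{F}$. This is the directed counterpart of \lovasz's classical theorem and follows from the Dedekind-type statement of Theorem~\ref{thm:dedekind_intro} applied to the semigroup of digraphs under the directed tensor product, for which each $\vec{G}\mapsto\#\homs{\vec{F}}{\vec{G}}$ is a non-zero semigroup homomorphism into $(\Q,\cdot)$. Linear independence forces the finitely supported coefficient function solving the equation to be unique.
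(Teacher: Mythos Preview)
Your proof is correct and spells out exactly the M\"obius-inversion argument over the partition lattice that the paper merely cites (Chapter~5.2.3 of \lovasz~\cite{Lovasz12}) without reproducing; the paper gives no independent proof of this lemma, so there is nothing substantive to compare. One small remark: for the uniqueness step, the paper later packages the needed facts---that $\#\homs{\vec{F}}{\star}$ is a nonzero semigroup homomorphism with respect to the tensor product, and that non-isomorphic $\vec{F}$ give distinct such functions---as Lemma~\ref{lem:semigroup_hom}, so you may prefer to cite that directly rather than the simplified Theorem~\ref{thm:dedekind_intro}.
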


A similar transformation is known for $\#\indsubs{\vec{H}}{\star}$ which relies on \emph{arc supergraphs}.
\begin{definition}[Arc supergraphs]
Let $\vec{H}_1=(V_1,E_2)$ and $\vec{H}_2=(V_2,E_2)$ be digraphs without loops. We say that $\vec{H}_2$ is an \emph{arc supergraph} of $\vec{H}_1$ if $V_1=V_2$ and $E_1 \subseteq E_2$.
\end{definition}

In the first step, using the inclusion-exclusion principle, $\#\indsubs{\vec{H}}{\star}$ can be cast as a linear combination of subgraph counts (the proof is analogous to the undirected setting; see~\cite{CurticapeanDM17} and~\cite[Chapter 5.2.3]{Lovasz12}):
\begin{lemma}\label{lem:indsubs_transformation_intermediate}
Let $\vec{H}$ be a digraph. There exists a (unique and computable) function $\aindsub^\ast_{\vec{H}}$ from digraphs to rationals such that
\[\#\indsubs{\vec{H}}{\star} = \sum_{\vec{F}'} \aindsub^\ast_{\vec{H}}(\vec{F}') \cdot \#\subs{\vec{F}'}{\star}\,,\]
where the sum is over all (isomorphism classes of) digraphs $\vec{F}'$.
Moreover, the function $\aindsub^\ast_{\vec{H}}$ has finite support and satisfies $\aindsub^\ast_{\vec{H}}(\vec{F}')\neq 0$ if and only if $\vec{F}'$ is an arc supergraph of $\vec{H}$. 
\end{lemma}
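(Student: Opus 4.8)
The plan is to mimic the classical inclusion–exclusion argument that expresses induced subgraph counts in terms of (non-induced) subgraph counts, carrying it over verbatim from the undirected setting of~\cite{CurticapeanDM17,Lovasz12} to digraphs. First I would fix the vertex set: since $\#\indsubs{\vec H}{\vec G}$ only counts vertex subsets $S$ with $\vec G[S]\cong \vec H$, and $\#\subs{\vec F'}{\vec G}$ with $V(\vec F')=V(\vec H)$ likewise only ``sees'' $|V(\vec H)|$-element subsets, the whole identity decomposes over $|V(\vec H)|$-subsets $S$ of $V(\vec G)$. So it suffices to prove, for every digraph $\vec L$ on the vertex set $[n]$ (thinking of $\vec L = \vec G[S]$ after relabelling), the scalar identity
\[
  \#\{\text{isomorphisms }\vec H \to \vec L\}
  \;=\; \sum_{\vec F'} \aindsub^\ast_{\vec H}(\vec F') \cdot \#\{\text{injective homomorphisms }\vec F' \to \vec L\},
\]
summed over loopless digraphs $\vec F'$ on $[n]$ up to isomorphism, after which summing over all $S$ and regrouping yields the statement for arbitrary $\vec G$.

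The key combinatorial step is to set this up on the poset of arc-supergraphs of $\vec H$ (equivalently, on the Boolean lattice of arc sets containing $E(\vec H)$, restricted to those avoiding loops). For a loopless digraph $\vec L$ on $[n]$, the number of bijections $V(\vec H)\to V(\vec L)$ whose image arc set is \emph{exactly} $E(\vec L)$ is what we want; the number whose image arc set is a \emph{subset} of $E(\vec L)$ is (up to the automorphism factor $|\auts{\vec H}|$, or handled by counting injective homomorphisms instead of subgraph copies, exactly as in the undirected case) the non-induced subgraph/embedding count. These two quantities are related by Möbius inversion over the subset lattice: writing $g(\vec L)$ for the ``exactly'' count and $f(\vec L)=\sum_{\vec L' \supseteq_{\mathrm{arc}} \vec L} g(\vec L')$ — wait, the containment runs the other way — one has $f(\vec L)=\sum_{\vec L'}$ over arc-subgraphs, and Möbius inversion on the Boolean lattice of arcs gives $g = \sum (-1)^{|\Delta|} f$ over the arc-supergraphs differing by an arc set $\Delta$. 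Reading off coefficients, $\aindsub^\ast_{\vec H}(\vec F') = (-1)^{|E(\vec F')\setminus E(\vec H)|}$ when $\vec F'$ is an arc supergraph of $\vec H$ and $0$ otherwise; in particular $\aindsub^\ast_{\vec H}(\vec H)=1$. This simultaneously proves existence, finite support (only arc supergraphs on $n$ vertices contribute, finitely many), the support characterisation, uniqueness (the subgraph-count functions $\#\subs{\vec F'}{\star}$ are linearly independent, which follows from Lemma~\ref{lem:subs_transformation} together with the known linear independence of homomorphism counts, so the coefficient vector is forced), and computability (the formula is explicit).

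The step I expect to be the main obstacle — really the only place where directedness needs a moment's care rather than being purely cosmetic — is the bookkeeping of \emph{loops} and of \emph{forward–backward arc pairs}. The paper allows both arcs $(u,v)$ and $(v,u)$ in a digraph but forbids loops, so the ``arc positions'' one performs inclusion–exclusion over are the $n(n-1)$ ordered non-diagonal pairs, and one must check that (i) $\vec G[S]$ is always loopless so no diagonal position ever appears, and (ii) an isomorphism $\vec H\to\vec G[S]$ is the same data as a bijection respecting each such ordered position, so that the Boolean-lattice Möbius inversion applies cleanly with no parity corrections beyond the sign $(-1)^{|E(\vec F')\setminus E(\vec H)|}$. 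Once that is confirmed, the argument is identical to the undirected one cited, and I would simply point to~\cite[Chapter~5.2.3]{Lovasz12} and~\cite{CurticapeanDM17} for the routine details, noting only the (trivial) replacement of unordered edges by ordered arcs.
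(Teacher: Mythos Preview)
Your approach is correct and is precisely what the paper has in mind: it does not give a proof at all, but simply remarks that ``the proof is analogous to the undirected setting; see~\cite{CurticapeanDM17} and~\cite[Chapter 5.2.3]{Lovasz12}'', and your inclusion--exclusion over the non-arc positions of $\vec H$ is exactly that analogy carried out. One small correction that does not affect the argument: your closed form $\aindsub^\ast_{\vec H}(\vec F')=(-1)^{|E(\vec F')\setminus E(\vec H)|}$ is the coefficient before collecting isomorphic labeled supergraphs; after grouping by isomorphism class it picks up a positive integer multiplicity (and an automorphism ratio when passing from embeddings to $\#\sub$/$\#\indsub$), but since all contributing labeled $A$ have the same cardinality and hence the same sign, the collected coefficient is still nonzero with that sign, which is all you need.
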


In combination, the previous two lemmas allow us to cast $\#\indsubs{\vec{H}}{\star}$ as a linear combination of homomorphism counts.
\begin{lemma}\label{lem:indsubs_transformation}
Let $\vec{H}$ be a digraph. There exists a (unique and computable) function ${\aindsub}_{\vec{H}}$ from digraphs to rationals such that
\[\#\indsubs{\vec{H}}{\star} = \sum_{\vec{F}} {\aindsub}_{\vec{H}}(\vec{F}) \cdot \#\homs{\vec{F}}{\star}\,,\]
where the sum is over all (isomorphism classes of) digraphs $\vec{F}$.
Moreover, ${\aindsub}_{\vec{H}}$ satisfies the following conditions:
\begin{enumerate}
    \item\label{bul:b1} ${\aindsub}_{\vec{H}}$ has finite support.
    \item\label{bul:b2} If ${\aindsub}_{\vec{H}}(\vec{F})\neq 0$ then $\vec{F}$ is a quotient of an arc supergraph of $\vec{H}$.
    \item\label{bul:b3} If $\vec{F}$ is an arc supergraph of $\vec{H}$, then ${\aindsub}_{\vec{H}}(\vec{F})\neq 0$.
\end{enumerate}
\end{lemma}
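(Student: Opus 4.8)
The plan is to obtain $\aindsub_{\vec H}$ by composing the two transformations already established in Lemma~\ref{lem:indsubs_transformation_intermediate} and Lemma~\ref{lem:subs_transformation}. Concretely, starting from
\[
\#\indsubs{\vec H}{\star} = \sum_{\vec F'} \aindsub^\ast_{\vec H}(\vec F') \cdot \#\subs{\vec F'}{\star}\,,
\]
I would substitute, for each arc supergraph $\vec F'$ of $\vec H$ with $\aindsub^\ast_{\vec H}(\vec F')\neq 0$, the identity $\#\subs{\vec F'}{\star} = \sum_{\vec F} \asub_{\vec F'}(\vec F)\cdot \#\homs{\vec F}{\star}$ from Lemma~\ref{lem:subs_transformation}. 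Interchanging the (finite) sums yields
\[
\#\indsubs{\vec H}{\star} = \sum_{\vec F}\Bigl(\sum_{\vec F'} \aindsub^\ast_{\vec H}(\vec F')\cdot \asub_{\vec F'}(\vec F)\Bigr) \cdot \#\homs{\vec F}{\star}\,,
\]
so the natural definition is $\aindsub_{\vec H}(\vec F) := \sum_{\vec F'} \aindsub^\ast_{\vec H}(\vec F')\cdot \asub_{\vec F'}(\vec F)$, where $\vec F'$ ranges over arc supergraphs of $\vec H$. Computability and finite support (item~\ref{bul:b1}) are immediate: both $\aindsub^\ast_{\vec H}$ and each $\asub_{\vec F'}$ have finite, computable support, and a finite sum of finitely supported computable functions is again finitely supported and computable. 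Uniqueness follows from the linear independence of the homomorphism functions $\#\homs{\vec F}{\star}$ over distinct isomorphism classes $\vec F$ (the directed analogue of the classical fact, which also underlies Lemma~\ref{lem:subs_transformation}); since two finitely supported coefficient functions inducing the same function on all $\vec G$ must coincide, $\aindsub_{\vec H}$ is the unique such function.

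For item~\ref{bul:b2}, if $\aindsub_{\vec H}(\vec F)\neq 0$ then some term $\aindsub^\ast_{\vec H}(\vec F')\cdot \asub_{\vec F'}(\vec F)$ is nonzero, so by Lemma~\ref{lem:indsubs_transformation_intermediate} the graph $\vec F'$ is an arc supergraph of $\vec H$, and by Lemma~\ref{lem:subs_transformation} the graph $\vec F$ is a quotient of $\vec F'$; chaining these gives exactly the claim. Item~\ref{bul:b3} is the one point that requires a genuine (if small) argument, since cancellation among the terms $\aindsub^\ast_{\vec H}(\vec F')\cdot\asub_{\vec F'}(\vec F)$ must be ruled out when $\vec F$ is itself an arc supergraph of $\vec H$. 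Here I would use a degree/edge-count argument: if $\vec F'$ is an arc supergraph of $\vec H$ and $\vec F$ is a proper quotient of $\vec F'$ (strictly fewer vertices), then $\vec F$ cannot be an arc supergraph of $\vec H$ since arc supergraphs of $\vec H$ have exactly $|V(\vec H)|$ vertices. Hence the only pair $(\vec F',\vec F)$ with $\vec F'$ an arc supergraph of $\vec H$ and $\vec F\cong$ (a fixed arc supergraph of $\vec H$) arising as a quotient of $\vec F'$ is $\vec F' = \vec F$ with the trivial quotient — more care is needed because $\asub_{\vec F'}(\vec F)$ can be nonzero for $\vec F$ a quotient with the same vertex count only if $\vec F = \vec F'$ (a quotient that identifies no two vertices is the graph itself). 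Therefore $\aindsub_{\vec H}(\vec F) = \aindsub^\ast_{\vec H}(\vec F)\cdot \asub_{\vec F}(\vec F)$ for every arc supergraph $\vec F$ of $\vec H$, and since $\asub_{\vec F}(\vec F)\neq 0$ (the identity quotient always has nonzero coefficient, as follows from Lemma~\ref{lem:subs_transformation} applied with $\vec H = \vec F$) and $\aindsub^\ast_{\vec H}(\vec F)\neq 0$ by Lemma~\ref{lem:indsubs_transformation_intermediate}, the product is nonzero.

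The main obstacle is precisely verifying item~\ref{bul:b3}, i.e.\ that no cancellation occurs: one must pin down that among all pairs contributing to $\aindsub_{\vec H}(\vec F)$ for an arc supergraph $\vec F$ of $\vec H$, exactly one survives, which rests on the observation that every nontrivial quotient strictly decreases the number of vertices while arc supergraphs of $\vec H$ all share the vertex set of $\vec H$. The remaining parts (existence, computability, finite support, and the support characterisation in item~\ref{bul:b2}) are routine bookkeeping that mirrors the undirected development in~\cite{CurticapeanDM17,Lovasz12}.
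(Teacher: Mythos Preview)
Your proposal is correct and follows essentially the same approach as the paper: compose Lemma~\ref{lem:indsubs_transformation_intermediate} with Lemma~\ref{lem:subs_transformation} to get $\aindsub_{\vec H}(\vec F)=\sum_{\vec F'}\aindsub^\ast_{\vec H}(\vec F')\cdot\asub_{\vec F'}(\vec F)$, read off items~\ref{bul:b1} and~\ref{bul:b2} directly, and for item~\ref{bul:b3} use the vertex-count observation that a nontrivial quotient of an arc supergraph $\vec F'$ has fewer vertices than $\vec H$ and hence cannot be an arc supergraph of $\vec H$, so only the term $\vec F'=\vec F$ survives. Your treatment is in fact slightly more explicit than the paper's in keeping track of the factor $\asub_{\vec F}(\vec F)$ in that surviving term.
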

\begin{proof}
We first apply the transformation from induced subgraphs to subgraphs as in Lemma~\ref{lem:indsubs_transformation_intermediate}, second, we apply the transformation from subgraphs to homomorphisms as in Lemma~\ref{lem:subs_transformation}. We obtain
\begin{equation}
    \#\indsubs{\vec{H}}{\star} = \sum_{\vec{F}'} \aindsub^\ast_{\vec{H}}(\vec{F}') \cdot \sum_{\vec{F}} \asub_{\vec{F}'}(\vec{F}) \cdot \#\homs{\vec{F}}{\star}\,.
\end{equation}
The coefficients $\aindsub_{\vec{H}}$ are then obtained by collecting for isomorphic terms, that is
\begin{equation}\label{eq:collect_coeffs_indsub}
    \aindsub_{\vec{H}}(\vec{F}) = \sum_{\vec{F}'} \aindsub^\ast_{\vec{H}}(\vec{F}') \cdot \asub_{\vec{F}'}(\vec{F})\,.
\end{equation}

Note that~\ref{bul:b1} and~\ref{bul:b2} follow immediately from the properties of $\asub$ and $\aindsub^\ast$. It remains to show~\ref{bul:b3}: To this end, let $\vec{F}$ be an \emph{arc supergraph} of $\vec{H}$. Then, for each arc supergraph $\vec{F}'$ of $\vec{H}$, the only quotient graph of $\vec{F}'$ that can be isomorphic to $\vec{F}$ is $\vec{F}'$ itself, since all other quotients have fewer vertices. By Lemma~\ref{lem:subs_transformation} we hence have that arc supergraphs $\vec{F}$ and $\vec{F}'$ of $\vec{H}$ satisfy that $\asub_{\vec{F}'}(\vec{F})\neq 0$ implies $\vec{F}\cong \vec{F}'$. Finally, Lemma~\ref{lem:indsubs_transformation_intermediate} asserts that $\aindsub^\ast_{\vec{H}}(\vec{F}')\neq 0$ if and only if $\vec{F}'$ is an arc supergraph of $\vec{H}$. Thus, using that $\vec{F}$ is an arc supergraph of $\vec{H}$, we have that~\eqref{eq:collect_coeffs_indsub} simplifies to
$\aindsub_{\vec{H}}(\vec{F})=\aindsub^\ast_{\vec{H}}(\vec{F})$, which is non-zero by Lemma~\ref{lem:indsubs_transformation_intermediate}. This concludes the proof.
\end{proof}

\noindent \textbf{A Remark on loops:} Readers familiar with~\cite{CurticapeanDM17} might notice that the quotient graphs in Lemma~\ref{lem:subs_transformation} and the arc supergraphs in Lemma~\ref{lem:indsubs_transformation} may have loops although loops are forbidden in~\cite{CurticapeanDM17}. The reason for this is the fact that we allow digraphs to contain loops, whereas undirected graphs in~\cite{CurticapeanDM17} are not allowed to contain loops. However, we emphasize that our hardness results will also apply to the restricted case of digraphs without loops. This will be made explicit in the respective sections.

\medskip

Next we show that the interpolation method in~\cite{CurticapeanDM17} for evaluating linear combinations of homomorphism counts transfers to the directed setting as well. The algorithm will be the same as in~\cite{CurticapeanDM17}; however, our correctness proof will be both more concise and more general at the same time by relying on a classical result of Dedekind.

\begin{lemma}\label{lem:complexity_monotonicity}
There exists a deterministic algorithm $\mathbb{A}$ with the following specification:
\begin{itemize}
    \item The input of $\mathbb{A}$ is a pair of a digraph $\vec{{G}}'$ with outdegree $d$ and a function $\iota$ from digraphs to rationals of finite support.
    \item $\mathbb{A}$ is equipped with oracle access to the function 
    \[ \vec{G} \mapsto \sum_{\vec{F}} \iota(\vec{F}) \cdot \#\homs{\vec{F}}{\vec{G}} \,,\]
    where the sum is over all (isomorphism classes of) digraphs.
    \item The output of $\mathbb{A}$ is the list with elements $(\vec{F},\#\homs{\vec{F}}{\vec{{G}}'})$ for each $\vec{F}$ with $\iota(\vec{F})\neq 0$.
\end{itemize}
Additionally, for some computable function $f:\N\to\N$ the running time of $\mathbb{A}$ is bounded by $f(|\iota|)\cdot |\vec{{G}}'|^{O(1)}$ and the outdegree of every oracle query $\vec{G}$ is at most $f(|\iota|)\cdot d$.

Similar algorithms exist for the restricted cases of digraphs without loops and DAGs.
\end{lemma}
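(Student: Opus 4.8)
The statement to prove is Lemma~\ref{lem:complexity_monotonicity}: given oracle access to $\vec{G}\mapsto\sum_{\vec{F}}\iota(\vec{F})\cdot\#\homs{\vec{F}}{\vec{G}}$, and an input digraph $\vec{G}'$ of outdegree $d$, recover the individual values $\#\homs{\vec{F}}{\vec{G}'}$ for every $\vec{F}$ in the support of $\iota$, in time $f(|\iota|)\cdot|\vec{G}'|^{O(1)}$ and using only oracle queries of outdegree $O_{|\iota|}(d)$. The route is the one from Curticapean--Dell--Marx, but organised around Theorem~\ref{thm:dedekind_intro} (the constructive Dedekind independence-of-characters statement). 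First I would set up the algebraic frame: let $(\mathrm{G},\ast)$ be the set of (isomorphism classes of) digraphs under the directed tensor product, which is a commutative semigroup; for each fixed $\vec{F}$ the map $\varphi_{\vec{F}}:\vec{G}\mapsto\#\homs{\vec{F}}{\vec{G}}$ is a semigroup homomorphism into $(\Q,\cdot)$ because $\#\homs{\vec{F}}{\vec{G}_1\ast\vec{G}_2}=\#\homs{\vec{F}}{\vec{G}_1}\cdot\#\homs{\vec{F}}{\vec{G}_2}$ (a homomorphism into a tensor product is a pair of homomorphisms into the factors). The maps $\varphi_{\vec{F}}$ for distinct $\vec{F}$ are pairwise distinct (this is the directed analogue of the classical fact that homomorphism-count vectors determine a graph up to isomorphism; one picks a witness digraph separating any two non-isomorphic patterns), and none is identically zero. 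Hence $\sum_{\vec{F}}\iota(\vec{F})\varphi_{\vec{F}}$ is exactly a function of the form~\eqref{eq:dedekind_intro}, and Theorem~\ref{thm:dedekind_intro} already gives an interpolation algorithm that recovers the coefficients $\iota(\vec{F})$ from oracle access to the linear combination. But that is not yet what we want: we want the \emph{values} $\#\homs{\vec{F}}{\vec{G}'}$ at a specific host $\vec{G}'$.

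**Reducing to Dedekind at the point $\vec{G}'$.** To get the values at $\vec{G}'$, I would apply the standard trick of tensoring the argument with $\vec{G}'$. Define the shifted function $\Phi:\vec{G}\mapsto\sum_{\vec{F}}\iota(\vec{F})\cdot\#\homs{\vec{F}}{\vec{G}\ast\vec{G}'}=\sum_{\vec{F}}\bigl(\iota(\vec{F})\cdot\#\homs{\vec{F}}{\vec{G}'}\bigr)\cdot\varphi_{\vec{F}}(\vec{G})$, which is again of the form in Theorem~\ref{thm:dedekind_intro} but now with coefficients $a_{\vec{F}}=\iota(\vec{F})\cdot\#\homs{\vec{F}}{\vec{G}'}$. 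Oracle access to $\Phi$ is simulated by, on query $\vec{G}$, forming $\vec{G}\ast\vec{G}'$ and querying the given oracle; since the outdegree of a tensor product is the product of the outdegrees, the queries the algorithm issues have outdegree at most $d\cdot d(\vec{G}')$ times the blow-up already incurred inside the Dedekind interpolation routine — and crucially, the Dedekind routine only ever queries on a bounded number of ``test digraphs'' whose size and outdegree depend only on $|\iota|$ (this is where the ``efficient'' in Theorem~\ref{thm:dedekind_intro} must be unpacked: the test structures are, e.g., disjoint unions / small powers of a fixed finite family of separating digraphs). Thus every oracle query has outdegree $f(|\iota|)\cdot d$, as required. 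Running the Dedekind algorithm on $\Phi$ returns all products $\iota(\vec{F})\cdot\#\homs{\vec{F}}{\vec{G}'}$; dividing by the known nonzero $\iota(\vec{F})$ yields $\#\homs{\vec{F}}{\vec{G}'}$ for each $\vec{F}$ in the support.

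**The restricted cases and the running time.** For the ``digraphs without loops'' and ``DAGs'' variants, I would observe that the directed tensor product of two loopless digraphs is loopless, and the tensor product of two DAGs is a DAG (a directed cycle in $\vec{G}_1\ast\vec{G}_2$ would project to directed closed walks in both factors, impossible in a DAG); also, if $\iota$ is supported on loopless digraphs (resp.\ DAGs) then the separating witnesses in the Dedekind step can be chosen loopless (resp.\ acyclic), since homomorphism counts into loopless graphs / DAGs already distinguish all loopless graphs / DAGs. Hence the whole argument stays inside the relevant subclass. Finally, the running time: the number of oracle queries and the sizes of the test digraphs are bounded by a computable function of $|\iota|$ alone (number of summands, maximum pattern size), each query on $\vec{G}\ast\vec{G}'$ costs $|\vec{G}|\cdot|\vec{G}'|$ to build plus one oracle call, and the linear-algebra step (solving the Vandermonde-type system produced by Dedekind) is $f(|\iota|)$; the only $\vec{G}'$-dependence is the polynomial factor from constructing the tensor products, giving the claimed $f(|\iota|)\cdot|\vec{G}'|^{O(1)}$.

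**Main obstacle.** The genuinely delicate point is not the algebraic skeleton but controlling the outdegree of the oracle queries: one must make sure that the ``efficient algorithm'' promised by Theorem~\ref{thm:dedekind_intro} only evaluates $\Phi$ on arguments that are themselves bounded in outdegree by a function of $|\iota|$ (so that after tensoring with $\vec{G}'$ the outdegree is $f(|\iota|)\cdot d(\vec{G}')$ rather than something depending on $|\vec{G}'|$). This forces a careful choice of the separating/test digraphs in the Dedekind interpolation — they must be built from a fixed finite catalogue of small digraphs (one per pair of patterns to be separated), combined only via operations that do not blow up the outdegree more than multiplicatively by a constant depending on $|\iota|$. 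Verifying that such a catalogue exists, and that the resulting evaluation matrix is invertible, is the technical heart; everything else is bookkeeping of semigroup-homomorphism identities and the two straightforward closure facts for loopless digraphs and DAGs.
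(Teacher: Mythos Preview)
Your proposal is correct and follows essentially the same route as the paper: tensor the oracle argument with $\vec{G}'$ to turn the unknowns $\#\homs{\vec{F}}{\vec{G}'}$ into the coefficients of a new linear combination of the characters $\varphi_{\vec F}=\#\homs{\vec F}{\star}$, invoke Dedekind to recover them, then divide by $\iota(\vec F)$; the restricted cases follow since the tensor product preserves looplessness and acyclicity. The ``main obstacle'' you flag is precisely what the full statement of Dedekind's theorem (Theorem~\ref{thm:dedekind}, as opposed to the simplified Theorem~\ref{thm:dedekind_intro}) resolves: it guarantees that each oracle query $\hat g$ depends only on the $\varphi_i$ and not on the $a_i$, so the test digraphs have size and outdegree bounded purely in terms of $|\iota|$, and hence $\vec{G}'\otimes\hat g$ has outdegree at most $f(|\iota|)\cdot d$.
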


The proof of the previous lemma requires some additional set-up. For what follows, we let $\vec{\mathcal{U}}^\circ$ be the class of all digraphs, and we let $\vec{\mathcal{U}}$ be the class of all digraphs without loops, and we let $\vec{\mathcal{D}}$ be the class of all DAGs. Next, consider the following operation on digraphs:

\begin{definition}[Tensor product]\label{def:tensor}
The tensor product $\vec{G}\otimes \vec{F}$ of two digraphs $\vec G$ and $\vec F$ is the digraph with $V(\vec{G}\otimes \vec{F}) = V(\vec{G})\times V(\vec{F})$ and with $((a,b),(c,d)) \in E(\vec{G}\otimes \vec{F})$ iff $(a,c) \in E(\vec{G})$ and $(b, d) \in E(\vec{F})$.
\end{definition}

A \emph{semigroup} is a pair of a set $\mathrm{G}$ and an associative operation $\ast$ on $\mathrm{G}$. Now observe that the tensor product of digraphs is clearly associative, that is, $\vec{G}\otimes (\vec{F} \otimes \vec{H})\cong (\vec{G}\otimes \vec{F}) \otimes \vec{H}$ given by the isomorphism $(u,(v,w))\mapsto ((u,v),w)$. Observe further that the tensor product of two digraphs without loops does not contain loops, and that the tensor product of two DAGs is again a DAG. Consequently, we obtain three semigroups:

\begin{observation}
$(\vec{\mathcal{U}}^\circ,\otimes)$, $(\vec{\mathcal{U}},\otimes)$, and $(\vec{\mathcal{D}},\otimes)$ are semigroups.
\end{observation}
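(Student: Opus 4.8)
The plan is simply to check the two defining properties of a semigroup for each of the three pairs, after one bookkeeping remark: since $\otimes$ is associative only up to isomorphism (as the displayed isomorphism $(u,(v,w))\mapsto((u,v),w)$ already signals), the underlying sets $\vec{\mathcal{U}}^\circ$, $\vec{\mathcal{U}}$, $\vec{\mathcal{D}}$ must be read as sets of \emph{isomorphism classes} of digraphs, and one must observe that $\otimes$ descends to a well-defined binary operation on these classes. The latter holds because $\vec{G}\cong\vec{G}'$ and $\vec{F}\cong\vec{F}'$ imply $\vec{G}\otimes\vec{F}\cong\vec{G}'\otimes\vec{F}'$: the product of the two underlying vertex bijections is arc-preserving and arc-reflecting directly by Definition~\ref{def:tensor}.

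First I would verify closure. For $\vec{\mathcal{U}}^\circ$ this is immediate from Definition~\ref{def:tensor}. For $\vec{\mathcal{U}}$, suppose $\vec{G},\vec{F}$ are loopless; if $((a,b),(a,b))\in E(\vec{G}\otimes\vec{F})$ then $(a,a)\in E(\vec{G})$ and $(b,b)\in E(\vec{F})$, contradicting looplessness, so $\vec{G}\otimes\vec{F}$ is loopless. For $\vec{\mathcal{D}}$, suppose $\vec{G},\vec{F}$ are DAGs and, for contradiction, that $(a_0,b_0)\to(a_1,b_1)\to\cdots\to(a_n,b_n)=(a_0,b_0)$ is a directed cycle in $\vec{G}\otimes\vec{F}$ with $n\ge 1$. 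Projecting onto the first coordinate yields a closed directed walk $a_0\to a_1\to\cdots\to a_n=a_0$ of length $n\ge 1$ in $\vec{G}$; but any closed directed walk of positive length contains a directed cycle, contradicting acyclicity of $\vec{G}$. Hence $\vec{G}\otimes\vec{F}$ is a DAG.

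Second, associativity. I would take the canonical bijection $\theta\colon (u,(v,w))\mapsto((u,v),w)$ between $V(\vec{G}\otimes(\vec{F}\otimes\vec{H}))$ and $V((\vec{G}\otimes\vec{F})\otimes\vec{H})$ and unfold Definition~\ref{def:tensor} twice on each side: $(u,(v,w))\to(u',(v',w'))$ holds in $\vec{G}\otimes(\vec{F}\otimes\vec{H})$ iff $(u,u')\in E(\vec{G})$, $(v,v')\in E(\vec{F})$ and $(w,w')\in E(\vec{H})$, which is exactly the condition for $((u,v),w)\to((u',v'),w')$ in $(\vec{G}\otimes\vec{F})\otimes\vec{H}$. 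Thus $\theta$ is an isomorphism, so $\otimes$ is associative on isomorphism classes, and together with the three closure statements this shows each of $(\vec{\mathcal{U}}^\circ,\otimes)$, $(\vec{\mathcal{U}},\otimes)$, $(\vec{\mathcal{D}},\otimes)$ is a semigroup.

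There is no genuine obstacle here; the observation is essentially a sanity check that the setting of Theorem~\ref{thm:dedekind_intro} can be instantiated. The only points demanding a shred of care are the ones flagged above: that associativity holds merely up to isomorphism (so one works with isomorphism classes and notes that $\otimes$ respects isomorphism), and the elementary graph-theoretic fact, used in the DAG case, that a closed directed walk of positive length always contains a directed cycle.
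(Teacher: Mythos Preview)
Your proposal is correct and follows essentially the same approach as the paper, which simply records before the observation that $\otimes$ is associative via the map $(u,(v,w))\mapsto((u,v),w)$ and that the tensor product preserves looplessness and acyclicity. You add useful rigor the paper leaves implicit, in particular the point that one must pass to isomorphism classes (and that $\otimes$ respects isomorphism) for associativity to hold on the nose.
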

\noindent The fact that the tensor product induces a semigroup will allow us to invoke constructive version of Dedekind's Theorem on the linear independence of characters
from Artin~\cite[Theorem 12]{Artin48}.\footnote{Note that Artin states Dedekind's Theorem for the case of $(\mathrm{G},\ast)$ being a group, rather than a semigroup. However, the proof only needs associativity of the operation $\ast$ and thus applies for the more general case of semigroups as well.}
Since Artin does not state the constructive version explicitly - it only follows from their proof - we provide a self-contained argument in Appendix~\ref{app:dedekind} for the reader's convenience.
\begin{theorem}\label{thm:dedekind}
Let $(\mathrm{G},\ast)$ be a semigroup. Let $(\varphi_i)_{i\in[k]}$ with $\varphi_i: \mathrm{G} \to \Q$ be pairwise distinct semigroup homomorphisms of  $(\mathrm{G},\ast)$ into $(\Q,\cdot)$, that is, $\varphi_i(g_1\ast g_2)= \varphi_i(g_1)\cdot \varphi_i(g_2)$ for all $i\in[k]$ and $g_1,g_2\in \mathrm{G}$. Let $\phi : G \to \Q$ be a function
\begin{equation}\label{eq:dedekind}
    \phi : g \mapsto \sum_{i=1}^k a_i \cdot \varphi_i(g)\,, 
\end{equation}
where the $a_i$ are rational numbers.
Suppose furthermore that the following functions are computable:
\begin{enumerate}
    \item The operation $\ast$.
    \item The mapping $(i,g) \mapsto \varphi_i(g)$.
    \item A mapping $i\mapsto g_i$ such that $\varphi_i(g_i)\neq 0$.
    \item A mapping $(i,j) \mapsto g_{i,j}$ such that $\varphi_i(g_{i,j})\neq \varphi_j(g_{i,j})$ whenever $i\neq j$.
\end{enumerate}
Then there is a constant $B$ only depending on the $\varphi_i$ (and not on the $a_i$), and an algorithm  $\hat{\mathbb{A}}$ such that the following conditions are satisfied:
\begin{itemize}
    \item $\hat{\mathbb{A}}$ is equipped with oracle access to $\phi$.
    \item $\hat{\mathbb{A}}$ computes $a_1,\ldots,a_k$.
    \item Each oracle query $\hat{g}$ only depends on the $\varphi_i$ (and not on the $a_i$).
    \item The running time of $\hat{\mathbb{A}}$ is bounded by $O\left( B \cdot \sum_{i=1}^k \log a_i \right)$
\end{itemize}
\end{theorem}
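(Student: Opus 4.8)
The plan is to give an effective form of the classical argument that pairwise distinct characters are linearly independent. Fix a target index $\ell\in[k]$. I would build, using only the data from assumptions (1)--(4) and thus nothing about the $a_i$, a short list of elements of $\mathrm{G}$ together with a list of rational coefficients, such that the corresponding weighted combination of oracle answers equals $a_\ell$ times a known nonzero rational; doing this for every $\ell$ then yields all coefficients.

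For each $j\in[k]\setminus\{\ell\}$ define the operator $T_j$ on functions $\psi:\mathrm{G}\to\Q$ by $(T_j\psi)(g)=\psi(g_{\ell,j}\ast g)-\varphi_j(g_{\ell,j})\cdot\psi(g)$, where $g_{\ell,j}$ is the element guaranteed by assumption (4), so $\varphi_\ell(g_{\ell,j})\neq\varphi_j(g_{\ell,j})$. Multiplicativity of the $\varphi_i$ gives $T_j\varphi_i=(\varphi_i(g_{\ell,j})-\varphi_j(g_{\ell,j}))\cdot\varphi_i$ for every $i$; in particular $T_j$ annihilates $\varphi_j$ and scales $\varphi_\ell$ by a nonzero factor. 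Thus each $T_j$ maps the finite-dimensional space $V=\langle\varphi_1,\dots,\varphi_k\rangle$ into itself and acts diagonally on the spanning family $\varphi_1,\dots,\varphi_k$, so any composition $T=T_{j_1}\circ\cdots\circ T_{j_{k-1}}$ over $[k]\setminus\{\ell\}$ has the same effect on $V$ irrespective of the order (the non-commutativity of left translations in a general semigroup is harmless, since all intermediate functions stay in $V$). For such a composition, the factor with index $i$ annihilates $\varphi_i$ for each $i\neq\ell$, so $T\phi=a_\ell\,\lambda_\ell\,\varphi_\ell$ with $\lambda_\ell=\prod_{j\neq\ell}(\varphi_\ell(g_{\ell,j})-\varphi_j(g_{\ell,j}))\neq0$. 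Evaluating at the element $g_\ell$ from assumption (3), for which $\varphi_\ell(g_\ell)\neq0$, yields $(T\phi)(g_\ell)=a_\ell\,\lambda_\ell\,\varphi_\ell(g_\ell)$; since $\lambda_\ell$ and $\varphi_\ell(g_\ell)$ are computable (via assumptions (1)--(4)) and nonzero, $a_\ell$ is recovered by one division.

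It remains to express $(T\phi)(g_\ell)$ through oracle calls. Writing $T_j=L_j-\varphi_j(g_{\ell,j})\,\mathrm{id}$ with $(L_j\psi)(g)=\psi(g_{\ell,j}\ast g)$ and expanding the product gives $(T\phi)(g_\ell)=\sum_{P\subseteq[k]\setminus\{\ell\}}(-1)^{(k-1)-|P|}\bigl(\prod_{j\in[k]\setminus(\{\ell\}\cup P)}\varphi_j(g_{\ell,j})\bigr)\,\phi(w_P\ast g_\ell)$, where $w_P$ is the product of the $g_{\ell,j}$, $j\in P$, taken in a fixed order and $w_\emptyset\ast g_\ell:=g_\ell$. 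Every element $w_P\ast g_\ell$ is obtained from the $g_{\ell,j}$ and $g_\ell$ using $\ast$, and every coefficient $\varphi_j(g_{\ell,j})$ is obtained from assumption (2), so the $2^{k-1}$ query elements depend only on the $\varphi_i$; running over all $\ell$ uses $k\cdot2^{k-1}$ queries. A constant $B$ depending only on the $\varphi_i$ bounds the number of queries and the bit-lengths of the fixed group elements $g_{\ell,j},g_\ell$ and of the fixed rationals $\varphi_j(g_{\ell,j}),\lambda_\ell,\varphi_\ell(g_\ell)$.

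Finally, a routine bit-complexity accounting completes the proof: each oracle answer $\phi(w_P\ast g_\ell)=\sum_i a_i\varphi_i(w_P\ast g_\ell)$ has bit-length $O(B+\sum_i\log a_i)$, the weighted $2^{k-1}$-term sum forming $(T\phi)(g_\ell)=a_\ell\lambda_\ell\varphi_\ell(g_\ell)$ stays within the same order, and the final division returns $a_\ell$; summing over $\ell$ gives total running time $O(B\cdot\sum_{i=1}^k\log a_i)$ for a suitable $B$ as above. The algebraic core — the annihilator $T$ and the identity $T\phi=a_\ell\lambda_\ell\varphi_\ell$ — is exactly the constructive form of Dedekind's independence-of-characters theorem and goes through cleanly; the only point that requires care is this last accounting, namely verifying that the intermediate rationals never grow beyond the claimed bound, which I would carry out with standard estimates.
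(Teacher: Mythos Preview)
Your proof is correct and rests on the same constructive form of Dedekind's independence-of-characters argument as the paper. The elimination operator $T_j\psi(g)=\psi(g_{\ell,j}\ast g)-\varphi_j(g_{\ell,j})\,\psi(g)$ is exactly the identity the paper exploits in its appendix proof (their equation \eqref{eq:dedeproof1}).

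The organisation differs slightly. The paper proceeds recursively: it applies one elimination step (your $T_k$) to pass to a $(k-1)$-term problem, recurses to obtain the modified coefficient $\hat a_1$, recovers $a_1$, then subtracts $a_1\varphi_1$ from $\phi$ and recurses on the remaining $k-1$ coefficients. You instead unroll this recursion and recover each $a_\ell$ independently by applying the full composition $\prod_{j\neq\ell}T_j$ and evaluating at $g_\ell$. Your presentation makes the query set and the nonzero scaling factor $\lambda_\ell=\prod_{j\neq\ell}(\varphi_\ell(g_{\ell,j})-\varphi_j(g_{\ell,j}))$ fully explicit and handles the bit-complexity accounting more carefully than the paper does; the price is $k\cdot 2^{k-1}$ oracle queries, against a somewhat smaller but asymptotically equivalent count in the recursive version. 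In either case the query count and the sizes of the fixed semigroup elements and rationals are absorbed into the constant $B$, so the two arguments yield the same conclusion.
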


We aim to apply Theorem~\ref{thm:dedekind} to prove Lemma~\ref{lem:complexity_monotonicity}. However, this requires us to establish the following properties:

\begin{lemma}\label{lem:semigroup_hom}\label{lem:distinct_hom} The following conditions are satisfied:
\begin{enumerate}
    \item For every $\vec H \in \vec{\mathcal{U}}^\circ$ the function $\#\homs{\vec H}{\star}$ is a homomorphism from $(\vec{\mathcal{U}}^\circ, \otimes)$ into $(\Q,\cdot)$, that is, $\#\homs{\vec H}{\vec G \otimes \vec F} = \#\homs{\vec H}{\vec G} \cdot \#\homs{\vec H}{\vec F}$ for all $\vec G, \vec F \in \vec{\mathcal{U}}^\circ$.
    \item $\#\homs{\vec H_1}{\star} \ne \#\homs{\vec H_2}{\star}$ whenever $\vec H_1\in \vec{\mathcal{U}}^\circ$ and $\vec H_2\in \vec{\mathcal{U}}^\circ$ are non-isomorphic.
\end{enumerate}
The same holds true in the restricted cases of digraphs without loops and DAGs, that is, the same holds true if $\vec{\mathcal{U}}^\circ$ is substituted by $\vec{\mathcal{U}}$ or $\vec{\mathcal{D}}$.
\end{lemma}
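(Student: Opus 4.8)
Both parts are directed analogues of standard facts about homomorphism counting, and the plan is to handle them separately. For the multiplicativity claim I would exhibit the explicit bijection
\[
\homs{\vec H}{\vec G \otimes \vec F} \;\longleftrightarrow\; \homs{\vec H}{\vec G} \times \homs{\vec H}{\vec F}
\]
obtained as follows: send $\varphi \in \homs{\vec H}{\vec G\otimes\vec F}$ to the pair $(\varphi_{\vec G},\varphi_{\vec F})$, where $\varphi_{\vec G}$ and $\varphi_{\vec F}$ are the compositions of $\varphi$ with the two coordinate projections of $V(\vec G)\times V(\vec F)$; conversely send $(\psi,\chi)$ to $v\mapsto(\psi(v),\chi(v))$. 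By Definition~\ref{def:tensor}, $((a,b),(c,d))$ is an arc of $\vec G\otimes\vec F$ exactly when $(a,c)\in E(\vec G)$ and $(b,d)\in E(\vec F)$, so $\varphi$ preserves an arc of $\vec H$ iff both $\varphi_{\vec G}$ and $\varphi_{\vec F}$ do; hence the two maps above are well defined and mutually inverse, and counting both sides gives $\#\homs{\vec H}{\vec G\otimes\vec F}=\#\homs{\vec H}{\vec G}\cdot\#\homs{\vec H}{\vec F}$. This argument uses no property of $\vec H,\vec G,\vec F$ at all---in particular it is indifferent to loops and to acyclicity---so it transfers verbatim with $\vec{\mathcal{U}}^\circ$ replaced by $\vec{\mathcal{U}}$ or $\vec{\mathcal{D}}$, the only extra ingredient being that these classes are closed under $\otimes$, which the text already notes.

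For the injectivity of $\vec H\mapsto\#\homs{\vec H}{\star}$ I would use the classical passage through injective homomorphisms (embeddings), as in \lovasz~\cite{Lovasz12} and as already invoked for Lemma~\ref{lem:subs_transformation}. Classifying a homomorphism $\vec H\to\vec G$ by the partition of $V(\vec H)$ into its fibres yields, for every digraph $\vec G$,
\[
\#\homs{\vec H}{\vec G} \;=\; \sum_{\sigma}\#\embs{\vec H/\sigma}{\vec G},
\]
the sum ranging over all partitions $\sigma$ of $V(\vec H)$. The finest partition contributes $\#\embs{\vec H}{\vec G}$, every other partition yields a quotient with strictly fewer vertices, and the only quotient of $\vec H$ with $|V(\vec H)|$ vertices is $\vec H$ itself; so the system is triangular in the number of vertices with unit diagonal, and inverting it writes $\#\embs{\vec H}{\star}$ as an integer combination of the functions $\#\homs{\vec K}{\star}$ over digraphs $\vec K$ with $|V(\vec K)|\le|V(\vec H)|$, the coefficient of $\#\homs{\vec H}{\star}$ being $1$. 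I would then prove that the family $\bigl\{\#\embs{\vec K}{\star}\bigr\}_{\vec K}$ is linearly independent: in a putative finite vanishing combination, pick $\vec K_0$ with nonzero coefficient minimising $\bigl(|V(\vec K)|,|E(\vec K)|\bigr)$ lexicographically and evaluate at $\vec G=\vec K_0$; since $\#\embs{\vec K}{\vec K_0}$ vanishes whenever $\vec K$ has more vertices---or equally many vertices but strictly more arcs---than $\vec K_0$, and equals the number of isomorphisms $\vec K\to\vec K_0$ once the vertex and arc counts coincide, only the term of $\vec K_0$ survives, forcing its coefficient times $|\auts{\vec K_0}|$ to vanish, a contradiction. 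By the triangular relation, $\bigl\{\#\homs{\vec K}{\star}\bigr\}_{\vec K}$ is then linearly independent as well, so for non-isomorphic $\vec H_1,\vec H_2$ the difference $\#\homs{\vec H_1}{\star}-\#\homs{\vec H_2}{\star}$ is a nontrivial combination, hence not the zero function.

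For the restricted classes $\vec{\mathcal{U}}$ and $\vec{\mathcal{D}}$ the only point that needs care is that a quotient $\vec H/\sigma$ of a loopless digraph (resp.\ of a DAG) can acquire a loop (resp.\ a short directed cycle) and thereby leave the class. This is harmless: such a quotient satisfies $\#\embs{\vec H/\sigma}{\vec G}=0$ for every $\vec G$ in the class, so restricting the evaluation to the class simply deletes those terms from the displayed identity, leaving a system over the class that is still triangular with unit diagonal; the inversion and the linear-independence argument then go through unchanged, with the witness $\vec K_0$ chosen inside the class. I expect this bookkeeping---together with fixing the ordering of digraphs (first by number of vertices, then by number of arcs) that makes the embedding matrix triangular and verifying that this triangularity survives the restriction---to be the only genuinely delicate part; everything else is routine.
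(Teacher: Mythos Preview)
Your argument for Part~1 coincides with the paper's: both exhibit the projection/pairing bijection coming straight from the definition of $\otimes$.

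For Part~2 your proof is correct but takes a different classical route from the paper's. You go through \emph{injective} homomorphisms: decompose $\#\homs{\vec H}{\star}$ as a sum of $\#\embs{\vec H/\sigma}{\star}$, show the embedding functions are linearly independent by evaluating at a lex-minimal witness, and infer linear independence of the homomorphism functions via the unit-triangular change of basis. The paper instead goes through \emph{surjective} homomorphisms: it writes $\#\mathsf{Sur}(\vec F\to\vec G)$ by inclusion--exclusion over induced subgraphs of $\vec G$, deduces from $\#\homs{\vec H_1}{\star}=\#\homs{\vec H_2}{\star}$ that $\#\mathsf{Sur}(\vec H_1\to\star)=\#\mathsf{Sur}(\vec H_2\to\star)$, evaluates at $\vec H_1$ and $\vec H_2$ to get surjective homomorphisms in both directions, and concludes isomorphism. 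The paper's approach is shorter and avoids setting up a linear-independence framework; it also handles the restricted classes $\vec{\mathcal U}$ and $\vec{\mathcal D}$ with no extra work, because induced subgraphs of a loopless digraph (resp.\ DAG) remain loopless (resp.\ acyclic). Your approach, by contrast, yields the stronger statement that the whole family $\{\#\homs{\vec K}{\star}\}_{\vec K}$ is linearly independent, not merely pairwise distinct; the price is the bookkeeping you correctly flag about quotients leaving the restricted class, which you handle by observing that such terms vanish identically on the class.
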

\begin{proof}
The first claim is immediate since, by definition of $\otimes$, each $\varphi \in \homs{\vec H}{\vec G \otimes \vec F}$ decomposes (via projection) to $\varphi_1\in \homs{\vec H}{\vec G}$ and $\varphi_2\in \homs{\vec H}{\vec F}$, which induces a bijection.
 
For the second claim we follow a classical argument by \lovasz~(see Chapter~5.4 in~\cite{Lovasz12}).
For any two $\vec{F},\vec{G} \in \vec{\mathcal{U}}^\circ$ define:
\begin{align}
    \#\mathsf{Sur}(\vec{F}\to \vec{G}):=\{\varphi \in \#\homs{\vec{F}}{\vec{G}}~|~ \varphi \text{ is vertex-surjective}\}
\end{align}
For any $S \subseteq V(\vec{G})$ let $\vec{G}[S]$ be the subgraph of $\vec{G}$ induced by $S$. By inclusion and exclusion:
\begin{align}
    \#\mathsf{Sur}(\vec F \to \vec{G}) &= \sum_{S\subseteq V(\vec{G})} (-1)^{|V(\vec{G})\setminus S|} \cdot \#\homs{\vec F}{\vec{G}[S]}\label{eq:sur_ie}
\end{align}
Now assume for contradiction that $\#\homs{\vec{H}_1}{\star}=\#\homs{\vec{H}_2}{\star}$ for two non-isomorphic $\vec{H}_1, \vec{H}_2 \in \vec{\mathcal{U}}^\circ$. Then for $\vec G = H_1$~\eqref{eq:sur_ie} yields $\#\mathsf{Sur}(\vec{H_2} \to \vec{H_1})=\#\mathsf{Sur}(\vec{H_1} \to \vec{H_1})>0$, and for $\vec G=\vec H_2$ it yields $\#\mathsf{Sur}(\vec{H_1} \to \vec{H_2}) = \#\mathsf{Sur}(\vec{H_2} \to \vec{H_2})>0$. Hence there are surjective homomorphisms of $\vec{H}_1$ into $\vec{H}_2$ and of $\vec{H}_2$ into $\vec{H}_1$. But then $\vec{H}_1$ and $\vec{H}_2$ are isomorphic, which yields the desired contradiction.
\end{proof}

We are now able to proof Lemma~\ref{lem:complexity_monotonicity}.

\begin{proof}[Proof of Lemma~\ref{lem:complexity_monotonicity}]
Let $\vec{G}'$ and $\iota$ be the input.
We apply Dedekind's Theorem (Theorem~\ref{thm:dedekind}) to the semigroup $(\vec{\mathcal{U}}^\circ,\otimes)$ and mappings $\varphi_{\vec{F}}:= \#\homs{\vec{F}}{\star}$ for the digraphs $\vec{F}$ in the support $\mathsf{supp}(\iota) $ of $\iota$, that is, $k=|\mathsf{supp}(\iota)|$.
Concretely, assume that $\vec{F}_1,\dots,\vec{F}_k$ are elements of the support of $\iota$ (the $\vec{F}_i$ are pairwise non-isomorphic) and observe that we can use our oracle to compute the following function
\[ \vec{H} \mapsto \sum_{i=1}^k \iota(\vec{F}_i) \cdot \#\homs{\vec{F}_i}{\vec{G}\otimes\vec{H}} \,.\]
Using the properties of the tensor product, this rewrites to
\[ \vec{H} \mapsto  \sum_{i=1}^k  (\iota(\vec{F}_i) \cdot \#\homs{\vec{F}_i}{\vec{G}}) \cdot \#\homs{\vec{F}_i}{\vec{H}}\,.\]
Now we set $a_i:= \iota(\vec{F}_i) \cdot \#\homs{\vec{F}_i}{\vec{G}}$ and $\varphi_i :=\#\homs{\vec{F}_i}{\star}$. 
Note that Lemma~\ref{lem:semigroup_hom} makes sure that the $\varphi_i$ are indeed pairwise distinct semigroup homomorphisms. Next, clearly, all functions in 1.\ to 4.\ in Theorem~\ref{thm:dedekind} are computable in our setting. Hence we can use Algorithm $\hat{\mathbb{A}}$ from Theorem~\ref{thm:dedekind} to obtain $a_1,\dots,a_k$. The number of steps required by $\hat{\mathbb{A}}$ is bounded by $O(B \cdot \sum_{i=1}^k \log a_i)$, where $B$ does not depend on the $a_i$. Thus we can bound $B$ by a function in $|\iota|$. Furthermore, we can generously bound
$\log a_i \leq \log (\iota(\vec{F}_i) \cdot |\vec{G}'|^{|\iota|}) \leq f'(|\iota|)\cdot \log |\vec{G}'|$ for some computable function $f'$.

Next, when simulating an oracle query $\vec{H}$ posed by $\hat{\mathbb{A}}$, we have to use our own oracle to query $\vec{G}' \otimes \vec{H}$. Fortunately, Theorem~\ref{thm:dedekind} guarantees that $\vec{H}$ only depends on the $\varphi_i$, that is, on the $\vec{F}_i$, and thus only on $\iota$. Thus, constructing $\vec{G}' \otimes \vec{H}$ can be done in time $|\vec{G}'|^{O(1)} \cdot f''(|\iota|)$ for some computable function $f''$. Additionally, the outdegree of $\vec{G}' \otimes \vec{H}$ is, by definition of the tensor product, bounded by the outdegree of $\vec{G}'$, i.e., $d$, times the outdegree of $\vec{H}$, which only depends on $\iota$.
Finally, having obtained the $a_i$, we obtain the terms $\#\homs{\vec{F}_i}{\vec{G}'}$ by dividing by $\iota(\vec{F_i})$, which is well-defined, since all $\iota(\vec{F}_i)$ are non-zero.

Hence, there is a computable function $f$ such that the following two desired properties are true:
\begin{itemize}
    \item The total running time of our algorithm is bounded by $f(|\iota|)\cdot |\vec{G}'|^{O(1)}$, and
    \item the outdegree of every oracle query is bounded by $d\cdot f(|\iota|)$.
\end{itemize}
Noting that the same arguments apply in the restricted cases of digraphs without loops and DAGs, we can conclude the proof.
\end{proof}

\section{Counting Homomorphisms}
Recall the problem $\#\dirhomsprobd(\vec C)$ from Section~\ref{sec:prelims_probs}. Section~\ref{sec:dirhom_UB} shows that, when the reachability hypergraphs of the graphs in $\vec C$ have bounded fractional hypertreewidth, $\#\dirhomsprobd(\vec C)$ is fixed-parameter tractable. Section~\ref{sec:dirhom_red} instead gives parameterized reductions under what we call \emph{monotone reversible minors}, a new and restricted version of digraph minors. In the next sections we will leverage these results for $\#\dirsubsprobd(\vec C)$ and $\#\dirindsubsprobd(\vec C)$.

\subsection{Upper Bounds}\label{sec:dirhom_UB}
Let $\vec C$ be a class of of digraphs, and let $\reach{\vec C} = \{\reach{\vec H} : \vec H \in \vec C\}$. This section proves that, if $\reach{\vec C}$ has bounded fractional hypertreewidth, $\mathsf{fhtw}(\reach{\vec C})<\infty$, then $\#\dirhomsprobd(\vec C)$ is fixed-parameter tractable.
To this end we give a parameterized reduction from $\#\dirhomsprobd(\vec C)$ to $\#\CSP(\reach{\vec C})$ that preserves reachability hypergraphs, and then invoke Theorem~\ref{thm:csp_algo}.

\begin{definition}\label{def:struct_A_H}\label{def:struct_B_HG}
Let $\vec H$ be a digraph, let $S_1,\ldots,S_\ell$ be the sources of $\Hsim$, and fix any $s_i \in S_i$ for each $i\in[\ell]$. Fix any ordering $\prec$ of the vertices of $\vec{H}$, and, for each $i\in[\ell]$, let $t_i$ be the tuple formed by sorting $\reachv{s_i}$ according to $\prec$. Furthermore, let $\vec{G}$ be a digraph. 
The relational structures $\mathcal{A}[\vec{H}]$ and $\mathcal{B}[\vec{H},\vec{G}]$ are defined as follows:
\begin{itemize}
    \item  $\mathcal{A}[\vec{H}]=(V(\vec H), R_1^\mathcal{A}, \ldots, R_{\ell}^\mathcal{A})$, where $R_i^\mathcal{A}=\{t_i\}$ for all $i \in [\ell]$.
    \item $\mathcal{B}[\vec{H},\vec{G}] = (V(\vec{G}),R_1^\mathcal{B},\ldots,R_{\ell}^\mathcal{B})$, where \[R_i^\mathcal{B} = \{\phi(t_i) : \phi \in \homs{\vec{H}[\reachv{s_i}]}{\vec{G}} \}\] for all $i \in [\ell]$.
\end{itemize}
\end{definition}

\begin{lemma}\label{lem:hom_graph_struct}
$\homs{\vec{H}}{\vec{G}}=\homs{\mathcal{A}[\vec{H}]}{\mathcal{B}[\vec{H},\vec{G}]}$.
\end{lemma}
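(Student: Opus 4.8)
The plan is to show the two relational structures have the same universes modulo the identification $V(\vec H) \leftrightarrow V(\mathcal A[\vec H])$ and $V(\vec G)\leftrightarrow V(\mathcal B[\vec H,\vec G])$, so that a map $\varphi\colon V(\vec H)\to V(\vec G)$ can be viewed interchangeably as a candidate homomorphism $\vec H\to\vec G$ and as a candidate homomorphism $\mathcal A[\vec H]\to\mathcal B[\vec H,\vec G]$; then argue the two homomorphism conditions coincide. First I would unfold the definition of homomorphism of relational structures: $\varphi\in\homs{\mathcal A[\vec H]}{\mathcal B[\vec H,\vec G]}$ iff for every $i\in[\ell]$ and every tuple $t\in R_i^{\mathcal A}$ we have $\varphi(t)\in R_i^{\mathcal B}$. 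Since $R_i^{\mathcal A}=\{t_i\}$ is a singleton, this is equivalent to: for every $i\in[\ell]$, $\varphi(t_i)\in R_i^{\mathcal B}$, i.e.\ $\varphi(t_i)=\phi(t_i)$ for some $\phi\in\homs{\vec H[\reachv{s_i}]}{\vec G}$. Because $t_i$ is just $\reachv{s_i}$ listed in the fixed order $\prec$ (in particular every element of $\reachv{s_i}$ appears in $t_i$), the condition ``$\varphi(t_i)=\phi(t_i)$ for some such $\phi$'' is equivalent to ``$\varphi|_{\reachv{s_i}}\in\homs{\vec H[\reachv{s_i}]}{\vec G}$''. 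Hence $\varphi\in\homs{\mathcal A[\vec H]}{\mathcal B[\vec H,\vec G]}$ iff $\varphi|_{\reachv{s_i}}$ is a homomorphism from $\vec H[\reachv{s_i}]$ to $\vec G$ for every $i\in[\ell]$.

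It then remains to establish the key combinatorial identity, which is exactly equation~\eqref{eq:towards_CSP} in the overview:
\begin{equation*}
    \varphi\in\homs{\vec H}{\vec G}\quad\Longleftrightarrow\quad \forall i\in[\ell]\colon\ \varphi|_{\reachv{s_i}}\in\homs{\vec H[\reachv{s_i}]}{\vec G}.
\end{equation*}
The forward direction is immediate, since a homomorphism restricts to a homomorphism on any induced subdigraph. For the backward direction, the crucial point is that every arc of $\vec H$ lies inside $\vec H[\reachv{s_i}]$ for some $i$. I would verify this as follows: let $(u,v)\in E(\vec H)$. Passing to $\Hsim$, the strongly connected component of $u$ has some source $S_i$ above it in $\Hsim$ (every vertex of a DAG is reachable from some source); unwinding the quotient, $u\in\reachv{s_i}$. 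Since $(u,v)$ is an arc, $v$ is reachable from $u$, hence $v\in\reachv{u}\subseteq\reachv{s_i}$. Thus $(u,v)$ is an arc of $\vec H[\reachv{s_i}]$, so if $\varphi|_{\reachv{s_i}}$ preserves it then $(\varphi(u),\varphi(v))\in E(\vec G)$; as this holds for every arc, $\varphi\in\homs{\vec H}{\vec G}$. Combining this equivalence with the reformulation from the previous paragraph gives $\homs{\vec H}{\vec G}=\homs{\mathcal A[\vec H]}{\mathcal B[\vec H,\vec G]}$.

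The argument is essentially bookkeeping, so I do not expect a genuine obstacle; the only point requiring care is the one subtle step above: checking that every arc of $\vec H$ is covered by some reachability set $\reachv{s_i}$, where $s_i$ ranges over (representatives of) the sources of $\Hsim$ rather than of $\vec H$ itself. One must be slightly careful that a ``source'' is a source of the quotient DAG $\Hsim$, and that reachability in $\vec H$ is compatible with reachability in $\Hsim$ under the contraction of strongly connected components; both facts are routine but should be stated explicitly. A secondary point is to confirm that the tuple $\varphi(t_i)$ appearing in $R_i^{\mathcal B}$ genuinely determines $\varphi$ on all of $\reachv{s_i}$ and nowhere else, which is clear since $t_i$ enumerates exactly $\reachv{s_i}$ without repetition.
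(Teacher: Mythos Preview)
Your proposal is correct and follows essentially the same approach as the paper: both directions are proved by unfolding the definition of $R_i^{\mathcal B}$ and using that $E(\vec H)=\bigcup_i E(\vec H[\reachv{s_i}])$. You supply slightly more detail than the paper does on why every arc of $\vec H$ lies in some $\vec H[\reachv{s_i}]$ (the paper simply asserts this), but the structure of the argument is identical.
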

\begin{proof}
To avoid notational clutter, we set $e_i:=\reachv{s_i}$.
Let $\varphi\in\homs{\vec{H}}{\vec{G}}$ and fix any $i \in [\ell]$. Clearly, the restriction of $\varphi$ to $\vec H[e_i]$ is in $\homs{\vec{H}[e_i]}{\vec{G}}$. Therefore by Definition~\ref{def:struct_A_H} we have $\varphi(t_i) \in R_i^\mathcal{B}$. Hence $\varphi$ is a homomorphism from $\mathcal{A}[\vec{H}]$ to $\mathcal{B}[\vec{H},\vec{G}]$, that is, $\varphi \in \homs{\mathcal{A}[\vec{H}]}{\mathcal{B}[\vec{H},\vec{G}]}$.

Now let $\varphi \in \homs{\mathcal{A}[\vec{H}]}{\mathcal{B}[\vec{H},\vec{G}]}$ and fix any $i \in [\ell]$. By definition, $\varphi(t_i) \in R_i^\mathcal{B}$. By Definition~\ref{def:struct_B_HG} this implies $\varphi(t_i)=\phi(t_i)$ for some $\phi \in \homs{\vec H[e_i]}{\vec G}$, thus $(\varphi(u),\varphi(v)) \in E(\vec{G})$ for all $(u,v) \in E(\vec H[e_i])$. Since this holds for all $i \in [\ell]$ and since $E(\vec{H}) = \cup_{i\in[\ell]} E(\vec{H}[e_i])$, we have $(\varphi(u),\varphi(v)) \in E(\vec{G})$ for all $(u,v) \in E(\vec H)$. Thus $\varphi \in \homs{\vec{H}}{\vec{G}}$.
\end{proof}

Further, we note the following immediate consequence of the definition of $\mathcal{A}[\vec{H}]$.
\begin{observation}
$\mathcal{H}[\mathcal{A}[\vec{H}]]=\reach{\vec{H}}$.
\end{observation}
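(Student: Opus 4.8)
$\mathcal{H}[\mathcal{A}[\vec{H}]]=\reach{\vec{H}}$, i.e., the hypergraph associated with the relational structure $\mathcal{A}[\vec{H}]$ is exactly the reachability hypergraph of $\vec{H}$.

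The plan is to unfold both definitions and check that the vertex sets and the edge sets coincide. First I would recall that by Definition~\ref{def:struct_A_H}, the structure $\mathcal{A}[\vec{H}]$ has universe $V(\vec H)$ and relations $R_1^{\mathcal{A}},\dots,R_\ell^{\mathcal{A}}$, where $S_1,\dots,S_\ell$ are the sources of $\Hsim$, a representative $s_i\in S_i$ is fixed for each $i$, and $R_i^{\mathcal{A}}=\{t_i\}$ with $t_i$ the tuple obtained by sorting $\reachv{s_i}$ according to the fixed order $\prec$. On the other side, the hypergraph $\mathcal{H}(\mathcal{A})$ of a relational structure $\mathcal{A}$ has the universe of $\mathcal{A}$ as vertex set and, for every tuple $t$ appearing in some relation of $\mathcal{A}$, a hyperedge consisting of the set of entries of $t$. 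Finally, $\reach{\vec H}$ has vertex set $V(\vec H)$ and edge set $\{\reachv{s_i}:i\in[\ell]\}$.

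The argument is then immediate. The vertex sets agree because both are $V(\vec H)$. For the edges: the only tuples occurring in the relations of $\mathcal{A}[\vec{H}]$ are the tuples $t_1,\dots,t_\ell$, one per relation. The hyperedge generated by $t_i$ is precisely the underlying set of $t_i$, which by construction is $\reachv{s_i}$ (sorting the set $\reachv{s_i}$ according to $\prec$ does not change which elements it contains, and since $\reachv{s_i}$ is a genuine set, $t_i$ contains no repeated entries). Hence the edge set of $\mathcal{H}[\mathcal{A}[\vec{H}]]$ is $\{\reachv{s_i}:i\in[\ell]\}$, which is exactly the edge set of $\reach{\vec H}$. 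Note also that the choice of $s_i\in S_i$ is irrelevant, as already observed after the definition of the reachability hypergraph, so both objects are well defined and the equality holds on the nose.

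There is essentially no obstacle here — this is a definitional unfolding, and the only thing to be slightly careful about is that the map ``tuple $\mapsto$ set of its entries'' is applied correctly and that sorting a set according to $\prec$ yields a repetition-free tuple whose support is the original set; both are trivial. For this reason the paper states it as an \textbf{Observation} with no proof, and indeed none beyond the above one-line unfolding is needed.
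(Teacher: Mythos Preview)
Your proposal is correct and matches the paper's treatment: the paper states this as an Observation with no proof, precisely because it is the one-line definitional unfolding you give.
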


Next we show that $\mathcal{A}[\vec{H}]$ and $\mathcal{B}[\vec{H},\vec{G}]$ can be constructed efficiently.
\begin{lemma}\label{lem:efficient_reduction}
There exists a computable function $f$ such that, given any two DAGs $\vec{H}$ and $\vec{G}$, the relational structures $\mathcal{A}[\vec{H}]$ and $\mathcal{B}[\vec{H},\vec{G}]$ can be computed in time $f(k,d)\cdot n^{O(1)} $ where $k=|V(\vec{H})|$, $n=|V(\vec{G})|$, and $d$ is the maximum outdegree of $\vec{G}$. Moreover, $|\mathcal{B}[\vec{H},\vec{G}]|\leq f(k,d)\cdot O(|\vec{G}|)$.
\end{lemma}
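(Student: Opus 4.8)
The plan is to handle $\mathcal{A}[\vec{H}]$ and $\mathcal{B}[\vec{H},\vec{G}]$ separately: the former is essentially free, and the latter reduces to the bounded-outdegree homomorphism-enumeration argument already recalled in Section~\ref{sec:ub_intro}.

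For $\mathcal{A}[\vec{H}]$: since $\vec{H}$ is a DAG we have $\Hsim=\vec{H}$, so the sources of $\Hsim$ are exactly the vertices of indegree $0$; these, the reachability sets $\reachv{s_i}$, an arbitrary order $\prec$ of $V(\vec{H})$, and the sorted tuples $t_i$ are all computable by elementary graph traversals in time polynomial in $|\vec{H}|$. As $|\vec{H}|=|V(\vec{H})|+|E(\vec{H})|$ is bounded by a function of $k$, building $\mathcal{A}[\vec{H}]$ costs only an $f(k)$ term.

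For $\mathcal{B}[\vec{H},\vec{G}]$ the only nontrivial task is, for each $i\in[\ell]$, to produce the relation $R_i^{\mathcal{B}}=\{\phi(t_i):\phi\in\homs{\vec{H}[\reachv{s_i}]}{\vec{G}}\}$. The key observation is that, by definition of $\reachv{s_i}$, every $u\in\reachv{s_i}$ is reachable from $s_i$ in $\vec{H}$ by a directed path of length at most $|\reachv{s_i}|-1<k$; hence for any $\phi\in\homs{\vec{H}[\reachv{s_i}]}{\vec{G}}$ the image $\phi(u)$ lies within directed distance $k-1$ of $\phi(s_i)$ in $\vec{G}$. Since $\vec{G}$ has outdegree at most $d$, the ball of radius $k-1$ around any vertex of $\vec{G}$ contains at most $1+d+\dots+d^{k-1}\le k\cdot d^{k-1}=d^{O(k)}$ vertices and is computable by breadth-first search in time $d^{O(k)}$. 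The enumeration is then standard: for each of the $n$ choices of the image $v$ of $s_i$, compute the ball $N_v$ of radius $k-1$ around $v$, enumerate all $|N_v|^{|\reachv{s_i}|-1}\le d^{O(k^2)}$ maps $\reachv{s_i}\to N_v$ sending $s_i$ to $v$, and for each such map test in time $\poly(k)$ whether it is a homomorphism $\vec{H}[\reachv{s_i}]\to\vec{G}$, recording the tuple $\phi(t_i)$ when the test succeeds. After removing duplicates this yields $R_i^{\mathcal{B}}$ in time $n\cdot d^{O(k^2)}$ and with at most $n\cdot d^{O(k^2)}$ tuples, each of length at most $k$.

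Summing over the $\ell\le k$ sources gives $\mathcal{B}[\vec{H},\vec{G}]$ in time $f(k,d)\cdot n^{O(1)}$ for a suitable computable $f$, and of total encoding length at most $\ell\cdot n\cdot d^{O(k^2)}\cdot k=f(k,d)\cdot O(n)\le f(k,d)\cdot O(|\vec{G}|)$, which is the claimed bound. The only step that requires genuine care is the homomorphism-enumeration argument for each $R_i^{\mathcal{B}}$, namely the observation that bounded outdegree confines the image of a reachability set to a ball of size $d^{O(k)}$; everything else is routine bookkeeping.
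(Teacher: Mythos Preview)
Your proof is correct and follows essentially the same approach as the paper: both observe that $\mathcal{A}[\vec{H}]$ depends only on $\vec{H}$, and both compute each $R_i^{\mathcal{B}}$ by guessing the image of $s_i$ and brute-forcing over the $(k-1)$-ball in $\vec{G}$, whose size is bounded by $\sum_{j=0}^{k-1}d^j$. One minor imprecision: testing whether a candidate map is a homomorphism requires adjacency lookups in $\vec{G}$, which with adjacency lists cost $O(d)$ each rather than $\poly(k)$; this is harmless since it is still absorbed into $f(k,d)$.
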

\begin{proof}
It is straightforward that $\mathcal{A}[\vec{H}]$ can be constructed in time only depending on $k$. For what follows, we again set $e_i:=\reachv{s_i}$. Furthermore, set $a_i :=|e_i|$.
To construct $\mathcal{B}[\vec{H}, \vec G]$, and more precisely every relation $\mathcal{R}_i^{\mathcal B}$, we use the following standard technique. For every $v \in V(\vec{G})$ let $N_{k-1}(v)$ be the set of vertices reachable from $v$ by a directed path of length at most $k-1$. Note that for any $\phi \in \homs{\vec H[e_i]}{\vec G}$ we have that $\vec H[e_i]$ is connected and contains at most $k$ vertices. Furthermore, each vertex in $\vec{H}[e_i]$ is reachable by a directed path from $s_i$. Thus $\phi(t_i)$ contains only vertices of $N_{k-1}(v)$ where $v=\phi(s_i)$.

Therefore to list all $\phi(t_i)$ with $\phi \in \homs{\vec H[e_i]}{\vec G}$ we take every $v \in V(\vec G)$ in turn, we compute $N_{k-1}(v)$, and for every $a_i$-tuple $z_i \in (N_{k-1}(v))^{a_i-1}$ whose first element is $v$, we add $z_i$ if and only if the map $\phi$ defined by $\phi(t_i)=z_i$ preserves all edges of $\vec H[e_i]$, which holds if and only if $\phi \in \homs{\vec H[e_i]}{\vec G}$. 
Finally, recalling that $d$ is the outdegree of $\vec{G}$, we have $|N_{k-1}(v)| \le \sum_{j=0}^{k-1} d^{j}$, which only depends on $k$ and $d$. 
It is thus immediate to see that we can compute all $\mathcal{R}_i^{\mathcal B}$ in time $f(k,d) \cdot n^{O(1)}$, and that the overall size of $\mathcal{B}[\vec{H},\vec{G}]$ is bounded by $f(k,d)\cdot O(|\vec{G}|)$, for some computable function $f$.
\end{proof}
\begin{theorem}\label{thm:homs_algo}
For some computable function $f$ there is an algorithm that, given any pair of digraphs $(\vec H,\vec G)$, computes $\#\homs{\vec{H}}{\vec{G}}$ in time $f(|\vec{H}|,d)\cdot |\vec{G}|^{r+O(1)}$,
where $d$ is the maximum outdegree of $\vec G$ and $r=\mathsf{fhtw}(\reach{\vec{H}})$. Therefore $\#\dirhomsprobd(\vec{C}) \in \ccFPT$ if $\mathsf{fhtw}(\reach{\vec{C}}) < \infty$.
\end{theorem}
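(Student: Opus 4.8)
The plan is to chain together the three ingredients already established in this subsection: the identity $\homs{\vec H}{\vec G}=\homs{\mathcal{A}[\vec H]}{\mathcal{B}[\vec H,\vec G]}$ (Lemma~\ref{lem:hom_graph_struct}), the efficient construction of the two relational structures (Lemma~\ref{lem:efficient_reduction}), and the \#CSP counting algorithm of Theorem~\ref{thm:csp_algo}. Concretely, on input $(\vec H,\vec G)$ the algorithm first builds $\mathcal{A}[\vec H]$ and $\mathcal{B}[\vec H,\vec G]$, then feeds them to the algorithm of Theorem~\ref{thm:csp_algo} and returns its output; by Lemma~\ref{lem:hom_graph_struct} this output is exactly $\#\homs{\vec H}{\vec G}$. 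So the only real work is bookkeeping on the running time, and I expect no genuine obstacle --- the substance is in the lemmas already proved.

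First I would record that, by Lemma~\ref{lem:efficient_reduction}, the construction phase costs $f(k,d)\cdot n^{O(1)}$, where $k=|V(\vec H)|$ and $n=|V(\vec G)|$, and moreover $|\mathcal{B}[\vec H,\vec G]|\le f(k,d)\cdot O(|\vec G|)$. Next, by the Observation following Lemma~\ref{lem:hom_graph_struct}, the hypergraph of $\mathcal{A}[\vec H]$ is precisely $\reach{\vec H}$, so the fractional hypertreewidth of the structure $\mathcal{A}[\vec H]$ equals $r=\mathsf{fhtw}(\reach{\vec H})$. Theorem~\ref{thm:csp_algo} then bounds the counting phase by $f_1(|\mathcal{A}[\vec H]|)\cdot |\mathcal{B}[\vec H,\vec G]|^{r+O(1)}$ for some computable $f_1$. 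Here $|\mathcal{A}[\vec H]|$ depends only on $|\vec H|$, and substituting the size bound for $\mathcal{B}[\vec H,\vec G]$ gives $\bigl(f(k,d)\bigr)^{r+O(1)}\cdot |\vec G|^{r+O(1)}$. The step that needs a line of care is that the parameter-dependent factor $f(k,d)$ is raised to the power $r+O(1)$: this stays a function of the parameter only because $r\le |V(\vec H)|\le |\vec H|$, so $\bigl(f(k,d)\bigr)^{r+O(1)}$ is absorbed into a single computable function of $|\vec H|$ and $d$. (One also needs an optimal $\rho^\ast$-width tree decomposition of $\mathcal{A}[\vec H]$ as required by Theorem~\ref{thm:csp_algo}; computing it takes time depending only on $|\vec H|$, hence does not affect the bound.) Collecting everything, the total running time is $f(|\vec H|,d)\cdot |\vec G|^{r+O(1)}$, which is the first claim.

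Finally I would deduce the \ccFPT{} statement. If $\vec C$ satisfies $\mathsf{fhtw}(\reach{\vec C})<\infty$, say $\mathsf{fhtw}(\reach{\vec H})\le c$ for every $\vec H\in\vec C$, then for all inputs $(\vec H,\vec G)$ with $\vec H\in\vec C$ we have $|\vec G|^{r+O(1)}\le |\vec G|^{c+O(1)}=|\vec G|^{O(1)}$, so the algorithm above runs in time $f(|\vec H|,d)\cdot|\vec G|^{O(1)}$; that is, it is a fixed-parameter tractable algorithm for $\#\dirhomsprobd(\vec C)$ with parameter $|\vec H|+d$. The hardest part to get right, as noted, is purely the exponent accounting, and it is handled by the bound $r\le|V(\vec H)|$; beyond that the theorem is an immediate composition of the preceding results.
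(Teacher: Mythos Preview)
Your proposal is correct and follows essentially the same approach as the paper: construct $\mathcal{A}[\vec H]$ and $\mathcal{B}[\vec H,\vec G]$ via Lemma~\ref{lem:efficient_reduction}, invoke Lemma~\ref{lem:hom_graph_struct} to identify $\#\homs{\vec H}{\vec G}$ with $\#\homs{\mathcal{A}[\vec H]}{\mathcal{B}[\vec H,\vec G]}$, and apply Theorem~\ref{thm:csp_algo} using that $\mathcal{H}(\mathcal{A}[\vec H])=\reach{\vec H}$. Your explicit justification that $(f(k,d))^{r+O(1)}$ is absorbed into a computable function of the parameters via $r\le |V(\vec H)|$ is a detail the paper leaves implicit, but otherwise the arguments are identical.
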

\begin{proof}
Given an instance $(\vec{H},\vec{G})$, let $n=|V(\vec{G})|$ and $k=|V(\vec{H})|$. We compute $\mathcal{A}[\vec{H}]$ and $\mathcal{B}[\vec{H},\vec{G}]$ as in Lemma~\ref{lem:efficient_reduction}. In particular, we obtain $|\mathcal{B}[\vec{H},\vec{G}]| \leq g(k,d)\cdot O(|\vec{G}|)$ for some computable function $g$. Finally, by Lemma~\ref{lem:hom_graph_struct}, we have $\homs{\vec{H}}{\vec{G}}=\homs{\mathcal{A}[\vec{H}]}{\mathcal{B}[\vec{H},\vec{G}]}$, the latter of which can be computed, by Theorem~\ref{thm:csp_algo}, in time
\[f'(|\mathcal{A}[\vec{H}]|) \cdot |\mathcal{B}[\vec{H},\vec{G}]|^{r+O(1)} \leq f'(|\mathcal{A}[\vec{H}]|) \cdot g(k,d)^{r+O(1)} \cdot |\vec{G}|^{r+O(1)}\]
for some computable function $f'$. Since $|\mathcal{A}[\vec{H}]|$ depends only on $k$ and not on $n$, the proof is concluded.
\end{proof}

\subsection{Coloured Homomorphisms and Reductions via MR Minors}\label{sec:dirhom_red}
We start by introducing \emph{monotone reversible (MR) minors} of a digraph $\vec H$. 

\begin{definition}[Monotone Reversible Minors]
Let $\vec{H}$ be a digraph and consider the following operations:
\begin{itemize}
    \item[]\emph{Sink deletion:}
    delete all vertices in $T$ for some sink $T$ of $\Hsim$. The resulting graph is denoted by $\vec{H}\setminus T$.
    \item[]\emph{Contraction:} identify $u$ and $v$ for some $uv \in E(\vec H)$. We emphasise that a contraction does not yield a loop. The resulting graph is denoted by $\vec{H}/(u,v)$.
    \item[]\emph{Loop deletion:} delete a loop $(u,u)$. The resulting graph is denoted by $\vec{H}\setminus(u,u)$.
\end{itemize}
A \emph{monotone reversible minor} (``MR minor'') of $\vec{H}$ is a digraph that can be obtained from $\vec{H}$ by a sequence of sink deletions, contractions and loop deletions.
\end{definition}

\begin{observation}\label{obs:wminor_obs}
Let $\vec{H}$ be a digraph.
\begin{enumerate}
    \item[(M1)] If $\vec{H}$ does not have loops, then no MR minor of $\vec{H}$ does.
    \item[(M2)] $\Hsim$ is an MR minor of $\vec{H}$.
\end{enumerate}
\end{observation}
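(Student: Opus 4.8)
The plan is to prove (M1) and (M2) separately, each by a short induction on the operations defining an MR minor; neither is deep, and the only care needed is bookkeeping about loops.

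For (M1) I would observe that each of the three operations maps loopless digraphs to loopless digraphs. Sink deletion and loop deletion only delete vertices/arcs, so they introduce no loop. For contraction, note that every arc of $\vec{H}/(u,v)$ that is incident to the new vertex $w$ is inherited from an arc of $\vec{H}$ having one endpoint in $\{u,v\}$ and the other outside $\{u,v\}$ (any arc between $u$ and $v$ is discarded by the very definition, since ``a contraction does not yield a loop''), so no loop at $w$ is created; and arcs not incident to $w$ are untouched arcs of $\vec{H}$. Hence if $\vec{H}$ is loopless, so is $\vec{H}/(u,v)$. Part (M1) then follows by induction on the length of the operation sequence witnessing that a digraph is an MR minor of $\vec{H}$.

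For (M2) I would exhibit an explicit sequence of operations taking $\vec{H}$ to $\Hsim$. Let $S_1,\dots,S_r$ be the strongly connected components of $\vec{H}$. Since $\vec{H}[S_j]$ is strongly connected, its underlying undirected graph is connected and has a spanning tree $T_j$, and every edge $\{a,b\}$ of $T_j$ is realized by an arc $(a,b)$ or $(b,a)$ of $\vec{H}$. Processing the components one at a time, within $S_j$ I would contract the $|S_j|-1$ arcs corresponding to the edges of $T_j$ in any order. Two points make this work: (i) each such contraction is a legal operation, because contracting a proper subset of the edges of a tree never identifies the two endpoints of an uncontracted edge (that would create a cycle in $T_j$), so at the moment we contract a tree arc its endpoints are still distinct; and (ii) once all of $T_j$ is contracted, all vertices of $S_j$ have been merged into one vertex, so every arc of $\vec{H}$ internal to $S_j$ has, at the (first) contraction merging its two endpoints, become a loop and hence been discarded. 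Finally, if $\vec{H}$ itself had loops, I would append loop-deletion operations to remove any surviving loops (these can only sit at a singleton component that carried a loop of $\vec{H}$).

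It then remains to check that the resulting digraph is exactly $\Hsim$: its vertex set is in natural bijection with $\{S_1,\dots,S_r\}=V(\Hsim)$; by the arc-inheritance rule for contractions it contains $(S_j,S_k)$ with $j\neq k$ precisely when $\vec{H}$ has an arc from a vertex of $S_j$ to a vertex of $S_k$, i.e.\ exactly the arc relation of the quotient of $\vec{H}$ by $\sim$ with loops removed; and by (ii) together with the loop-deletion cleanup it has no loops. Since all operations used (contraction and loop deletion) are permitted in forming MR minors, $\Hsim$ is an MR minor of $\vec{H}$. The main obstacle, such as it is, is making step (ii) precise — tracking which internal arcs turn into loops and confirming they are removed by the contraction that first merges their endpoints — but this is routine given the convention that contractions do not create loops.
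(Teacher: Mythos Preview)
The paper states this as an observation without proof, so there is nothing to compare against; your argument is the proof. Both parts are correct in substance. For (M1), you correctly invoke the explicit convention that ``a contraction does not yield a loop'' to handle the only nontrivial operation. For (M2), contracting each strongly connected component along a spanning tree is exactly the right move, and your point (i) correctly ensures each tree arc still has distinct endpoints when it is contracted.

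One small slip: your parenthetical ``these can only sit at a singleton component that carried a loop of $\vec{H}$'' is not quite right. A loop $(a,a)$ of $\vec{H}$ with $a$ in a non-singleton strongly connected component $S_j$ is never touched by the spanning-tree contractions (its endpoints are already equal, so no contraction merges them for the first time), and it survives as a loop at the merged vertex for $S_j$. The fix is cosmetic: simply say that any loop of $\vec{H}$ persists as a loop at the merged vertex of its component, and is then removed by loop deletion. Your main step (appending loop deletions) already handles this; only the justification in parentheses needs adjusting.
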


We show that the parameterized complexity of $\#\dirhomsprobd$ is monotone (i.e., nonincreasing) under taking MR~minors. The following three lemmas establish the fact separately for sink deletions, for contractions, and for loop deletion. For technical reasons, we will prove this property for the colour-prescribed variant; we will be able to remove the colours in our hardness reductions later.
\begin{lemma}\label{lem:technical_minors_deletions}
    There exists an algorithm $\mathbb{A}_1$ that satisfies the following constraints:
    \begin{enumerate}
        \item $\mathbb{A}_1$ expects as input a digraph $\vec{H}$, a sink $T$ of $\Hsim$, and a surjectively $\vec{H}\setminus T$-coloured digraph $(\vec{G}',c')$ of outdegree $d'$.
        \item The running time of $\mathbb{A}_1$ is bounded by $\mathsf{poly}(|\vec{H}|,|\vec{G}'|)$.
        \item $\mathbb{A}_1$ outputs a surjectively $\vec{H}$-coloured digraph $(\vec{G},c)$ of size at most $O(|H|\cdot |\vec{G}'|)$ such that the outdegree of $\vec{G}$ is bounded by $d'+|\vec{H}|$, and 
        \[ \#\homs{\vec{H}\setminus T}{(\vec{G}',c')} = \#\homs{\vec{H}}{(\vec{G},c)} \,.\]
    \end{enumerate}
\end{lemma}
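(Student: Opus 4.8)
The plan is to construct $(\vec G,c)$ from $(\vec G',c')$ by ``re-attaching'' a single fresh copy of the sink component $T$ on top of $\vec G'$, and then to argue that colour-prescribed homomorphisms of $\vec H$ into this host biject with colour-prescribed homomorphisms of $\vec H\setminus T$ into $(\vec G',c')$. Concretely, I would put $V(\vec G)=V(\vec G')\uplus\{w_t:t\in T\}$ and extend $c'$ to $c$ by $c(w_t)=t$; since $c'$ is a colouring by $\vec H\setminus T$, it uses only colours in $V(\vec H)\setminus T$, so the colour $t$ is realised by the \emph{single} vertex $w_t$. For the arcs, keep all arcs of $\vec G'$, add $(w_t,w_{t'})$ for every $(t,t')\in E(\vec H[T])$ (including a possible loop), and add $(x,w_t)$ for every arc $(u,t)\in E(\vec H)$ with $u\notin T$, $t\in T$, and every $x\in V(\vec G')$ with $c'(x)=u$. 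Because $T$ is a sink of $\Hsim$, the only arcs of $\vec H$ touching $T$ are the internal ones of $\vec H[T]$ and the entering ones $(u,t)$; hence $\vec H\setminus T=\vec H[V(\vec H)\setminus T]$ and the displayed arcs are exactly the arcs of $\vec H$ incident to $T$. This makes $c$ a legitimate $\vec H$-colouring of $\vec G$, and $c$ is surjective since $c'$ is surjective and every $w_t$ hits colour $t$.

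For correctness I would establish the bijection in both directions. Forward: a colour-prescribed $\varphi'\colon V(\vec H)\setminus T\to V(\vec G')$ is extended to $\varphi$ by $\varphi(t)=w_t$ for $t\in T$, and arc-preservation is checked separately for arcs inside $\vec H\setminus T$, arcs inside $\vec H[T]$, and arcs entering $T$ — in each case directly from the construction, using $c'(\varphi'(u))=u$ for the entering arcs. Backward: given a colour-prescribed $\varphi\colon V(\vec H)\to V(\vec G)$, colour-prescription forces $\varphi(t)=w_t$ for every $t\in T$ (the unique colour-$t$ vertex of $\vec G$), hence $\varphi$ sends $V(\vec H)\setminus T$ into $V(\vec G')$, and its restriction $\varphi'$ is a colour-prescribed homomorphism $\vec H\setminus T\to\vec G'$ because the only arcs of $\vec G$ with both endpoints in $V(\vec G')$ are those of $\vec G'$. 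These two assignments are mutually inverse, which yields $\#\homs{\vec H\setminus T}{(\vec G',c')}=\#\homs{\vec H}{(\vec G,c)}$.

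It remains to do the bookkeeping: $|V(\vec G)|\le|V(\vec G')|+|V(\vec H)|$, the number of newly added arcs is at most $|E(\vec H[T])|+|E(\vec H)|\cdot|V(\vec G')|=O(|\vec H|\cdot|\vec G'|)$, so $|\vec G|=O(|\vec H|\cdot|\vec G'|)$; each $w_t$ has out-degree at most $|T|\le|\vec H|$ and each old vertex $x\in V(\vec G')$ gains at most $|T|\le|\vec H|$ out-arcs, giving out-degree $\le d'+|\vec H|$; and the whole construction — reading off the colour classes of $c'$, the internal arcs of $\vec H[T]$, and the arcs of $\vec H$ entering $T$ — is plainly computable in $\mathsf{poly}(|\vec H|,|\vec G'|)$ time. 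The one step that genuinely needs care, and the main obstacle, is ruling out spurious homomorphisms in the backward direction; this is exactly where both hypotheses are load-bearing. $T$ being a \emph{sink} guarantees that re-attaching $T$ introduces no arc among the old vertices and no arc out of $T$, so it can neither create nor destroy a homomorphism of $\vec H\setminus T$; and the colouring being \emph{prescribed}, with one vertex $w_t$ per colour $t\in T$, pins the image of $T$ down completely. Weakening either hypothesis would break the bijection.
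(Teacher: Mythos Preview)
Your construction and bijection are exactly those of the paper: add one fresh vertex per element of $T$, replicate the internal arcs of $\vec H[T]$ and the entering arcs from each colour class, and observe that colour-prescription pins the image of $T$ to the fresh copy. Your write-up is in fact more careful than the paper's in spelling out both directions of the bijection and the outdegree bookkeeping.
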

\begin{proof}
Let $T=\{t_1,\dots,t_k\}$. For every $i\in[k]$ let $V_i$ be the set of vertices $v\in V(\vec{H})\setminus T$ such that $(v,t_i)\in E(\vec{H})$. Note that, since $T$ is a sink in $\Hsim$, there are no arcs from $T$ to $V(\vec{H})\setminus T$ in $\vec{H}$.

We construct $(\vec{G},c)$ from $(\vec{G}',c')$ as follows: First, we add to $\vec{G}'$ the set $T$ and all arcs in $T^2\cap E(\vec{H})$. Second, for every $i\in[k]$, we add to $\vec{G}$ all arcs from $c'^{-1}(V_i)$ to $t_i$ --- note that $c'^{-1}(V_i)$ is the set of vertices of $\vec{G}'$ that are coloured by $c'$ with a vertex in $V_i$. Finally, we let $c$ agree with $c'$ on all vertices of $\vec{G}'$, and we set $c(t_i)=t_i$ for all $i\in[k]$.
Clearly, this construction can be done in time $\mathsf{poly}(|\vec{H}|,|\vec{G}'|)$, and the resulting $(\vec{G},c)$ is of size at most $O(|H|\cdot |\vec{G}'|)$. Furthermore, the outdegree of $\vec{G}$ is bounded by $d'+|\vec{H}|$. By construction, and the fact that $c'$ is a vertex-surjective homomorphism from $\vec{G}'$ to $\vec{H}\setminus T$, it is also immediate that $c$ is a vertex-surjective homomorphism from $\vec{G}$ to $\vec{H}$. Finally, consider the function $b$ that maps a colour-prescribed homomorphism $\varphi\in \homs{\vec{H}\setminus T}{(\vec{G}',c')}$ to the function $\psi$ that agrees with $\varphi$ on $V(\vec{H}\setminus T)$ and that maps $t_i$ to $t_i$ for each $i\in[k]$. By construction, it is easy to see that $b$ must be a bijection from $\homs{\vec{H}\setminus T}{(\vec{G}',c')}$ to $\homs{\vec{H}}{(\vec{G},c)}$. This concludes the proof. 
\end{proof}

\begin{lemma}\label{lem:technical_minors_contractions}
    There exists an algorithm $\mathbb{A}_2$ that satisfies the following constraints:
    \begin{enumerate}
        \item $\mathbb{A}_2$ expects as input a digraph $\vec{H}$, an arc $(u,v)$ of $\vec{H}$, and a surjectively $\vec{H}/(u,v)$-coloured digraph $(\vec{G}',c')$ of outdegree $d'$.
        \item The running time of $\mathbb{A}_2$ is bounded by $\mathsf{poly}(|\vec{H}|,|\vec{G}'|)$.
        \item $\mathbb{A}_2$ outputs a surjectively $\vec{H}$-coloured digraph $(\vec{G},c)$ of size at most $O(|H|\cdot |\vec{G}'|)$ such that the outdegree of $\vec{G}$ is bounded by $2d'+1$, and 
        \[ \#\homs{\vec{H}/(u,v)}{(\vec{G}',c')} = \#\homs{\vec{H}}{(\vec{G},c)} \,.\]
    \end{enumerate}
\end{lemma}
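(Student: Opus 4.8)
The plan is to adapt the standard vertex-splitting gadget to the directed, colour-prescribed setting. Write $w \in V(\vec{H}/(u,v))$ for the vertex obtained by identifying $u$ and $v$, and let $W = c'^{-1}(w) \subseteq V(\vec{G}')$ be its colour class in $\vec{G}'$. I would construct $(\vec{G},c)$ as follows. Keep every vertex of $\vec{G}'$ outside $W$ with its colour, and replace each $q \in W$ by two fresh vertices $q_u,q_v$ with $c(q_u)=u$ and $c(q_v)=v$. For the arcs: keep every arc of $\vec{G}'$ with both endpoints outside $W$; for an arc $(p,q) \in E(\vec{G}')$ with $p \notin W$, $q \in W$, put $(p,q_u) \in E(\vec{G})$ iff $(c'(p),u) \in E(\vec{H})$ and $(p,q_v) \in E(\vec{G})$ iff $(c'(p),v) \in E(\vec{H})$, and symmetrically for arcs $(q,p)$ with $q \in W$, $p \notin W$, now consulting the arcs \emph{out of} $u$ resp.\ \emph{out of} $v$ in $\vec{H}$. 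There are no arcs of $\vec{G}'$ with both endpoints in $W$, since such an arc would force a loop $(w,w)$ into $\vec{H}/(u,v)$, and a contraction never creates loops. Finally, for each $q \in W$ add the arc $(q_u,q_v)$, together with $(q_v,q_u)$ when $(v,u) \in E(\vec{H})$. This is computable in $\mathsf{poly}(|\vec{H}|,|\vec{G}'|)$, yields a digraph of size $O(|\vec{G}'|)$ (well within the claimed $O(|\vec{H}|\cdot|\vec{G}'|)$), and $c$ is vertex-surjective onto $V(\vec{H})$ because $c'$ is onto $V(\vec{H}/(u,v))$ (in particular $W \neq \emptyset$).

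For the outdegree bound I would argue: an old vertex $p$ has at most $d'$ out-arcs in $\vec{G}'$, each of which is either kept verbatim or, when its head lies in $W$, replaced by at most two arcs, so $p$ has out-degree at most $2d'$ in $\vec{G}$; a new vertex $q_u$ has the single arc $(q_u,q_v)$ plus at most one arc per out-arc of $q$ in $\vec{G}'$, hence out-degree at most $d'+1$, and symmetrically for $q_v$. Thus the out-degree of $\vec{G}$ is at most $2d'+1$.

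The heart of the proof is a bijection between $\homs{\vec{H}/(u,v)}{(\vec{G}',c')}$ and $\homs{\vec{H}}{(\vec{G},c)}$. Given a colour-prescribed $\psi$ on the left, define $\varphi$ by $\varphi(x)=\psi(x)$ for $x \notin \{u,v\}$ and $\varphi(u)=(\psi(w))_u$, $\varphi(v)=(\psi(w))_v$; an arc-by-arc check — split into the cases ``both endpoints outside $\{u,v\}$'', ``exactly one endpoint in $\{u,v\}$'', and the arc $(u,v)$ itself (and $(v,u)$, if present) — shows $\varphi$ is a colour-prescribed homomorphism into $(\vec{G},c)$, each required arc being one we explicitly added. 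Conversely, given a colour-prescribed $\varphi:\vec{H}\to(\vec{G},c)$, the arc $(u,v)\in E(\vec{H})$ forces an arc from the $u$-coloured vertex $\varphi(u)$ to the $v$-coloured vertex $\varphi(v)$; since the only $u$-to-$v$ arcs in $\vec{G}$ are the matched pairs $(q_u,q_v)$, we get $\varphi(u)=q_u$, $\varphi(v)=q_v$ for a common $q \in W$, and then $\psi(x):=\varphi(x)$ for $x \notin \{u,v\}$, $\psi(w):=q$ is a colour-prescribed homomorphism into $(\vec{G}',c')$. These maps are mutually inverse, so the two counts agree.

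I expect the most delicate point to be the ``mixed-arc'' bookkeeping in the homomorphism verifications: an arc $(w,x)$ of $\vec{H}/(u,v)$ can originate from $(u,x)$, from $(v,x)$, or from both, so one must check that $\vec{G}$ contains exactly the arc — $(q_u,\varphi(x))$ or $(q_v,\varphi(x))$ — that $\vec{H}$ actually demands, and, for the converse direction, that no spurious $u$-to-$v$ arc is ever created that would let $\varphi(u)$ and $\varphi(v)$ be split copies of distinct vertices of $W$. Both are engineered into the ``iff'' conditions of the construction (every added arc is witnessed by an arc of $\vec{H}$, and matched pairs are the only $u$-to-$v$ arcs), but writing this out rigorously is where the real work lies. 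The remaining pieces of the programme — the analogous lemma for loop deletion, and then combining sink deletion, contraction, and loop deletion into Lemma~\ref{lem:intro_MR} — follow the same pattern and are considerably simpler, the loop case requiring only that a loop be added to each vertex of $\vec{G}'$ coloured by the loop's endpoint.
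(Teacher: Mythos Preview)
Your proposal is correct and follows essentially the same vertex-splitting construction as the paper: duplicate the colour class of the contracted vertex into a $u$-copy and a $v$-copy, add the matched arcs $(q_u,q_v)$, and reroute each cross arc to whichever copy is demanded by $E(\vec{H})$, with the bijection forced by the fact that the only $u$-to-$v$ arcs in $\vec{G}$ are the matched pairs. Your write-up is in fact slightly more careful than the paper's in that you explicitly add $(q_v,q_u)$ when $(v,u)\in E(\vec{H})$ and note that contraction never creating a loop forbids arcs inside $W$; the paper glosses over both points.
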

\begin{proof}
We write $uv$ for the vertex in $\vec{H}/(u,v)$ that corresponds to the contraction of $u$ and $v$. Recall that $\vec{G}'$ is surjectively $\vec{H}/(u,v)$-coloured by $c'$, and set $V_{uv}:=c'^{-1}(uv)$. Let us now provide the construction of $(\vec{G},c)$:
\begin{enumerate}
    \item[(a)] We start with $\vec{G}'$ and delete the set $V_{uv}$ (including all incident arcs).
    \item[(b)] We add two copies of $V_{uv}$; one is denoted by $V_u$, and the other one is denoted by $V_v$. For each vertex $w\in V_{uv}$, we denote the copy of $w$ in $V_u$ by $w_u$, and the copy of $w$ in $V_v$ by $w_v$.
    \item[(c)] For each $w\in V_{uv}$, we add an arc $(w_u,w_v)$.
    \item[(d)] For each $x\in V(\vec{H})\setminus\{u,v\}$ we proceed as follows
    \begin{itemize}
        \item If $(x,u)\in E(\vec{H})$, then, for every $y\in c'^{-1}(x)$ and $w\in V_{uv}$, we add an arc from $y$ to $w_u\in V_u$ if and only if $(y,w) \in E(\vec{G}')$.
        \item If $(u,x)\in E(\vec{H})$, then, for every $y\in c'^{-1}(x)$ and $w\in V_{uv}$, we add an arc from $w_u\in V_u$ to $y$ if and only if $(w,y) \in E(\vec{G}')$.
        \item If $(x,v)\in E(\vec{H})$, then, for every $y\in c'^{-1}(x)$ and $w\in V_{uv}$,we add an arc from $y$ to $w_v\in V_v$ if and only if $(y,w) \in E(\vec{G}')$.
        \item If $(v,x)\in E(\vec{H})$, then, for every $y\in c'^{-1}(x)$ and $w\in V_{uv}$, we add an arc from $w_v\in V_v$ to $y$ if and only if $(w,y) \in E(\vec{G}')$.
    \end{itemize}
    \item[(e)] Finally, $c$ agrees with $c'$ on $V(\vec{H})\setminus\{u,v\}$, and $c$ maps every vertex in $V_u$ to $u$ and it maps every vertex in $V_v$ to $v$.
\end{enumerate}
It is immediate that $c$ is a surjective $\vec{H}$-colouring of $\vec{G}$. Furthermore, it is clear that the running time is bounded by $\mathsf{poly}(|\vec{H}|,|\vec{G}'|)$, and that the size of $(\vec{G},c)$ is at most $O(|H|\cdot |\vec{G}'|)$. Let us consider the outdegree: For $x\in V(\vec{H})\setminus\{u,v\} = V(\vec{H}/(u,v)) \setminus\{uv\}$, the outdegree of every vertex $v\in c^{-1}(x)=c'^{-1}(x)$ in $\vec{G}$ is bounded by twice the outdegree of $v$ in $\vec{G}'$ (see (d) above). Furthermore, the outdegree of every vertex $w_u\in c^{-1}(u)$ in $\vec{G}$
is bounded by the outdegree of $w$  in $\vec{G}'$ plus $1$ (see (d) above and note that the arcs added in (c) above can increase it by $1$). Finally, the outdegree of every vertex $w_v\in c^{-1}(v)$ in $\vec{G}$ is bounded by the outdegree of $w$ in $\vec{G}'$ (see (d) above).
Consequently, we can bound the outdegree of $\vec{G}$ by $2d'+1$.

Finally, note that (c) above makes sure that any homomorphism $\psi$ in $\homs{\vec{H}}{(\vec{G},c)}$ must map $u$ and $v$ to the same copy of a vertex $w\in V_{uv}$. Furthermore, (d) makes sure that the mapping $\varphi$ that agrees with $\psi$ on $V(\vec{H})\setminus\{u,v\}$ and that maps $uv$ to $w$ (where $w_u$ and $w_v$ are the images of $u$ and $v$ under $\psi$), is a homomorphism in $\homs{\vec{H}/(u,v)}{(\vec{G}',c')}$. On the other hand every homomorphism $\varphi \in \homs{\vec{H}/(u,v)}{(\vec{G}',c')}$ corresponds to the homomorphism $\psi \in \homs{\vec{H}}{(\vec{G},c)}$ that agrees with $\varphi$ on $V(\vec{H})\setminus\{u,v\}$, and that maps $u$ and $v$ to $w_u$ and $w_v$, where $w=\varphi(uv)$. Concretely, we obtain the desired bijection between $\homs{\vec{H}/(u,v)}{(\vec{G}',c')}$ and $\homs{\vec{H}}{(\vec{G},c)}$, concluding the proof.
\end{proof}

\begin{lemma}\label{lem:technical_minors_self-loops}
    There exists an algorithm $\mathbb{A}_3$ that satisfies the following constraints:
    \begin{enumerate}
        \item $\mathbb{A}_3$ expects as input a digraph $\vec{H}$, a loop $(u,u)$ of $\vec{H}$, and a surjectively $\vec{H}\setminus(u,u)$-coloured digraph $(\vec{G}',c')$ of outdegree $d'$.
        \item The running time of $\mathbb{A}_3$ is bounded by $\mathsf{poly}(|\vec{H}|,|\vec{G}'|)$.
        \item $\mathbb{A}_3$ outputs a surjectively $\vec{H}$-coloured digraph $(\vec{G},c)$ of size at most $O(|\vec{G}'|)$ such that the outdegree of $\vec{G}$ is bounded by $d'+1$, and 
        \[ \#\homs{\vec{H}\setminus(u,u)}{(\vec{G}',c')} = \#\homs{\vec{H}}{(\vec{G},c)} \,.\]
    \end{enumerate}
\end{lemma}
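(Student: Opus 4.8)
The plan is to follow the pattern of Lemmas~\ref{lem:technical_minors_deletions} and~\ref{lem:technical_minors_contractions}, noting that loop deletion is by far the mildest of the three operations: it removes no vertices, so I expect both $\vec{G}$ and $c$ to arise from $\vec{G}'$ and $c'$ by a purely local edit. Concretely, I would let $\vec{G}$ be obtained from $\vec{G}'$ by adding a loop $(w,w)$ at every $w \in c'^{-1}(u)$, i.e.\ at every vertex of $\vec{G}'$ coloured by $u$, and I would set $c := c'$. (No multi-edge is created, since $c'$ being a homomorphism into $\vec{H}\setminus(u,u)$ together with $(u,u)\notin E(\vec{H}\setminus(u,u))$ forces $\vec{G}'$ to have no loop at any vertex of $c'^{-1}(u)$.) Because $V(\vec{H}\setminus(u,u))=V(\vec{H})$, the map $c=c'$ is still vertex-surjective onto $V(\vec{H})$, and it is a digraph homomorphism $\vec{G}\to\vec{H}$: every arc inherited from $\vec{G}'$ is handled by $c'$, and each new loop $(w,w)$ with $c(w)=u$ maps to $(u,u)\in E(\vec{H})$, which is a loop of $\vec{H}$ by hypothesis. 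The remaining bounds are then immediate: the construction inspects only $|c'^{-1}(u)|\le|\vec{G}'|$ vertices, hence runs in $\mathsf{poly}(|\vec{H}|,|\vec{G}'|)$ and yields $|\vec{G}|=O(|\vec{G}'|)$; and since each affected vertex gains exactly one outgoing arc, the outdegree of $\vec{G}$ is at most $d'+1$.

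The heart of the argument is the equality $\#\homs{\vec{H}\setminus(u,u)}{(\vec{G}',c')}=\#\homs{\vec{H}}{(\vec{G},c)}$, which I would obtain by showing that the identity map on $V(\vec{H})\to V(\vec{G})=V(\vec{G}')$ is a bijection between the two sets of colour-prescribed homomorphisms. For the forward inclusion, any $\varphi\in\homs{\vec{H}\setminus(u,u)}{(\vec{G}',c')}$ already respects every arc of $\vec{H}$ except possibly $(u,u)$; but colour-prescription gives $\varphi(u)\in c^{-1}(u)=c'^{-1}(u)$, so $(\varphi(u),\varphi(u))$ is among the loops we added, whence $\varphi\in\homs{\vec{H}}{(\vec{G},c)}$. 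For the reverse inclusion, take $\psi\in\homs{\vec{H}}{(\vec{G},c)}$ and an arc $(a,b)$ of $\vec{H}\setminus(u,u)$; I claim $(\psi(a),\psi(b))\in E(\vec{G}')$. Indeed a newly added arc has the form $(w,w)$ with $c(w)=u$, so if $(\psi(a),\psi(b))$ were such an arc then $\psi(a)=\psi(b)$, forcing $a=c(\psi(a))=c(\psi(b))=b$ by colour-prescription and hence $a=b=u$, contradicting $(u,u)\notin E(\vec{H}\setminus(u,u))$. Thus $\psi$ is also a colour-prescribed homomorphism from $\vec{H}\setminus(u,u)$ into $(\vec{G}',c')$, and the two maps are mutually inverse.

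I do not expect a genuine obstacle here. The only step that warrants explicit care is the reverse inclusion above --- verifying that a colour-prescribed homomorphism into $(\vec{G},c)$ cannot route any arc of $\vec{H}\setminus(u,u)$ through a freshly added loop --- and this reduces to the standard observation that colour-prescription sends distinctly-coloured vertices of $\vec{H}$ to distinct vertices of $\vec{G}$, so only the loop $(u,u)$ of $\vec{H}$ itself could ever land on a new loop. Since $\vec{H}$ genuinely contains the loop $(u,u)$ in this lemma, there is nothing further to say about the loopless restriction.
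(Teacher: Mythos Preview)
Your proposal is correct and follows exactly the same construction as the paper's proof: add a loop at every vertex of $c'^{-1}(u)$ and keep $c:=c'$. Your argument is in fact more detailed than the paper's, which simply asserts that the two sets of colour-prescribed homomorphisms coincide, whereas you spell out both inclusions carefully.
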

\begin{proof}
This is a very easy case: Obtain $\vec{G}$ from $\vec{G}'$ by adding a loop to each vertex of $\vec{G}'$ coloured by $c'$ with $u$. Furthermore, we set $c:=c'$. Clearly, $c$ is a surjective $\vec{H}$-colouring of $\vec{G}$, and the outdegree of $\vec{G}$ can increase by at most $1$. Furthermore, the construction immediately yields that  $\homs{\vec{H}\setminus(u,u)}{(\vec{G}',c')} = \homs{\vec{H}}{(\vec{G},c)}$, that is, a mapping $\varphi:V(\vec{H})(=V(\vec{H}')) \to V(\vec{G})(=V(\vec{G}'))$ is a homomorphism from $\vec{H}\setminus(u,u)$ to $(\vec{G}',c')$ if and only if it is a homomorphism from $\vec{H}$ to $(\vec{G},c)$
\end{proof}

In combination, the three lemmas above yield the following:
\begin{lemma}\label{lem:minor_reduction}
Let $\vec{C}$ be a recursively enumerable class of digraphs and let $\vec{C}'$ be a class of MR minors of graphs in $\vec{C}$. Then
\[\#\cpdirhomsprobd(\vec{C}') \fptred \#\cpdirhomsprobd(\vec{C}) \,.\]
\end{lemma}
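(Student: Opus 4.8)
The plan is to reduce $\#\cpdirhomsprobd(\vec{C}')$ to $\#\cpdirhomsprobd(\vec{C})$ by chaining the three ``un-minoring'' algorithms $\mathbb{A}_1,\mathbb{A}_2,\mathbb{A}_3$ from Lemmas~\ref{lem:technical_minors_deletions}, \ref{lem:technical_minors_contractions} and \ref{lem:technical_minors_self-loops}. Given an instance $(\vec{H}',(\vec{G}',c'))$ of $\#\cpdirhomsprobd(\vec{C}')$, I would first dispose of a degenerate case: if $c'$ is not surjective onto $V(\vec{H}')$, then some colour is unused and no colour-prescribed homomorphism can exist, so the answer is $0$ and we return it without an oracle call. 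Otherwise $c'$ is surjective, which is precisely the form the three lemmas require as input. Next, using that $\vec{C}$ is recursively enumerable and that applying a sink deletion, a contraction or a loop deletion is effective, I would reconstruct from $\vec{H}'$ alone a source digraph $\vec{H}\in\vec{C}$ together with a witnessing sequence $\sigma=(o_1,\dots,o_m)$ of MR operations: dovetail over all $\vec{H}\in\vec{C}$ and all finite operation sequences, and stop at the first pair for which applying $\sigma$ to $\vec{H}$ produces a digraph isomorphic to $\vec{H}'$. By the hypothesis on $\vec{C}'$ such a pair exists, so this search halts, and with a fixed enumeration order its output is a computable function of $\vec{H}'$; in particular $|\vec{H}|$ and $m$ depend on $\vec{H}'$ only. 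Composing $c'$ with the inverse of the fixed isomorphism, we may assume $\vec{G}'$ is surjectively coloured by $\vec{H}_m:=\sigma(\vec{H})$, where $\vec{H}_j$ denotes the result of applying $o_1,\dots,o_j$ and $\vec{H}_0=\vec{H}$.

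Then I would walk the sequence backwards, $j=m,m-1,\dots,1$, maintaining a surjectively $\vec{H}_j$-coloured digraph $(\vec{G}_j,c_j)$, starting from $(\vec{G}_m,c_m):=(\vec{G}',c')$. Since $o_j$ turns $\vec{H}_{j-1}$ into $\vec{H}_j$ by a single sink deletion, contraction or loop deletion, I feed $\vec{H}_{j-1}$, the corresponding sink / arc / loop, and $(\vec{G}_j,c_j)$ to $\mathbb{A}_1$, $\mathbb{A}_2$ or $\mathbb{A}_3$ respectively; its output is a surjectively $\vec{H}_{j-1}$-coloured digraph $(\vec{G}_{j-1},c_{j-1})$ with $\#\homs{\vec{H}_{j-1}}{(\vec{G}_{j-1},c_{j-1})}=\#\homs{\vec{H}_j}{(\vec{G}_j,c_j)}$. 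Crucially, each of the three lemmas both requires and guarantees a surjectively coloured instance, so this invariant propagates and the output of one step is exactly the input format of the next. After $m$ steps the chained equalities give $\#\homs{\vec{H}}{(\vec{G}_0,c_0)}=\#\homs{\vec{H}'}{(\vec{G}',c')}$; since $\vec{H}\in\vec{C}$, the pair $(\vec{H},(\vec{G}_0,c_0))$ is a legal instance of $\#\cpdirhomsprobd(\vec{C})$, so I query the oracle on it and return the answer.

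For the resource bounds: MR operations never increase the encoding length, so $|\vec{H}_j|\le|\vec{H}|$ throughout. Each $\mathbb{A}_i$ runs in time $\poly(|\vec{H}_{j-1}|,|\vec{G}_j|)$ and outputs a host of size $O(|\vec{H}_{j-1}|\cdot|\vec{G}_j|)$, so after $m$ rounds $|\vec{G}_0|\le|\vec{H}|^{O(m)}\cdot|\vec{G}'|$ and the whole construction, together with the enumeration step (whose cost depends only on $\vec{H}'$), runs in time $f(|\vec{H}'|)\cdot|\vec{G}'|^{O(1)}$. Likewise each round at most doubles the outdegree and adds at most $|\vec{H}|+1$, so the outdegree of $\vec{G}_0$ is at most $2^m(d'+|\vec{H}|+1)$ where $d'$ is the outdegree of $\vec{G}'$; hence the oracle-query parameter $|\vec{H}|+d(\vec{G}_0)$ is bounded by a computable function of the input parameter $|\vec{H}'|+d'$, which makes this a parameterised Turing reduction. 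The loop-free and DAG variants follow verbatim, using (M1) of Observation~\ref{obs:wminor_obs} and the fact that sink deletions and contractions preserve acyclicity. I expect the only non-mechanical point to be the recursive-enumerability step of the first paragraph — recovering $\vec{H}$ and $\sigma$ from $\vec{H}'$ alone, so that all the blow-up factors ($|\vec{H}|^{O(m)}$ and $2^m$) are functions of the parameter; everything after that is bookkeeping layered on Lemmas~\ref{lem:technical_minors_deletions}, \ref{lem:technical_minors_contractions} and \ref{lem:technical_minors_self-loops}.
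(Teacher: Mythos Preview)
Your proof is correct and follows essentially the same approach as the paper: search for $\vec{H}\in\vec{C}$ and a witnessing MR-operation sequence, then chain $\mathbb{A}_1,\mathbb{A}_2,\mathbb{A}_3$ backwards along that sequence and query the oracle once on the resulting $\vec{H}$-coloured instance. You are in fact more careful than the paper in a few places --- explicitly treating the non-surjective case, explicitly maintaining the surjectivity invariant across steps, and explicitly tracking the outdegree blow-up --- but the underlying argument is identical.
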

\begin{proof}
Let $\vec{H}'$ and $(\vec{G}',c')$ be an input instance of $\#\cpdirhomsprobd(\vec{C}')$. We start by searching a graph $\vec{H}\in\vec{C}$ such that $\vec{H}'$ is an MR minor of $\vec{H}$. Note that this can be done in time only depending on $\vec{H}'$. Since $\vec{H}'$ can be obtained from $\vec{H}$ by a sequence of $\ell$ sink-deletions, contractions, and loop deletions, we can use algorithms $\mathbb{A}_1$, $\mathbb{A}_2$ and $\mathbb{A}_3$ from the previous three lemmas for a total of $\ell$ times. Note that the crucial property of the $\mathbb{A}_i$ is that the oracle queries always have size bounded by $f(|\vec{H}|)\cdot O(|\vec{G}'|)$. Hence, even after $\ell$ applications of the constructions, the total size will still be bounded by $f(|\vec{H}|)\cdot O(|\vec{G}'|)$. Since furthermore each individual application takes only polynomial time, we obtain, as desired a parameterised Turing-reduction.
\end{proof}

The final part of this subsection is the following lemma for removing the colours. Note that its proof is a simple application of the inclusion-exclusion principle and transfers verbatim from e.g.\ \cite[Lemma 2.49]{Roth19} (see also~\cite[Lemma 1.34]{Curticapean15}). We emphasise that the reduction only requires oracle queries for subgraphs of the input host-graph which cannot increase the outdegree; thus the reduction applies to our setting.
\begin{lemma}\label{lem:colour_removal}
Let $\vec{C}$ be a class of digraphs. We have
\[\#\cpdirhomsprobd(\vec{C}) \fptred \#\dirhomsprobd(\vec{C}) \,.\]
\end{lemma}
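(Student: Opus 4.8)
The plan is to express the colour-prescribed homomorphism count as a linear combination (with $\pm 1$ coefficients) of ordinary homomorphism counts into subgraphs of the host, via inclusion–exclusion over the colour classes, exactly as in the undirected case treated in \cite[Lemma~2.49]{Roth19} and \cite[Lemma~1.34]{Curticapean15}. Fix an input $(\vec H, (\vec G,c))$ of $\#\cpdirhomsprobd(\vec C)$, so $c\in\homs{\vec G}{\vec H}$ is an $\vec H$-colouring. For a subset $A\subseteq V(\vec H)$, let $\vec G[c^{-1}(A)]$ denote the sub-digraph of $\vec G$ induced by the vertices whose colour lies in $A$; this is again a digraph whose outdegree is at most $d(\vec G)$, and it can be constructed in polynomial time. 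The key identity is
\begin{equation*}
    \#\homs{\vec H}{(\vec G,c)} \;=\; \sum_{A\subseteq V(\vec H)} (-1)^{|V(\vec H)|-|A|}\, \#\homs{\vec H}{\vec G[c^{-1}(A)]}\,,
\end{equation*}
which follows because a homomorphism $\varphi\in\homs{\vec H}{\vec G[c^{-1}(A)]}$ is counted on the right with signed multiplicity $\sum_{A'\supseteq c(\varphi(V(\vec H)))}(-1)^{|V(\vec H)|-|A'|}$, and this alternating sum is $1$ if $c\circ\varphi=\mathrm{id}_{V(\vec H)}$ (i.e.\ $\varphi$ is colour-prescribed) and $0$ otherwise.

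The reduction then works as follows: on input $(\vec H,(\vec G,c))$, iterate over all $2^{|V(\vec H)|}$ subsets $A\subseteq V(\vec H)$, build the induced sub-digraph $\vec G_A:=\vec G[c^{-1}(A)]$, query the oracle for $\#\dirhomsprobd(\vec C)$ on the instance $(\vec H,\vec G_A)$ — note $\vec H\in\vec C$ is unchanged — and return the signed sum above. This is an FPT algorithm: the number of oracle calls is $2^{|V(\vec H)|}$, which is a function of the parameter $|\vec H|$, each call is preceded by polynomial-time work, and crucially every oracle query has parameter $|\vec H|+d(\vec G_A)\le |\vec H|+d(\vec G)$, since deleting vertices cannot increase the maximum outdegree. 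Hence the parameter of every oracle query is bounded by the parameter of the input, as required of a parameterised Turing reduction, giving $\#\cpdirhomsprobd(\vec C)\fptred\#\dirhomsprobd(\vec C)$.

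There is essentially no obstacle here — the only things to verify are that the combinatorial identity goes through verbatim for digraphs (it does, since induced sub-digraphs and digraph homomorphisms behave exactly like their undirected counterparts under the inclusion–exclusion argument, with directedness playing no role) and, most importantly, that the host graphs appearing as oracle queries are \emph{induced subgraphs} of $\vec G$, so the outdegree — and therefore the parameter — does not grow; this is precisely the point emphasised in the statement and is immediate from the construction.
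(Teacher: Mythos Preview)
Your approach is the same as the paper's (it simply cites the standard inclusion--exclusion argument and stresses that oracle queries are on induced subgraphs of $\vec G$, hence do not increase the outdegree). However, your key identity is not quite right as stated. The alternating sum $\sum_{A'\supseteq c(\varphi(V(\vec H)))}(-1)^{|V(\vec H)|-|A'|}$ equals $1$ precisely when $c(\varphi(V(\vec H)))=V(\vec H)$, i.e.\ when $c\circ\varphi$ is \emph{surjective}, not when $c\circ\varphi=\mathrm{id}$. Since $c\circ\varphi$ is an endomorphism of the finite digraph $\vec H$, surjectivity forces it to be an automorphism, but not the identity: take $\vec H$ to be two isolated vertices to see the difference. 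Hence your right-hand side actually equals
\[
|\mathrm{Aut}(\vec H)|\cdot \#\homs{\vec H}{(\vec G,c)}\,,
\]
because for each $\sigma\in\mathrm{Aut}(\vec H)$ the map $\varphi\mapsto\varphi\circ\sigma^{-1}$ is a bijection between $\{\varphi: c\circ\varphi=\sigma\}$ and $\homs{\vec H}{(\vec G,c)}$.

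This is easy to repair and does not affect the reduction: compute $|\mathrm{Aut}(\vec H)|$ by brute force in time depending only on $|\vec H|$ and divide. All the points that matter for the parameterised Turing reduction --- $2^{|V(\vec H)|}$ oracle calls, each on an induced subgraph of $\vec G$ so that the outdegree (and thus the parameter) does not grow, with $\vec H\in\vec C$ unchanged --- remain exactly as you wrote them.
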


\subsection{Lower Bounds}
We start by extending the notion of adaptive width from hypergraphs to digraphs.

\begin{definition}[Adaptive width of digraphs]
The \emph{adaptive width} of a digraph $\vec{H}$, denoted by $\aw(\vec{H})$, is defined as the adaptive width of its contour. That is $\aw(\vec{H}):=\aw(\reduct{\vec{H}})$. 
\end{definition}

We proceed by proving intractability of $\#\cpdirhomsprobd(\vec{C})$ for classes $\vec{C}$ of canonical DAGs of unbounded adaptive width.

\begin{lemma}\label{lem:coloured_CSP_reduction}
Let $\vec{C}$ be a recursively enumerable class of canonical DAGs. If the adaptive width of $\vec{C}$ is unbounded then $\#\cpdirhomsprobd(\vec{C})$ is not fixed-parameter tractable, unless ETH fails.
\end{lemma}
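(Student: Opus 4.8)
The plan is to reduce from $\#\textsc{CSP}(\reduct{\vec C})$, the restriction of $\#\textsc{CSP}$ to instances whose left-hand hypergraph is (isomorphic to) a contour of a graph in $\vec C$. By Marx's theorem~\cite{Marx13}, if $\reduct{\vec C}$ has unbounded adaptive width (equivalently, unbounded submodular width), then $\#\textsc{CSP}(\reduct{\vec C})$ is not fixed-parameter tractable unless ETH fails; since the adaptive width of $\vec C$ is by definition the adaptive width of the contours $\reduct{\vec C}$, this is exactly the hypothesis of the lemma. Thus it suffices to exhibit a parameterised Turing reduction $\#\textsc{CSP}(\reduct{\vec C}) \fptred \#\cpdirhomsprobd(\vec C)$, using as a starting point the reduction of Chen et al.~\cite{Chenetal19}.

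The key structural observation is that if $\vec H \in \vec C$ is a canonical DAG, then its contour $\reduct{\vec H}$ \emph{equals} its reachability hypergraph $\reach{\vec H}$ with the source-vertices removed, and in fact every vertex of $\vec H$ is either a source $s_e$ (each lying in a unique reachability edge) or a sink $v$ lying in the edges $\{e : v \in e\}$ of $\reach{\vec H}$. Hence given a $\#\textsc{CSP}$ instance $(\mathcal A,\mathcal B)$ with $\mathcal H(\mathcal A) = \reduct{\vec H}$ for some $\vec H \in \vec C$, I would first recover $\vec H$ by searching $\vec C$ (recursive enumerability is used here, and the search time depends only on $\mathcal A$); then build a host digraph $(\vec G, c)$ whose vertex set, for each sink $v$ of $\vec H$, consists of a copy of the universe of $\mathcal B$ (colour-prescribed by $v$), and for each source $s_e$ of $\vec H$ consists of one vertex per tuple of $\mathcal B$ in the relation $R_e^{\mathcal B}$; add an arc from the $t$-vertex of $s_e$ to the $t[v]$-vertex of each sink $v \in e$. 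A colour-prescribed homomorphism $\vec H \to (\vec G,c)$ then corresponds precisely to a choice, for each edge $e$, of a tuple of $R_e^{\mathcal B}$, consistent across shared sink-vertices --- i.e.\ to a homomorphism $\mathcal A \to \mathcal B$. One has to be slightly careful that the relations $R_e^{\mathcal B}$ in the $\#\textsc{CSP}$ instance may be ordered tuples with repeats, so the construction must respect the ordering $\prec$ used to turn edges into tuples; this is where invoking~\cite{Chenetal19}'s machinery, rather than a bare-hands encoding, pays off. Crucially, the outdegree of $\vec G$ is bounded by $|\vec H|$ (each source-vertex has out-arcs only to the at most $|\vec H|$ sinks in its edge, and sink-vertices have outdegree $0$), so the parameter $|\vec H| + d(\vec G)$ of the produced instance is bounded by a function of the parameter $|\mathcal A|$, as required for an FPT Turing reduction. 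Counting $\#\homs{\mathcal A}{\mathcal B}$ is then a linear combination of colour-prescribed homomorphism counts $\#\homs{\vec H}{(\vec G,c)}$ over the (boundedly many) colourings, or can be obtained directly from a single colour-prescribed count if the construction is arranged so that $c$ is already surjective.

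The main obstacle I anticipate is the faithful bookkeeping in the correspondence between $\#\textsc{CSP}$ solutions and colour-prescribed homomorphisms when relations contain \emph{repeated} elements within a tuple: a homomorphism $\mathcal A \to \mathcal B$ may legitimately map two distinct universe elements to the same element of $\mathcal B$, and the digraph encoding must neither forbid nor double-count such collapses. Handling this is precisely the technical content imported from~\cite{Chenetal19}, and the bulk of the work is in verifying that their reduction can be run while (i) keeping the host's outdegree bounded by a function of $|\vec H|$ and (ii) keeping the pattern inside the \emph{fixed} class $\vec C$ of canonical DAGs rather than an arbitrary DAG. Once the reduction is in place, Lemma~\ref{lem:width_measures} and Marx's lower bound~\cite{Marx13} close the argument: unbounded adaptive width of $\reduct{\vec C}$ rules out an FPT algorithm for $\#\textsc{CSP}(\reduct{\vec C})$, hence for $\#\cpdirhomsprobd(\vec C)$, unless ETH fails.
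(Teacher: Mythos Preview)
Your proposal is correct and takes essentially the same approach as the paper: encode the CSP instance as a colour-prescribed homomorphism instance on the canonical DAG $\vec H$ (sources of $\vec H$ carry relation-tuples, sinks carry universe elements), with the host's outdegree bounded by the arity and hence by $|\vec H|$. The paper factors this reduction through an intermediate colour-prescribed \emph{hypergraph}-homomorphism problem $\#\cphomsproba(\reduct{\vec C})$ and starts from the \emph{decision} problem $\homsprob(\mathcal{S}[\reduct{\vec C}])$ on canonical structures $\mathcal{A}[\mathcal H]$ (invoking Chen et al.'s refinement of Marx, which needs the source structures to be minimal under homomorphic equivalence) rather than from $\#\textsc{CSP}(\reduct{\vec C})$ directly, but the encoding and the outdegree bound are the same as yours.
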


The proof of Lemma~\ref{lem:coloured_CSP_reduction} uses a careful reduction from a version of the parameterised constraint satisfaction problem and is encapsulated in the following subsection.

\subsubsection{Reduction from a Constraint Satisfaction problem}
The starting point of our reduction is the following \emph{decision} problem and the corresponding hardness result proven by Chen et al.~\cite{DBLP:conf/ijcai/ChenGLP20} (building upon Marx~\cite{Marx13}); below, $\mathcal{S}$ denotes a class of relational structures.

\begin{parameterizedproblem}
\problemname{$\homsprob(\mathcal{S})$}
\probleminput{a pair of relational structures $(\mathcal A, \mathcal B)$ with $\mathcal{A}\in \mathcal{S}$}
\problemoutput{\textsc{true} iff there exists a homomorphism from ${\mathcal{A}}$ to ${\mathcal{B}}$}
\problemparameter{$|\mathcal A|$}
\end{parameterizedproblem}

\noindent Let us emphasise the subtle difference between the input restrictions of $\homsprob(\mathcal{S})$ and $\#\textsc{CSP}(\mathcal{C})$: In the former, we enforce that $\mathcal{A}$ is contained in $\mathcal{S}$. In the latter, we only restrict the hypergraphs of the relational structure $\mathcal{A}$ by requiring that the hypergraph $\mathcal{H}(\mathcal{A})$ is contained in $\mathcal{C}$.

For what follows, we call a relational structure $\mathcal{A}$ \emph{minimal under homomorphic equivalence} if there is no homomorphism from $\mathcal{A}$ to a proper substructure of $\mathcal{A}$.

\begin{theorem}[\cite{DBLP:conf/ijcai/ChenGLP20}]\label{thm:homS_hard}
Let $\mathcal{S}$ be a recursively enumerable class of relational structures that are minimal under homomorphic equivalence, and assume that ETH holds. If $\mathcal{S}$ has unbounded adaptive width then $\homsprob(\mathcal{S})$ is not fixed-parameter tractable.
\end{theorem}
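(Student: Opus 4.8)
The statement is the lower‑bound half of a dichotomy for the \emph{structure}-parameterised homomorphism problem $\homsprob(\cdot)$, the natural unbounded‑arity analogue of Grohe's treewidth dichotomy, with adaptive~$=$~submodular width playing the role of treewidth. My plan is to follow the lower‑bound machinery that Marx developed for the \emph{hypergraph}-parameterised problem $\CSP(\cdot)$~\cite{Marx13}; the one genuinely new point is that in $\homsprob(\mathcal{S})$ the left structure is \emph{prescribed} to lie in $\mathcal{S}$ rather than being freely chosen as the canonical query of some hypergraph, and the hypothesis that every $\mathcal{A}\in\mathcal{S}$ is minimal under homomorphic equivalence --- i.e.\ is a \emph{core} --- is exactly what makes this passage work.

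First I would extract the hardness source. Since $\aw(\mathcal{S})=\sup_{\mathcal{A}\in\mathcal{S}}\aw(\mathcal{H}(\mathcal{A}))$ and adaptive width and submodular width are equivalent on classes~\cite{Marx13}, the hypergraphs $\{\mathcal{H}(\mathcal{A}):\mathcal{A}\in\mathcal{S}\}$ have unbounded submodular width; since $\mathcal{S}$ is recursively enumerable and submodular width is computable, for every $w$ we can effectively produce some $\mathcal{A}_w\in\mathcal{S}$ with $\mathrm{sw}(\mathcal{H}(\mathcal{A}_w))\ge w$. The heart of the argument is then a reduction that, for a fixed $\mathcal{A}\in\mathcal{S}$ and an arbitrary $3$-$\textsc{SAT}$ instance $\phi$ on $n$ variables, produces a right structure $\mathcal{B}_\phi$ over the signature of $\mathcal{A}$ --- with $\mathcal{A}$ itself \emph{unchanged} --- such that $\homs{\mathcal{A}}{\mathcal{B}_\phi}\neq\emptyset$ iff $\phi$ is satisfiable, and such that $|\mathcal{B}_\phi|$ is subexponential in $n$ with an exponent that shrinks as $\mathrm{sw}(\mathcal{H}(\mathcal{A}))$ grows. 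Granting such a reduction, the family $\{(\mathcal{A}_w,\mathcal{B}_\phi)\}_{w,\phi}$ inherits Marx's trade‑off between the parameter $|\mathcal{A}_w|$, the instance size $|\mathcal{B}_\phi|$, and the width $w$; letting $w$ grow with $n$ at the rate dictated by the (computable) running‑time bound of a hypothetical fixed‑parameter algorithm for $\homsprob(\mathcal{S})$, the same diagonalisation as in~\cite{Marx13} turns that algorithm into a $2^{o(n)}$‑time algorithm for $3$-$\textsc{SAT}$, contradicting ETH.

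The construction of $\mathcal{B}_\phi$ is Marx's: from a hypergraph of submodular width $w$ one extracts a highly connected set, and from it a low‑congestion \emph{embedding} of a suitable ``hard'' incidence/grid pattern --- rich enough to host $\phi$ --- into $\mathcal{H}(\mathcal{A})$; the universe of $\mathcal{B}_\phi$ then consists of the locally consistent partial truth assignments carried along the branch sets of that embedding, and its relations enforce that these partial assignments agree along the tuples of $\mathcal{A}$. Low congestion is what keeps $|\mathcal{B}_\phi|$ subexponential, and a homomorphism $\mathcal{A}\to\mathcal{B}_\phi$ glues the branch‑set assignments into a global satisfying assignment of $\phi$, and conversely every satisfying assignment yields such a homomorphism. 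The core hypothesis enters precisely in the backward direction ``homomorphism $\Rightarrow$ satisfying assignment'': if $\mathcal{A}$ were not a core it would be homomorphically equivalent to a strictly smaller substructure whose hypergraph could have tiny submodular width, and a homomorphism $\mathcal{A}\to\mathcal{B}_\phi$ could factor through that substructure without encoding anything --- the degenerate phenomenon (think of an $\mathcal{A}$ with a completely filled relation) that makes $\homsprob(\mathcal{S})$ trivially easy even when $\mathcal{H}(\mathcal{A})$ is complex.

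I expect the main obstacle to be exactly this re‑run of Marx's reduction with the left side \emph{pinned} to the prescribed core $\mathcal{A}\in\mathcal{S}$, rather than to the canonical query of $\mathcal{H}(\mathcal{A})$: the relations of $\mathcal{B}_\phi$ must respect the \emph{actual} tuples of $\mathcal{A}$ --- possibly several tuples per relation symbol, tuples with repeated coordinates, symbols that index several hyperedges of $\mathcal{H}(\mathcal{A})$ --- and not merely those hyperedges, so one must check that taking the intersection of the per‑tuple consistency constraints does not break the encoding and that the core property still delivers the backward implication in this refined setting. The remaining work --- transcribing the quantitative growth in Marx's embedding lemma, verifying the congestion bound that makes $|\mathcal{B}_\phi|$ subexponential, and checking that the selection of $\mathcal{A}_w$ is effective from the recursive enumerability of $\mathcal{S}$ and the computability of submodular width --- should be routine.
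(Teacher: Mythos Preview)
The paper does not prove this theorem; it is quoted verbatim as an external result of Chen et al.\ \cite{DBLP:conf/ijcai/ChenGLP20} and used as a black box in the reduction chain~\eqref{eq:fptred_chain}. Your proposal is therefore not competing with anything in the paper, and as a sketch of the Chen--Grohe--Lin--Pichler argument it is broadly accurate: you correctly identify that the proof re-runs Marx's submodular-width lower bound~\cite{Marx13} with the left-hand structure pinned to a prescribed $\mathcal{A}\in\mathcal{S}$ rather than to the canonical query of $\mathcal{H}(\mathcal{A})$, and that the core hypothesis is exactly what prevents a homomorphism $\mathcal{A}\to\mathcal{B}_\phi$ from factoring through a low-width substructure and thereby evading the encoding.
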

To obtain hardness of $\#\dirhomsprobd(\vec{C})$ from Theorem~\ref{thm:homS_hard}, we show a chain of parameterized Turing reductions using two intermediate problems; one of them is $\#\cpdirhomsprobd(\vec{C})$, and for the second one we extend the notion of colour-prescribed homomorphisms to hypergraphs:
Let $\mathcal{H}$ be a hypergraph. An $\mathcal{H}$\emph{-colouring} of a hypergraph $\mathcal{G}$ is a homomorphism $c\in\homs{\mathcal{G}}{\mathcal{H}}$. An $\mathcal{H}$\emph{-coloured hypergraph} is a pair $(\mathcal G, c)$ where $\mathcal G$ is a hypergraph and $c$ is an $\mathcal{H}$-colouring of $\mathcal G$.
Given a $\mathcal{H}$-coloured hypergraph $(\mathcal{G},c)$, a map $\psi : V(\mathcal{H}) \to V(\mathcal{G})$ is $c$-\emph{colour-prescribed} if $c(\psi(v))=v$ for every $v\in V(\mathcal{H})$. The set of all $c$-colour-prescribed homomorphisms from $\mathcal{H}$ to $\mathcal{G}$ is denoted by $\homs{\mathcal{H}}{(\mathcal{G},c)}$.

We can now introduce the problem $\#\cphomsproba(\mathcal{C})$; here $\mathcal{C}$ is a class of hypergraphs:
\begin{parameterizedproblem}
\problemname{$\#\cphomsproba(\mathcal{C})$}
\probleminput{a hypergraph $\mathcal H \in \mathcal C$ and an $\mathcal H$-coloured hypergraph $(\mathcal G,c)$}
\problemoutput{$\#\homs{\mathcal{H}}{(\mathcal{G},c)}$}
\problemparameter{$|\mathcal H|+a(\mathcal{G})$ (recall that $a(\mathcal{G})$ is arity of $\mathcal{G}$)}
\end{parameterizedproblem}

\noindent In the rest of this subsection we prove:
\begin{align}
    \homsprob(\mathcal{S}) 
    \fptred
    \#\cphomsproba(\mathcal{C})
    \fptred
    \#\cpdirhomsprobd(\vec{C}) \label{eq:fptred_chain}
\end{align}
where $\mathcal{S}$ and $\mathcal{C}$ are carefully constructed from $\vec{C}$ so to preserve the (un)boundedness of adaptive width. The next two paragraphs prove the reductions of~\eqref{eq:fptred_chain} in order.

\paragraph*{$\homsprob(\mathcal{S}) \fptred \#\cphomsproba(\mathcal{C})$}
We reduce $\homsprob(\mathcal{S})$ to $\#\cphomsproba(\mathcal{C})$ for a certain class $\mathcal{S}=\mathcal{S}(\mathcal{C})$ described below. To this end we convert every hypergraph into a structure; this structure is the same of Definition~\ref{def:struct_A_H} -- only for hypergraphs. Without loss of generality, in what follows we assume $V(\mathcal{H}) = \{1,\ldots,k\}$ where $k=|V(\mathcal H)|$.
\begin{definition}
Let $\mathcal{H}$ be any hypergraph. The structure $\mathcal A[\mathcal H]$ has universe $V(\mathcal H)$ and, for every $e \in E(\mathcal H)$, contains a relation $R^\mathcal{A}_e$ whose only tuple is the set $e$ sorted in nondecreasing order.
\end{definition}

Since we use individual relation symbols for each $e\in E(\mathcal{H})$, the following is straightforward to prove:
\begin{claim}\label{clm:hom_minimal}
Let $\mathcal{H}$ be any hypergraph. Then $\mathcal A[\mathcal H]$ is minimal under homomorphic equivalence, and the hypergraph of $\mathcal A[\mathcal H]$ is $\mathcal H$.
\end{claim}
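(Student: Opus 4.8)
The plan is to verify the two assertions separately, each of which is essentially immediate from the construction of $\mathcal A[\mathcal H]$ together with the fact that we use a \emph{distinct} relation symbol $R_e$ for every hyperedge $e$. First I would observe that the hypergraph of $\mathcal A[\mathcal H]$ is $\mathcal H$: by the definition of $\mathcal H(\cdot)$ for relational structures (Section~\ref{sec:rel_structs_and_CSP}), the vertex set is the universe $V(\mathcal H)$, and for each tuple that appears in some relation of $\mathcal A[\mathcal H]$ we add the hyperedge consisting of its entries; since the only tuple in $R^{\mathcal A}_e$ is the sorted listing of the set $e$, the set of entries is exactly $e$, so the edge set is $\{e : e \in E(\mathcal H)\} = E(\mathcal H)$. (One should note in passing that no hyperedge is lost to the ``no repeated entries collapse'' issue here, because each $e$ is a genuine set and its sorted tuple lists each element once, so $e_t = e$.)

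For minimality under homomorphic equivalence, the key step is the following. Suppose $\varphi$ is a homomorphism from $\mathcal A[\mathcal H]$ to a substructure $\mathcal A'$ of $\mathcal A[\mathcal H]$; I want to show $\mathcal A' = \mathcal A[\mathcal H]$, i.e.\ $\varphi$ is onto and no relation loses a tuple. Fix a vertex $v \in V(\mathcal H)$. Since $E(\mathcal H) \subseteq 2^{V(\mathcal H)}\setminus\{\emptyset\}$ covers $V(\mathcal H)$ — more precisely, every vertex of a hypergraph lies in some hyperedge, which holds here because a hypergraph by the paper's definition has $E \subseteq 2^V\setminus\{\emptyset\}$ and we may assume every vertex is covered, or else argue per connected piece — pick $e \ni v$. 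The relation $R^{\mathcal A}_e$ contains the tuple $t_e$ listing $e$ in sorted order, so $\varphi(t_e) \in R^{\mathcal A'}_e \subseteq R^{\mathcal A}_e = \{t_e\}$; hence $\varphi(t_e) = t_e$, which forces $\varphi$ to fix every entry of $t_e$, in particular $\varphi(v) = v$. As $v$ was arbitrary, $\varphi = \mathrm{id}$ on $V(\mathcal H)$, so $\varphi$ is surjective and, since $\varphi(t_e) = t_e \in R^{\mathcal A'}_e$ for every $e$, every relation of $\mathcal A'$ already contains every tuple of $\mathcal A[\mathcal H]$; therefore $\mathcal A' = \mathcal A[\mathcal H]$ and there is no proper substructure to which $\mathcal A[\mathcal H]$ maps homomorphically.

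There is no real obstacle here; the only point requiring a word of care is the implicit assumption that every vertex of $\mathcal H$ appears in some hyperedge (so that the argument ``pick $e \ni v$'' is valid). This is harmless: if $\mathcal H$ had an isolated vertex it would contribute a trivial isolated element to both $\mathcal A[\mathcal H]$ and any homomorphic image, and the identity argument above goes through on the remaining vertices, while the isolated vertex is trivially fixed since the universe is mapped into itself and minimality is unaffected. Alternatively one notes that in all our applications $\mathcal H$ is a contour, whose edges cover $V(\mathcal H)$ by construction. Thus both parts of the claim follow, completing the proof.
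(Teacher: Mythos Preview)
Your main argument is correct and is exactly the intended ``straightforward'' proof: because each relation $R^{\mathcal A}_e$ is a singleton $\{t_e\}$ over its own private symbol, any homomorphism into a substructure must send $t_e$ to $t_e$ and hence fix every vertex appearing in some edge. The hypergraph-of-$\mathcal A[\mathcal H]$ part is likewise correct.

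There is one genuine slip, however, in your treatment of isolated vertices. Your first argument is wrong: an isolated vertex $v$ of $\mathcal H$ is \emph{not} ``trivially fixed since the universe is mapped into itself''. Nothing constrains $\varphi(v)$, so $\varphi$ can send $v$ to any other vertex $u$, yielding a homomorphism onto the proper substructure obtained by deleting $v$ from the universe. Thus $\mathcal A[\mathcal H]$ is in fact \emph{not} minimal under homomorphic equivalence when $\mathcal H$ has isolated vertices, and the claim as literally stated requires that caveat. Your second alternative is the right fix: in the paper's only use of the claim (Lemma~\ref{lem:cphoms_hard} and the proof of Lemma~\ref{lem:coloured_CSP_reduction}), $\mathcal H$ is a contour and the class $\mathcal C$ is explicitly taken to be without isolated vertices, so the issue does not arise. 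Just drop the first argument and keep the second.
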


For any class $\mathcal C$ of hypergraphs let $\mathcal S[\mathcal C] = \{A[\mathcal H] ~|~ \mathcal H \in \mathcal C\}$. We prove:
\begin{lemma}\label{lem:cphoms_hard}
For every recursively enumerable class $\mathcal{C}$ of hypergraphs without isolated vertices, $\homsprob(\mathcal{S}[\mathcal{C}]) \fptred \#\cphomsproba(\mathcal{C})$.
\end{lemma}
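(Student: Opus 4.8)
The plan is to show that $\homsprob(\mathcal S[\mathcal C])$, which asks whether a homomorphism from $\mathcal A[\mathcal H]$ to an input structure $\mathcal B$ exists, can be solved by an FPT algorithm with an oracle for $\#\cphomsproba(\mathcal C)$. First I would reformulate the decision question as a counting question: a homomorphism $\mathcal A[\mathcal H]\to\mathcal B$ exists if and only if $\#\homs{\mathcal A[\mathcal H]}{\mathcal B}>0$, so it suffices to compute $\#\homs{\mathcal A[\mathcal H]}{\mathcal B}$. By Claim~\ref{clm:hom_minimal} the hypergraph of $\mathcal A[\mathcal H]$ is exactly $\mathcal H$, so the pattern hypergraph we feed to the oracle will be $\mathcal H\in\mathcal C$, as required by the input restriction of $\#\cphomsproba(\mathcal C)$.

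The key step is the standard "colour-by-the-pattern" reduction from uncoloured to colour-prescribed homomorphisms, adapted to hypergraphs. Given the relational structure $\mathcal B$ over the same signature as $\mathcal A[\mathcal H]$ (one relation symbol $R_e$ per hyperedge $e\in E(\mathcal H)$), I would build the hypergraph $\mathcal G$ of $\mathcal B$, namely $V(\mathcal G)=V(\mathcal B)$ and $E(\mathcal G)=\{e_t : t\in R_e^{\mathcal B} \text{ for some } e\}$ where $e_t$ is the underlying set of the tuple $t$. The natural colouring $c:\mathcal G\to\mathcal H$ is obtained as follows: for each hyperedge $e=\{i_1<\dots<i_a\}$ of $\mathcal H$ and each tuple $t=(v_1,\dots,v_a)\in R_e^{\mathcal B}$ (which is sorted to match the canonical sorted tuple of $e$), set $c(v_j)=i_j$. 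One checks that $c$ is a well-defined homomorphism $\mathcal G\to\mathcal H$ provided $\mathcal B$ is first made "consistent" — i.e. we may need to take disjoint copies of $\mathcal B$'s universe indexed by $V(\mathcal H)$, mapping each copy to its index, so that a single vertex of $\mathcal B$ is not forced to receive two different colours. Concretely, replace $\mathcal B$ by $\mathcal B'$ with universe $V(\mathcal B)\times V(\mathcal H)$, where $((w_1,i_1),\dots,(w_a,i_a))\in R_e^{\mathcal B'}$ iff $(w_1,\dots,w_a)\in R_e^{\mathcal B}$ and $(i_1,\dots,i_a)$ is the sorted tuple of $e$; the colouring $c(w,i)=i$ is then manifestly a homomorphism onto $\mathcal H$ (surjectivity can be arranged, or is not actually needed here). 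One then verifies the bijection $\homs{\mathcal A[\mathcal H]}{\mathcal B}\leftrightarrow\homs{\mathcal H}{(\mathcal G',c)}$: a homomorphism $\varphi:\mathcal A[\mathcal H]\to\mathcal B$ yields the colour-prescribed map $v\mapsto(\varphi(v),v)$, and conversely a colour-prescribed $\psi:\mathcal H\to\mathcal G'$ composed with the projection to $V(\mathcal B)$ is a homomorphism $\mathcal A[\mathcal H]\to\mathcal B$. Hence $\#\homs{\mathcal A[\mathcal H]}{\mathcal B}=\#\homs{\mathcal H}{(\mathcal G',c)}$, which the oracle returns in one call.

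Finally I would check the parameter and size bookkeeping so that this is a genuine parameterised Turing reduction. The pattern $\mathcal H$ passed to the oracle has $|\mathcal H|$ bounded as a function of $|\mathcal A[\mathcal H]|$ (indeed $|\mathcal H|\le|\mathcal A[\mathcal H]|$), and the arity $a(\mathcal G')$ equals the maximum arity of the signature, which is again bounded by $|\mathcal A[\mathcal H]|$; thus the oracle parameter $|\mathcal H|+a(\mathcal G')$ is bounded by a computable function of the input parameter $|\mathcal A[\mathcal H]|$. The construction of $\mathcal G'$ takes time $\mathrm{poly}(|\mathcal B|)\cdot f(|\mathcal A[\mathcal H]|)$: the universe blows up by a factor $|V(\mathcal H)|$ and each relation by the same factor, both FPT-bounded. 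Since $\mathcal C$ is recursively enumerable and $\mathcal A\in\mathcal S[\mathcal C]$ is given, we can recover (or are simply given) the corresponding $\mathcal H\in\mathcal C$; no search beyond time $f(|\mathcal A|)$ is needed. The main obstacle — and the only genuinely nontrivial point — is handling the colour-consistency correctly: one must take the "$V(\mathcal H)$-indexed copies" of $\mathcal B$ so that the colouring is well-defined without destroying the homomorphism correspondence, and one must make sure the sorted-tuple convention in the definition of $R_e^{\mathcal A}$ matches the convention used in building $\mathcal G'$ and $c$, so that hyperedges of $\mathcal G'$ are coloured by hyperedges of $\mathcal H$ in an order-preserving way. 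Everything else is routine bijection-chasing and bookkeeping.
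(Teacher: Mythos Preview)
Your proposal is correct and follows essentially the same approach as the paper: both take the product universe $V(\mathcal B)\times V(\mathcal H)$, define the colouring by projection onto the second coordinate, and transport tuples of $R_e^{\mathcal B}$ to hyperedges of $\mathcal G$ indexed by the sorted tuple of $e$. The only minor difference is that you establish a full bijection $\homs{\mathcal A[\mathcal H]}{\mathcal B}\leftrightarrow\homs{\mathcal H}{(\mathcal G',c)}$ (hence equality of counts), whereas the paper only verifies the equivalence $\homs{\mathcal A}{\mathcal B}\neq\emptyset \Leftrightarrow \homs{\mathcal H}{(\mathcal G,c)}\neq\emptyset$, which is all that is needed since $\homsprob$ is a decision problem; your stronger statement is correct and the argument is the same.
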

\begin{proof}
Let $(\mathcal A, \mathcal B)$ be an instance of $\homsprob(\mathcal{S}[\mathcal{C}])$, and let $U(\mathcal{A})$ and $U(\mathcal{B})$, respectively, be the universes of $\mathcal{A}$ and $\mathcal{B}$. If $\mathcal{A}$ and $\mathcal{B}$ have different signatures then clearly the solution is NO. Otherwise, since $\mathcal{A}\in \mathcal{S}[\mathcal{C}]$ and $\mathcal{C}$ is recursively enumerable,  in time $f(|\mathcal{A}|)$ for some computable $f$ we find $\mathcal{H}\in \mathcal{C}$ such that $\mathcal{A}=\mathcal{A}[\mathcal{H}]$.
Then we construct the hypergraph $\mathcal{G}$ defined by:
\begin{itemize}
    \item $V(\mathcal{G})=U(\mathcal{B})\times V(\mathcal{H})$
    \item $E(\mathcal G) = \{(x_1,i_1),\dots,(x_\ell,i_\ell)\} : e = \{i_1,\dots,i_\ell\}\in E(\mathcal H), i_1<\ldots<i_{\ell}, (x_1,\dots,x_\ell)\in R^\mathcal{B}_e $ 
\end{itemize}
Let $c:V(\mathcal{G})\to V(\mathcal{H})$ be defined by $c((x,i)) = i$ for all $(x,i) \in V(\mathcal{G})$. Note that $c$ is an $\mathcal{H}$-colouring of $\mathcal{G}$.
Consider the instance $(\mathcal H, \mathcal (G,c))$ of $\#\cphomsproba(\mathcal{C})$. We claim:
\begin{align} \homs{\mathcal{A}}{\mathcal{B}}\neq \emptyset \;\Leftrightarrow\; \homs{\mathcal{H}}{(\mathcal{G},c)}\neq \emptyset
\end{align}
To prove that $\homs{\mathcal{A}}{\mathcal{B}}\neq \emptyset \,\Rightarrow\, \homs{\mathcal{H}}{(\mathcal{G},c)}\neq \emptyset$, suppose $\varphi\in \homs{\mathcal{A}}{\mathcal{B}}$. Define $\psi:V(\mathcal H) \to V(\mathcal G)$ by $\psi(i)=(\varphi(i),i)$ for all $i\in V(\mathcal{H})$. We claim that $\psi \in \homs{\mathcal{H}}{(\mathcal{G},c)}$. Let $e=\{i_1,\dots,i_\ell\} \in E(\mathcal{H})$ with $i_1<\dots<i_\ell$. By definition of $\mathcal{A}[\mathcal H]$ we have $(i_1,\dots,i_\ell)\in R^\mathcal{A}_e$; and since $\varphi$ is a homomorphism, $(\varphi(i_1),\dots,\varphi(i_\ell))\in R^\mathcal{B}_e$. By definition of $\mathcal{G}$ this implies that $\{(\varphi(i_1),i_1),\dots,(\varphi(i_\ell),i_\ell)\} \in E(\mathcal{G})$. Hence, $\psi \in \homs{\mathcal{H}}{\mathcal{G}}$. To see that $\psi$ is $c$-colour-prescribed, note that $c(\psi(i))=c((\varphi(i),i))=i$ for all $i \in V(\mathcal H)$.

To prove that $\homs{\mathcal{H}}{(\mathcal{G},c)}\neq \emptyset \,\Rightarrow\, \homs{\mathcal{A}}{\mathcal{B}}\neq \emptyset$, suppose $\psi\in \homs{\mathcal{H}}{(\mathcal{G},c)}$. Since $\psi$ is $c$-colour-prescribed, for each $i\in V(\mathcal{H})$ we have $\psi(i)=(x_i,i)$ for some $x_i \in U(\mathcal B)$.
Define $\varphi : U(\mathcal A) \to U(\mathcal B)$ by letting $\varphi(i)=x_i$ for all $i \in U(\mathcal A)$. We claim that $\varphi \in \homs{\mathcal{A}}{\mathcal{B}}$.
Let indeed $(i_1,\dots,i_\ell)\in R^\mathcal{A}_e$. By definition of $\mathcal{A}$ this implies $e=\{i_1,\dots,i_\ell\}\in E(\mathcal{H})$ and $i_1<\dots<i_\ell$. Since $\psi$ is $c$-color-prescribed, then $\psi(i_j)=(x_j,i_j)$ for all $j = 1,\ldots,\ell$; and since $\psi \in \homs{\mathcal H}{\mathcal G}$, then $ \{(x_1,i_1),\dots,(x_\ell,i_{\ell})\} \in E(\mathcal G)$. By definition of $\mathcal{G}$ this implies $(x_1,\dots,x_\ell)\in R^\mathcal{B}_e$, and since $\varphi(i_1,\dots,i_\ell)=(x_1,\dots,x_\ell)$, then $\varphi(i_1,\dots,i_\ell)\in R^\mathcal{B}_e$.

Finally, note that $|\mathcal H|=f(|\mathcal A|)$ and $a(\mathcal G) \le a(\mathcal H) \le |\mathcal H|$. Therefore $|\mathcal H|+a(\mathcal G)$ is a function of $|\mathcal A|$, hence the reduction preserves the parameter.
\end{proof}

\paragraph*{$\#\cphomsproba(\mathcal{C})\fptred\#\cpdirhomsprobd(\vec{C})$}

Recall that the contour of a digraph $\vec{H}$, denoted by $\reduct{\vec{H}}$, is obtained by deleting from the reachability hypergraph $\reach{\vec{H}}$ the vertices corresponding to sources in $\Hsim$ (see Definition~\ref{def:contour}). 
For our reduction we need to prove that the adaptive width of a contour is at least that of the original hypergraph.

\begin{lemma}\label{lem:reduct_aw}
Let $\vec{H}$ be a canonical DAG and let $\mathcal{R}$ be its reachability hypergraph. Then $\aw({\vec{H}})\geq \aw(\mathcal{R})$.
\end{lemma}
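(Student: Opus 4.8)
The plan is to exploit the variational form of adaptive width. Writing $\mathcal{R}:=\reach{\vec H}$ and recalling that $\aw(\mathcal{R})=\sup_{\mu}\min_{(\mathcal{T},\mathcal{B})}\fwidth{\mu}(\mathcal{T},\mathcal{B})$ (the supremum over fractional independent sets $\mu$ of $\mathcal{R}$, the minimum over tree decompositions of $\mathcal{R}$), and that $\aw(\vec H)=\aw(\reduct{\vec H})$ by definition, it suffices to show that for \emph{every} fractional independent set $\mu$ of $\mathcal{R}$ there is a tree decomposition of $\mathcal{R}$ of $\mu$-width at most $\aw(\reduct{\vec H})$. For the set-up, recall that a canonical DAG satisfies $\Hsim=\vec H$; let $s_1,\dots,s_k$ be its sources and $e_i=\reachv{s_i}=\{s_i\}\cup N^{+}(s_i)$ the hyperedges of $\mathcal{R}$. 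Since $s_i$ has indegree $0$ it is the only source contained in $e_i$, so $\reduct{\vec H}$ is precisely $\mathcal{R}$ with the vertices $s_1,\dots,s_k$ deleted, and its hyperedges are the non-empty sets $f_i:=e_i\setminus\{s_i\}=N^{+}(s_i)$. I also record the elementary fact that any hypergraph possessing a non-empty hyperedge has adaptive width at least $1$ (place weight $1$ on a single vertex of that hyperedge; every tree decomposition has a bag containing the whole hyperedge, hence a bag of weight $\ge 1$). Since $f_i\neq\emptyset$ exactly when $\vec H$ has an arc, I would dispose of the degenerate edgeless case separately and henceforth assume $\vec H$ has an arc, so that $\aw(\reduct{\vec H})\ge 1$.

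Now fix a fractional independent set $\mu$ of $\mathcal{R}$ and let $\mu^{*}$ be its restriction to the sinks $V(\reduct{\vec H})$. Because $f_i\subseteq e_i$ we have $\sum_{v\in f_i}\mu^{*}(v)\le\sum_{v\in e_i}\mu(v)\le 1$, so $\mu^{*}$ is a fractional independent set of $\reduct{\vec H}$; fix a tree decomposition $(\mathcal{T}^{*},\mathcal{B}^{*})$ of $\reduct{\vec H}$ attaining $\fwidth{\mu^{*}}(\mathcal{T}^{*},\mathcal{B}^{*})=w^{*}$ with $w^{*}\le\aw(\reduct{\vec H})$. From $(\mathcal{T}^{*},\mathcal{B}^{*})$ I would construct a tree decomposition $(\mathcal{T},\mathcal{B})$ of $\mathcal{R}$ by attaching, for each $i\in[k]$, a fresh leaf $\ell_i$ carrying the bag $e_i$ to some node $t_i$ with $f_i\subseteq B^{*}_{t_i}$ (such a node exists because $f_i$ is a hyperedge of $\reduct{\vec H}$; if $f_i=\emptyset$, attach $\ell_i$ to an arbitrary fixed node of $\mathcal{T}^{*}$). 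One verifies directly that $(\mathcal{T},\mathcal{B})$ is a tree decomposition of $\mathcal{R}$: every vertex of $\mathcal{R}$ lies in some $e_i$ (each vertex of a DAG is reachable from a source) and is therefore covered; each hyperedge $e_i$ of $\mathcal{R}$ is contained in $B_{\ell_i}$; and the connectivity condition holds because $s_i$ occurs in the single bag $B_{\ell_i}$ only, while a sink $v$ occurs exactly in its old connected subtree of $\mathcal{T}^{*}$ together with the pendant leaves $\ell_i$ with $v\in f_i\subseteq B^{*}_{t_i}$.

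Finally I would estimate the width of $(\mathcal{T},\mathcal{B})$. An old bag $B^{*}_t$ consists of sinks only, hence its $\mu$-weight equals its $\mu^{*}$-weight and is at most $w^{*}$; a new bag $e_i$ has $\mu$-weight $\sum_{v\in e_i}\mu(v)\le 1$ by the fractional-independence constraint for the hyperedge $e_i$. Therefore $\fwidth{\mu}(\mathcal{T},\mathcal{B})\le\max(w^{*},1)\le\aw(\reduct{\vec H})$, using $w^{*}\le\aw(\reduct{\vec H})$ and $\aw(\reduct{\vec H})\ge 1$. Taking the supremum over $\mu$ yields $\aw(\mathcal{R})\le\aw(\reduct{\vec H})=\aw(\vec H)$, which is the claim. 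The only genuinely delicate point is this last estimate: inserting the source-bags can in principle push the width up to $\max(w^{*},1)$ rather than $w^{*}$, but this is harmless precisely because the contour is non-empty and so has adaptive width at least $1$ anyway; everything else is a routine transfer of fractional independent sets and tree decompositions between $\mathcal{R}$ and its contour.
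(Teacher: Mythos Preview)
Your proof is correct and follows essentially the same approach as the paper's: restrict a fractional independent set of $\mathcal{R}$ to the contour, take an optimal tree decomposition of the contour for that restriction, and extend it by appending a leaf bag $e_i$ for each source, using that each such bag has $\mu$-weight at most $1\le\aw(\reduct{\vec H})$. The only cosmetic difference is that the paper re-inserts the sources one at a time via an inductive step (showing that adding a fresh private vertex to a single hyperedge cannot increase adaptive width), whereas you attach all the source bags in one shot; your direct argument is, if anything, slightly cleaner.
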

\begin{proof}
Recall that, by definition, $\aw({\vec{H}}) = \aw(\reduct{\vec{H}})$. Set $\mathcal{F}:=\reduct{\vec{H}}$ and let $s_1,\ldots,s_{\ell}$ be the sources of $\vec H$ and $e_1,\ldots,e_{\ell}$ the corresponding hyperedges in $\mathcal{R}$. Let $\mathcal F_0 = \mathcal R$, and for $i=1,\ldots,\ell$ let $\mathcal F_i$ be the hypergraph obtained from $\mathcal F_{i-1}$ by first deleting $s_i$ and then removing a copy of $e_i$ if more than one exist. Note that $\mathcal F_{\ell}=\mathcal F$. 
We prove that $\aw(\mathcal F_i) \ge \aw(\mathcal F_{i-1})$ for all $i=1,\ldots,\ell$, which implies $\aw(\mathcal F_{\ell}) \ge \aw(\mathcal F_0)$, that is, $\aw(\mathcal F) \ge \aw(\mathcal R)$. Since removing a copy of a multi-edge leaves adaptive with unchanged, we can assume $\mathcal F_i$ is obtained from $\mathcal F_{i-1}$ by just deleting $s_i$.

We rephrase and prove the claim $\aw(\mathcal F_i) \ge \aw(\mathcal F_{i-1})$ as follows. Let $\mathcal{H}$ be a hypergraph, let $e \in E(\mathcal{H})$, and let $\mathcal{H}'$ be the hypergraph obtained from $\mathcal{H}$ by adding a new vertex $v'$ and replacing $e$ with $e \cup \{v'\}$. We claim $\aw(\mathcal{H}) \ge \aw(\mathcal{H}')$.
To this end, we show that for every fractional independent set $\mu'$ of $\mathcal{H}'$ there exists a tree decomposition $(\mathcal{T}',\mathcal{B}')$ of $\mathcal{H}'$ such that $\fwidth{\mu'}(\mathcal{T}',\mathcal{B}') \leq \aw(\mathcal{H})$.
Let then $\mu'$ be a fractional independent set of $\mathcal{H}'$, and let $\mu$ be the restriction of $\mu'$ to $V(\mathcal{H})=V(\mathcal{H}')\setminus\{v'\}$. Note that $\mu$ is a fractional independent set of $\mathcal{H}$; thus, by definition of adaptive width, there is a tree decomposition $(\mathcal{T},\mathcal{B})$ of $\mathcal{H}$ such that $\fwidth{\mu}(\mathcal{T},\mathcal{B}) \le \aw(\mathcal H)$. By definition of tree decomposition, there exists $B_e\in \mathcal{B}$ such that $e \subseteq B_e$. Let $(\mathcal{T}',\mathcal{B}')$ be obtained from $(\mathcal{T},\mathcal{B})$ by appending appending the bag $B_{e'}=e'$ to $B_e$.

We claim that $(\mathcal{T}',\mathcal{B}')$ is a tree decomposition of $\mathcal{H'}$ and that $\fwidth{\mu'}(\mathcal{T}',\mathcal{B}') \leq \aw(\mathcal{H})$.
To see that $(\mathcal{T}',\mathcal{B}')$ is a tree decomposition of $\mathcal{H'}$ note that:
\begin{itemize}
    \item every edge of $\mathcal{H}'$ is a subset of some bag of $\mathcal B'$. Indeed, $e' \subseteq B_{e'}$, while every other edge is in $E(\mathcal H)$, and is thus a subset of some bag of $\mathcal{B}$ since $(\mathcal T, \mathcal{B})$ is a tree decomposition of $\mathcal H$
    \item for every $u \in V(\mathcal H')$ the subgraph $\mathcal{T}'_u$ of $\mathcal{T}'$ induced by the bags containing $u$ is connected. Indeed, if $u=v'$ then $\mathcal{T}'_u$ has only one vertex. Else, $\mathcal{T}'_u$ equals $\mathcal{T}_u$ if $u \notin e$ and $\mathcal{T}_u$ with an appended vertex otherwise. 
\end{itemize}
To prove $\fwidth{\mu'}(\mathcal{T}',\mathcal{B}')\leq \aw(\mathcal{H})$ consider any $B'\in \mathcal{B}'$. If $B'=e'$ then $\mu'(B') \le 1$ since $\mu'$ is a fractional independent set of $\mathcal H'$; and since $\aw \ge 1$, then $\mu'(B') \le \aw(\mathcal H')$. If $B'\ne e'$ then $B' \in \mathcal{B}$ and $v' \notin B'$, so $\mu'(B')=\mu(B')$. But $\mu(B') \le \fwidth{\mu}(\mathcal T,\mathcal B) \le \aw(\mathcal H)$, thus $\mu'(B') \le \aw(\mathcal H)$.
\end{proof}

We are ready for our parameterised Turing-reduction.
\begin{lemma}\label{lem:cphom_to_cpdirhom}
Let $\vec{C}$ be a recursively enumerable class of canonical DAGs and define $\hat{\mathcal{C}}= \{\Gamma(\vec{H}) \mid \vec{H}\in \vec{C}\}$.
Then $\#\cphomsproba(\hat{\mathcal{C}}) \fptred \#\cpdirhomsprobd(\vec{C}) $.
\end{lemma}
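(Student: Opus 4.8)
The plan is to realise the input hypergraph instance as a coloured digraph instance via a single gadget construction. Given an instance $(\mathcal{H},(\mathcal{G},c))$ of $\#\cphomsproba(\hat{\mathcal{C}})$, I would first recover a canonical DAG $\vec{H}\in\vec{C}$ with $\reduct{\vec{H}}=\mathcal{H}$: since $\vec{C}$ is recursively enumerable, one enumerates $\vec{C}$ and computes contours until a match is found, and since there are only finitely many hypergraphs of each encoding length, the size of the recovered $\vec{H}$ is bounded by a computable function of $|\mathcal{H}|$. Let $s_1,\dots,s_\ell$ be the sources of $\vec{H}$. As $\vec{H}$ is a canonical DAG, every non-source vertex is a sink, the sinks are exactly $V(\reduct{\vec{H}})=V(\mathcal{H})$ (we may assume without loss of generality that $\vec{H}$ has no isolated vertex), and the out-neighbourhood $e_i:=N^+(s_i)$ of each source is an edge of $\mathcal{H}$; conversely every edge of $\mathcal{H}$ arises in this way, possibly from several sources, because passing to the contour collapses parallel edges. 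As a harmless preprocessing step I would delete from $\mathcal{G}$ every hyperedge $f$ on which $c$ is not injective: a colour-prescribed homomorphism $\psi\colon\mathcal{H}\to(\mathcal{G},c)$ is injective on each edge of $\mathcal{H}$ since colours are pairwise distinct, so its edge images never hit such an $f$; after this step $a(\mathcal{G})\le a(\mathcal{H})$.

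Next I would build the $\vec{H}$-coloured host $(\vec{G}',c')$. For every sink $v$ of $\vec{H}$ and every $x\in V(\mathcal{G})$ with $c(x)=v$, add a vertex $(x,v)$ with $c'((x,v))=v$; these will be the sinks of $\vec{G}'$. For every source $s_i$ and every $f\in E(\mathcal{G})$ with $c(f)=e_i$, add a vertex $y_{i,f}$ with $c'(y_{i,f})=s_i$, together with an arc $(y_{i,f},(x,c(x)))$ for each $x\in f$; this is well defined since $c$ restricted to $f$ is injective with image $e_i$, so $f$ contains exactly one vertex of each colour in $e_i$. Every added arc projects under $c'$ to an arc $(s_i,v)$ of $\vec{H}$ with $v\in e_i$, so $c'$ is an $\vec{H}$-colouring and the instance is legal (if some colour class is empty the answer is $0$ and we return it without an oracle call). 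The host is a DAG of size polynomial in $|\mathcal{G}|+|\mathcal{H}|$, and its only vertices of positive outdegree are the $y_{i,f}$, whose outdegree equals $|f|=|e_i|\le a(\mathcal{H})\le|\mathcal{H}|$; hence $|\vec{H}|+d(\vec{G}')$ is bounded by a computable function of $|\mathcal{H}|$, and therefore of the input parameter $|\mathcal{H}|+a(\mathcal{G})$. Together with the fact that $\mathcal{H}$ determines $\vec{H}\in\vec{C}$, this makes the map a parameterised (indeed many-one) reduction to $\#\cpdirhomsprobd(\vec{C})$.

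It then remains to verify $\#\homs{\mathcal{H}}{(\mathcal{G},c)}=\#\homs{\vec{H}}{(\vec{G}',c')}$, which I would do by exhibiting mutually inverse maps. From $\psi\in\homs{\mathcal{H}}{(\mathcal{G},c)}$ define $\varphi$ by $\varphi(v)=(\psi(v),v)$ on sinks and $\varphi(s_i)=y_{i,\psi(e_i)}$ on sources; note that $\psi(e_i)$ is an edge of $\mathcal{G}$ whose $c$-image is $e_i$ (by colour-prescription), so $y_{i,\psi(e_i)}$ exists, and one checks directly that $\varphi$ is a colour-prescribed homomorphism $\vec{H}\to(\vec{G}',c')$. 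Conversely, from $\varphi\in\homs{\vec{H}}{(\vec{G}',c')}$, colour-prescription forces $\varphi(s_i)=y_{i,f_i}$ for some $f_i\in E(\mathcal{G})$ with $c(f_i)=e_i$ and $\varphi(v)=(x_v,v)$ on sinks; the arcs leaving $s_i$ then force $x_v$ to be the unique vertex of colour $v$ in $f_i$ for every $v\in e_i$, so $\psi(v):=x_v$ satisfies $\psi(e_i)=f_i\in E(\mathcal{G})$ and $c(\psi(v))=v$, i.e.\ $\psi\in\homs{\mathcal{H}}{(\mathcal{G},c)}$, and this construction is inverse to the first.

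I expect the one genuinely delicate point to be the consistency argument in the presence of collapsed edges: if two distinct sources $s_i\neq s_j$ satisfy $e_i=e_j$ — which is exactly what the passage to the contour hides — then a priori $\varphi$ could send $s_i$ and $s_j$ to $y$-vertices over different edges $f_i\neq f_j$ of $\mathcal{G}$; but since both $f_i$ and $f_j$ must carry the images $\{\varphi(v):v\in e_i\}$, we get $f_i=f_j$, and the bijection above goes through unchanged. Everything else is routine bookkeeping, and the size bound on the gadget together with the efficiency of all constructions yields the claimed FPT Turing reduction $\#\cphomsproba(\hat{\mathcal{C}})\fptred\#\cpdirhomsprobd(\vec{C})$.
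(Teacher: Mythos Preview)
Your construction is correct and is essentially the same gadget as the paper's: one source-coloured vertex per pair (source $s_i$, host edge $f$ with $c(f)=e_i$), wired to the sink-coloured copies of the elements of $f$, followed by the same bijection argument. Your preprocessing step of deleting hyperedges of $\mathcal{G}$ on which $c$ is not injective is a useful clarification that the paper glosses over (the paper simply asserts ``the injectivity of $c$ on $e$'' without justification). One small slip: you cannot assume without loss of generality that $\vec{H}$ has no isolated vertex, since $\vec{H}$ must lie in $\vec{C}$ for the oracle call; the easy fix is to add, for each isolated vertex of $\vec{H}$, a single dummy host vertex of that colour, which leaves the count unchanged.
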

\begin{proof}
Let $(\mathcal H, (\mathcal G, c))$ be the input to $\#\cphomsproba(\hat{\mathcal{C}})$. As $\mathcal H \in \{\Gamma(\vec{H}) | \vec{H}\in \vec{C}\}$ and $\vec C$ is recursively enumerable, for some computable $f$ in time $f(|\mathcal H|)$ we find $\vec H \in \vec C$ such that $\mathcal H = \Gamma(\vec H)$. Let $s_1,\ldots,s_{\ell}$ be the sources of $\vec H$, and for every $s \in \{s_1,\ldots,s_{\ell}\}$ let $e_s$ be the set of non-source vertices reachable from $s$. Note that $e_i \in E(\mathcal H)$. 

We construct an $\vec H$-coloured DAG $(\vec G, c')$ such that $d(\vec G) \le a(\mathcal G)$ and 
\[|\homs{\Gamma}{(\mathcal{G},c)}| = |\homs{\vec{H}}{(\vec{G},c')}| \,.\] First, since $c \in \homs{\mathcal G}{\mathcal H}$, then $c(e) \in E(\mathcal H)$ for every $e \in E(\mathcal G)$. Let $S_{c(e)}$ contain every $s \in \{s_1,\ldots,s_{\ell}\}$ such that $\{s\} \cup c(e)$ is the reachable set of $s$ in $\vec H$. This implies that $(s,v) \in E(\vec H)$ for all $s \in S_{c(e)}$ and $v \in c(e)$.
Then define:
\begin{align}
    V(\vec G) &= V(\mathcal G) \cup \{x_{e,s} : e \in E(\mathcal G), s \in S_{c(e)}\}\\
    E(\vec G) &= \{(x_{e,s},v) : e \in E(\mathcal G), s \in S_{c(e)}, v \in e\}
\end{align}
and:
\begin{align}
    c'(v) &= c(v) \; :\; v \in V(\mathcal G)\\
    c'(x_{e,s}) &= s \;:\; e \in E(\mathcal G), s \in S_{c(e)}
\end{align}
Observe that $d(\vec G) \le a(\mathcal G)$ and that $\vec G$ and $c'$ can be constructed in FPT time.

Let us show that $c' \in \homs{\vec G}{\vec H}$. Let $(u,v) \in E(\vec G)$. By construction $u=x_{e,s}$ for $e \in E(\mathcal G)$, $s \in S_{c(e)}$, $v \in e$. By definition $c'(x_{e,s})=s$ and $c'(v)=c(v)$. Thus $(c'(x_{e,s}),c'(v))=(s,c(v))$, and as observed above $(s,c(v)) \in E(\vec H)$. Thus $(c(u),c(v)) \in E(\vec H)$ as desired.

Now we give a bijection between $\homs{\mathcal H}{(\mathcal{G},c)} $ and $\homs{\vec{H}}{(\vec{G},c')}$, proving that $|\homs{\mathcal H}{(\mathcal{G},c)}| = |\homs{\vec{H}}{(\vec{G},c')}|$.
First, let $\varphi \in \homs{\mathcal H}{(\mathcal{G},c)}$, and define the following extension $\psi : V(\vec H) \to V(\vec G)$ of $\varphi$:
\begin{align}
    \psi(v) &= \varphi(v) \;:\; v \in V(\mathcal H) \\
    \psi(s) &= x_{\varphi(e_s),s} \;:\; s \in \{s_1,\ldots,s_{\ell}\}
\end{align}
We claim that $\psi \in \homs{\vec{H}}{(\vec{G},c')}$. First, let us show that $\psi \in \homs{\vec H}{\vec G}$. Let $(s,v) \in E(\vec H)$. Then $(\psi(s),\psi(v)) = (x_{\varphi(e_s),s},\varphi(v))$. Since $e_s \in E(\mathcal H)$ and $\varphi \in \homs{\mathcal H}{\mathcal G}$, then $\varphi(e_s) = e$ for some $e \in E(\mathcal G)$; let us then write $(\psi(s),\psi(v))=(x_{e,s},\varphi(v))$.
Now, since $(s,v) \in E(\vec H)$, then $v \in e_s$, which implies $\varphi(v) \in \varphi(e_s) = e$. By construction of $\vec G$ this implies $(x_{e,s},\varphi(v)) \in E(\vec G)$. Therefore $\psi \in \homs{\vec H}{\vec G}$. 
To show that $\psi \in \homs{\vec H}{(\vec{G},c')}$, note that for all $v \in V(\mathcal H)$ we have $c'(\psi(v))=c(\varphi(v))=v$ by definition of $c'$ and $c$; while for all $s \in \{s_1,\ldots,s_{\ell}\}$ we have $c'(\psi(s))=c'(x_{\varphi(e_s),s}) = s$. Hence $\psi \in \homs{\vec H}{(\vec{G},c')}$.

Next, let $\psi \in \homs{\vec H}{(\vec{G},c')}$, and define $\varphi$ as the restriction of $\psi$ to $V(\mathcal H)$. Note that this is the inverse of the extension defined above, and as a consequence it is also $c$-prescibed. 
Therefore to establish our bijection we need only to prove that $\varphi \in \homs{\mathcal H}{(\mathcal{G},c)}$.
Consider any edge $\epsilon \in E(\mathcal H)$. By construction, $\epsilon=e_s$ for some $s \in \{s_1,\ldots,s_{\ell}\}$. Since $\psi \in \homs{\vec H}{\vec G}$, then $(\psi(s),\psi(v))\in E(\vec G)$ for all $v \in e_s$. By construction of $\vec G$ and since $\psi$ is $c'$-prescribed, this implies that $\psi(s) = x_{e,s}$ and $\psi(e_s) \subseteq e$ for some $e \in E(\mathcal G)$. The injectivity of $c$ and thus $c'$ on $e$, however, implies that $|e| \le |\psi(e_s)|$. Therefore, $\psi(e_s) = e$ and thus $\varphi(e_s) = \psi(e_s) \in E(\mathcal G)$.
\end{proof}

We are now able to prove Lemma~\ref{lem:coloured_CSP_reduction}; recall that we are required to show that $\#\cpdirhomsprobd(\vec{C})$ is (fixed-parameter) intractable whenever $\vec{C}$ is a class of canonical DAGs of unbounded adaptive width.
\begin{proof}[Proof of Lemma~\ref{lem:coloured_CSP_reduction}]
Let $\hat{\mathcal{C}} := \{\reduct{\vec{H}}~|~\vec{H}\in \vec{C} \}$ and note that $\hat{\mathcal{C}}$ does not have isolated vertices (a vertex contained in a hyperedge of cardinality $1$ is not isolated). Recall further that $\mathcal{S}[\hat{\mathcal{C}}] = \{\mathcal{A}[\mathcal{H}] ~|~\mathcal{H} \in \hat{\mathcal{C}} \}$. By Lemma~\ref{lem:cphoms_hard} and Lemma~\ref{lem:cphom_to_cpdirhom}, we have
\begin{align}
    \homsprob(\mathcal{S}[\hat{\mathcal{C}}]) 
    \fptred
    \#\cphomsproba(\hat{\mathcal{C}})
    \fptred
    \#\cpdirhomsprobd(\vec{C}) \label{eq:fptred_chain_in_proof}
\end{align}
\end{proof}
Since $\vec{C}$ has unbounded adaptive width, we obtain by Lemma~\ref{lem:reduct_aw} that $\hat{\mathcal{C}}$ and thus $\mathcal{S}[\hat{\mathcal{C}}]$ has unbounded adaptive width as well. By Claim~\ref{clm:hom_minimal}, each structure in $\mathcal{S}[\hat{\mathcal{C}}]$ is furthermore minimal under homomorphic equivalence. By Theorem~\ref{thm:homS_hard}, the problem $\homsprob(\mathcal{S}[\hat{\mathcal{C}}]) $ is thus not fixed-parameter tractable, unless ETH fails. The proof can thus be concluded by applying the chain of reductions~\eqref{eq:fptred_chain_in_proof}.

\subsection{The Intractability Result}
In combination with our MR minor operations and the removal of colours, we obtain the following result, which will be the basis of every intractability result in the forthcoming sections.
\begin{lemma}\label{lem:main_hardness_homs}
Let $\vec{C}$ be a recursively enumerable class of digraphs and let $\vec{C}'$ be a class of canonical DAGs that are MR minors of digraphs in $\vec{C}$. If $\vec{C}'$ has unbounded adaptive width, then $\#\dirhomsprobd(\vec{C})$ is not fixed-parameter tractable, unless ETH fails.
\end{lemma}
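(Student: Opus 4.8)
The plan is to obtain this hardness result purely by composing the three ingredients already in hand: the hardness of colour-prescribed homomorphism counting for canonical DAGs of unbounded adaptive width (Lemma~\ref{lem:coloured_CSP_reduction}), the monotonicity of $\#\cpdirhomsprobd$ under MR~minors (Lemma~\ref{lem:minor_reduction}), and the removal of colours (Lemma~\ref{lem:colour_removal}). No new combinatorics is needed; the only genuine point of care is a recursive-enumerability issue, which is dealt with by slightly enlarging $\vec{C}'$.

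First I would argue that we may assume, without loss of generality, that $\vec{C}'$ is recursively enumerable. Let $\vec{D}$ be the class of \emph{all} canonical DAGs obtainable from a digraph in $\vec{C}$ by a finite sequence of MR-minor operations (sink deletions, contractions, loop deletions). Since $\vec{C}$ is recursively enumerable, one can enumerate $\vec{C}$ and, for each $\vec{H}\in\vec{C}$, enumerate all finite sequences of MR-minor operations applicable to $\vec{H}$, outputting the resulting digraph whenever it is a canonical DAG; hence $\vec{D}$ is recursively enumerable. By hypothesis every element of $\vec{C}'$ is a canonical DAG that is an MR minor of some digraph in $\vec{C}$, so $\vec{C}'\subseteq\vec{D}$; consequently $\vec{D}$ still has unbounded adaptive width, and $\vec{D}$ is itself a class of canonical DAGs that are MR minors of digraphs in $\vec{C}$. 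Replacing $\vec{C}'$ by $\vec{D}$, we may henceforth assume $\vec{C}'$ is recursively enumerable.

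Now I would chain the reductions. Applying Lemma~\ref{lem:coloured_CSP_reduction} to the (recursively enumerable) class $\vec{C}'$ of canonical DAGs of unbounded adaptive width, $\#\cpdirhomsprobd(\vec{C}')$ is not fixed-parameter tractable unless ETH fails. Applying Lemma~\ref{lem:minor_reduction} with the recursively enumerable class $\vec{C}$ and the class $\vec{C}'$ of MR minors of digraphs in $\vec{C}$ gives $\#\cpdirhomsprobd(\vec{C}')\fptred\#\cpdirhomsprobd(\vec{C})$, and Lemma~\ref{lem:colour_removal} gives $\#\cpdirhomsprobd(\vec{C})\fptred\#\dirhomsprobd(\vec{C})$. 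Since parameterised Turing reductions compose, we conclude $\#\cpdirhomsprobd(\vec{C}')\fptred\#\dirhomsprobd(\vec{C})$; as the source problem is not in $\ccFPT$ unless ETH fails, the same holds for $\#\dirhomsprobd(\vec{C})$, which is the claim.

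The step I expect to require the most attention is not technically hard but easy to get wrong: ensuring that all input restrictions of the cited lemmas are met after the (potential) replacement of $\vec{C}'$ by $\vec{D}$ — namely that $\vec{D}$ simultaneously (i) consists of canonical DAGs, (ii) consists of MR minors of digraphs in the \emph{original} $\vec{C}$, (iii) is recursively enumerable, and (iv) inherits unbounded adaptive width from $\vec{C}'$. All four are immediate from the definition of $\vec{D}$, but they are exactly the hypotheses that Lemmas~\ref{lem:coloured_CSP_reduction} and~\ref{lem:minor_reduction} need, so they should be verified explicitly. Everything else is a routine composition of FPT Turing reductions.
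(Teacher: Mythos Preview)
Your proposal is correct and follows exactly the same chain of reductions as the paper's proof: Lemma~\ref{lem:coloured_CSP_reduction} for hardness of $\#\cpdirhomsprobd(\vec{C}')$, then Lemma~\ref{lem:minor_reduction} and Lemma~\ref{lem:colour_removal} to reach $\#\dirhomsprobd(\vec{C})$. In fact you are more careful than the paper on one point: Lemma~\ref{lem:coloured_CSP_reduction} requires its input class to be recursively enumerable, and the paper's proof applies it to $\vec{C}'$ without comment, whereas you explicitly pass to the recursively enumerable superclass $\vec{D}$ of all canonical-DAG MR minors of $\vec{C}$ to ensure this hypothesis is met.
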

\begin{proof}
By Lemma~\ref{lem:coloured_CSP_reduction}, the problem $\#\cpdirhomsprobd(\vec{C}')$ is fixed-parameter intractable, unless ETH fails. By Lemma~\ref{lem:minor_reduction} and Lemma~\ref{lem:colour_removal}, we have that
\[ \#\cpdirhomsprobd(\vec{C}') \fptred \#\cpdirhomsprobd(\vec{C}) \fptred \#\dirhomsprobd(\vec{C})\,,\]
which concludes the proof.
\end{proof}

\section{Counting Subgraphs}
The following invariant will turn out to precisely capture the complexity for counting directed subgraphs in bounded outdegree graphs.
\begin{definition}[Fractional Cover Number]
A \emph{fractional cover} of a DAG $\vec{H}$ with sources $S$ is a function $\psi:S \to [0,\infty]$ such that for every vertex $v\in V(\vec{H})\setminus S$ we have
\[\sum_{\substack{s \in S \\ v\in R(s)}} \psi(s) \geq 1 \,.\]
The \emph{weight} of a fractional cover is $\sum_{s\in S} \psi(s)$, and we also require that the weight is at least $1$.\footnote{The technical reason for this is the corner case of $V(\vec{H})\setminus S$ being empty.}
The \emph{fractional cover number} of $\vec{H}$, denoted by $\rho^\ast(\vec{H})$, is the minimum weight of a fractional cover of $\vec{H}$.
The fractional cover number of a digraph $\vec{H}$ (not necessarily acyclic) is  $\rho^\ast(\vec{H})=\rho^\ast(\Hsim)$.
\end{definition}
In the previous definition we overloaded the symbol $\rho^\ast$, which also denotes the fractional edge cover number of hypergraphs. The motivation for reusing the symbol stems from the following observation:
\begin{observation}\label{obs:fractional_equivalence}
Let $\vec{H}$ be a digraph. We have $\rho^\ast(\vec{H}) = \rho^\ast(\reduct{\Hsim})$, that is, the fractional cover number of $\vec{H}$ is equal to the fractional edge cover number of the contour of the DAG $\Hsim$.
\end{observation}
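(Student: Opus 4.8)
The plan is to unwind both definitions and exhibit a weight-preserving dictionary between fractional covers of the DAG $\Hsim$ and fractional edge covers of its contour $\reduct{\Hsim}$. Since $\rho^\ast(\vec{H})$ is \emph{defined} as $\rho^\ast(\Hsim)$, i.e.\ as the minimum weight of a fractional cover of the DAG $\Hsim$, it suffices to prove the following statement for an arbitrary DAG $\vec{D}$: the minimum weight of a fractional cover of $\vec{D}$ (call it $\rho^\ast(\vec{D})$) equals the fractional edge cover number $\rho^\ast(\reduct{\vec{D}})$ of its contour. Applying this to $\vec{D}=\Hsim$, and noting $\reduct{\vec{D}}=\reduct{\Hsim}$, then gives the observation.

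First I would record two elementary structural facts. Let $S$ be the source set of $\vec{D}$ and $U=V(\vec{D})\setminus S$ the set of non-source vertices. Because a source has indegree $0$, it is reachable only from itself, so $R(s)$ contains no source other than $s$; hence deleting the sources turns the hyperedge $R(s)$ of $\reach{\vec{D}}$ into $\hat e_s:=R(s)\setminus S$, and $\reduct{\vec{D}}$ is precisely the hypergraph with vertex set $U$ and hyperedge set $\{\hat e_s : s\in S\}$ (with empty sets and repetitions dropped, as hypergraphs carry neither). Second, in a finite DAG every non-source vertex reaches a source by walking backwards along in-edges, so $\bigcup_{s\in S}\hat e_s=U$; in particular $\reduct{\vec{D}}$ admits fractional edge covers. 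Note also that for $v\in U$ and $s\in S$ one has $v\in\hat e_s$ if and only if $v\in R(s)$.

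Now for the two inequalities. Given a minimum-weight fractional cover $\psi\colon S\to[0,\infty]$ of $\vec{D}$, set $\gamma(e)=\sum_{s\in S:\ \hat e_s=e}\psi(s)$ for each hyperedge $e$ of $\reduct{\vec{D}}$; then for every $v\in U$ we get $\sum_{e\ni v}\gamma(e)=\sum_{s:\ v\in R(s)}\psi(s)\ge 1$, so $\gamma$ is a fractional edge cover of $\reduct{\vec{D}}$ of weight $\sum_e\gamma(e)\le\sum_{s}\psi(s)$, giving $\rho^\ast(\reduct{\vec{D}})\le\rho^\ast(\vec{D})$. Conversely, given a minimum-weight (hence finite) fractional edge cover $\gamma$ of $\reduct{\vec{D}}$, pick for each hyperedge $e$ one source $s_e$ with $\hat e_{s_e}=e$ (distinct hyperedges get distinct sources, since $s_e$ determines $e$), put $\psi(s_e)=\gamma(e)$ and $\psi(s)=0$ for every source not arising this way; then for $v\in U$ we have $\sum_{s:\ v\in R(s)}\psi(s)=\sum_{e\ni v}\gamma(e)\ge 1$, so $\psi$ is a fractional cover of $\vec{D}$ of the same weight $\sum_e\gamma(e)$, which is $\ge 1$ as soon as $U\ne\emptyset$. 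This yields $\rho^\ast(\vec{D})\le\rho^\ast(\reduct{\vec{D}})$ and hence equality. The only point separating this from a complete argument is the degenerate case $U=\emptyset$ — equivalently, $\vec{H}$ has no arcs between its strongly connected components — which I would dispatch directly by unwinding the conventions pinning down $\rho^\ast$ on the (empty) contour and the weight-$\ge 1$ clause in the definition of the fractional cover number. I do not expect any genuine obstacle: the entire content is the correspondence ``source of $\Hsim$ $\leftrightarrow$ hyperedge of $\reduct{\Hsim}$'' together with the fact that the non-sources of a DAG are exactly the vertices reachable from some source.
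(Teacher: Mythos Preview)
Your proof is correct and is precisely the definition-unwinding the paper intends—the statement is labelled an Observation and given no proof there; the correspondence ``sources of $\Hsim$ $\leftrightarrow$ hyperedges of $\reduct{\Hsim}$'' together with $v\in R(s)\Leftrightarrow v\in\hat e_s$ for non-sources $v$ is all there is to it.

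One caveat on the corner case you flag: with the paper's conventions it is not actually dispatched by ``unwinding the conventions.'' When $U=\emptyset$ the weight-$\ge 1$ clause in the definition of the fractional cover number forces $\rho^\ast(\vec D)=1$, whereas the contour is the empty hypergraph and has fractional edge cover number $0$, so the identity literally fails in that degenerate case. This is a harmless definitional artefact—every use of the observation in the paper is for classes of \emph{unbounded} fractional cover number—but your expectation that it resolves cleanly is slightly optimistic.
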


\subsection{Upper Bounds}\label{sec:sub_UB}
To avoid notational clutter, given a fractional cover $\psi$ of $\Hsim$, we will call $\psi$ also a fractional cover of $\vec{H}$.

\begin{lemma}\label{lem:main_technical_subs}
Let $\vec{H}$ be a digraph and let $\vec{H}'$ be a quotient of $\vec{H}$. Then $\rho^\ast(\vec{H}')\leq \rho^\ast(\vec{H})$.
\end{lemma}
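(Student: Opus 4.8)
The plan is to reduce to the acyclic case and then transport a fractional cover along a natural quotient map. By the very definition of $\rho^\ast$ for digraphs we have $\rho^\ast(\vec H)=\rho^\ast(\Hsim)$ and $\rho^\ast(\vec H')=\rho^\ast(\SCCquot{H'})$, so it suffices to build, from any fractional cover $\psi$ of the DAG $\Hsim$, a fractional cover $\psi'$ of the DAG $\SCCquot{H'}$ of no larger weight. The key structural observation is that $\SCCquot{H'}$ is itself (a loop-free quotient of) $\Hsim$: writing $p\colon V(\vec H)\to V(\vec H')$ for the quotient map defining $\vec H'$, the composition of $p$ with the SCC-collapse of $\vec H'$ is constant on each strongly connected component of $\vec H$, since a directed cycle through $x,y$ in $\vec H$ maps to a closed walk through the images and so those images lie in one SCC of $\vec H'$; hence this composition factors through the SCC-collapse $V(\vec H)\to V(\Hsim)$ and induces a surjection $q\colon V(\Hsim)\to V(\SCCquot{H'})$. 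I will then check the two routine facts about $q$: every arc of $\Hsim$ is sent by $q$ to an arc of $\SCCquot{H'}$ or to a pair of identified vertices, and every arc of $\SCCquot{H'}$ lifts to an arc of $\Hsim$ between the corresponding fibres. In particular $q$ preserves reachability in the forward direction (paths map to walks).

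Given $q$, the second ingredient is a pushforward of the cover. Every vertex of the DAG $\SCCquot{H'}$ is reachable from some source of $\SCCquot{H'}$, so for each source $S$ of $\Hsim$ I fix a source $\sigma(S)$ of $\SCCquot{H'}$ with $q(S)\in R(\sigma(S))$, and I define $\psi'(c):=\sum_{S:\sigma(S)=c}\psi(S)$ for every source $c$ of $\SCCquot{H'}$. Then $\psi'$ is supported on sources of $\SCCquot{H'}$ and its total weight equals $\sum_S\psi(S)$, the weight of $\psi$ (in particular it is at least $1$). To verify the covering inequality, let $c'$ be a non-source vertex of $\SCCquot{H'}$. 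Lifting an incoming arc of $c'$ via $q$ produces a vertex $C$ of $\Hsim$ in the fibre $q^{-1}(c')$ that has an incoming arc in $\Hsim$, i.e.\ a non-source $C$ of $\Hsim$. Applying the covering inequality of $\psi$ at $C$ and using that each source $S$ of $\Hsim$ with $C\in R(S)$ satisfies $c'=q(C)\in R(q(S))\subseteq R(\sigma(S))$, we get
\[
\sum_{c} \psi'(c)\;\ge\;\sum_{S}\psi(S)\;\ge\;1,
\]
where the left sum is over sources $c$ of $\SCCquot{H'}$ with $c'\in R(c)$ and the right sum over sources $S$ of $\Hsim$ with $C\in R(S)$; the first inequality holds because every such $S$ contributes to the term indexed by $\sigma(S)$ and all weights are non-negative. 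Taking $\psi$ of minimum weight then gives $\rho^\ast(\SCCquot{H'})\le\rho^\ast(\Hsim)$, i.e.\ $\rho^\ast(\vec H')\le\rho^\ast(\vec H)$.

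The main obstacle is precisely that a quotient can create reachabilities not present in $\vec H$: reachability in $\SCCquot{H'}$ does \emph{not} lift back to $\Hsim$, so there is no one-to-one correspondence between sources of $\Hsim$ and sources of $\SCCquot{H'}$, and the contour of $\SCCquot{H'}$ is in general not obtained from that of $\Hsim$ by merely identifying hypergraph vertices. The construction above circumvents this by using only that $q$ preserves reachability in the forward direction, by rerouting each source's weight to an arbitrary source-ancestor of its image, and by the observation that every non-source of $\SCCquot{H'}$ still has a non-source in its fibre (which is what licenses invoking an honest covering inequality of $\psi$). A minor technical point to dispatch is loops: quotients and arc contractions may create loops, but these are removed in passing to $\SCCquot{\cdot}$ and are irrelevant to sources, reachability and the contour, so they do not affect the argument.
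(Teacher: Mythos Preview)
Your proof is correct and takes a genuinely different route from the paper's. The paper reduces to the case of identifying a single pair $u,v$ of vertices and then performs a four-case analysis according to whether each of $u,v$ lies in a source component of $\Hsim$; in each case it explicitly redefines the cover on the affected sources and verifies the inequality by hand. Your argument instead works in one shot with the factored map $q\colon V(\Hsim)\to V(\SCCquot{H'})$, pushes the cover forward via a choice of source-ancestor $\sigma(S)$ for each source $S$, and uses only two structural facts: that $q$ preserves forward reachability, and that every non-source of $\SCCquot{H'}$ has a non-source preimage in $\Hsim$ (which you obtain by lifting an incoming arc). This is more conceptual and entirely avoids the case split; it also makes transparent \emph{why} the inequality holds, namely that every covering contribution at a non-source $C$ of $\Hsim$ is automatically redirected to a source of $\SCCquot{H'}$ that still sees $q(C)$. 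The paper's approach, by contrast, gives finer information in each case (for instance, in the case where a source $S_1$ is absorbed into a non-source reachable from $S_2,\dots,S_t$, it shows exactly how to redistribute $\psi(S_1)$ evenly among $S_2,\dots,S_t$), but that extra granularity is not needed for the lemma as stated.
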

\begin{proof}
It suffices to show the following: Let $u$ and $v$ be (not necessarily adjacent) vertices of $\vec{H}$ and let $\vec{H}'$ be the graph obtained by identifying $u$ and $v$, that is, $\vec{H}'=\vec{H}/\sigma$, where $\sigma$ is the partition of $V(\vec{H})$ containing a block $\{u,v\}$ and singleton blocks $\{w\}$ for each $w\notin\{u,v\}$. Then $\rho^\ast(\vec{H}')\leq \rho^\ast(\vec{H})$.

To prove the previous claim, let $S_1,\dots,S_k$ be the sources of $\Hsim$ and let $W_1,\dots,W_\ell$ be the non-sources of $\Hsim$, that is $V(\Hsim)=\{S_1,\dots,S_k,W_1,\dots,W_\ell \}$. Thus $V(\vec{H})=\mathcal{S}\dot\cup\mathcal{W}$ where $\mathcal{S}:= S_1\cup\dots\cup S_k$ and $\mathcal{W}:= W_1\cup\dots\cup W_k$. In particular, the $S_i$ and the $W_i$ are the strongly connected components of $\vec{H}$.

Now let $\psi$ be a fractional cover of ${\Hsim}$ of weight $\rho^\ast(\vec{H})$.

We will proceed with a case-distinction on whether $u$ and $v$ (or both) are contained in the set of sources $\mathcal{S}:= S_1\cup\dots\cup S_k$. In each case, we will define a fractional cover $\psi'$ of $\SCCquot{H'}$ of total weight at most $\rho^\ast(\vec{H})$. Note that we will always assume that $u$ and $v$ are in distinct strongly connected components, since otherwise, $\Hsim = \SCCquot{H'}$ and thus the fractional cover number does not change. 

To avoid confusion, we denote $\mathbf{S}=\{S_1,\dots,S_k\}$ as the set of sources of $\Hsim$ and we denote $\mathbf{S}'$ as the set of sources of $\SCCquot{H'}$. Similarly, we denote $\mathbf{W}=\{W_1,\dots,W_k\}$ as the set of non-sources of $\Hsim$ and we denote $\mathbf{W}'$ as the set of non-sources of $\SCCquot{H'}$. Furthermore, given $S\in\mathbf{S}$ we write $R(S)$ for the set of vertices reachable from $S$ in $\Hsim$, and, given $S'\in \mathbf{S}'$ we write $R'(S')$ for the set of vertices reachable from $S'$ in $\SCCquot{H'}$.
\begin{itemize}
    \item[($u\in \mathcal{S}, v\in \mathcal{S}$)] Assume w.l.o.g.\ that $u\in S_1$ and $v\in S_2$. By identifying $u$ and $v$ we merge the strongly connected components $S_1$ and $S_2$; let us denote the resulting component by $\hat{S}$. Thus, $\SCCquot{H'}$ is obtained from $\Hsim$ by identifying $S_1$ and $S_2$, and calling the resulting vertex $\hat{S}$. Hence $\mathbf{S}'=\{\hat{S},S_3,\dots,S_k\}$ and $\mathbf{W}=\mathbf{W}'$. Now, for $S'\in\mathbf{S}'$, we define
    \[\psi'(S'):= \begin{cases} \psi(S') & S' \in \{S_3,\dots,S_k\} \\ \psi(S_1)+\psi(S_2) & S'=\hat{S}
    \end{cases}
    \]
    Clearly, we have
    \[ \sum_{S'\in \mathbf{S}'} \psi'(S') = \psi'(\hat{S}) +\sum_{i=3}^k \psi'(S_i) = \psi(S_1)+\psi(S_2) +\sum_{i=3}^k \psi'(S_i)= \sum_{S \in\mathbf{S}} \psi(S) = \rho^\ast(\vec{H})\,.\]
    It remains to show that $\psi'$ is a fractional cover of $\SCCquot{H'}$. To this end, let $W$ be a non-source of $\SCCquot{H'}$. Note that $W\in R(S_i)$ if and only if $W\in R'(S_i)$ for all $i\in\{3,\dots,k\}$. If $W$ is not reachable from $\hat{S}$, then $W$ is not reachable from either of $S_1$ or $S_2$ in $\Hsim$. Thus
    \[\sum_{\substack{S'\in \mathbf{S}'\\W \in R'(S')}} \psi'(S') = \sum_{\substack{S\in \mathbf{S}\\W \in R(S)}} \psi'(S) =\sum_{\substack{S\in \mathbf{S}\\W \in R(S)}} \psi(S) \geq 1\,.
    \]
    Otherwise, we have
    \[\sum_{\substack{S'\in \mathbf{S}'\\W \in R'(S')}} \psi'(S') = \psi(S_1) + \psi(S_2) + \sum_{\substack{S\in \mathbf{S}\\W \in R(S)\setminus\{S_1,S_2\}}} \psi'(S)  \geq \sum_{\substack{S\in \mathbf{S}\\W \in R(S)}} \psi(S) \geq 1\,,
    \]
    \item[($u\in \mathcal{S}, v\notin \mathcal{S}$)] Assume w.l.o.g.\ that $u \in S_1$ and let $V$ be the strongly connected component containing $v$. We have to consider the following two subcases:
    \begin{itemize}
        \item[\textbf{Case 1:}] The only source in $\Hsim$ from which $V$ can be reached is $S_1$. This means that $\SCCquot{H'}$ is obtained from $\Hsim$ by contracting to $S_1$ each $W$ that is reachable from $S_1$ and from which $V$ can be reached; this includes of course $S_1$ and $V$. The resulting vertex $\hat{S}$ is a source of $\SCCquot{H'}$ and, clearly, any fractional cover $\psi$ of $\Hsim$ becomes a fractional cover $\psi'$ of $\SCCquot{H'}$ by setting $\psi'(\hat{S})=\psi(S_1)$ and $\psi'(S_i)=\psi(S_i)$ for $i\geq 2$. Thus $\rho^\ast(\vec{H}')\leq \rho^\ast(\vec{H})$.
        
        \item[\textbf{Case 2:}] There is are sources $S_2,\dots,S_t$ in $\Hsim$ different from $S_1$ from which $V$ can be reached; specifically, assume that $S_2,\dots,S_t$ are all sources with this property. The identification of $u$ and $v$ in $\vec{H}$ then corresponds to the following operation in $\SCCquot{H'}$:
        \begin{itemize}
            \item[(a)] Similarly as in Case 1, we obtain a new vertex, called $\hat{W}$, by contracting all vertices in $\SCCquot{H'}$ that are reachable from $S_1$ and from which $V$ can be reached. 
            \item[(b)] In contrast to Case 1, $\hat{W}$ is not a source, since there will be arcs from $S_2,\dots,S_t$ to $\hat{W}$. However, $\hat{W}$ is not reachable from any source $S_i$ with $t<i\leq k$. 
        \end{itemize}
        Observe that the following holds for all vertices $W\in\mathbf{W}$ of $\Hsim$ that were not contracted to $\hat{W}$ in (a): If $W$ is reachable from $S_1$ in $\Hsim$, then $W$ is reachable from all sources $S_2,\dots,S_t$ in $\SCCquot{H'}$. 
        
        Noting that $\mathbf{S}'=\mathbf{S}\setminus\{S_1\}=\{S_2,\dots,S_t,\dots,S_k\}$, we define $\psi'$ as follows:
        \[\psi'(S'):= \begin{cases} \psi(S') & S'\in \{S_{t+1},\dots,S_k\} \\ \psi(S')+\psi(S_1)/(t-1) & S'\in \{S_2,\dots,S_t\}
    \end{cases}
    \]
    First we observe that, clearly,
    \[\sum_{S'\in\mathbf{S'}} \psi'(S') = \sum_{S\in\mathbf{S}} \psi(S) =\rho^\ast(\vec{H}) \,.\]
    Hence it remains to show that $\psi'$ is a fractional cover. To this end, let $W\in\mathbf{W}'$. Assume first that $W=\hat{W}$ and note that $\hat{W}$ is reachable from $S_2,\dots,S_t$ in $\SCCquot{H'}$: If it would be reachable from $S_i$ with $i>t$, then $V$ would have been reachable in $\Hsim$ from $S_i$, contradicting our choice of the $S_2,\dots,S_t$. Note further that $V$ is reachable from (precisely) $S_1,\dots,S_t$ in $\vec{H}$. Since $\psi$ is a fractional cover of $\Hsim$, we have
    \[\sum_{i=1}^t \psi(S_i) \geq 1 \,.\]
    Hence, we have that
    \[ \sum_{\substack{S'\in \mathbf{S}'\\\hat{W} \in R'(S')}} \psi'(S') = \sum_{i=2}^t \psi'(S_i) = \sum_{i=1}^t \psi(S_i) \geq 1\,.\]
    Next, assume that $W\neq \hat{W}$. For each $i\geq 2$, if $W$ is reachable from $S_i$ in $\SCCquot{H'}$ then $W$ is also reachable from $S_i$ in $\Hsim$. Thus, if $W$ is not reachable from $S_1$ in $\Hsim$, then 
    \[ \sum_{\substack{S'\in \mathbf{S}'\\W \in R'(S')}} \psi'(S') =  \sum_{\substack{S\in \mathbf{S}\\W \in R(S)}} \psi'(S)\geq 1\,.\]
    Finally, if $W$ is reachable from $S_1$ in $\Hsim$, we recall that $W$ must be reachable from all $S_2,\dots,S_t$ in $\SCCquot{H'}$. Since our definition of $\psi'$ adds to the value of those sources $\psi(S_1)/(t-1)$, we have
    \[ \sum_{\substack{S'\in \mathbf{S}'\\W \in R'(S')}} \psi'(S') = (t-1) \cdot \psi(S_1)/(t-1) + \sum_{\substack{S\in \mathbf{S}\setminus\{S_1\}\\W \in R(S)}} \psi(S) = \sum_{\substack{S\in \mathbf{S}\\W \in R(S)}} \psi(S)  \geq 1\,.\]
    This concludes Case 2.
    \end{itemize}
    \item[($u\notin \mathcal{S}, v\in \mathcal{S}$)] Symmetric to the previous case.
    \item[($u\notin \mathcal{S}, v\notin \mathcal{S}$)] Let $U$ and $V$ be the strongly connected components of $\vec{H}$ containing $u$ and $v$, respectively. Then $\SCCquot{H'}$ is obtained from $\Hsim$ by contracting $U$ and $V$, and all vertices between them, to a single vertex; here, a vertex $X$ is ``between'' $U$ and $V$ if there is a directed path from $U$ to $V$ (or vice versa) that contains $X$. Let us call the resulting vertex $\hat{W}$. Note that $\mathbf{S}=\mathbf{S}'$, that is, $\Hsim$ and $\SCCquot{H'}$ have the same sources. Note further that, for every $i\in\{1,\dots,k\}$ and non-source $W\neq\hat{W}$, if $W$ is reachable from $S_i$ in $\Hsim$, then it is also reachable from $S_i$ in $\SCCquot{H'}$. Furthermore, $\hat{W}$ is reachable from a source $S_i$ in $\SCCquot{H'}$ if one of the vertices that was contracted to $\hat{W}$ was reachable from $S_i$ in $\Hsim$. Thus, every fractional cover of $\Hsim$ must also be a fractional cover of $\SCCquot{H'}$, concluding this case.
\end{itemize}
With all cases resolved, the proof is complete.
\end{proof}

\begin{lemma}\label{lem:subs_quotients_small_fhtw}
Let $\vec{H}$ be a digraph. Then $\mathsf{fhtw}(\reach{\vec{H}})\leq \rho^\ast(\vec{H})$.
\end{lemma}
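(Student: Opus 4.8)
The plan is to exhibit one explicit tree decomposition of $\reach{\vec{H}}$ whose $\rho^\ast$-width is at most $\rho^\ast(\vec{H})$. Write $S_1,\dots,S_k \subseteq V(\vec{H})$ for the strongly connected components of $\vec{H}$ that are sources of $\Hsim$, pick $s_i \in S_i$, and recall that $\reach{\vec{H}}$ has vertex set $V(\vec{H})$ and hyperedges $e_i = \reachv{s_i}$ for $i\in[k]$. The one structural fact I would record first is that $e_i \cap (S_1 \cup \dots \cup S_k) = S_i$ for each $i$: the inclusion ``$\supseteq$'' holds because $S_i$ is strongly connected and $s_i\in S_i$, while for ``$\subseteq$'', a vertex $v\in e_i\cap S_j$ with $j\neq i$ would witness that the source component $S_j$ is reachable from $S_i$ in $\Hsim$, giving $S_j$ positive indegree in $\Hsim$ — impossible. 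In particular $e_i \setminus (S_1\cup\dots\cup S_k) = e_i\setminus S_i$ is precisely the $i$-th hyperedge of the contour $\reduct{\vec{H}}$.

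Next I would take $\mathcal{T}$ to be a star with centre $t_0$ and leaves $t_1,\dots,t_k$, and set $B_{t_0} := V(\vec{H})\setminus(S_1\cup\dots\cup S_k)$ and $B_{t_i} := e_i$. The tree-decomposition axioms are immediate: the bags cover $V(\vec{H})$ since $B_{t_0}$ contains all non-source vertices and $B_{t_i}\supseteq S_i$; every hyperedge $e_i$ is contained in $B_{t_i}$; and the running-intersection condition holds because, by the fact above, a vertex in $S_i$ lies only in $B_{t_i}$, whereas a non-source vertex lies in $B_{t_0}$ together with a (possibly empty) set of leaf bags, which forms a connected substar of $\mathcal{T}$.

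It then remains to bound $\rho^\ast_{\reach{\vec{H}}}(B_t)$ for every node $t$. For a leaf $t_i$, the bag $B_{t_i}=e_i$ is itself a hyperedge, so assigning weight $1$ to $e_i$ is a fractional edge cover of weight $1$, and $1\le\rho^\ast(\vec{H})$ since every fractional cover of $\Hsim$ has weight at least $1$ by definition. For the centre $t_0$, since $B_{t_0}$ is exactly the vertex set of $\reduct{\vec{H}}$, since $e_i\cap B_{t_0}$ is exactly the $i$-th hyperedge of $\reduct{\vec{H}}$, and since the discarded part $S_i=e_i\setminus B_{t_0}$ of $e_i$ cannot help cover vertices of $B_{t_0}$, the fractional edge covers of $B_{t_0}$ in $\reach{\vec{H}}$ are in weight-preserving correspondence with those of $\reduct{\vec{H}}$; hence $\rho^\ast_{\reach{\vec{H}}}(B_{t_0}) = \rho^\ast(\reduct{\vec{H}})$. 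By Lemma~\ref{lem:invariants_are_invariant}(2) and Observation~\ref{obs:fractional_equivalence}, $\rho^\ast(\reduct{\vec{H}}) = \rho^\ast(\reduct{\Hsim}) = \rho^\ast(\vec{H})$. Taking the maximum over all bags then gives $\mathsf{fhtw}(\reach{\vec{H}}) \le \max\{1,\rho^\ast(\vec{H})\} = \rho^\ast(\vec{H})$.

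I do not expect a genuine obstacle: the argument amounts to choosing the natural decomposition and unwinding definitions. The only points that need a little care are the degenerate case $V(\vec{H}) = S_1\cup\dots\cup S_k$, where $B_{t_0}=\emptyset$ has $\rho^\ast$-width $0$ and $\rho^\ast(\vec{H})=1$ so the bound is trivial, and keeping straight the distinction between a fractional cover of the DAG $\Hsim$ (variables indexed by source components) and a fractional edge cover of a vertex subset of $\reach{\vec{H}}$ (variables indexed by the hyperedges $e_i$) — the two are matched by $\psi(S_i)\leftrightarrow\gamma(e_i)$.
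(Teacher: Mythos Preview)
Your proof is correct and follows essentially the same approach as the paper: both construct the identical star decomposition with centre bag $V(\vec{H})\setminus\bigcup_i S_i$ and leaf bags $e_i$, and bound the $\rho^\ast$-width of each bag in the same way. You are simply more explicit about the structural fact $e_i\cap\bigcup_j S_j = S_i$ (which the paper uses implicitly when verifying connectedness) and about routing through Lemma~\ref{lem:invariants_are_invariant} and Observation~\ref{obs:fractional_equivalence} to identify the centre-bag cost with $\rho^\ast(\vec{H})$.
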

\begin{proof}
Let $S_1,\dots,S_k$ be the sources of $\Hsim$, and let $R_1,\dots,R_k$ be the hyperedges of $\reach{\vec{H}}$, that is, for each $i\in[k]$ the hyperedge $R_i$ includes all vertices in $\vec{H}$ that can be reached from $S_i$.

Consider the following tree decomposition $(\mathcal{T},\mathcal{B})$ of $\reach{\vec{H}}$:
\begin{itemize}
    \item We add one \emph{center} bag $B:=V(\vec{H})\setminus\left(\bigcup_{i=1}^k S_i \right)$, that is, $B$ contains all vertices of $\vec{H}$ not included in strongly connected components that become sources is $\Hsim$.
    \item For each $i\in[k]$ we add a bag $B_i:=R_i$, which is made adjacent to the center bag $B$.
\end{itemize}
Note first that this yields indeed a tree decomposition. Clearly, each vertex in $V(\reach{\vec{H}})=V(\vec{H})$ is contained in a bag, including isolated vertices of $\vec{H}$ (since those will become isolated sources in $\Hsim$). By definition, each hyperedge $R_i$ is fully contained in at least one bag. Finally, for every vertex $v\in V(\reach{\vec{H}})$, the subtree $\mathcal{T}_v=\mathcal{T}[B\in \mathcal{B}~|~v\in B]$ is connected: If $v$ is contained in $S_i$ for some $i\in[k]$, then $\mathcal{T}_v$ only consists of $B_i=R_i$. Otherwise, $v$ is contained in the center bag $B$. Hence $\mathcal{T}_v$ cannot be disconnected.

Now let us prove that each bag has a fractional edge cover of weight at most $\rho^\ast(\vec{H})$: For the center bag $B$, any fractional cover of $\vec{H}$ yields, by definition, a fractional edge cover of $B$ with the same weight (recall that the sources of $\Hsim$ correspond to the hyperedges of $\reach{\vec{H}}$). Hence, the fractional edge cover number of $B$ is bounded by $\rho^\ast(\vec{H})$. Finally, each bag $B_i=R_i$ can clearly be covered by one hyperedge. Thus the fractional edge cover number of $B_i$ is $1$.  
\end{proof}

\begin{theorem}\label{thm:main_algo_subs}
There is a computable function $f$ such that the following is true. Let $\vec{H}$ and $\vec{G}$ be digraphs, let $d$ be the maximum outdegree of $\vec{G}$, and let $r$ be the fractional cover number of $\vec{H}$. We can compute $\#\subs{\vec{H}}{\vec{G}}$ in time
\[f(|\vec{H}|,d)\cdot |\vec{G}|^{r+O(1)} \,. \]
Moreover, let $\vec{C}$ be a class of digraphs.
Then $\#\dirsubsprobd(\vec{C})$ is fixed-parameter tractable if ${\vec{C}}$ has bounded fractional cover number.
\end{theorem}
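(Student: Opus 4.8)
The plan is to evaluate $\#\subs{\vec H}{\vec G}$ through the directed homomorphism basis and to control the cost of each resulting homomorphism-counting term via the fractional cover number. First I would invoke Lemma~\ref{lem:subs_transformation} to write
\[\#\subs{\vec H}{\vec G}=\sum_{\vec F}\asub_{\vec H}(\vec F)\cdot\#\homs{\vec F}{\vec G},\]
where $\asub_{\vec H}$ is computable and finitely supported, and $\asub_{\vec H}(\vec F)\neq 0$ only if $\vec F$ is a quotient of $\vec H$. The algorithm enumerates all quotients of $\vec H$ (by iterating over partitions of $V(\vec H)$; there are only $g(|V(\vec H)|)$ of them, each with at most $|V(\vec H)|$ vertices), computes each coefficient $\asub_{\vec H}(\vec F)$, computes every term $\#\homs{\vec F}{\vec G}$, and outputs the linear combination. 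The enumeration, the coefficient computation, and the final arithmetic cost only $f(|\vec H|)\cdot|\vec G|^{O(1)}$ (each $\#\homs{\vec F}{\vec G}$ has $O(|V(\vec H)|\log|\vec G|)$ bits), so the whole running time is dominated by the homomorphism-counting terms.

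Next I would bound the cost of a single term $\#\homs{\vec F}{\vec G}$. By Theorem~\ref{thm:homs_algo}, this can be done in time $f(|\vec F|,d)\cdot|\vec G|^{\mathsf{fhtw}(\reach{\vec F})+O(1)}$, where $d$ is the maximum outdegree of $\vec G$ --- and the same host $\vec G$ is used for every term, so $d$ never grows along the way. It remains to bound the exponent uniformly over quotients: Lemma~\ref{lem:subs_quotients_small_fhtw} applied to $\vec F$ gives $\mathsf{fhtw}(\reach{\vec F})\le\rho^\ast(\vec F)$, and since $\vec F$ is a quotient of $\vec H$, Lemma~\ref{lem:main_technical_subs} gives $\rho^\ast(\vec F)\le\rho^\ast(\vec H)=r$. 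Hence every term costs at most $f(|\vec F|,d)\cdot|\vec G|^{r+O(1)}$ with a universal constant in the $O(1)$; summing over the $g(|\vec H|)$ quotients and using $|\vec F|\le|\vec H|$ yields a total running time of $f'(|\vec H|,d)\cdot|\vec G|^{r+O(1)}$ for a computable $f'$, proving the first assertion.

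For the final sentence, suppose $\vec C$ has bounded fractional cover number, i.e.\ $\rho^\ast(\vec H)\le c$ for every $\vec H\in\vec C$. Then on an instance $(\vec H,\vec G)$ of $\#\dirsubsprobd(\vec C)$ the algorithm above runs in time $f'(|\vec H|,d)\cdot|\vec G|^{c+O(1)}$, where $d$ is the maximum outdegree of $\vec G$; since the parameter is $|\vec H|+d$ and the exponent of $|\vec G|$ is the fixed constant $c+O(1)$, this is an FPT running time, so $\#\dirsubsprobd(\vec C)\in\ccFPT$. I do not expect a genuine obstacle in the proof itself: it is an assembly of already-established pieces, and the only care needed is (i) that the number of quotients, their sizes, and the coefficients are all controlled by $|\vec H|$ alone, and (ii) that passing to quotients preserves the host outdegree and only decreases $\rho^\ast$ --- exactly Lemmas~\ref{lem:subs_transformation},~\ref{lem:subs_quotients_small_fhtw}, and~\ref{lem:main_technical_subs}. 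The genuine content of the upper bound lives in those lemmas and in Theorem~\ref{thm:homs_algo}.
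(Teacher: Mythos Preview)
Your proposal is correct and follows essentially the same approach as the paper: expand $\#\subs{\vec H}{\vec G}$ via Lemma~\ref{lem:subs_transformation}, compute each homomorphism term with Theorem~\ref{thm:homs_algo}, and bound the exponent uniformly using Lemmas~\ref{lem:main_technical_subs} and~\ref{lem:subs_quotients_small_fhtw}. The paper's proof is slightly terser but uses exactly the same ingredients in the same order.
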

\begin{proof}
Given an instance $(\vec{H},\vec{G})$ we first cast the problem as a linear combination of homomorphism counts. Concretely, using Lemma~\ref{lem:subs_transformation}, we have
\begin{equation}\label{eq:subs_algo_basis}
    \#\subs{\vec{H}}{\vec{G}} = \sum_{\vec{F}} \asub_{\vec{H}}(\vec{F}) \cdot \#\homs{\vec{F}}{\vec{G}}\,,
\end{equation}
where the sum is over all (isomorphism classes of) digraphs $\vec{F}$ and the coefficients $\asub_{\vec{H}}(\vec{F})$ only depend on $\vec{H}$ and are non-zero if and only if $\vec{F}$ is a quotient graph of $\vec{H}$. Thus, we can proceed by computing all $\asub_{\vec{H}}(\vec{F})$ in time only depending on $\vec{H}$, and all terms $\#\homs{\vec{F}}{\vec{G}}$ with a non-zero coefficient using our algorithm for counting homomorphisms (Theorem~\ref{thm:homs_algo}): By Lemmas~\ref{lem:main_technical_subs} and~\ref{lem:subs_quotients_small_fhtw}, we have that, for each quotient $\vec{F}$ of $\vec{H}$,
\[\mathsf{fhtw}(\reach{\vec{F}}) \leq \rho^\ast(\vec{F}) \leq \rho^\ast(\vec{H}) = r \,.\]
Additionally, $|\vec{F}|\leq |\vec{H}|$. Thus, the computation of $\#\homs{\vec{F}}{\vec{G}}$ takes time $g(|\vec{H}|,d)\cdot |\vec{G}|^{r+O(1)}$, for some computable function $g$, by the running time bound given in Theorem~\ref{thm:homs_algo}, concluding the proof.
\end{proof}

\subsection{Lower Bounds}\label{sec:sub_LB}
Recall that $\alpha$ and $\alpha^\ast$ denote respectively the independence number and the fractional independence number. Let $\vec C$ be a class of digraphs, and let $\reduct{\vec C}$ the class of all contours of digraphs in~$\vec C$. We show that $\#\dirsubsprobd(\vec{C})$ is intractable when $\alpha^{\ast}(\reduct{\vec C}) = \infty$. Together with the upper bounds of Section~\ref{sec:sub_UB}, this yields a complete characterization of the tractability of $\#\dirsubsprobd(\vec{C})$.

To prove that $\#\dirsubsprobd(\vec{C})$ is hard when $\alpha^{\ast}(\reduct{\vec C}) = \infty$, we first look at the integral independence number $\alpha(\reduct{\vec C})$. We show that, if $\alpha(\reduct{\vec C}) = \infty$, then $\#\dirsubsprobd(\vec{C})$ is hard because $\vec C$ contains ``hard'' quotients. If instead $\alpha(\reduct{\vec C}) < \infty$ then we can show a reduction from $\#\dirhomsprobd(\vec{C}')$ where $\vec{C}'$ is a class of canonical DAGs with $\aw(\vec{C}')=\infty$, which implies hardness by Lemma~\ref{lem:main_hardness_homs}. 

First, using our interpolation result based on Dedekind's Theorem (Lemma~\ref{lem:complexity_monotonicity}), we establish a hardness result in two steps.

\begin{lemma}\label{lem:bottleneck_subs_setup}
Let $\vec{C}$ be a recursively enumerable class of digraphs and let $\vec{Q}$ be a class of quotient graphs of $\vec{C}$.
Then 
\[ \#\dirhomsprobd(\vec{Q}) \fptred \#\dirsubsprobd(\vec{C}) \,. \]
\end{lemma}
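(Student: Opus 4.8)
The plan is to use the subgraph-to-homomorphism basis from Lemma~\ref{lem:subs_transformation} together with the interpolation algorithm of Lemma~\ref{lem:complexity_monotonicity}, exactly in the way sketched in Section~\ref{sec:hom_basis}. Concretely, let $(\vec{Q},\vec{G}')$ be an input to $\#\dirhomsprobd(\vec{Q})$. Since $\vec{Q}$ is a quotient of some digraph in $\vec{C}$ and $\vec{C}$ is recursively enumerable, in time depending only on $|\vec{Q}|$ we can find $\vec{H}\in\vec{C}$ such that $\vec{Q}$ is a quotient of $\vec{H}$. We then compute the finitely supported coefficient function $\iota:=\asub_{\vec{H}}$ from Lemma~\ref{lem:subs_transformation}, in time depending only on $|\vec{H}|$. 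By that lemma, $\iota(\vec{F})\neq 0$ iff $\vec{F}$ is a quotient of $\vec{H}$; in particular $\vec{Q}$ is in the support of $\iota$.

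The key observation is that we have oracle access to $\#\dirsubsprobd(\vec{C})$, hence to the function $\vec{G}\mapsto \#\subs{\vec{H}}{\vec{G}}$ (the pattern $\vec{H}$ is fixed and lies in $\vec{C}$). By equation~\eqref{eq:subs_to_homs} this oracle equals the function $\vec{G}\mapsto \sum_{\vec{F}}\iota(\vec{F})\cdot\#\homs{\vec{F}}{\vec{G}}$. We now invoke Lemma~\ref{lem:complexity_monotonicity} with input $(\vec{G}',\iota)$: the algorithm $\mathbb{A}$ outputs the list of pairs $(\vec{F},\#\homs{\vec{F}}{\vec{G}'})$ for every $\vec{F}$ in the support of $\iota$, in particular for $\vec{F}=\vec{Q}$, which gives us $\#\homs{\vec{Q}}{\vec{G}'}$, the desired output. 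The running time of $\mathbb{A}$ is $f(|\iota|)\cdot|\vec{G}'|^{O(1)}$, and $|\iota|$ is a function of $|\vec{H}|$, which is a function of $|\vec{Q}|$; so the overall running time is FPT in $|\vec{Q}|+d(\vec{G}')$. Crucially, Lemma~\ref{lem:complexity_monotonicity} guarantees that every digraph $\vec{G}$ on which $\mathbb{A}$ queries the oracle has outdegree at most $f(|\iota|)\cdot d(\vec{G}')$, so the parameter $|\vec{H}|+d(\vec{G})$ of every oracle call is bounded by a computable function of $|\vec{Q}|+d(\vec{G}')$. This is exactly the condition required of a parameterised Turing reduction, so $\#\dirhomsprobd(\vec{Q})\fptred\#\dirsubsprobd(\vec{C})$.

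I do not expect a serious obstacle here, since all the heavy lifting was already done in Section~\ref{sec:hom_basis}; the proof is essentially an assembly of Lemma~\ref{lem:subs_transformation} and Lemma~\ref{lem:complexity_monotonicity}. The one point requiring a little care is bookkeeping of the parameter across the two reductions: one must check that $|\vec{H}|$ (and hence $|\iota|$) is bounded in terms of $|\vec{Q}|$ --- this holds because a quotient cannot be larger than the original graph, so there are only finitely many candidate $\vec{H}$ of a given size whose quotient could be $\vec{Q}$, but in fact the size of $\vec{H}$ need not be bounded by $|\vec{Q}|$ in general; the resolution is simply that recursive enumerability lets us search for \emph{some} witnessing $\vec{H}$, and its size is then a fixed (computable) function of the input $\vec{Q}$, which is all the definition of an FPT reduction demands. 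Likewise, one checks that $d(\vec{G})$ on oracle calls stays bounded, which is precisely the last bullet of Lemma~\ref{lem:complexity_monotonicity}. If one also wants the statement for the loopless or DAG-restricted variants, one uses the corresponding restricted versions of Lemma~\ref{lem:subs_transformation} and Lemma~\ref{lem:complexity_monotonicity} noted there.
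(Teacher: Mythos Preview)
Your proposal is correct and follows essentially the same approach as the paper's proof: find a witnessing $\vec{H}\in\vec{C}$ by enumeration, set $\iota=\asub_{\vec{H}}$ via Lemma~\ref{lem:subs_transformation}, and invoke the interpolation algorithm of Lemma~\ref{lem:complexity_monotonicity} using the $\#\dirsubsprobd(\vec{C})$ oracle. Your extra discussion of why $|\vec{H}|$ is a computable function of the input pattern (even though it is not literally bounded by $|\vec{Q}|$) is a point the paper leaves implicit, and your resolution via recursive enumerability is exactly right.
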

\begin{proof}
Let $\vec{H}'$ and $\vec{G}'$ be an input instance of $\#\dirhomsprobd(\vec{Q})$, and let $d$ be the outdegree of $\vec{G}'$. Search for a graph $\vec{H}\in \vec{C}$ such that $\vec{H}'$ is a quotient of $\vec{H}$ - note that this takes time only depending on $\vec{H}'$. By Lemma~\ref{lem:subs_transformation}, we have that
\[\#\subs{\vec{H}}{\star} = \sum_{\vec{F}} \asub_{\vec{H}}(\vec{F}) \cdot \#\homs{\vec{F}}{\star}\,.\]
Moreover, we have that $\asub_{\vec{H}}(\vec{H}')\neq 0$ since $\vec{H}'$ is a quotient of $\vec{H}$. Let us set $\iota = \asub_{\vec{H}}$. This allows us to invoke Lemma~\ref{lem:complexity_monotonicity} since we can then simulate the oracle required by Lemma~\ref{lem:complexity_monotonicity} using our own oracle for $\#\dirsubsprobd(\vec{C})$. The algorithm $\mathbb{A}$ in Lemma~\ref{lem:complexity_monotonicity} then returns all pairs $(\vec{F},\#\homs{\vec{F}}{\vec{G}'})$ with $\asub_{\vec{F}}\neq 0$; this includes $(\vec{H}',\#\homs{\vec{H}'}{\vec{G}'})$. All oracle queries posed by $\mathbb{A}$ have outdegree bounded by $f(|\iota|)\cdot d$, which guarantees that the parameter of each oracle call we forward to $\#\dirsubsprobd(\vec{C})$ only depends on $\vec{H}'$ (recall that the parameter is $|\vec{H}'|+d$). Moreover, the total running time is fixed-parameter tractable, concluding the proof.
\end{proof}

\begin{lemma}\label{lem:bottleneck_subs}
Let $\vec{C}$ be a recursively enumerable class of digraphs and let $\vec{Q}$ be the class of all quotient graphs of $\vec{C}$.
If the class of canonical DAGs that are MR minors of digraphs in $\vec{Q}$ has unbounded adaptive width, then $\#\dirsubsprobd(\vec{C})$ is not fixed-parameter tractable unless ETH fails.
\end{lemma}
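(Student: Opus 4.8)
The plan is to chain together the two hardness-transfer results already established, namely the interpolation-based reduction of Lemma~\ref{lem:bottleneck_subs_setup} and the homomorphism intractability criterion of Lemma~\ref{lem:main_hardness_homs}.

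First I would record that $\vec Q$, the class of \emph{all} quotient graphs of $\vec C$, is recursively enumerable. Indeed, $\vec C$ is recursively enumerable by assumption, and every digraph has only finitely many quotients, each of which is computable from it (one simply iterates over all partitions of the vertex set); dovetailing an enumeration of $\vec C$ with the terminating enumeration of quotients of each listed digraph yields an enumeration of $\vec Q$. This is the only bookkeeping point that requires attention, and it is routine; it is needed to meet the recursive-enumerability hypotheses of both Lemma~\ref{lem:main_hardness_homs} and Lemma~\ref{lem:bottleneck_subs_setup}.

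Next I would apply Lemma~\ref{lem:main_hardness_homs} with $\vec Q$ playing the role of $\vec C$ and with $\vec C'$ taken to be the class of canonical DAGs that are MR minors of digraphs in $\vec Q$. By the hypothesis of the present lemma this class has unbounded adaptive width, so Lemma~\ref{lem:main_hardness_homs} gives that $\#\dirhomsprobd(\vec Q)$ is not fixed-parameter tractable unless ETH fails. Finally I would invoke Lemma~\ref{lem:bottleneck_subs_setup}, which provides the parameterised Turing reduction $\#\dirhomsprobd(\vec Q) \fptred \#\dirsubsprobd(\vec C)$ (applicable precisely because $\vec Q$ is a class of quotient graphs of the recursively enumerable class $\vec C$). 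Composing the two, any fixed-parameter tractable algorithm for $\#\dirsubsprobd(\vec C)$ would yield one for $\#\dirhomsprobd(\vec Q)$, contradicting ETH; hence $\#\dirsubsprobd(\vec C)$ is not fixed-parameter tractable unless ETH fails.

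Since both ingredients are already in hand, there is no genuine obstacle at this step: the proof is essentially an assembly of prior lemmas, and the only thing to be careful about is that the recursive-enumerability and ``MR minor of a quotient'' hypotheses line up correctly, which the observation in the second paragraph ensures.
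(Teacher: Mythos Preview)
Your proposal is correct and follows essentially the same route as the paper: apply Lemma~\ref{lem:main_hardness_homs} to $\vec Q$ to obtain intractability of $\#\dirhomsprobd(\vec Q)$, then invoke Lemma~\ref{lem:bottleneck_subs_setup} for the reduction to $\#\dirsubsprobd(\vec C)$. The only addition is that you explicitly verify $\vec Q$ is recursively enumerable, a detail the paper leaves implicit.
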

\begin{proof}
Assume ETH holds.
By Lemma~\ref{lem:main_hardness_homs}, the problem $\#\dirhomsprobd(\vec{Q})$ is not fixed-parameter tractable. The claim thus follows by invoking Lemma~\ref{lem:bottleneck_subs_setup}.
\end{proof}

\subsubsection{The case of unbounded independence number}
Let us introduce:
\begin{definition}[Induced Matching Gadget]
Let $\vec{H}$ be a DAG. An \emph{induced matching gadget} of size $k$ of $\vec{H}$ is a set of arcs $(s_1,w_1),\dots,(s_k,w_k) \in E(\vec H)$ such that no two distinct $w_i,w_j$ are reachable from a single source of $\vec{H}$.
\end{definition}
We denote by $\mathsf{img}(\vec{H})$ the maximum size of an induced matching gadget in $\vec{H}$.  Given a directed (not necessarily acyclic) graph $\vec{H}$, we set $\mathsf{img}(\vec{H}):=\mathsf{img}(\Hsim)$.

\begin{lemma}\label{lem:ind_number_img}
Every digraph $\vec{H}$ satisfies $\alpha(\reduct{\Hsim})=\mathsf{img}(\vec{H})$.
\end{lemma}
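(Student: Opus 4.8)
The plan is to reduce the statement to the case of DAGs and then produce two explicit, essentially inverse, constructions relating independent sets of the contour to induced matching gadgets. Observe first that the left-hand side $\alpha(\reduct{\Hsim})$ is, by definition, the independence number of the contour of the DAG $\Hsim$, while the right-hand side $\mathsf{img}(\vec{H})$ is, by definition, equal to $\mathsf{img}(\Hsim)$. Hence it suffices to prove that $\alpha(\reduct{D})=\mathsf{img}(D)$ for an arbitrary DAG $D$, and then specialise to $D=\Hsim$.

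So I would fix a DAG $D$ with sources $s_1,\dots,s_k$ and set $W := V(D)\setminus\{s_1,\dots,s_k\}$, its set of non-source vertices. Unwinding Definition~\ref{def:contour}, and using that the only source reachable from $s_i$ is $s_i$ itself, the contour $\reduct{D}$ has vertex set $W$, and its edges are exactly the nonempty members of $\{\reachv{s_i}\setminus\{s_i\} : i\in[k]\}$. Two elementary facts will be used throughout: (i) for a non-source vertex $v$ and a source $s_i$ we have $v\in \reachv{s_i}\setminus\{s_i\}$ if and only if $v$ is reachable from $s_i$; and (ii) every non-source vertex has in-degree at least $1$ in $D$.

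For $\mathsf{img}(D)\le\alpha(\reduct{D})$, I would take a maximum induced matching gadget $(u_1,w_1),\dots,(u_m,w_m)$ of $D$. Its heads $w_1,\dots,w_m$ are pairwise distinct (the arcs of a gadget forming, in particular, a matching), and each $w_i$ has an incoming arc and is therefore a non-source, so $\{w_1,\dots,w_m\}$ is an $m$-element subset of $V(\reduct{D})$. If some edge $\reachv{s_\ell}\setminus\{s_\ell\}$ of $\reduct{D}$ contained two distinct heads $w_i,w_j$, then by (i) both would be reachable from the source $s_\ell$, contradicting the defining property of an induced matching gadget; hence $\{w_1,\dots,w_m\}$ is independent in $\reduct{D}$, giving $\alpha(\reduct{D})\ge m$. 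Conversely, for $\alpha(\reduct{D})\le\mathsf{img}(D)$ I would start from a maximum independent set $\{v_1,\dots,v_m\}$ of $\reduct{D}$; by (ii) each $v_i$ has an in-neighbour $u_i$, so $(u_i,v_i)\in E(D)$, and these $m$ arcs are pairwise distinct since the $v_i$ are. They form an induced matching gadget: if two distinct $v_i,v_j$ were reachable from a single source $s_\ell$, then by (i) they would both lie in $\reachv{s_\ell}\setminus\{s_\ell\}$, which is nonempty and hence genuinely an edge of $\reduct{D}$, contradicting independence. Thus $\mathsf{img}(D)\ge m$, and combining the two inequalities yields $\alpha(\reduct{D})=\mathsf{img}(D)$.

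The only point requiring care is the bookkeeping tied to the precise definition of an induced matching gadget — in particular that its heads are pairwise distinct (so that the count really matches an independence number) and that each head is a non-source — together with correctly identifying the vertex and edge sets of $\reduct{D}$ for the DAG $\Hsim$. Once these are spelled out as above, the two constructions are routine, so I anticipate no real obstacle.
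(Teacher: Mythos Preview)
Your argument is correct and follows the paper's strategy: pass to the DAG $\Hsim$, then match gadget heads with independent sets of the contour. There is one notable simplification in your direction $\alpha(\reduct{D})\le\mathsf{img}(D)$: you take an arbitrary in-neighbour $u_i$ of each $v_i$, while the paper traces back from $v_i$ to a source $s_i$ along a shortest path and uses the first arc $(s_i,w_i)$ on that path. Both suffice for this lemma, but the paper's construction additionally guarantees that the tails of the gadget arcs are \emph{sources}, which is exactly what the subsequent quotient-construction lemma relies on. So your argument is cleaner here, whereas the paper's is doing double duty for the application downstream.

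One small gap: your justification that the heads $w_1,\dots,w_m$ are pairwise distinct (``the arcs of a gadget forming, in particular, a matching'') does not follow from the definition as written --- a set of distinct arcs may share a head. The correct argument is that if $w_i=w_j$ for some $i\ne j$, then $w_i$ is a non-source of the DAG and hence reachable from some source $s$, so both $w_i$ and $w_j$ are reachable from $s$, contradicting the gadget condition. (The paper is comparably informal on this point.)
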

\begin{proof}
First, we prove that $\mathsf{img}(\vec{H})\geq \alpha(\reduct{\Hsim})$. Let $\{v_1,\dots,v_{k}\}$ be an independent set of $\reduct{\Hsim}$, and let $s_1,\dots,s_{k}$ be sources of $\Hsim$ such that $v_i$ is reachable from $s_i$ for each $i\in[k]$. Note that those sources must exist: if $v_i$ is not reachable by any source, then $v_i$ is a source itself and thus not in $V(\reduct{\Hsim})$. Furthermore, those sources must be distinct: if $s_i=s_j$ for some $i \ne j$, then $v_i$ and $v_j$ can both be reached from $s_i=s_j$ and thus they are contained in a common edge of $\reduct{\Hsim}$, contradicting the fact that $\{v_1,\ldots,v_k\}$ is independent. 

For each $i\in[k]$ let $P_i$ be a shortest directed path from $s_i$ to $v_i$, and let $w_i$ be the successor of $s_i$ in $P_i$ (if $(s_i,v_i) \in E(\Hsim)$, then $w_i=v_i$).
Note that $w_1,\dots,w_{k}$ are pairwise distinct and form an independent set in $\reduct{\Hsim}$. Indeed, if $w_i=w_j$ for some $i\neq j$, then $v_i$ and $v_j$ can both be reached from $s_i$ and $s_j$ and are thus contained in a common hyperedge in $\reduct{\Hsim}$, a contradiction; if instead $\{w_1,\dots,w_{k}\}$ is not an independent set in $\reduct{\Hsim}$, then there is a source $s$ of $\Hsim$ from which both $w_i$ and $w_j$, and thus both $v_i$ and $v_j$, can be reached, which implies $v_i$ and $v_j$ are contained in a common edge of $\reduct{\Hsim}$, yielding again a contradiction.
Finally, observe that $(s_1,w_1),\dots,(s_{k},w_{k})$ is an induced matching gadget since $\{w_1,\ldots,w_{k}\}$ is an independent set in $\reduct{\Hsim}$.

Now we prove that $\alpha(\reduct{\Hsim}) \ge \mathsf{img}(\vec{H})$. Let $(s_1,w_1),\ldots,(s_k,w_k)$ be an induced matching gadget of $\vec H$, and for each $i \in [k]$ let $S_i$ and $W_i$ be the classes of $\sim$ containing respectively $s_i$ and $w_i$. Note that $S_i=\{s_i\}$ since $s_i$ is a source, so $W_i \ne S_i$; and $(S_i,W_i) \in E(\Hsim)$ since $(s_i,w_i) \in E(\vec H)$, hence $W_i$ is not a source of $\Hsim$. The definition of $V(\Gamma(\vec H))$ then implies $W_i \subseteq V(\Gamma(\vec H))$ for all $i \in [k]$, and so $\{w_1,\ldots,w_k\} \subseteq V(\Gamma(\vec H))$. Now observe that $\{w_1,\ldots,w_k\}$ is an independent set in $\Gamma(\Hsim)$; if this was not the case, then $w_i$ would be reachable in $\vec H$ from $w_j$, and thus from $s_j$, for some $j \ne i$, contradicting the definition of induced matching gadget.
\end{proof}

Next we show that large induced matching gadgets yield as quotients the directed splits of arbitrary graphs.
\begin{lemma}\label{lem:quotient_construction}
Let $F$ be an undirected graph with $\ell$ edges, and let $\vec{H}$ be a digraph with an induced matching gadget of size $2\ell$. Then $\dsplit{F}$ is an MR minor of a quotient graph of $\vec{H}$.
\end{lemma}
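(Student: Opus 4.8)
The plan is to build, out of the induced matching gadget of size $2\ell$ in $\Hsim$, a quotient $\vec F$ whose reachability structure encodes $F$, and then to extract $\dsplit{F}$ from $\vec F$ by a sequence of MR-minor operations. Fix an induced matching gadget $(s_1,w_1),\dots,(s_{2\ell},w_{2\ell})$ of $\Hsim$. By the definition of an induced matching gadget, the $w_i$ are pairwise non-reachable from any single source, so in particular each $w_i$ lies in a \emph{distinct} source-reachability hyperedge and no hyperedge of $\reach{\Hsim}$ contains two of them. I would index the edges of $F$ as $f_1,\dots,f_\ell$, and use the $2\ell$ gadget arcs in \emph{pairs}: the pair $(s_{2j-1},w_{2j-1}),(s_{2j},w_{2j})$ will be dedicated to encoding the $j$-th edge $f_j=\{a,b\}$ of $F$. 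For each endpoint $v\in V(F)$ we reserve a ``target'' vertex; the desired structure is that $\dsplit{F}$ has one vertex per endpoint of $F$ (the original vertices), one subdivision vertex per edge, and all arcs oriented from the subdivision vertex to the two endpoint vertices. So we want a subdivision vertex $x_j$ for $f_j$ with arcs $x_j\to a$ and $x_j\to b$.

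The key step is the choice of the partition $\sigma$ defining the quotient $\vec F=\Hsim/\sigma$. The idea is: (i) merge, for each endpoint $v\in V(F)$, all the ``copies'' of $v$ coming from the various gadget arcs incident to edges of $F$ at $v$ into a single block $B_v$; (ii) leave the sources $s_{2j-1},s_{2j}$ (and the vertices on shortest paths from them down to $w_{2j-1},w_{2j}$) essentially untouched, or merge them appropriately so that after contraction we obtain exactly one subdivision vertex $x_j$ with out-arcs to the two relevant $B_v$'s; (iii) contract/quotient away everything else of $\Hsim$ into a ``junk'' block, or arrange it to be deleted later as sinks. Since a quotient allows identifying \emph{arbitrary} (not necessarily adjacent) vertex subsets, step (i) is unproblematic. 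The only subtlety is to check that after this quotient, the reachability relations are exactly what we want: because the $w_i$ are independent in the contour, merging the $w$'s belonging to the same endpoint of $F$ does not accidentally create reachability between two distinct endpoints, so the hyperedge structure of $\reach{\vec F}$ restricted to the relevant vertices is precisely that of (the reachability hypergraph of) $\dsplit{F}$ plus some extra junk.

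Then I would clean up using the MR-minor operations (sink deletion, loop deletion, arc contraction). First contract each shortest $s_i$–$w_i$ path down to a single arc $s_i\to w_i$ (a sequence of arc contractions); this is legitimate since contracting an arc is an allowed operation and, by the shortest-path choice, no loops are created. Next, for the pair encoding $f_j=\{a,b\}$, we have two arcs $s_{2j-1}\to B_a$ and $s_{2j}\to B_b$ say — we need a single source $x_j$ with arcs to \emph{both}; this is exactly why the gadget has size $2\ell$ rather than $\ell$, and the merge in step (ii) of the partition is designed so that the two sources get identified into one $x_j$ after the quotient (or, alternatively, one absorbs a short path so that after contraction $x_j\to B_a$ and $x_j\to B_b$ both appear). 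Finally, delete every remaining strongly connected component of $\Hsim/\sigma$ that is a sink and not part of $\dsplit{F}$, and delete any loops that arose; what remains is isomorphic to $\dsplit{F}$ (with the original vertices of $F$ playing the role of the sinks and the $x_j$ the subdivision sources). One must double-check the MR-minor legality at each step: sink deletion requires the deleted component to be a \emph{sink in the $\sim$-quotient}, which holds because the ``junk'' vertices, by construction, reach nothing we care about and can be arranged to reach nothing at all after the quotient; and the contraction steps never produce loops by the shortest-path and disjointness choices.

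The main obstacle I anticipate is \textbf{verifying that the quotient partition can be chosen so that the unwanted part of $\Hsim$ becomes genuinely deletable} — i.e., that we can contract all ``off-gadget'' vertices into sink components of $\Hsim/\sigma$ without destroying the independence (in the contour) of the $w_i$'s and without creating spurious reachabilities among the $B_v$'s. In $\Hsim$ a vertex outside the gadget may well sit on a path between two gadget vertices, and a careless quotient could merge an endpoint block $B_a$ with something reachable from $s_{2j}$, corrupting the edge structure of $F$. Handling this cleanly will require a careful case analysis of how off-gadget vertices relate (reachability-wise) to the gadget, and placing each such vertex either in the block of the unique endpoint it is ``committed'' to, or in a separate junk block that the quotient makes into an isolated/sink component. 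Once that bookkeeping is in place, the MR-minor extraction of $\dsplit{F}$ is mechanical, and it is immediate that $\dsplit{F}$ is a canonical DAG with $\reduct{\dsplit{F}}=F$.
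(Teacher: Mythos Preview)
Your high-level structure --- pair off the $2\ell$ gadget arcs with the $\ell$ edges of $F$, merge the $w$'s that correspond to the same endpoint of $F$ into a block $B_v$, and merge the two $s$'s for each edge into a block $B_e$ --- is exactly what the paper does. Two minor simplifications: the $(s_i,w_i)$ are already \emph{arcs} of $\Hsim$ by the definition of an induced matching gadget, so there are no ``shortest paths'' to contract; and the verification that the induced subgraph on $\{B_v\}\cup\{B_e\}$ is precisely $\dsplit{F}$ follows directly from the gadget condition (no two $w_i$ share a source), so no extra arcs appear among these blocks.

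The real gap is in your cleanup. Your step (iii), lumping all off-gadget vertices into a single ``junk'' block, would \emph{not} work: that block would in general receive arcs from the $B_e$'s (any other out-neighbour of some $s_i$) and send arcs to the $B_v$'s (any off-gadget source that reaches some $w_i$), so it is neither a sink nor a source and cannot be removed by MR operations. You correctly sense this and flag it as the ``main obstacle,'' but you do not resolve it, and the ``careful case analysis'' you anticipate is unnecessary. The paper's fix is much simpler: leave every off-gadget vertex as a \emph{singleton} block in the quotient, and then iterate the following two MR operations until only the $B_v$'s and $B_e$'s remain:
\begin{itemize}
  \item delete any sink that is not one of the $B_v$;
  \item for any source $s$ that is not one of the $B_e$, contract the arc $(s,y)$ to some descendant $y$.
\end{itemize}
The single observation that makes this terminate correctly is that contracting an arc out of a non-$B_e$ source can never create a new arc among the $B_v$'s and $B_e$'s: such a source reaches at most one of the $w_i$ (by the induced-matching-gadget condition), hence at most one $B_v$, and it reaches no $B_e$ at all since the $B_e$'s are themselves sources. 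This replaces your anticipated case analysis entirely.
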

\begin{proof}
First we claim that we can assume w.l.o.g.\ that $\vec{H}$ is a DAG: In the very first step, we take the quotient of $\vec{H}$ corresponding to $\sim$. Note that the resulting graph is equal to $\Hsim$ except for possibly having loops. Note also that this operation does not change the size of a maximum induced matching gadget. Since all loops can in the end be deleted by the loop deletion operation of MR minors, to avoid notational clutter, we can assume that there are none.

Now, for each edge $e=\{u,v\}$ of $F$, we choose two arcs of the induced matching gadget that will correspond to $e$ in the construction of our quotient. We will denote those arcs by $(s_e^u,w_e^u)$ and $(s_e^v,w_e^v)$. Let us now define the partition $\sigma$ of $V(\vec{H})$ which will yield our quotient graph.
\begin{enumerate}
    \item For each vertex $v\in V(F)$, we add a block $B_v=\{w_e^v~|~e\in E(F)\}$.
    \item For each edge $e=\{u,v\}\in E(F)$, we add a block $B_e =\{s_e^u,s_e^v\}$.
    \item Each vertex of $\vec{H}$ not contained in any of the $B_v$ or $B_e$ becomes a singleton block.
\end{enumerate}
Let $\vec{H}'=\vec{H}/\sigma$, that is, $\vec{H}'$ is the quotient graph obtained from $\vec{H}$ by contracting each $B_v$ and each $B_e$ to a single vertex --- it will be convenient to also call those vertices $B_v$ and $B_e$. Observe that the subgraph of $\vec{H}'$ induced by the vertices $B_v$ for $v\in V(F)$ and $B_e$ for $e\in E(F)$ is isomorphic to $\dsplit{F}$: By construction of $\sigma$, it is clear that $\dsplit{F}$ is a \emph{subgraph} of the subgraph of $\vec{H}'$ induced by the $B_v$ and $B_w$. For isomorphism, we have to argue that there are no additional arcs: First, there cannot be any arc between $B_e$ and $B_{e'}$ for $e\neq e'$ since the $s_e^u$ and $s_e^v$ have been sources. Furthermore, there cannot be an arc between $B_{\{u,v\}}$ and $B_x$ for a vertex $x\notin\{u,v\}$ since this would only be possible if either $s_e^u$ or $s_e^v$ has an arc to some $w_{e'}^x$ for some $e\neq \{u,v\}$. However, in that case $w_{e'}^x$ and one of $w_{e}^u$ or $w_{e}^v$ would be reachable from the same source, contradicting the definition of an induced matching gadget. A similar argument shows that there cannot be an arc between $B_v$ and $B_u$ for two distinct vertices $u,v\in V(F)$. Also, we observe that the $B_e$ must be sources of $\vec{H}'$

Now perform the following operations on $\vec{H}'$ until none of them can be applied anymore:
\begin{itemize}
    \item Delete a sink that is not one of the $B_v$.
    \item Let $s$ be a source of $\vec{H}'$ not among the $B_e$, and let $y$ be a descendant of $s$ ($y$ might be one of the $B_v$). Contract the arc $(s,y)$.
\end{itemize}
This procedure stops if the only vertices remaining are the $B_e$ and the $B_v$. Crucially, the contraction of arcs from sources (not among the $B_e$) can never create additional arcs between the $B_e$ and the $B_v$ since, by definition of induced matching gadgets, no distinct pair of the $w_e^u$ is reachable from a common source.
\end{proof}

We are now able to establish hardness for the case of unbounded independence number of the contours.
\begin{lemma}\label{lem:subs_hard_independence}
Let $\vec{C}$ be a recursively enumerable class of digraphs, and let $\reduct{\vec{C}}$ be the contours of $\vec{C}$. If the independence number of $\reduct{\vec{C}}$ is unbounded, then $\#\dirsubsprobd(\vec{C})$ is not fixed-parameter tractable, unless ETH fails. 
\end{lemma}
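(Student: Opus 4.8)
The plan is to chain together the combinatorial lemmas already established in this section with the main homomorphism hardness result, Lemma~\ref{lem:main_hardness_homs}. Concretely, suppose $\reduct{\vec{C}}$ has unbounded independence number. By Lemma~\ref{lem:ind_number_img}, for every $\vec{H} \in \vec{C}$ we have $\alpha(\reduct{\Hsim})=\mathsf{img}(\vec{H})$, so $\vec{C}$ contains digraphs with induced matching gadgets of arbitrarily large size. First I would fix, for each $\ell \in \N$, a graph $F_\ell$ with exactly $\ell$ edges whose directed split $\dsplit{F_\ell}$ has large adaptive width --- the natural choice is $F_\ell$ being (roughly) the complete graph $K_m$ with $\ell=\binom m2$, since then $\dsplit{K_m}$ is a canonical DAG whose contour is $K_m$ itself (as noted after Definition~\ref{def:contour_intro}), and $\aw(K_m)=m-1 \to \infty$. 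Since $\alpha(\reduct{\vec{C}})$ is unbounded, for each $\ell$ we may pick $\vec{H}_\ell \in \vec{C}$ with an induced matching gadget of size $2\ell$; by Lemma~\ref{lem:quotient_construction}, $\dsplit{F_\ell}$ is then an MR minor of a quotient graph of $\vec{H}_\ell$.

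Next I would assemble these pieces into the hypothesis of Lemma~\ref{lem:main_hardness_homs}, or more precisely of the subgraph-to-homomorphism bridge Lemma~\ref{lem:bottleneck_subs}. Let $\vec{Q}$ be the class of all quotient graphs of $\vec{C}$, and let $\vec{C}'$ be the class of all canonical DAGs that are MR minors of digraphs in $\vec{Q}$. By the previous paragraph, $\vec{C}'$ contains $\dsplit{F_\ell}$ for all $\ell$ (these are canonical DAGs, and MR minors of quotients of $\vec{C}$), hence $\aw(\vec{C}') \ge \sup_\ell \aw(\reduct{\dsplit{F_\ell}}) = \sup_\ell \aw(K_{m(\ell)}) = \infty$. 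Therefore the hypothesis of Lemma~\ref{lem:bottleneck_subs} is met: the class of canonical DAGs that are MR minors of digraphs in $\vec{Q}$ has unbounded adaptive width. Applying Lemma~\ref{lem:bottleneck_subs} directly yields that $\#\dirsubsprobd(\vec{C})$ is not fixed-parameter tractable unless ETH fails, which is exactly the claim.

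One technical point to be careful about: I need $\vec{C}$ to be recursively enumerable so that the reductions underlying Lemma~\ref{lem:bottleneck_subs} (which in turn rely on Lemma~\ref{lem:main_hardness_homs} and the MR-minor reduction Lemma~\ref{lem:minor_reduction}) can search for the appropriate $\vec{H}_\ell$ and reconstruct the MR-minor witness; this hypothesis is already in the statement. I should also double-check the arithmetic linking the induced-matching-gadget size to the number of edges of the chosen base graph $F_\ell$: Lemma~\ref{lem:quotient_construction} requires a gadget of size $2\ell$ when $F_\ell$ has $\ell$ edges, so when aiming for $\dsplit{K_m}$ I need a gadget of size $2\binom m2 = m(m-1)$ inside some member of $\vec{C}$, which is available since $\alpha(\reduct{\vec{C}})$ is unbounded. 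The main obstacle, such as it is, is not in this proof but is entirely shouldered by the earlier lemmas --- in particular the construction in Lemma~\ref{lem:quotient_construction} that turns an induced matching gadget into a quotient admitting the directed split as an MR minor, and the hardness transfer in Lemma~\ref{lem:main_hardness_homs}; given those, the present lemma is a short bookkeeping argument, and I would keep the write-up correspondingly brief.
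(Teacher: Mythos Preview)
Your proposal is correct and follows essentially the same route as the paper's proof: use the unbounded independence number of the contours to get large induced matching gadgets, invoke Lemma~\ref{lem:quotient_construction} to realise the directed splits of complete graphs as MR minors of quotients, observe these are canonical DAGs whose contours are the complete graphs themselves (hence of unbounded adaptive width), and conclude via Lemma~\ref{lem:bottleneck_subs}. One small bookkeeping point: the hypothesis is about $\alpha(\reduct{\vec H})$, whereas Lemma~\ref{lem:ind_number_img} speaks of $\alpha(\reduct{\Hsim})$, so you should insert a reference to Lemma~\ref{lem:invariants_are_invariant} to bridge the two; also, the precise equality $\aw(K_m)=m-1$ is not needed (and the paper only uses the equivalence of adaptive width and treewidth for graphs), so it is safer to just assert unboundedness.
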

\begin{proof}
By Lemma~\ref{lem:invariants_are_invariant} and Lemma~\ref{lem:ind_number_img}, the class $\vec{C}$ contains induced matching gadgets of unbounded size. Let $K$ be the family of all complete (undirected) graphs; clearly, the treewidth of $K$ is unbounded. Let furthermore $\dsplit{K}:= \{\dsplit{F}~|~F\in K\}$ be the set of all directed splits of complete graphs. Observe that $\dsplit{K}$ is a class of canonical DAGs, and observe further that $\reduct{\dsplit{K}}=K$: Given $\dsplit{F}\in \dsplit{K}$, each source $s$ of $\dsplit{F}$ corresponds to an edge $\{u,v\}$ of $F$, and the only two vertices reachable from $s$ in $\dsplit{F}$ are precisely $u$ and $v$. Hence, the reachability hypergraph is $3$-uniform and contains the hyperedges $\{s,u,v\}$. In the contour, we delete the former sources from each hyperedge, which then yields $F$ again (thinking of a graph as a $2$-uniform hypergraph).

Now, it is well-known that adaptive width and tree-width are equivalent for graphs. Hence the adaptive width of $\dsplit{K}$ must be unbounded (recall that the adaptive width of a digraph is defined to be the adaptive width of its contour).

Finally, by Lemma~\ref{lem:quotient_construction}, and using that $\vec{C}$ has induced matching gadgets of unbounded size, we obtain that the set of quotient graphs of digraphs in $\vec{C}$ admits as MR minors the canonical DAGs in $\dsplit{K}$. Since the adaptive width of the latter is unbounded, we can conclude the proof by applying Lemma~\ref{lem:bottleneck_subs}.
\end{proof}

\subsubsection{The case of bounded independence number}
Let $\scH$ be a nonempty hypergraph. Without loss of generality we may assume $V(\scH)=\cup E(\scH)$. Let $(T,B)$ be a tree decomposition of $\scH$.
For every $\{r,u\} \in E(T)$ let $T^r_u$ be the connected component of $T\setminus e$ containing $u$ but not $r$, and define:
\begin{align}
V^r_u = (\cup_{x \in V(T^r_u)}B_x) \setminus (B_r \cap B_u)
\end{align}
Note that $V(\scH) = V^r_u \;\dot\cup\; V^u_r \;\dot\cup\; (B_r\cap B_u)$.

The next result bounds the integrality gap of $\alpha^{\ast}(\scH)$ through $\aw(\scH)$.
\begin{lemma}\label{lem:alpha_aw}
$\alpha(\scH) \ge \frac{1}{2}+\frac{\alpha^{\ast}(\scH)}{4\aw(\scH)}$.
\end{lemma}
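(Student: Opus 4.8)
The plan is to prove the inequality $\alpha(\scH) \ge \frac{1}{2} + \frac{\alpha^{\ast}(\scH)}{4\aw(\scH)}$ by a recursive ``ball-growing'' argument on a tree decomposition that witnesses the adaptive width. Let $a = \aw(\scH)$, and fix an optimal fractional independent set $\mu$ of $\scH$, so that $\mu(V(\scH)) = \alpha^{\ast}(\scH)$. By definition of adaptive width there is a tree decomposition $(T,B)$ of $\scH$ with $\mu\text{-width}(T,B) \le a$, i.e.\ $\mu(B_t) \le a$ for every $t \in V(T)$. The key combinatorial device is the \emph{separator property}: for any edge $\{r,u\}$ of $T$, the bag-intersection $B_r \cap B_u$ separates $V^r_u$ from $V^u_r$ in the sense that no hyperedge of $\scH$ meets both $V^r_u$ and $V^u_r$. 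I would state this as a claim and prove it from the connectivity axiom of tree decompositions: a hyperedge $e$ lives inside some bag $B_t$, and $t$ lies in one of the two components of $T \setminus \{r,u\}$ (say the $u$-side), so all vertices of $e$ that are not in $B_r \cap B_u$ lie in $V^u_r$, hence $e$ cannot meet $V^r_u$.

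The main engine is a centroid/balanced-separator step. I would first argue that we can find a node $t^\ast$ of $T$ (a centroid with respect to the weight $\mu$ distributed over the $V^{\cdot}_{\cdot}$ pieces) such that removing $t^\ast$ from $T$ breaks $V(\scH) \setminus B_{t^\ast}$ into parts each of $\mu$-weight at most $\frac{1}{2}(\alpha^\ast - \mu(B_{t^\ast}))$; combining these parts into two groups gives a bipartition $V(\scH) = L \;\dot\cup\; B_{t^\ast} \;\dot\cup\; R$ with no hyperedge crossing between $L$ and $R$, and with $\mu(L) \ge \frac{1}{3}(\alpha^\ast - a)$, say (the exact constant to be pinned down). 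The point of $B_{t^\ast}$ having $\mu$-weight at most $a$ is that one of $L, R$ retains a constant fraction of $\alpha^\ast$. Because no hyperedge crosses $L \leftrightarrow R$, an independent set of $\scH[L \cup B_{t^\ast}]$ that avoids $B_{t^\ast}$ can be freely combined with one of $\scH[R \cup B_{t^\ast}]$ avoiding $B_{t^\ast}$. This suggests recursing on the induced subhypergraph on the heavier of the two sides together with the separator, but \emph{charging} the recursion: each recursive level spends one ``unit'' of separator but buys us progress, so that after $O(\log)$-many — no, rather, after proportionally many — levels we accumulate $\Omega(\alpha^\ast / a)$ independent vertices. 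I would organize this as an induction on $\alpha^\ast(\scH)$: the base case $\alpha^\ast \le$ (something like $2a$) gives $\alpha \ge 1$, matching $\frac12 + \frac{\alpha^\ast}{4a} \le \frac12 + \frac12 = 1$; and in the inductive step, picking a single vertex $v$ from $L$ that is cheap, deleting from $\scH$ all vertices in hyperedges meeting $v$ together with $B_{t^\ast}$, and applying the inductive hypothesis to what remains on the $R$-side, with the book-keeping that we lost at most $\mu$-weight roughly $2a$ (one bag's worth on the separator, one ``ball around $v$'' worth bounded via $\mu(B) \le a$ for the bag $B \supseteq$ the relevant edge) while gaining one independent vertex.

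I expect the main obstacle to be making the recursion's weight accounting tight enough to land the constants $\frac12$ and $\frac14$ exactly, rather than merely $\Omega(\alpha^\ast/\aw)$. The delicate point is: when we pick an independent vertex $v$ and must delete everything sharing a hyperedge with $v$, the deleted mass is bounded by $\sum_{e \ni v} \mu(e)$, and there is no a priori bound on the number of hyperedges through $v$; the fix is to choose $v$ inside a single bag $B$ (which is always possible since every vertex lies in some bag, and one may pick the bag on the far side of the chosen separator edge so that all hyperedges through $v$ are contained in bags on that side), and then bound the deleted $\mu$-mass by $\mu(B) \le a$ using that every hyperedge through $v$ on that side is contained in some bag, but — since different hyperedges may lie in different bags — one actually needs the balanced-separator recursion to isolate $v$ in a \emph{single} bag's neighbourhood. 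I would therefore set up the induction so that at each step we peel off exactly one independent vertex at a cost of at most $2\aw(\scH)$ in fractional independence weight, plus carefully verify the base case so that the additive $\frac12$ and the factor $4$ (rather than $2$) in the denominator absorb the slack from the two ``$\le a$'' estimates (one for the separator bag, one for the ball around $v$). Getting these two estimates to combine into the clean ``$4\aw$'' is where the real care lies; everything else is the standard tree-decomposition separator toolkit.
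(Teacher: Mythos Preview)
Your proposal has a genuine gap at exactly the point you flag: bounding the $\mu$-mass deleted when you pick a single vertex $v$ and remove every hyperedge through it. You write that this mass is ``bounded via $\mu(B)\le a$ for the bag $B\supseteq$ the relevant edge'', but there is no single relevant edge; $v$ may lie in many hyperedges spread over many bags, and the sum $\sum_{e\ni v}\mu(e)$ is not controlled by $\aw(\scH)$. Your suggested fix, ``balanced-separator recursion to isolate $v$ in a single bag's neighbourhood'', is not a concrete argument, and the centroid machinery does nothing to force all hyperedges through $v$ into one bag. (A workable variant would be to pick $v$ in a \emph{leaf} bag $B_t$ with $v\notin B_{t'}$ for the neighbour $t'$; then every hyperedge through $v$ sits in $B_t$ and the deleted mass is at most $\mu(B_t)\le a$. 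But this is not what you propose, and the centroid step is then irrelevant.)

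The paper's proof avoids this difficulty entirely by never picking an individual vertex. It inducts on $|E(\scH)|$, takes a tree decomposition of \emph{minimum order} with $\mu$-width $\le\aw(\scH)$, and for \emph{any} tree edge $\{r,u\}$ lets $S=B_r\cap B_u$. Minimality of $|V(T)|$ forces $\scH\setminus S$ to have at least two components, each with strictly fewer hyperedges. Since no hyperedge crosses between components, $\alpha(\scH)\ge\sum_C\alpha(\scH[C])$; applying the inductive hypothesis to each $C$ and using $\aw(\scH[C])\le\aw(\scH)$ gives
\[
\alpha(\scH)\;\ge\;\sum_C\Bigl(\tfrac12+\tfrac{\alpha^\ast(\scH[C])}{4\aw(\scH)}\Bigr)\;\ge\;1+\tfrac{\mu(V(\scH))-\mu(S)}{4\aw(\scH)}\;\ge\;1+\tfrac{\alpha^\ast(\scH)-\aw(\scH)}{4\aw(\scH)}\;>\;\tfrac12+\tfrac{\alpha^\ast(\scH)}{4\aw(\scH)}.
\]
The whole point is that having $\ge 2$ components yields an additive $+1$ on the left, which comfortably absorbs the $-\tfrac14$ loss from discarding the separator $S$; no centroid, no vertex-picking, and the constants $\tfrac12,4$ fall out directly.
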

\begin{proof}
We use induction on $|E(\scH)|$. If $|E(\scH)| = 1$ then one can see that $\alpha(\scH)=\alpha^{\ast}(\scH)=\aw(\scH)$, so the claim holds.
Now suppose $|E(\scH)|>1$, and assume the claim holds for every hypergraph with less than $|E(\scH)|$ edges. Let $\mu:V(\scH) \to \R_{\ge 0}$ be a fractional independent set for $\scH$ with $\mu(V(\scH))=\alpha^{\ast}(\scH)$, and let $(T,\scB)$ be a tree decomposition for $\scH$ of smallest order (i.e., that minimizes $|V(T)|$) such that $\mu$-width$(T,\scB) \le \aw(\scH)$. Choose any $\{r,u\} \in E(T)$ and let $S = B_r \cap B_u$. Finally, let $\scC(S)$ be the set of connected components of $\scH \setminus S$. Note that no $e \in E(\scH)$ intersects two distinct elements of $\scC(S)$: indeed, by the properties of tree decompositions $S$ separates $V_u^r$ and $V_r^u$, and any such $e$ would intersect both $V^r_u$ and $V^u_r$, a contradiction.

We can now deduce the following facts. First, $\alpha(\scH) \ge \sum_{C \in \scC(S)} \alpha(\scH[C])$, since no $e \in E(\scH)$ intersects more than one element of $\scC(S)$.
Second, $|\scC(S)|\ge 2$; indeed, if $|\scC(S)| \le 1$, then $B_r \subseteq B_u$ (or vice versa) and thus we could replace $\{B_r,B_u\}$ with $B_u$ (or with $B_r$) without increasing $\mu$-width$(T,\scB)$, contradicting the minimality of $|V(T)|$.
Third, $|E(\scH[C])| < |E(\scH)|$ for all $C \in \scC(S)$; indeed, every $C \in \scC(S)$ is intersected by some $e \in E(\scH)$, and as noted above $|\scC(S)|\ge 2$ and no $e \in E(\scH)$ intersects more than one element of $\scC(S)$.
Using these facts and the inductive hypothesis on each $\scH[C]$, we obtain:
\begin{align}
    \alpha(\scH) &\ge \sum_{C \in \scC(S)} \alpha(\scH[C]) \\
    & \ge \sum_{C \in \scC(S)} \left(\frac{1}{2}+ \frac{\alpha^{\ast}(\scH[C])}{4 \aw(\scH[C])}\right) \\
    \\
    & \ge \sum_{C \in \scC(S)} \left(\frac{1}{2}+ \frac{\alpha^{\ast}(\scH[C])}{4 \aw(\scH)}\right) \\
    &\ge 1 + \frac{1}{4 \aw(\scH)}\sum_{C \in \scC(S)} \alpha^{\ast}(\scH[C]) 
\end{align}
Note that $\alpha^{\ast}(\scH[C]) \ge \mu(C)$ since the restriction of $\mu$ to $C$ is a fractional independent set for $\scH[C]$. Moreover $\sum_{C \in \scC(S)}\mu(C)=\mu(V(\scH))-\mu(S)$, and $\mu(S) \le \aw(\scH)$ by the choice of $(T,\scB)$. Therefore:
\begin{align}
    \alpha(\scH) &\ge 1 + \frac{1}{4 \aw(\scH)}\sum_{C \in \scC(S)} \mu(C) \\
    &= 1 + \frac{1}{4 \aw(\scH)}(\mu(V(\scH)) - \mu(S)) \\
    &\ge 1 + \frac{\alpha^{\ast}(\scH) -  \aw(\scH)}{4 \aw(\scH)}\\
    &> \frac{1}{2} + \frac{\alpha^{\ast}(\scH)}{4 \aw(\scH)}
\end{align}
which concludes the proof.
\end{proof}

\paragraph*{A construction of Canonical DAGs that preserves $\alpha^{\ast}$}

In what follows, recall that the vertices of the DAG $\Hsim$ are the strongly connected components of $\vec{H}$; we use capital letters to denote the vertices of $\Hsim$.
\begin{lemma}\label{lem:push_up_alpha_general}
Let $\vec H$ be a digraph and $(U,V) \in E(\Hsim)$ where $U$ is not a source of $\Hsim$. Then there is a fractional independent set $\hat{\mu}$ for $\reduct{\vec H}$ of maximum weight such that $\hat{\mu}(v)=0$ for every $v\in V$.
\end{lemma}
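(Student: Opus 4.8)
The goal is to show that the ``weight'' of a maximum fractional independent set for $\reduct{\vec H}$ can be shifted away from the component $V$ whenever $V$ has an in-neighbour $U$ in $\Hsim$ that is not a source. The natural strategy is to start with any fractional independent set $\mu$ of maximum weight, and to redistribute its value on $V$ onto suitable vertices of $U$, one unit of $\mu$-value at a time. Since $U$ is not a source, $U \in V(\reduct{\vec H})$, so $U$'s vertices are legitimate carriers of fractional independence weight. The key structural observation is that in the reachability hypergraph, and hence in the contour, every hyperedge $e_i$ that contains a vertex of $V$ must also contain $U$: indeed $e_i = R(s_i)$ for a source $s_i$ of $\Hsim$, and if some vertex of $V$ lies in $R(s_i)$ then (because $(U,V)$ is an arc of $\Hsim$ and $\Hsim$ is a DAG with $U$ reachable from $s_i$ along any path witnessing reachability of $V$) the whole of $U$ lies in $R(s_i)$ too. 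This means: the set of hyperedges through (a vertex of) $V$ is contained in the set of hyperedges through (a vertex of) $U$.

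\textbf{Main construction.} Concretely, I would argue as follows. Write $\mu(V) = \sum_{v \in V}\mu(v) =: c$. If $c = 0$ we are done, so assume $c > 0$. Pick any vertex $u_0 \in U$ with $\mu(u_0) < 1$; such a vertex exists unless $\mu \equiv 1$ on all of $U$, in which case — since every hyperedge through $V$ already contains all of $U$, hence has $\mu$-load $\ge |U| \ge 1$ from $U$ alone — we must have $\mu(v) = 0$ for all $v \in V$ already (any hyperedge $e$ through $v$ satisfies $\sum_{w\in e}\mu(w)\le 1$ but $\sum_{w\in U}\mu(w) = |U| \ge 1$, forcing $\mu(v)=0$). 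In the remaining case, set $\delta = \min\{\,c,\; 1-\mu(u_0)\,\}>0$, decrease $\mu$ on the vertices of $V$ by a total of $\delta$ (e.g. greedily), and increase $\mu(u_0)$ by $\delta$. I claim the new function $\mu'$ is still a fractional independent set: the only constraints whose left-hand side increased are those of hyperedges $e \ni u_0$, and for each such $e$, we must check $\sum_{w \in e}\mu'(w)\le 1$. Either $e$ contains a vertex of $V$ — but then $e$'s load strictly decreased or stayed the same net (it lost $\ge$ the part of $\delta$ removed from $e\cap V$ and gained $\delta$ at $u_0$), which requires a small accounting: in fact since every hyperedge through a vertex of $V$ contains all of $U\ni u_0$, and we may always choose to remove the $\delta$ uniformly or in a way that every $V$-vertex in $e$ loses its share, the net change on such $e$ is $\le 0$. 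Or $e$ contains no vertex of $V$: but then I should instead choose $u_0$ more carefully, or observe that such an $e$ need not have slack. This last point is the crux of the argument and the main obstacle.

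\textbf{The main obstacle.} The difficulty is that a hyperedge through $u_0 \in U$ need not pass through $V$, so increasing $\mu(u_0)$ could violate its constraint even though we are decreasing $\mu$ only on $V$. The fix is to be smarter about where the removed weight goes: rather than dumping it on a single $u_0$, we should push it onto $U$ as a \emph{block}, exploiting that the hyperedges through $V$ all contain all of $U$. I would instead prove the following cleaner statement by a local exchange: there is a maximum fractional independent set $\hat\mu$ with $\hat\mu|_U \ge \hat\mu|_V$ in an appropriate componentwise/aggregate sense, and in fact with $\hat\mu(v)=0$ for $v\in V$. The right tool is an LP/polytope argument: the set of maximum fractional independent sets is a face of a polytope, and one shows that the linear functional ``$\sum_{v\in V}\mu(v)$'' attains $0$ on this face by exhibiting, for any $\mu$ on the face, a feasible direction $d$ with $d(V)<0$, $d(U)>0$, $d\equiv 0$ elsewhere, and $\langle \mathbf{1}, d\rangle = 0$, which keeps us on the face and strictly decreases $\mu(V)$; feasibility of $d$ uses exactly the containment ``hyperedges through $V$ $\subseteq$ hyperedges through $U$'' together with choosing $d$ on $U$ supported on vertices lying in \emph{every} hyperedge that contains $V$ (such vertices exist: all of $U$ works). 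Iterating (the polytope is bounded and each step strictly decreases the rational quantity $\mu(V)$ by a bounded-denominator amount, or one takes the direction all the way to the boundary $\mu(V)=0$) yields the desired $\hat\mu$. I expect writing the feasibility check of the direction $d$ rigorously — handling the bookkeeping over the three types of hyperedges (through $V$ and $U$; through $U$ but not $V$; through neither) — to be the only genuinely delicate part; everything else is routine LP manipulation, and Lemma~\ref{lem:invariants_are_invariant}/Fact~\ref{fact:LP_duality} are not even needed here.
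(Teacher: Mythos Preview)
Your key structural observation has the containment direction reversed, and this is a genuine error. You claim that every hyperedge of $\reduct{\vec H}$ containing a vertex of $V$ must also contain all of $U$, justified by ``$U$ reachable from $s_i$ along any path witnessing reachability of $V$''. But $(U,V)\in E(\Hsim)$ does \emph{not} force every $s_i$--$V$ path to go through $U$; a source can reach $V$ along a route that bypasses $U$ entirely. A concrete counterexample in $\Hsim$: sources $s_1,s_2$, a non-source $U$, and $V$, with arcs $s_1\to V$, $s_2\to U$, $U\to V$. Here $U$ is not a source, $(U,V)$ is an arc, yet the hyperedge $R(s_1)$ contains $V$ but not $U$.

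The correct containment is the opposite one: every hyperedge containing (a vertex of) $U$ also contains (all of) $V$, simply because if $s_i$ reaches $U$ then, via the arc $(U,V)$, $s_i$ also reaches $V$. With this in hand the ``main obstacle'' you identify evaporates, and no LP-direction argument is needed: one just shifts all the weight on $V$ onto (a single vertex of) $U$. For any hyperedge $e$, either $U\notin e$ (so only weight was removed, never added, and the constraint is preserved), or $U\in e$, whence $V\subseteq e$ as well and the sum $\mu(U)+\mu(V)$ is unchanged. This is exactly the paper's proof, carried out in $\reduct{\Hsim}$ and then transported back to $\reduct{\vec H}$ via Lemma~\ref{lem:invariants_are_invariant}. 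Your more elaborate exchange and polytope arguments are both unnecessary and, as written, rest on the false containment, so the feasibility check for your direction $d$ would fail on hyperedges through $U$ but not $V$ (which do exist).
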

\begin{proof}
Let $\hat{\mu}^*:V(\reduct{\vec H}) \to \R_{\ge 0}$ be any fractional independent set for $\reduct{\vec H}$ of maximum weight. By Lemma~\ref{lem:invariants_are_invariant}, there exists a maximum fractional independent set $\mu^*:V(\reduct{\Hsim}) \to \R_{\ge 0}$, the weight of which equal to the weight of $\hat{\mu}^*$.

We define a fractional independent set $\mu$ of $\reduct{\Hsim}$ as follows:
\begin{align}
\mu(X) = \left\{
    \begin{array}{ll}
        \mu^*(X) & X \notin \{U,V\} \\
        \mu^*(U)+\mu^*(V) & X=U \\
        0 & X=V
    \end{array}\right.
\end{align}
Clearly $\mu$ and $\mu^*$ have the same weight. 
Now let $e \in E(\reduct{\Hsim})$. If $U \notin e$ then $\mu(e) \le \mu^*(e)$. Otherwise $\{U,V\} \subseteq e$, and since $\mu(U)+\mu(V)=\mu^*(U)+\mu^*(V)$, then again $\mu(e) \le \mu^*(e)$. Therefore $\mu$ is a fractional independent set for $\reduct{\Hsim}$. Note that $\mu$ must also be of maximum weight since otherwise, $\mu^*$ would not have been of maximum weight.

Now define $\hat{\mu}: V(\reduct{\vec{H}})\to \R_{\ge 0}$ as follows: For any strongly connected component $X=x_1,\dots,x_k$ of $\vec{H}$ we set $\hat{\mu}(x_1)=\mu(X)$ and $\hat{\mu}(x_i)=0$ for all $i\in\{2,\dots,k\}$. Clearly, $\hat{\mu}$ has the same total weight as $\mu$. Furthermore, note that the hyperedges of $\reduct{\Hsim}$ are obtained from the hyperedges of $\reduct{\vec{H}}$ by contracting each vertex set $X$ corresponding to a strongly connected component in $\vec{H}$ into a single vertex. For each such set $X$ and hyperedge $e\in E(\reduct{\vec{H}})$ we have that either $X\subseteq e$ or $e\cap X = \emptyset$ --- this follows from the definition of reachability hypergraphs and of the contour. Thus $\hat{\mu}$ is a fractional independent set of $\reduct{\vec{H}}$. Furthermore, since $\hat{\mu}$ has the same weight as $\mu$ and since $\mu$ is a fractional independent set of $\reduct{\Hsim}$ of maximum weight, we have by Lemma~\ref{lem:invariants_are_invariant} that $\hat{\mu}$ is of maximum weight as well. Finally, $\hat{\mu}(v)=0$ for each $v\in V$ since $\mu(V)= 0$, concluding the proof.
\end{proof}

\begin{lemma}\label{lem:from_dags_to_canonical_general}
For every digraph $\vec H$ there is a digraph $\vec F$ such that the following conditions are satisfied:
\begin{enumerate}
    \item $\vec{F}$ can be obtained from $\vec{H}$ via a sequence of sink deletions,
    \item $\alpha^{\ast}(\reduct{\vec F}) \ge \alpha^{\ast}(\reduct{\vec H})$, and
    \item $\SCCquot{F}$ is a canonical DAG.
\end{enumerate}
\end{lemma}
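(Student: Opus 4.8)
The plan is to iteratively eliminate all ``bad'' vertices of $\Hsim$ --- where a vertex $W$ of $\Hsim$ is \emph{bad} if it is neither a source nor a sink --- while never decreasing $\alpha^\ast$ of the contour and using only sink deletions on $\vec{H}$. First I would set up the right potential: since $\Hsim$ is a DAG, it has a topological order, and I would argue by induction on the number of bad vertices of $\Hsim$. If $\Hsim$ has no bad vertices, then every vertex is a source or a sink, i.e.\ $\Hsim$ is a canonical DAG, and taking $\vec{F}=\vec{H}$ works. So suppose there is a bad vertex, and pick one, say $W$, that is \emph{maximal} in the topological order among bad vertices --- equivalently, every vertex reachable from $W$ other than $W$ itself is a sink of $\Hsim$ (any non-sink descendant of $W$ would be a bad vertex strictly later in the order). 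The idea is then to ``cut $W$ off from its out-neighbourhood'' by deleting appropriate sinks, turning $W$ into a sink, without losing fractional-independence weight.

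The key step is: let $T$ be a sink of $\Hsim$ with $(W,T)\in E(\Hsim)$. I would like to invoke Lemma~\ref{lem:push_up_alpha_general} with $(U,V)=(W,T)$: since $W$ is not a source of $\Hsim$ (it is bad), the lemma gives a maximum-weight fractional independent set $\hat\mu$ for $\reduct{\vec H}$ with $\hat\mu(v)=0$ for every $v\in T$. Now delete the sink $T$ from $\vec H$, obtaining $\vec H\setminus T$; I must check two things. First, $\alpha^\ast(\reduct{\vec H\setminus T}) \ge \alpha^\ast(\reduct{\vec H})$: the contour $\reduct{\vec H\setminus T}$ is obtained from $\reduct{\vec H}$ by deleting the vertices of $T$ and possibly shrinking/removing some hyperedges accordingly; since $\hat\mu$ vanishes on $T$, its restriction to $V(\reduct{\vec H\setminus T})$ is still a valid fractional independent set (no edge constraint can be violated by removing a zero-weight vertex or shrinking an edge) and has the same total weight, giving the inequality (the reverse is not needed). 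Here I would use, or re-prove in this setting, the fact that deleting a sink $T$ of $\Hsim$ does not destroy the sources (Observation-type reasoning as in the MR-minor section), so the contour is well-behaved. Second, after deleting \emph{all} sinks $T$ with $(W,T)\in E(\Hsim)$ one at a time --- applying Lemma~\ref{lem:push_up_alpha_general} afresh before each deletion, each time to the current graph and the arc from $W$ to the next such sink --- the vertex $W$ has out-degree $0$ in the resulting $\Hsim$, hence $W$ is now a sink, so the number of bad vertices has strictly decreased; no new bad vertex is created because sink deletion only removes vertices and arcs into them. Then apply the induction hypothesis to the resulting digraph.

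I would present this as a clean induction: the outer induction is on the number of bad vertices of $\Hsim$; the inner loop removes, one by one, the sink out-neighbours of a topologically-maximal bad vertex $W$, invoking Lemma~\ref{lem:push_up_alpha_general} each time to guarantee a maximum fractional independent set that is zero on the sink about to be deleted, and then invoking the monotonicity of $\alpha^\ast$ of the contour under such sink deletions. The main obstacle I anticipate is bookkeeping around the contour: one must argue carefully that when $T$ is a sink of $\Hsim$, (i) deleting $T$ does not turn any former non-source into a source (so the hyperedge set of the reachability hypergraph, hence the contour, transforms predictably), and (ii) the hyperedges of $\reduct{\vec H\setminus T}$ really are ``$e\setminus T$'' for the old hyperedges $e$ (restricted to surviving sources), so that a fractional independent set vanishing on $T$ descends without loss. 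Both are true because $T$ being a sink of $\Hsim$ means $T$ is a strongly connected component from which nothing else is reachable, so removing it affects only reachability \emph{sets} $R(S)$ by deleting $T$'s vertices from them and cannot disconnect any source from the rest; but making this airtight is the part that needs the most care. Everything else --- the existence of a topologically-maximal bad vertex, the strict decrease of the potential, the base case --- is routine.
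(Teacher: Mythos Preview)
Your proposal is correct and follows essentially the same approach as the paper: locate an arc $(U,V)$ in $\Hsim$ from a non-source $U$ to a sink $V$, invoke Lemma~\ref{lem:push_up_alpha_general} to obtain a maximum fractional independent set vanishing on $V$, delete the sink $V$, observe that the restriction is still a fractional independent set of the new contour with the same weight, and iterate. The only difference is organisational: the paper simply repeats ``find such an arc, delete the sink'' until $\Hsim$ is canonical (termination by vertex count), whereas you structure the same deletions around a topologically maximal bad vertex and track the number of bad vertices as a potential --- this extra bookkeeping is sound but not needed.
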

\begin{proof}
If $\Hsim$ is a canonical DAG then we set $\vec{F}=\vec{H}$.

If $\Hsim$ is not a canonical DAG, then there is an arc $(U,V) \in E(\Hsim)$ such that both $U$ and $V$ are not sources. Moreover we can assume $V$ is a sink (otherwise replace $U$ with $V$, and $V$ with one of its children). By Lemma~\ref{lem:push_up_alpha_general}, there is a fractional independent set $\hat{\mu}$ for $\reduct{\vec H}$ of maximum weight such that $\hat{\mu}(v)=0$ for every $v\in V$. Since $V$ is a sink of $\Hsim$, we can set $\vec{H}'= \vec{H} \setminus V$, that is, we perform a sink deletion. Let $\hat{\mu}'$ be the restriction of $\hat{\mu}$ to $V(\reduct{\vec{H}'})$. Note that $V(\reduct{\vec H'}) = V(\reduct{\vec H'}) \setminus V$; since $\hat{\mu}(v)=0$ for all $v\in V$, this implies that $\hat{\mu}'$ has the same weight as $\hat{\mu}$.

Moreover $\hat{\mu}'$ is clearly a fractional independent set for $\reduct{\vec H'}$, since for every $e' \in E(\reduct{\vec H'})$ there is $e \in E(\reduct{\vec H})$ such that $e' \subseteq e$, and $\hat{\mu}(e) = \hat{\mu}'(e')$. We conclude that $\alpha^{\ast}(\reduct{\vec H'}) \ge \alpha^{\ast}(\reduct{\vec H})$.
If $\SCCquot{H'}$ is a canonical DAG, then setting $\vec F=\vec H'$ concludes the proof; otherwise just repeat the argument on $\vec H'$.
\end{proof}

We are now able to prove the intractability part of our classification.
\begin{lemma}\label{lem:subs_hard}
Let $\vec{C}$ be a recursively enumerable class of digraphs of unbounded fractional cover number. Then $\#\dirsubsprobd(\vec{C})$ is not fixed-parameter tractable, unless ETH fails.
\end{lemma}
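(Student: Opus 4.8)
The plan is to combine the two hardness results already at our disposal --- Lemma~\ref{lem:subs_hard_independence} (for unbounded independence number of contours) and Lemma~\ref{lem:bottleneck_subs} (for unbounded adaptive width of canonical DAGs obtained as MR minors of quotients) --- via a case distinction on the independence number of the contours. First note that $\rho^\ast(\vec H)=\rho^\ast(\reduct{\Hsim})$ by Observation~\ref{obs:fractional_equivalence}, which equals $\alpha^\ast(\reduct{\Hsim})$ by Fact~\ref{fact:LP_duality}, which in turn equals $\alpha^\ast(\reduct{\vec H})$ by Lemma~\ref{lem:invariants_are_invariant}; hence $\rho^\ast(\vec C)=\infty$ is equivalent to $\alpha^\ast(\reduct{\vec C})=\infty$. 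Now distinguish two cases. If $\alpha(\reduct{\vec C})=\infty$, then $\#\dirsubsprobd(\vec C)\notin\ccFPT$ unless ETH fails, directly by Lemma~\ref{lem:subs_hard_independence}. It remains to handle the case in which $\alpha(\reduct{\vec C})\le c$ for some constant $c$.

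So assume $\alpha(\reduct{\vec C})\le c$, and fix an arbitrary $\vec H\in\vec C$. The plan is to extract from $\vec H$ a canonical DAG $\vec F$ of adaptive width roughly $\rho^\ast(\vec H)/c$, which is an MR minor of a quotient of $\vec H$. To this end I would apply Lemma~\ref{lem:from_dags_to_canonical_general} to the DAG $\Hsim$: it yields a digraph $\vec F$ obtained from $\Hsim$ by a sequence of sink deletions, with $\alpha^\ast(\reduct{\vec F})\ge\alpha^\ast(\reduct{\Hsim})=\rho^\ast(\vec H)$, and such that $\SCCquot{F}$ is a canonical DAG; since $\Hsim$ is loopless and acyclic and sink deletions preserve both properties, $\vec F=\SCCquot{F}$ is itself a loopless canonical DAG. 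Moreover $\vec F$ is an MR minor of the quotient graph of $\vec H$ w.r.t.\ $\sim$ (which lies in the class of quotients of $\vec C$): indeed $\Hsim$ is obtained from that quotient by loop deletions, and $\vec F$ from $\Hsim$ by sink deletions, so the whole sequence consists of loop deletions followed by sink deletions.

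The key additional ingredient is that sink deletions do not increase the independence number of the contour. Deleting a sink of a DAG changes no in-degree, hence leaves the set of sources, and therefore the edge set of the reachability hypergraph restricted to the surviving vertices, unchanged; thus the contour of the resulting DAG is obtained from the old contour by deleting one vertex (possibly emptying a singleton hyperedge, which is then dropped). Since deleting a vertex from a hypergraph cannot increase $\alpha$ --- any independent set of $\scH\setminus v$ remains independent in $\scH$, because shrinking hyperedges preserves containment --- we conclude $\alpha(\reduct{\vec F})\le\alpha(\reduct{\Hsim})=\alpha(\reduct{\vec H})\le c$. Now Lemma~\ref{lem:alpha_aw} applied to $\scH=\reduct{\vec F}$ gives $\aw(\vec F)=\aw(\reduct{\vec F})\ge \frac{\alpha^\ast(\reduct{\vec F})}{4\,\alpha(\reduct{\vec F})}\ge\frac{\rho^\ast(\vec H)}{4c}$ (the corner case $\reduct{\vec F}=\emptyset$ cannot occur once $\rho^\ast(\vec H)$ is large, since $\alpha^\ast(\reduct{\vec F})\ge\rho^\ast(\vec H)$). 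As $\vec H$ ranges over $\vec C$ and $\rho^\ast(\vec C)=\infty$, the quantities $\aw(\vec F)$ are unbounded. Hence the class of all canonical DAGs that are MR minors of quotient graphs of digraphs in $\vec C$ has unbounded adaptive width, and Lemma~\ref{lem:bottleneck_subs} yields $\#\dirsubsprobd(\vec C)\notin\ccFPT$ unless ETH fails, completing the argument in both cases.

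I expect the main obstacle to be the bounded-independence case, and within it the verification that the canonical DAG produced by Lemma~\ref{lem:from_dags_to_canonical_general} still carries large adaptive width. That lemma only preserves the \emph{fractional} independence number, so one genuinely has to feed both that fact and the (simple but indispensable) monotonicity of $\alpha$ under sink deletion into the integrality-gap bound of Lemma~\ref{lem:alpha_aw}, turning ``large $\alpha^\ast$, bounded $\alpha$'' into ``large $\aw$''; keeping track that the extracted digraph is simultaneously a canonical DAG \emph{and} an MR minor of a quotient is the other point that needs care.
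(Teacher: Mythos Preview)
Your proof is correct and follows essentially the same strategy as the paper: the same case split on $\alpha(\reduct{\vec C})$, the same use of Lemma~\ref{lem:from_dags_to_canonical_general} to produce canonical DAGs via sink deletions, the same monotonicity of $\alpha$ under sink deletion, and the same application of Lemma~\ref{lem:alpha_aw} to convert ``large $\alpha^\ast$, bounded $\alpha$'' into ``large $\aw$''. Your packaging is in fact slightly cleaner than the paper's: by applying Lemma~\ref{lem:from_dags_to_canonical_general} to the DAG $\Hsim$ rather than to $\vec H$ itself, you get $\vec F=\SCCquot{F}$ automatically and can invoke Lemma~\ref{lem:bottleneck_subs} directly (exhibiting $\vec F$ as an MR minor of the quotient $\vec H/\!\sim$), whereas the paper works with sink deletions from $\vec H$, passes to $\SCCquot{H'}$ afterwards, and routes through $\#\dirhomsprobd(\vec C)$ via Lemma~\ref{lem:bottleneck_subs_setup} before reaching $\#\dirsubsprobd(\vec C)$.
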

\begin{proof}
Let $\reduct{\vec{C}}$ be the class of contours of digraphs in $\vec{C}$. 
By Observation~\ref{obs:fractional_equivalence} and Lemma~\ref{lem:invariants_are_invariant} $\reduct{\vec{C}}$ has unbounded fractional edge cover number and thus, by Fact~\ref{fact:LP_duality}, $\reduct{\vec{C}}$ has unbounded fractional independence number.
The proof now considers two cases.

\noindent \textbf{Case 1:} The independence number of $\reduct{\vec{C}}$ is unbounded. Then the claim follows from the construction based on induced matching gadgets (Lemma~\ref{lem:subs_hard_independence}).

\noindent \textbf{Case 2:} The independence number of $\reduct{\vec{C}}$ is bounded. We show that $\#\dirhomsprobd(\vec{C})$ is not fixed-parameter tractable, unless ETH fails. The claim then follows since, by Lemma~\ref{lem:bottleneck_subs_setup} and the trivial fact that each digraph is a quotient graph of itself, we have
\[\#\dirhomsprobd(\vec{C}) \fptred \#\dirsubsprobd(\vec{C}) \,.\]
Let $\vec{C}'$ be the class of all digraphs $\vec{H}'$ such that 
\begin{enumerate}
    \item $\vec{H}'$ can be obtained by a sequence of sink deletions from a graph $\vec{H}\in \vec{C}$, disallowing deletions of sinks that are also sources, and
    \item $\SCCquot{H'}$ is a canonical DAG.
\end{enumerate}
Now note that a sink deletion in a digraph $\vec{H}$ corresponds to vertex-deletions in the contour. More precisely, let $\vec{H}'$ be obtained from $\vec{H}$ by deleting the sink $T$ of $\Hsim$. Then $\reduct{\vec{H}'}=\reduct{\vec{H}}[V(\vec{H})\setminus T]$. Thus, clearly, the independence number of $\reduct{\vec{H}'}$ is upper bounded by the independence number of $\reduct{\vec{H}}$. Thus, the independence number of the class $\reduct{\vec{C}'}$ of the contours of digraphs in $\vec{C}'$ is bounded (since the independence number of $\reduct{\vec{C}}$ is bounded by the assumption of this case.). 
Next, by Lemma~\ref{lem:from_dags_to_canonical_general}, we have that the fractional independence number of $\reduct{\vec{C}'}$ is still unbounded. In combination with Lemma~\ref{lem:alpha_aw}, this is only possible if the adaptive width of $\reduct{\vec{C}'}$ is unbounded.

Next, let $\SCCquot{C'}$ be the class of all DAGs $\SCCquot{H'}$ with $\vec{H}'\in \vec{C}'$. By definition of $\vec{C}'$, each element of $\SCCquot{C'}$ must be a canonical DAG. Now note that the contours of the canonical DAGs in $\SCCquot{C'}$ can be obtained from the contours of digraphs in $\vec{C}'$ by a sequence of contractions of vertices $u$ and $v$ such that $u$ and $v$ are contained in precisely the same hyperedges (since we contract strongly connected components into single vertices). Those contractions cannot decrease the adaptive width as shown in Lemma~\ref{lem:invariants_are_invariant}.

As a consequence, we have established the following three facts:
\begin{itemize}
    \item[(A)] The elements in $\SCCquot{C'}$ are MR minors of the digraphs in $\vec{C}$: $\vec{C}'$ is obtained by sink-deletions, and $\SCCquot{C'}$ is obtained by arc contractions --- each strongly connected component can be contracted into a single vertex using only arc contractions.
    \item[(B)] The adaptive width of $\SCCquot{C'}$ is unbounded.
    \item[(C)] Each element of $\SCCquot{C'}$ is a canonical DAG.
\end{itemize}
In combination, (A), (B), and (C) allow for the application of Lemma~\ref{lem:main_hardness_homs} which yields intractability as desired, concluding the proof.
\end{proof}

\subsection{Proof of the Classification}
We are now able to combine our upper and lower bounds and prove a complete and explicit classification:

\begin{theorem}\label{thm:main_result_subs}
Let $\vec{C}$ be a recursively enumerable class of digraphs and assume that ETH holds. Then the problem $\#\dirsubsprobd(\vec{C})$ is fixed-parameter tractable if and only if the fractional cover number of $\vec{C}$ is bounded.
\end{theorem}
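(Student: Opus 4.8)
The plan is to combine the upper and lower bounds established in Section~\ref{sec:sub_UB} and Section~\ref{sec:sub_LB} into a single dichotomy, so the proof itself will be very short; almost all of the real work has already been carried out in the preceding lemmas.

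First I would split the statement into its two implications. For the tractable direction, assume $\rho^\ast(\vec{C}) < \infty$, say $\rho^\ast(\vec{H}) \le r$ for all $\vec{H} \in \vec{C}$. Then Theorem~\ref{thm:main_algo_subs} directly yields an algorithm for $\#\dirsubsprobd(\vec{C})$ running in time $f(|\vec{H}|,d)\cdot|\vec{G}|^{r+O(1)}$, where $d$ is the maximum outdegree of $\vec{G}$; since $r$ is a constant depending only on $\vec{C}$, this is an FPT running time with respect to the parameter $|\vec{H}|+d(\vec{G})$, so $\#\dirsubsprobd(\vec{C}) \in \ccFPT$.

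For the intractable direction, assume $\rho^\ast(\vec{C}) = \infty$, i.e.\ the fractional cover number of $\vec{C}$ is unbounded. Then Lemma~\ref{lem:subs_hard} states that $\#\dirsubsprobd(\vec{C})$ is not fixed-parameter tractable unless ETH fails; since we assume ETH holds, $\#\dirsubsprobd(\vec{C}) \notin \ccFPT$. Combining the two implications gives the stated equivalence. (I would also recall that $\vec{C}$ is assumed recursively enumerable exactly so that Lemma~\ref{lem:subs_hard} applies, since its reductions need to enumerate $\vec{C}$ to locate a suitable preimage.)

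Since the combination step is immediate, there is no genuine obstacle at this level; the substantive difficulties all lie upstream. On the tractability side these are the reduction of $\#\dirhomsprobd$ to $\#\textsc{CSP}$ preserving reachability hypergraphs (Theorem~\ref{thm:homs_algo}) together with the bounds $\mathsf{fhtw}(\reach{\vec{F}}) \le \rho^\ast(\vec{F}) \le \rho^\ast(\vec{H})$ for quotients $\vec{F}$ of $\vec{H}$ (Lemmas~\ref{lem:main_technical_subs} and~\ref{lem:subs_quotients_small_fhtw}), invoked through the homomorphism basis of Lemma~\ref{lem:subs_transformation}. On the hardness side the difficulty is the two-case analysis of Lemma~\ref{lem:subs_hard}: when $\alpha(\reduct{\vec{C}})$ is unbounded one extracts induced matching gadgets and exhibits directed splits of complete graphs as MR minors of quotients (Lemmas~\ref{lem:ind_number_img}, \ref{lem:quotient_construction}, \ref{lem:subs_hard_independence}); when $\alpha(\reduct{\vec{C}})$ is bounded one uses the integrality-gap bound $\alpha(\scH) \ge \tfrac12 + \tfrac{\alpha^\ast(\scH)}{4\aw(\scH)}$ (Theorem~\ref{thm:independence_Gap_Intro}/Lemma~\ref{lem:alpha_aw}) to force unbounded adaptive width of the contours, then passes to canonical DAGs via Lemmas~\ref{lem:push_up_alpha_general} and~\ref{lem:from_dags_to_canonical_general} and concludes through Lemma~\ref{lem:main_hardness_homs} and Lemma~\ref{lem:bottleneck_subs_setup}. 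With all of that in place, writing Theorem~\ref{thm:main_result_subs} is simply a matter of citing Theorem~\ref{thm:main_algo_subs} and Lemma~\ref{lem:subs_hard} and noting that ETH is assumed.
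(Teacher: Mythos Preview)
Your proposal is correct and matches the paper's own proof essentially verbatim: the paper simply cites Theorem~\ref{thm:main_algo_subs} for the ``if'' direction and Lemma~\ref{lem:subs_hard} for the ``only if'' direction. Your additional exposition of where the real work happens upstream is accurate and aligns with how the paper structures Sections~\ref{sec:sub_UB} and~\ref{sec:sub_LB}.
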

\begin{proof}
The ``if'' direction is Theorem~\ref{thm:main_algo_subs}, and the ``only if'' direction is Lemma~\ref{lem:subs_hard}.
\end{proof}

Furthermore, we note that all of our hardness results apply also in case of digraphs without loops, and even for DAGs; the reason for this is two-fold: First the intractability boils down to counting homomorphisms from canonical DAGs to DAGs of small outdegree via our reduction from $\#\textsc{CSP}$ (see Lemma~\ref{lem:cphom_to_cpdirhom} and Lemma~\ref{lem:coloured_CSP_reduction}). Second, each intermediate step in our reduction, including the interpolation method based on Dedekind's Theorem (Lemma~\ref{lem:complexity_monotonicity}), does not create any additional cycles (including loops) in the host, provided our pattern is without loops and cycles. Concretely, we obtain the following variations of Theorem~\ref{thm:main_result_subs}.

\begin{theorem}\label{thm:main_result_subs_self-loop-free}
Let $\vec{C}$ be a recursively enumerable class of digraphs without loops and assume that ETH holds. Then the problem $\#\dirsubsprobd(\vec{C})$, restricted on host graphs without loops, is fixed-parameter tractable if and only if the fractional cover number of $\vec{C}$ is bounded.
\end{theorem}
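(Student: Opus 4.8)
The proof follows the lines of Theorem~\ref{thm:main_result_subs}: the tractability direction is unchanged, and the hardness direction is obtained by re-running the chain of reductions behind Lemma~\ref{lem:subs_hard} and checking that, started from loop-free data, none of the reductions ever introduces a loop into the host. For the ``if'' direction, observe that the algorithm of Theorem~\ref{thm:main_algo_subs} computes $\#\subs{\vec H}{\vec G}$ for arbitrary digraphs in time $f(|\vec H|,d(\vec G))\cdot|\vec G|^{\rho^\ast(\vec H)+O(1)}$, and forbidding loops in the host only restricts the input domain; hence bounded $\rho^\ast(\vec C)$ still yields membership in $\ccFPT$.

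For the ``only if'' direction, assume $\rho^\ast(\vec C)=\infty$. The relevant facts are: (a) the base hardness result Lemma~\ref{lem:coloured_CSP_reduction} reduces from $\#\textsc{CSP}$ through canonical DAGs, and the hosts produced by Lemma~\ref{lem:cphom_to_cpdirhom} are bipartite DAGs, hence loop-free; (b) the MR contraction operation is by definition forbidden from producing a loop, so no MR minor sequence starting from a loop-free digraph contains a loop-deletion step, whence on reversing such a sequence via Lemma~\ref{lem:minor_reduction} the algorithm $\mathbb A_3$ --- the only one adding loops to the host --- is never invoked, while $\mathbb A_1,\mathbb A_2$ preserve loop-freeness because every arc they add joins vertices of distinct colours; (c) colour removal (Lemma~\ref{lem:colour_removal}) queries only subgraphs of the host; and (d) inside Lemma~\ref{lem:bottleneck_subs_setup} the oracle for $\#\dirsubsprobd(\vec C)$ is queried only on tensor products $\vec G'\otimes\vec H_{\mathrm{query}}$ with $\vec G'$ the host of the homomorphism instance, and such a product is loop-free whenever $\vec G'$ is (a loop at $(a,b)$ would require $(a,a)\in E(\vec G')$), irrespective of $\vec H_{\mathrm{query}}$.

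It remains to arrange the two cases of Lemma~\ref{lem:subs_hard} so that every digraph whose MR minors we reconstruct is loop-free. In Case~2 ($\alpha(\reduct{\vec C})$ bounded) this is automatic, since $\SCCquot{C'}$ is obtained from $\vec C$ by sink deletions and arc contractions, preserving loop-freeness by Observation~\ref{obs:wminor_obs}(M1), and the final step $\#\dirhomsprobd(\vec C)\fptred\#\dirsubsprobd(\vec C)$ uses $\vec H$ as a quotient of itself. In Case~1 ($\alpha(\reduct{\vec C})$ unbounded) the quotient built in Lemma~\ref{lem:quotient_construction} contracts whole strongly connected components and so creates loops whenever a component is non-trivial; we fix this by using instead the finer partition that, in each block, keeps one representative vertex of each component and leaves the other component-vertices as singletons. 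Since the gadget heads and tails form an independent set in the underlying undirected graph (the tails being sources and no two heads being reachable from a common source), this finer quotient $\vec H'$ is loop-free, and the ``exploded'' components can subsequently be collapsed to their representatives by MR \emph{arc} contractions; hence $\dsplit F$ is still an MR minor of the loop-free quotient $\vec H'$. Running the chain $\#\cpdirhomsprobd(\dsplit K)\fptred\#\cpdirhomsprobd(\vec Q^{\circ})\fptred\#\dirhomsprobd(\vec Q^{\circ})\fptred\#\dirsubsprobd(\vec C)$ with $\vec Q^{\circ}$ the (recursively enumerable) class of loop-free quotients of digraphs in $\vec C$ then keeps every host loop-free by (a)--(d), and every oracle call reaching $\#\dirsubsprobd(\vec C)$ uses a loop-free host together with a loop-free pattern from $\vec C$; combining the two cases yields the claim. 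The same bookkeeping also gives the corresponding statement for DAGs.

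The main obstacle is exactly this loop-tracking: one must keep the distinction between ordinary quotients (which may have loops) and MR minors (which may not) firmly in mind, be sure that the $\#\textsc{CSP}$-based core and the Dedekind interpolation never manufacture a loop in a loop-free host, and handle the Case~1 contraction of non-trivial strongly connected components with the ``exploded representative'' device above. None of this is deep, but it does require re-reading the constructions $\mathbb A_1$, $\mathbb A_2$, Lemma~\ref{lem:cphom_to_cpdirhom}, Lemma~\ref{lem:complexity_monotonicity} and Lemma~\ref{lem:quotient_construction} to confirm that no arc is ever placed between two occurrences of the same vertex.
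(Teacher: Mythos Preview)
Your overall strategy—tracing the reduction chain and checking that no step introduces a loop into the host—matches the paper's own (terse) justification, and you correctly identify that Case~1 of Lemma~\ref{lem:subs_hard} needs extra care because the quotient built in Lemma~\ref{lem:quotient_construction} first collapses all strongly connected components and hence may have loops. Your ``exploded representative'' fix is essentially right, though one stated reason is off: the gadget tails $s_i$ are \emph{not} required to be sources; the correct argument for the independence of each block $B_e=\{s_e^u,s_e^v\}$ is that an arc $s_e^u\to s_e^v$ would make both $W_e^u$ and $W_e^v$ reachable from any source reaching $S_e^u$, contradicting the induced-matching-gadget property.

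That said, all of this careful bookkeeping can be avoided entirely. Since every $\vec H\in\vec C$ is loop-free, for \emph{any} digraph $\vec G$ one has $\#\subs{\vec H}{\vec G}=\#\subs{\vec H}{\vec G^{\circ}}$, where $\vec G^{\circ}$ is $\vec G$ with all loops removed (no loop-free subgraph uses a loop of the host). Stripping loops does not increase the outdegree, so an oracle for the loop-free-host version of $\#\dirsubsprobd(\vec C)$ trivially simulates the unrestricted oracle. Hence the hardness direction follows immediately from Theorem~\ref{thm:main_result_subs} without re-examining any reduction step. This one-line reduction is what the paper's remark is ultimately relying on; your route works but is considerably more laborious than necessary.
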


\begin{theorem}\label{thm:main_result_subs_DAGs}
Let $\vec{C}$ be a recursively enumerable class of DAGs and assume that ETH holds. Then the problem $\#\dirsubsprobd(\vec{C})$, restricted on acyclic host graphs, is fixed-parameter tractable if and only if the fractional cover number of $\vec{C}$ is bounded.
\end{theorem}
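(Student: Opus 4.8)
The plan is to leverage Theorem~\ref{thm:main_result_subs} and to observe that restricting the host graphs to DAGs changes neither direction of the dichotomy. The ``if'' direction is immediate: the algorithm of Theorem~\ref{thm:main_algo_subs} computes $\#\subs{\vec{H}}{\vec{G}}$ in time $f(|\vec{H}|,d)\cdot|\vec{G}|^{\rho^\ast(\vec{H})+O(1)}$ for \emph{every} host $\vec{G}$, in particular for acyclic ones, so a bounded fractional cover number of $\vec{C}$ already yields fixed-parameter tractability of the acyclic-host version.

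For the ``only if'' direction I would re-inspect the hardness proof, i.e.\ the chain of reductions culminating in Lemma~\ref{lem:subs_hard}, and verify that when $\vec{C}$ (and hence every pattern eventually passed to an oracle) is a DAG, every host graph on which the $\#\subs{\vec{H}}{\cdot}$ oracle is invoked is acyclic. Two elementary facts drive the argument: (A) if $c\colon\vec{G}\to\vec{H}$ is a digraph homomorphism and $\vec{H}$ is a DAG then $\vec{G}$ is a DAG, since a directed cycle in $\vec{G}$ maps to a directed closed walk in $\vec{H}$; and (B) if $\vec{G}'$ is a DAG and $\vec{H}''$ is an arbitrary digraph then the tensor product $\vec{G}'\otimes\vec{H}''$ is a DAG, because the projection onto the first coordinate is a homomorphism $\vec{G}'\otimes\vec{H}''\to\vec{G}'$ and one may apply~(A). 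With these in hand I would walk through the pipeline: the host constructed in Lemma~\ref{lem:cphom_to_cpdirhom} has all of its arcs of the form $(x_{e,s},v)$ with $v\in V(\mathcal{G})$ and is therefore acyclic by inspection; the subroutine $\mathbb{A}_3$ of Lemma~\ref{lem:minor_reduction}, the only one that ever adds a loop to the host, is never invoked because a loop-free pattern carries no loop, whereas $\mathbb{A}_1$ and $\mathbb{A}_2$ each return a host that is surjectively $\vec{H}$-coloured for the pattern $\vec{H}$ they target, so by~(A) the final host of an MR-minor reconstruction is acyclic provided its final target pattern is; Lemma~\ref{lem:colour_removal} only queries subgraphs of an already acyclic host; and the Dedekind interpolation of Lemma~\ref{lem:complexity_monotonicity}, used inside Lemma~\ref{lem:bottleneck_subs_setup}, only queries hosts of the form $\vec{G}'\otimes\vec{H}''$, handled by~(B). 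Concatenating these observations shows that the reduction underlying Lemma~\ref{lem:subs_hard}, run on a DAG class $\vec{C}$, calls the subgraph-counting oracle only on acyclic hosts, so its conclusion ($\#\dirsubsprobd(\vec{C})\notin\ccFPT$ unless ETH fails) survives the restriction verbatim.

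The point that needs genuine care, and which I expect to be the main obstacle, is to confirm that the ``final target pattern'' of each MR-minor reconstruction is itself a DAG, i.e.\ that the quotient $\vec{H}/\sigma$ built in Lemma~\ref{lem:quotient_construction} (used in the unbounded-independence-number case of Lemma~\ref{lem:subs_hard_independence}) and the pattern $\vec{H}$ itself (used in the bounded-independence-number case) are acyclic. For the latter there is nothing to show, as $\vec{H}\in\vec{C}$. For the former I would argue that after the initial strong-component quotient one may assume $\vec{H}$ is a DAG, and that then no head $w_e^v$ of the chosen induced matching gadget reaches another head $w_{e'}^{v'}$ in $\vec{H}$ --- otherwise both $w_e^v$ and $w_{e'}^{v'}$ would be reachable from the single source $s_e^v$ (or $\vec{H}$ would contain a cycle), contradicting the gadget definition; consequently any directed cycle in $\vec{H}/\sigma$, which cannot pass through a block $B_e$ of sources, would have to realise a nontrivial directed walk in $\vec{H}$ between two heads, which is impossible, so $\vec{H}/\sigma$ is a DAG. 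I would also flag explicitly that the \emph{intermediate} hosts produced during an MR-minor reconstruction need not be acyclic --- an intermediate target pattern can acquire a $2$-cycle under an arc contraction --- but this is harmless, since those hosts are internal to the reduction and never reach an oracle, and as soon as the reconstruction reaches a DAG pattern, fact~(A) forces the associated host to be acyclic again.

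Finally, the same bookkeeping, retaining only loop-freeness instead of acyclicity, proves Theorem~\ref{thm:main_result_subs_self-loop-free}: one then only needs that no reduction step adds a loop to the host, which again follows from $\mathbb{A}_3$ never being invoked and from the tensor product of two loop-free digraphs being loop-free.
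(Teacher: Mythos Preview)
Your proposal is correct and follows the same strategy the paper adopts: trace the hardness reduction chain (Lemmas~\ref{lem:cphom_to_cpdirhom}, \ref{lem:minor_reduction}, \ref{lem:colour_removal}, \ref{lem:complexity_monotonicity}, \ref{lem:bottleneck_subs_setup}) and verify that no step introduces a directed cycle into the host whenever the pattern handed to the oracle is acyclic; the paper states exactly this in the paragraph preceding Theorems~\ref{thm:main_result_subs_self-loop-free} and~\ref{thm:main_result_subs_DAGs}, though much more tersely than you do. One point worth making explicit in your write-up: the search inside Lemma~\ref{lem:minor_reduction} (applied to the quotient class $\vec{Q}$) is not guaranteed to land on the particular quotient $\vec{H}/\sigma$ you analyse --- it searches all of $\vec{Q}$, which for a DAG class $\vec{C}$ contains non-DAGs. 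The easy fix is to instantiate Lemma~\ref{lem:bottleneck_subs_setup} and Lemma~\ref{lem:main_hardness_homs} with the recursively enumerable subclass of quotients produced by Lemma~\ref{lem:quotient_construction}; your acyclicity argument for $\vec{H}/\sigma$ then applies to every pattern the reduction reconstructs.
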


\section{Counting Induced Subgraphs}
We will establish boundedness of the following invariant as sufficient and necessary condition for the fixed-parameter tractability of $\#\dirsubsprobd$.
\begin{definition}[$\scount{\vec{H}}$]
Given a digraph $\vec{H}$, we denote by $\scount{\vec{H}}$ the number of sources of the DAG $\Hsim$. 
\end{definition}

\subsection{Implications of Upper and Lower Bounds on $\scount{\vec{H}}$}

\begin{lemma}[Lower Bound]\label{lem:indsub_lower}
Let $F$ be an undirected graph with $k$ vertices and $\ell$ edges, and let $\vec{H}$ be a digraph with $\scount{\vec{H}}\geq k+\ell$. Then there exists an arc supergraph $\vec{H}'$ of $\vec{H}$ satisfying that $\dsplit{F}$ is an MR minor of $\vec{H}'$.
\end{lemma}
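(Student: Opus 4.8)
The plan is to use the $\scount{\vec H}\ge k+\ell$ sources of $\Hsim$ as a budget: $\ell$ of them will become the subdivision vertices of $\dsplit{F}$ and $k$ of them the original vertices of $F$. Write $V(F)=\{v_1,\dots,v_k\}$ and $E(F)=\{e_1,\dots,e_\ell\}$. Since $\scount{\vec H}=\scount{\Hsim}$, the digraph $\vec H$ has at least $k+\ell$ strongly connected components that are not reachable from any other component; choose $k+\ell$ distinct such components $S_{e_1},\dots,S_{e_\ell},S_{v_1},\dots,S_{v_k}$, fix representatives $r_j\in S_{e_j}$ ($j\in[\ell]$) and $q_i\in S_{v_i}$ ($i\in[k]$), and let $\vec H'$ be obtained from $\vec H$ by adding, for every edge $e_j=\{v_a,v_b\}\in E(F)$, the two arcs $(r_j,q_a)$ and $(r_j,q_b)$. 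Every added arc runs from one source component to another source component; since a source component is reachable from no other component, no added arc is a loop and no added arc can lie on a new directed cycle (a cycle through $(r_j,q_a)$ would need a path back to $S_{e_j}$ from $S_{v_a}$, which is impossible). Hence $\vec H'$ is an arc supergraph of $\vec H$ and has the same strongly connected components as $\vec H$.

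By Observation~\ref{obs:wminor_obs}(M2) the quotient $\vec D:=(\vec H')_{/\sim}$ is an MR minor of $\vec H'$, and by the previous paragraph $\vec D$ is the DAG whose vertices are the components of $\vec H$ and whose arc set is $E(\Hsim)\cup\{(S_{e_j},S_{v_a}),(S_{e_j},S_{v_b}) : e_j=\{v_a,v_b\}\in E(F)\}$. Put $x_j:=S_{e_j}$ and $y_i:=S_{v_i}$. Each $x_j$ is still a source of $\vec D$ (we added only out-arcs at $S_{e_j}$), and the in-neighbours of $y_i$ in $\vec D$ are exactly $\{x_j : v_i\in e_j\}$ (the component $S_{v_i}$ was a source of $\Hsim$, and the only new in-arcs we created at it come from the $S_{e_j}$). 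Consequently the subdigraph of $\vec D$ induced by $\{x_1,\dots,x_\ell,y_1,\dots,y_k\}$ is exactly $\dsplit{F}$, with $x_j$ as the subdivision vertex of $e_j$ and $y_i$ as $v_i$: there is no arc among the $x$'s, no arc into an $x_j$, no arc among the $y$'s, and an arc $x_j\to y_i$ only when $v_i\in e_j$.

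It remains to delete the remaining (\emph{bad}) vertices of $\vec D$ by MR-minor operations without disturbing the good vertices. I would use the cleanup procedure from the proof of Lemma~\ref{lem:quotient_construction}: while possible, (a) delete a sink that is not one of $x_1,\dots,x_\ell,y_1,\dots,y_k$, or (b) contract an arc $(s,t)$ out of a source $s\notin\{x_1,\dots,x_\ell,y_1,\dots,y_k\}$. Both operations strictly decrease the number of vertices, so the procedure terminates, and the claim is that it terminates with $\dsplit{F}$. The key invariant is that the good vertices are always present and never acquire an incident arc joining two of them: the $x_j$ never gain an in-arc (they start with none, and an arc into one could only be created by contracting out of an $x_j$, which rule (b) forbids), so the in-neighbour set of each $y_i$ stays $\{x_j : v_i\in e_j\}$ throughout; hence no bad source can point to a good vertex, and contracting a bad source cannot redirect an arc onto the good part. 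Meanwhile, since $\vec D$ is a finite DAG, every bad vertex lies strictly below some vertex, so it is eventually removed by (a) once its out-neighbours are gone, or the path witnessing one of its incoming arcs is shortened by (b); an isolated vertex $v_i$ of $F$ is handled automatically, since the corresponding $y_i$ never receives an in-arc and survives as an isolated vertex. This exhibits $\dsplit{F}$ as an MR minor of $\vec D$, hence of $\vec H'$. The main obstacle is the bookkeeping of how the contractions in step (b) redirect arcs, needed to verify this invariant and the termination claim; everything else — that $\vec H'$ is an arc supergraph and that $\vec D$ is an MR minor of it with the stated induced subdigraph — is routine.
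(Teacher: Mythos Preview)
Your construction of $\vec H'$, the verification that the strongly connected components are unchanged, and the passage to $\vec D=(\vec H')_{/\sim}$ are exactly what the paper does. The induced subgraph on the good vertices being $\dsplit{F}$ is also argued the same way.

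The only divergence is in the cleanup, where you borrow the two-rule procedure (delete bad sinks, contract out of bad sources) from Lemma~\ref{lem:quotient_construction}. This works, but it is overkill here, and your termination/invariant discussion is correspondingly more tangled than it needs to be. The paper observes a stronger structural fact that you implicitly use but do not state cleanly: in $\vec D$, \emph{no bad vertex has an out-arc to any good vertex}. Indeed, the good vertices $\{x_j,y_i\}$ were all sources of $\Hsim$, and the only arcs you added go between good vertices; hence nothing outside $\{x_j,y_i\}$ can reach any of them. Consequently, as long as a bad vertex remains, following out-arcs from it stays among bad vertices, and since $\vec D$ is a DAG there is always a bad vertex that is a sink of the whole graph. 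So iterated sink deletion alone removes every bad vertex, and rule~(b) is never needed. With this observation the ``bookkeeping of how the contractions in step~(b) redirect arcs'' that you flag as the main obstacle simply disappears.

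Your argument is correct as written (in particular, your reasoning does establish that only bad vertices ever get contracted and that the good part is untouched), but stating the invariant ``bad vertices have no out-arc to good vertices'' up front and dropping rule~(b) would make the proof both shorter and more transparent---and would match the paper's version.
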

\begin{proof}
We will first construct $\vec{H}'$. To this end, observe that $\dsplit{F}$ has precisely $k+\ell$ vertices, and we assume w.l.o.g.\ that the vertex set of $\dsplit{F}$ is $[k+\ell]$. Since $\scount{\vec{H}}\geq k+\ell$ there exists set of sources $S_1,\dots,S_{k+\ell}$ of $\Hsim$. We emphasize that a set of sources must always also be an independent set, since there cannot be arcs between the sources. Recall that, by definition of $\Hsim$, the $S_i$ are strongly connected components of $\vec{H}$. Now pick $s_i\in S_i$ arbitrarily for each $i\in[k+\ell]$.
We obtain the graph $\vec{H}'$ from $\vec{H}$ as follows: Whenever $(i,j)$ is an arc of $\dsplit{F}$, we add an arc $(s_i,s_j)$ to $\vec{H}$. Observe that we do not create loops in this construction since $\dsplit{F}$ does not contain loops. For this proof, it will also be convenient to consider the digraph ${\vec{G}}$ obtained from $\Hsim$ by adding an arc $(S_i,S_j)$ whenever $(i,j)$ is an arc of $\dsplit{F}$. We will see that $\vec{G}=\SCCquot{H'}$.

Observe that $\vec{G}$ is still acyclic: Assuming otherwise, there must be a (not necessarily simple) directed cycle in $\vec{G}$. Since $\Hsim$ is acyclic, we have that at least on arc of the cycle must be one of the freshly added arcs $(S_i,S_j)$. Additionally, there must be a directed path $P$ from $S_j$ to $S_i$ in $\vec{G}$. Consider two cases: If $P$ contains any vertex $V$ which is not a source of $\Hsim$, then we obtain a contradiction immediately, since it is not possible to reach any source from $V$ by a directed path --- recall that we only added arcs between sources in the construction of $\vec{G}$. Otherwise, all vertices of $P$ are sources. However, in this case we created a cycle in $\vec{G}$ only consisting of sources, and thus, this cycle must correspond to a cycle in $\dsplit{F}$ by construction of $\vec{G}$. Since $\dsplit{F}$ is acyclic by definition, we obtain the contradiction.

Next we claim that the compositions into strongly connected components of $\vec{H}$ and $\vec{H}'$ are the same, that is, the relation $\sim$ has the same equivalence classes in both $V(\vec{H})$ and $V(\vec{H}')$. Assume for contradiction that this is not the case. 
Since adding arcs can only merge strongly connected components, there must be vertices $x$ and $y$ which are not in the same strongly connected component of $\vec{H}$, but they are in the same strongly connected component in $\vec{H}'$, that is, there is a (not necessarily simple) directed cycle in $\vec{H}'$ containing both $x$ and $y$. Let $X\neq Y$ be the connected components of $\vec{H}$ containing $x$ and $y$, respectively. Now identify all vertices in this cycle that are in the same connected component of $\vec{H}$ and observe that this creates a cycle in $\vec{G}$ (note that we needed the assumption that $X\neq Y$ to make sure that the entire cycle does not collapse to a single vertex in $\vec{G}$). Since $\vec{G}$ is acyclic, we obtain the contradiction. 

As a consequence, we infer that $\vec{G}$ is indeed the graph $\SCCquot{H'}$. Using this fact, we are able to show that $\dsplit{F}$ is an MR minor of $\vec{H}'$: First, contract each strongly connected component of $\vec{H}'$ into a single vertex. By definition this yields precisely $\SCCquot{H'} ~= \vec{G}$. Next we iteratively delete sinks until only the vertices $S_1,\dots,S_{k+\ell}$ remain and claim that the resulting graph is $\dsplit{F}$ as desired. To see this, observe that there is no vertex in $V(\vec{G})\setminus\{S_1,\dots, S_{k+\ell}\}$ from which one can reach any of the $S_1,\dots,S_k$; this is true since the $S_i$ have been sources in $\Hsim$. Therefore, in combination with the fact that $\vec{G}$ is acyclic, we have that there must always be a sink outside of the $S_1,\dots,S_{k+\ell}$ as long as there are still vertices outside of the $S_1,\dots,S_{k+\ell}$ remaining. Note that the latter property is invariant under the deletion of sinks outside of $S_1,\dots,S_{k+\ell}$ --- for clarification we note that even former sources of $\Hsim$ not included in $\{S_1,\dots, S_{k+\ell}\}$ will be deleted at some point, since they become sinks if all of their descendants have been deleted in previous iterations. Hence, we can conclude that at the end of the process of iteratively deleting sinks, we obtain the (induced) subgraph of $\vec{G}$ only consisting of the $S_1,\dots,S_{k+\ell}$ which is equal to $\dsplit{F}$ (recall that we added an arc between $S_i$ and $S_j$ if and only if there is an arc $(i,j)$ in $\dsplit{F}$, and since the $S_1,\dots,S_{k+\ell}$ have been sources in $\Hsim$ there are no further arcs between them.)
\end{proof}

Next we invoke our reduction based on Dedekind's Theorem to the case of counting induced subgraphs:
\begin{lemma}\label{lem:bottleneck_indsubs_setup}
Let $\vec{C}$ be a recursively enumerable class of digraphs and let $\vec{A}$ be a class of arc supergraphs of digraphs in $\vec{C}$.
Then 
\[ \#\dirhomsprobd(\vec{A}) \fptred \#\dirindsubsprobd(\vec{C}) \,. \]
\end{lemma}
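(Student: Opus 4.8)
The plan is to mimic the structure of Lemma~\ref{lem:bottleneck_subs_setup}, but now using the transformation of $\#\indsubs{\vec H}{\star}$ into a linear combination of homomorphism counts from Lemma~\ref{lem:indsubs_transformation} in place of the one in Lemma~\ref{lem:subs_transformation}. First I would take an input instance $(\vec H', \vec G')$ of $\#\dirhomsprobd(\vec A)$, with $d$ the outdegree of $\vec G'$, and search (in time depending only on $\vec H'$) for a digraph $\vec H \in \vec C$ such that $\vec H'$ is an arc supergraph of $\vec H$. This search terminates because $\vec C$ is recursively enumerable and, for each candidate $\vec H$, checking whether $\vec H'$ is an arc supergraph of $\vec H$ is decidable; moreover only finitely many $\vec H$ of bounded size need to be tried since an arc supergraph has the same vertex set. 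Then I would invoke Lemma~\ref{lem:indsubs_transformation} to write
\[ \#\indsubs{\vec H}{\star} = \sum_{\vec F} \aindsub_{\vec H}(\vec F)\cdot \#\homs{\vec F}{\star}\,, \]
and crucially use property~\ref{bul:b3} of that lemma: since $\vec H'$ is an arc supergraph of $\vec H$, we have $\aindsub_{\vec H}(\vec H') \neq 0$.

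Next I would set $\iota := \aindsub_{\vec H}$ and apply Lemma~\ref{lem:complexity_monotonicity}. The oracle required by that lemma, namely $\vec G \mapsto \sum_{\vec F}\iota(\vec F)\cdot\#\homs{\vec F}{\vec G}$, is exactly $\vec G \mapsto \#\indsubs{\vec H}{\vec G}$, which we can answer using our oracle for $\#\dirindsubsprobd(\vec C)$ (the pattern $\vec H$ lies in $\vec C$, and every oracle query $\vec G$ produced by $\mathbb A$ has outdegree at most $f(|\iota|)\cdot d$, so its parameter $|\vec H| + d(\vec G)$ is bounded by a function of $|\vec H'|$ alone, since $|\vec H|$ and $|\iota|$ depend only on $\vec H'$). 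The algorithm $\mathbb A$ then returns the list of pairs $(\vec F, \#\homs{\vec F}{\vec G'})$ over all $\vec F$ with $\iota(\vec F)\neq 0$; since $\aindsub_{\vec H}(\vec H')\neq 0$, this list contains $(\vec H', \#\homs{\vec H'}{\vec G'})$, which we output. The total running time is $f(|\iota|)\cdot |\vec G'|^{O(1)}$, which is FPT in $|\vec H'|$, so this is a valid parameterised Turing reduction.

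I do not expect any serious obstacle here: the argument is essentially a verbatim transfer of the proof of Lemma~\ref{lem:bottleneck_subs_setup}, with quotients replaced by arc supergraphs. The only point requiring a small amount of care is verifying that the coefficient $\aindsub_{\vec H}(\vec H')$ is genuinely nonzero, which is precisely why we need condition~\ref{bul:b3} in Lemma~\ref{lem:indsubs_transformation} (as opposed to merely the support characterisation, which alone would not suffice since for subgraph counts the support characterisation already gives both directions, whereas for induced counts we separately needed~\ref{bul:b3}). A secondary, minor point is confirming that the search for $\vec H \in \vec C$ with $\vec H'$ an arc supergraph of $\vec H$ is finite and effective, which follows immediately because arc supergraphs preserve the vertex set, so $|V(\vec H)| = |V(\vec H')|$ and $E(\vec H) \subseteq E(\vec H')$, leaving only finitely many candidates up to isomorphism.
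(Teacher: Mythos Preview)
Your proposal is correct and follows essentially the same approach as the paper's own proof: search for $\vec H \in \vec C$ with $\vec H'$ an arc supergraph, invoke Lemma~\ref{lem:indsubs_transformation} and use condition~\ref{bul:b3} to guarantee $\aindsub_{\vec H}(\vec H') \neq 0$, then apply Lemma~\ref{lem:complexity_monotonicity} with $\iota = \aindsub_{\vec H}$, simulating its oracle via the $\#\dirindsubsprobd(\vec C)$ oracle. One tiny wording slip: the oracle-query parameter $|\vec H| + d(\vec G)$ is bounded by a function of the \emph{input parameter} $|\vec H'| + d$, not by a function of $|\vec H'|$ alone (the paper's proof has the same informal phrasing), but this does not affect correctness.
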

\begin{proof}
Let $\vec{H}'$ and $\vec{G}'$ be an input instance of $\#\dirhomsprobd(\vec{A})$, and let $d$ be the outdegree of $\vec{G}'$. Search for a graph $\vec{H}\in \vec{C}$ such that $\vec{H}'$ is an arc supergraph of $\vec{H}$ - note that this takes time only depending on $\vec{H}'$. By Lemma~\ref{lem:indsubs_transformation}, we have that
\[\#\indsubs{\vec{H}}{\star} = \sum_{\vec{F}} \aindsub_{\vec{H}}(\vec{F}) \cdot \#\homs{\vec{F}}{\star}\,.\]
Moreover, we have that $\aindsub_{\vec{H}}(\vec{H}')\neq 0$ since $\vec{H}'$ is an arc supergraph of $\vec{H}$ (see condition 3.\ in Lemma~\ref{lem:indsubs_transformation}). Let us set $\iota = \aindsub_{\vec{H}}$. This allows us to invoke Lemma~\ref{lem:complexity_monotonicity} since we can then simulate the oracle required by Lemma~\ref{lem:complexity_monotonicity} using our own oracle for $\#\dirindsubsprobd(\vec{C})$. The algorithm $\mathbb{A}$ in Lemma~\ref{lem:complexity_monotonicity} then returns all pairs $(\vec{F},\#\homs{\vec{F}}{\vec{G}'})$ with $\aindsub_{\vec{F}}\neq 0$; this includes $(\vec{H}',\#\homs{\vec{H}'}{\vec{G}'})$. All oracle queries posed by $\mathbb{A}$ have outdegree bounded by $f(|\iota|)\cdot d$, which guarantees that the parameter of each oracle call we forward to $\#\dirindsubsprobd(\vec{C})$ only depends on $\vec{H}'$ (recall that the parameter is $|\vec{H}'|+d$). Moreover, the total running time is fixed-parameter tractable, concluding the proof.
\end{proof}

Now, relying on our reduction chain based on Dedekind's Theorem and on our hardness result for $\#\dirhomsprobd$, we can prove the following intractability result.

\begin{lemma}\label{lem:indsubs_hard}
Let $\vec{C}$ be a recursively enumerable class of digraphs and assume that ETH holds. If $\scount{\vec{C}}$ is unbounded then $\#\dirindsubsprobd(\vec{C})$ is not fixed-parameter tractable.
\end{lemma}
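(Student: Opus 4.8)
The plan is to reuse, essentially verbatim, the architecture of the hardness proof for $\#\dirsubsprobd$ in the unbounded-independence-number case (Lemma~\ref{lem:subs_hard_independence}), with arc supergraphs taking the role of quotients and Lemma~\ref{lem:indsub_lower} taking the role of the induced-matching-gadget construction. First I would fix the target family of canonical DAGs: let $K$ be the class of all complete graphs and let $\dsplit{K}=\{\dsplit{F} : F\in K\}$ be the class of their directed splits. As already observed in the proof of Lemma~\ref{lem:subs_hard_independence}, each $\dsplit{F}$ is a canonical DAG with $\reduct{\dsplit{F}}=F$; since adaptive width and treewidth are equivalent for graphs and $\tw(K_m)=m-1$, the class $\dsplit{K}$ has unbounded adaptive width.

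Next I would introduce $\vec{A}$, the class of all arc supergraphs of digraphs in $\vec{C}$. Since $\vec{C}$ is recursively enumerable and every digraph has only finitely many, effectively computable, arc supergraphs, $\vec{A}$ is recursively enumerable as well. The claim to verify is that the class of canonical DAGs that are MR minors of digraphs in $\vec{A}$ has unbounded adaptive width. For this, fix any $m$; the digraph $\dsplit{K_m}$ has $m+\binom{m}{2}$ vertices, and since $\scount{\vec{C}}$ is unbounded there is a digraph $\vec{H}\in\vec{C}$ with $\scount{\vec{H}}\ge m+\binom{m}{2}$. Applying Lemma~\ref{lem:indsub_lower} with $F=K_m$ then yields an arc supergraph $\vec{H}'\in\vec{A}$ of $\vec{H}$ such that $\dsplit{K_m}$ is an MR minor of $\vec{H}'$. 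Hence every element of $\dsplit{K}$ is a canonical DAG that is an MR minor of some digraph in $\vec{A}$, so that class indeed has unbounded adaptive width.

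With these ingredients in place, the conclusion follows by chaining two earlier results. Applying Lemma~\ref{lem:main_hardness_homs} with $\vec{A}$ in place of $\vec{C}$ and $\dsplit{K}$ in place of $\vec{C}'$, and using the standing assumption that ETH holds, gives that $\#\dirhomsprobd(\vec{A})$ is not fixed-parameter tractable. Finally, Lemma~\ref{lem:bottleneck_indsubs_setup} supplies a parameterised Turing reduction $\#\dirhomsprobd(\vec{A}) \fptred \#\dirindsubsprobd(\vec{C})$, so $\#\dirindsubsprobd(\vec{C})$ is not fixed-parameter tractable, as required. I do not expect a genuine obstacle here: all the technical weight --- the MR-minor reduction (Lemma~\ref{lem:minor_reduction}), the colour-prescribed CSP reduction (Lemma~\ref{lem:coloured_CSP_reduction}), the Dedekind-based interpolation (Lemma~\ref{lem:complexity_monotonicity}), and the transformation of $\#\indsubs{\vec{H}}{\star}$ into a homomorphism basis (Lemma~\ref{lem:indsubs_transformation}) --- has already been discharged, so the only points demanding care are the routine checks that $\vec{A}$ remains recursively enumerable and that Lemma~\ref{lem:indsub_lower} is invoked with the correct bound $\scount{\vec{H}}\ge |V(\dsplit{K_m})|$.
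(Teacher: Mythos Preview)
Your proposal is correct and follows essentially the same route as the paper: both set up $\dsplit{K}$ as the target class of canonical DAGs with unbounded adaptive width, invoke Lemma~\ref{lem:indsub_lower} to realise each $\dsplit{K_m}$ as an MR minor of an arc supergraph of some $\vec{H}\in\vec{C}$, and then chain Lemma~\ref{lem:main_hardness_homs} with Lemma~\ref{lem:bottleneck_indsubs_setup}. Your version is in fact slightly more careful than the paper's, making explicit the recursive enumerability of $\vec{A}$ and the exact source-count bound $m+\binom{m}{2}$ needed in Lemma~\ref{lem:indsub_lower}.
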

\begin{proof}
The setup is similar to the proof of Lemma~\ref{lem:subs_hard_independence}:
Let $K$ be the family of all complete (undirected) graphs. The treewidth of $K$ is unbounded. Let furthermore $\dsplit{K}:= \{\dsplit{F}~|~F\in K\}$ be the set of all directed splits of complete graphs. Observe that $\dsplit{K}$ is a class of canonical DAGs, and observe further that $\reduct{\dsplit{K}}=K$. Since adaptive width and treewidth are equivalent for graphs, the adaptive width of $\dsplit{K}$ must be unbounded (recall that the adaptive width of a digraph is defined to be the adaptive width of its contour).

Finally, by Lemma~\ref{lem:indsub_lower}, and using that $\scount{\vec{C}}$ is unbounded, we obtain that the set of arc supergraphs of digraphs in $\vec{C}$ admits as MR minors the canonical DAGs in $\dsplit{K}$. Since the adaptive width of the latter is unbounded, we can conclude the proof by applying Lemma~\ref{lem:bottleneck_indsubs_setup} and Lemma~\ref{lem:main_hardness_homs}.
\end{proof}

\begin{lemma}[Upper Bound]\label{lem:indsub_upper}
Let $\vec{H}$ be a digraph with $\scount{\vec{H}}\leq c$, and let $\vec{H}'$ be a quotient of an arc supergraph of $\vec{H}$. Then $|E(\mathcal{R}(\vec{H}'))|\leq c$.
\end{lemma}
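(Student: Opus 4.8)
The plan is to track how the reachability hypergraph behaves under the two operations involved, namely taking an arc supergraph and then taking a quotient, and to show each of these can only control the number of hyperedges of the reachability hypergraph in terms of the source count. Recall that, by definition, $|E(\reach{\vec K})|$ equals the number of sources of $\SCCquot{K}$, i.e.\ $\scount{\vec K}$, for any digraph $\vec K$. So the statement to prove is equivalent to $\scount{\vec H'} \le \scount{\vec H}$ whenever $\vec H'$ is a quotient of an arc supergraph of $\vec H$. I would split this into two claims: (1) if $\vec H_1$ is an arc supergraph of $\vec H$, then $\scount{\vec H_1} \le \scount{\vec H}$; and (2) if $\vec H'$ is a quotient of $\vec H_1$, then $\scount{\vec H'} \le \scount{\vec H_1}$. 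Chaining these gives the result.

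For claim (1): adding arcs to $\vec H$ can only merge strongly connected components or create new reachability relations, never split a component or destroy reachability. Concretely, the equivalence classes of $\sim$ on $V(\vec H_1)$ are unions of classes of $\sim$ on $V(\vec H)$, so $\SCCquot{H_1}$ is obtained from $\SCCquot{H}$ by contracting some vertex subsets and possibly adding arcs; both operations are non-increasing on the number of sources (contraction of a vertex set can only remove sources or leave the count unchanged, and adding an arc $(X,Y)$ can only kill the source-status of $Y$). Hence $\scount{\vec H_1}\le\scount{\vec H}$. This is essentially the same bookkeeping already used in the proof of Lemma~\ref{lem:indsub_lower} and in Lemma~\ref{lem:main_technical_subs}, so it should go through cleanly.

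For claim (2): a quotient of $\vec H_1$ is obtained by contracting (not necessarily connected, not necessarily adjacent) vertex subsets of $\vec H_1$. It suffices to handle a single identification of two vertices $u,v$ and induct. As in the proof of Lemma~\ref{lem:main_technical_subs}, I would do a case analysis on whether $u,v$ lie in source components of $\SCCquot{H_1}$. If both are sources, identifying them merges two source components of the quotient DAG into one, strictly decreasing the source count (or leaving it unchanged if they were already the same component). If at most one is a source, the identification either leaves the source count unchanged or decreases it (identifying a source with a non-source, or contracting a path, can turn the merged vertex into a non-source but never creates a new source). In every case $\scount{\vec H'}\le\scount{\vec H_1}$. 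Translating back, $|E(\reach{\vec H'})| = \scount{\vec H'} \le \scount{\vec H_1} \le \scount{\vec H} \le c$.

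The main obstacle I anticipate is being careful that ``quotient'' here allows contracting arbitrary (possibly non-adjacent, possibly disconnected) vertex sets — this is the setting of Lemma~\ref{lem:subs_transformation}, where $\asub_{\vec H}$ is supported on all quotients — so the single-identification reduction and the accompanying case analysis must not secretly assume adjacency; fortunately the source-count argument never needs it. A minor point to get right is the corner case where $\vec H$ (or a quotient) has all vertices in sources, so that the contour is empty and $|E(\reach{\cdot})|$ is just the number of isolated-source vertices; the bound $\le c$ still holds trivially there. Beyond that the argument is routine monotonicity bookkeeping on the DAG of strongly connected components.
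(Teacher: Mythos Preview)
Your proposal is correct and follows essentially the same approach as the paper: both reduce the claim to showing that neither adding arcs nor identifying vertices can increase $\scount{\cdot}$, and then use the identity $|E(\reach{\vec K})|=\scount{\vec K}$. The paper simply asserts this monotonicity in one sentence without the case analysis you outline, so your version is a more detailed rendering of the same argument.
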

\begin{proof}
Observe that neither of the operations of adding arcs to or identifying vertices of a digraph $\vec{F}$ can increase the number of sources of $\SCCquot{F}$. Thus $\scount{\vec{H}'}\leq c$ as well. By definition of reachability hypergraphs, we can immediately conclude that $|E(\mathcal{R}(\vec{H}'))|\leq c$ since we create one hyperedge for each source of $\SCCquot{H'}$.
\end{proof}

We obtain the following algorithm.
\begin{theorem}\label{thm:indsubs_algo}
There is a computable function $f$ such that the following is true. Let $\vec{H}$ and $\vec{G}$ be digraphs, let $d$ be the maximum outdegree of $\vec{G}$, and let $r=\scount{\vec{H}}$. We can compute $\#\indsubs{\vec{H}}{\vec{G}}$ in time
\[f(|\vec{H}|,d)\cdot |\vec{G}|^{r+O(1)} \,. \]
Moreover, let $\vec{C}$ be a class of digraphs.
Then $\#\dirindsubsprobd(\vec{C})$ is fixed-parameter tractable if $\scount{\vec{C}}$ is bounded.
\end{theorem}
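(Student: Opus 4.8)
The plan is to follow the same three-step recipe as in the proof of Theorem~\ref{thm:main_algo_subs}, but with the induced-subgraph homomorphism expansion in place of the subgraph one, and with Lemma~\ref{lem:indsub_upper} supplying the needed bound on the complexity of the homomorphism terms. Concretely, given an instance $(\vec H,\vec G)$ with $d$ the maximum outdegree of $\vec G$ and $r=\scount{\vec H}$, I would first invoke Lemma~\ref{lem:indsubs_transformation} to write
\[
\#\indsubs{\vec H}{\vec G}=\sum_{\vec F}\aindsub_{\vec H}(\vec F)\cdot\#\homs{\vec F}{\vec G}\,,
\]
where $\aindsub_{\vec H}$ has finite, computable support and $\aindsub_{\vec H}(\vec F)\neq 0$ only if $\vec F$ is a quotient of an arc supergraph of $\vec H$. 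Both the support and the values of $\aindsub_{\vec H}$ depend only on $\vec H$, so they can be precomputed in time bounded by a computable function of $|\vec H|$; moreover every $\vec F$ in the support has $|V(\vec F)|\le|V(\vec H)|$ and hence $|\vec F|$ bounded by a function of $|\vec H|$, and the number of such $\vec F$ is likewise bounded in $|\vec H|$.

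The key step is bounding the cost of each term $\#\homs{\vec F}{\vec G}$. By Theorem~\ref{thm:homs_algo} this cost is $f'(|\vec F|,d)\cdot|\vec G|^{\mathsf{fhtw}(\reach{\vec F})+O(1)}$, so it suffices to show $\mathsf{fhtw}(\reach{\vec F})\le r$ for every $\vec F$ in the support. Since $\vec F$ is a quotient of an arc supergraph of $\vec H$ and $\scount{\vec H}=r$, Lemma~\ref{lem:indsub_upper} gives $|E(\reach{\vec F})|\le r$. For any hypergraph $\scH$ with $V(\scH)=\bigcup E(\scH)$ and $|E(\scH)|\le r$ one has $\rho^\ast(\scH)\le r$ (put weight $1$ on every edge), and the single-bag tree decomposition whose unique bag is $V(\scH)$ then witnesses $\mathsf{fhtw}(\scH)\le\rho^\ast(\scH)\le r$; applying this to $\scH=\reach{\vec F}$ (isolated vertices of $\vec F$ are harmless, as one may restrict to $\bigcup E(\reach{\vec F})$) yields the desired bound.

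Summing the finitely many terms, each computed in time $f'(g(|\vec H|),d)\cdot|\vec G|^{r+O(1)}$ for a suitable computable $g$, gives the overall running time $f(|\vec H|,d)\cdot|\vec G|^{r+O(1)}$. For the ``moreover'' part, if $\scount{\vec C}\le c<\infty$ then $r\le c$ for every $\vec H\in\vec C$, so the running time is $f(|\vec H|,d)\cdot|\vec G|^{c+O(1)}=f(|\vec H|,d)\cdot|\vec G|^{O(1)}$, i.e.\ $\#\dirindsubsprobd(\vec C)\in\ccFPT$. I do not expect any serious obstacle here: the argument is an assembly of Lemma~\ref{lem:indsubs_transformation}, Theorem~\ref{thm:homs_algo}, and Lemma~\ref{lem:indsub_upper}, and the only point requiring a moment's care is the elementary observation that a hypergraph with at most $r$ edges has fractional hypertreewidth at most $r$, which is what converts the bound on $|E(\reach{\vec F})|$ into the bound on the exponent of $|\vec G|$.
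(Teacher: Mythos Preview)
Your proposal is correct and follows essentially the same approach as the paper: expand via Lemma~\ref{lem:indsubs_transformation}, bound $|E(\reach{\vec F})|$ using Lemma~\ref{lem:indsub_upper}, deduce $\mathsf{fhtw}(\reach{\vec F})\le r$, and apply Theorem~\ref{thm:homs_algo} term by term. In fact you make explicit the step the paper only states as ``clearly'' (that $|E(\reach{\vec F})|\le r$ implies $\mathsf{fhtw}(\reach{\vec F})\le r$); as a minor remark, your caveat about isolated vertices is unnecessary since every vertex of $\vec F$ lies in some edge of $\reach{\vec F}$.
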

\begin{proof}
By Lemma~\ref{lem:indsubs_transformation}, we have
\begin{equation}
    \#\indsubs{\vec{H}}{\vec{G}} = \sum_{\vec{F}} \aindsub_{\vec{H}}(\vec{F}) \cdot \#\homs{\vec{F}}{\vec{G}}\,,
\end{equation}
such that the following conditions are satifsied:
\begin{enumerate}
    \item\label{bul:b12} ${\aindsub}_{\vec{H}}$ has finite support and only depends on $\vec{H}$.
    \item\label{bul:b22} If ${\aindsub}_{\vec{H}}(\vec{F})\neq 0$ then $\vec{F}$ is a quotient of an arc supergraph of $\vec{H}$.
\end{enumerate}
By Lemma~\ref{lem:indsub_upper}, the second condition implies that the only terms $\#\homs{\vec{F}}{\vec{G}}$ surviving with a non-zero coefficient satisfy $|E(\mathcal{R}(\vec{F}))|\leq r$. Clearly, this also implies that $\mathsf{fhtw}(\reach{\vec{F}}) \leq r$. Note further that the size of any quotient of any arc supergraph of $\vec{H}$ is bounded by $|\vec{H}|^2$. Hence, using the algorithm for counting homomorphisms in Theorem~\ref{thm:homs_algo} for each term $\#\homs{\vec{F}}{\vec{G}}$ with a non-zero coefficient we can evaluate the linear combination in time \[f(|\vec{H}|,d)\cdot |\vec{G}|^{r+O(1)}\] for some computable function $f$. This concludes the proof.
\end{proof}

\subsection{Proof of the Classification}
We are now able to combine our upper and lower bounds and prove a complete and explicit classification:

\begin{theorem}\label{thm:main_result_indsubs}
Let $\vec{C}$ be a recursively enumerable class of digraphs and assume that ETH holds. Then the problem $\#\dirindsubsprobd(\vec{C})$ is fixed-parameter tractable if and only if $\scount{\vec{C}}$ is bounded.
\end{theorem}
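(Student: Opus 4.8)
The plan is to assemble the two matching bounds proved in this section, exactly mirroring the proof of Theorem~\ref{thm:main_result_subs}. For the ``if'' direction, I would assume $\scount{\vec C}$ is bounded, say $\scount{\vec H}\le c$ for all $\vec H\in\vec C$, and invoke Theorem~\ref{thm:indsubs_algo} directly: it expresses $\#\indsubs{\vec H}{\vec G}$ as the linear combination $\sum_{\vec F}\aindsub_{\vec H}(\vec F)\cdot\#\homs{\vec F}{\vec G}$ from Lemma~\ref{lem:indsubs_transformation}, uses Lemma~\ref{lem:indsub_upper} to observe that every surviving $\vec F$ satisfies $|E(\reach{\vec F})|\le c$ and hence $\mathsf{fhtw}(\reach{\vec F})\le c$, and then evaluates each term with the homomorphism-counting algorithm of Theorem~\ref{thm:homs_algo} in time $f(|\vec H|,d)\cdot|\vec G|^{c+O(1)}$, which is fixed-parameter tractable.

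For the ``only if'' direction, I would assume $\scount{\vec C}$ is unbounded and show $\#\dirindsubsprobd(\vec C)\notin\ccFPT$ unless ETH fails; this is precisely Lemma~\ref{lem:indsubs_hard}. Its proof proceeds by taking $K$ to be the class of all complete graphs, so that $\dsplit K=\{\dsplit F:F\in K\}$ is a class of canonical DAGs with $\reduct{\dsplit F}=F$, hence of unbounded treewidth and thus unbounded adaptive width. Since $\scount{\vec C}$ is unbounded, Lemma~\ref{lem:indsub_lower} yields, for every $F\in K$, an arc supergraph of some $\vec H\in\vec C$ admitting $\dsplit F$ as an MR minor. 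The interpolation reduction of Lemma~\ref{lem:bottleneck_indsubs_setup} (built on the Dedekind-based Lemma~\ref{lem:complexity_monotonicity}, which is crucial because it does not inflate the host outdegree) then reduces $\#\dirhomsprobd$ over this class of arc supergraphs to $\#\dirindsubsprobd(\vec C)$, and Lemma~\ref{lem:main_hardness_homs} establishes that the former is not in $\ccFPT$ under ETH, since the associated canonical-DAG MR minors in $\dsplit K$ have unbounded adaptive width. Combining the two directions completes the classification.

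The genuinely hard work is not this final assembly but the machinery already in place: the Dedekind-interpolation reduction between induced-subgraph and homomorphism counts that preserves outdegree (Lemma~\ref{lem:complexity_monotonicity}), the monotonicity of $\#\dirhomsprobd$ under monotone reversible minors (Lemma~\ref{lem:minor_reduction}), and the chain of reductions from the constraint-satisfaction hardness of Chen et al.\ through coloured hypergraph homomorphisms to coloured directed homomorphisms (Lemma~\ref{lem:coloured_CSP_reduction}), all feeding Lemma~\ref{lem:main_hardness_homs}. Relative to the subgraph case, the only structural novelty is that $\scount{\vec H}$ — simply the number of sources of $\Hsim$ — plays the role of the fractional cover number, and the key combinatorial input (Lemma~\ref{lem:indsub_lower}) is that a pattern with many sources has an \emph{arc supergraph} admitting the directed split of a large clique as an MR minor; once that is established, the argument is a verbatim adaptation of the subgraph classification, so I do not expect any serious obstacle at the level of Theorem~\ref{thm:main_result_indsubs} itself.
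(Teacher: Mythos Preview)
Your proposal is correct and follows exactly the paper's own proof, which simply cites Theorem~\ref{thm:indsubs_algo} for the ``if'' direction and Lemma~\ref{lem:indsubs_hard} for the ``only if'' direction. Your additional unpacking of what those lemmas contain is accurate and matches the paper's machinery.
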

\begin{proof}
The ``if'' direction is Theorem~\ref{thm:indsubs_algo}, and the ``only if'' direction is Lemma~\ref{lem:indsubs_hard}.
\end{proof}

Finally, similarly to the case of counting subgraphs, our proofs readily classify also the cases of digraphs without loops and DAGs:
\begin{theorem}\label{thm:main_result_indsubs_self-loop-free}
Let $\vec{C}$ be a recursively enumerable class of digraphs without loops and assume that ETH holds. Then the problem $\#\dirindsubsprobd(\vec{C})$, restricted on host graphs without loops, is fixed-parameter tractable if and only if $\scount{\vec{C}}$ is bounded.
\end{theorem}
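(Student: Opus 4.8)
The plan is to observe that the entire apparatus behind Theorem~\ref{thm:main_result_indsubs} already lives inside the class of loop-free digraphs, so only a careful bookkeeping argument is required. For the ``if'' direction there is nothing to do: the algorithm of Theorem~\ref{thm:indsubs_algo} never uses loops, so for loop-free $\vec{H}$ with $r=\scount{\vec{H}}$ and loop-free host $\vec{G}$ it still computes $\#\indsubs{\vec{H}}{\vec{G}}$ in time $f(|\vec{H}|,d)\cdot|\vec{G}|^{r+O(1)}$; hence bounded $\scount{\vec{C}}$ implies that $\#\dirindsubsprobd(\vec{C})$ restricted to loop-free hosts is in $\ccFPT$.

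For the ``only if'' direction I would assume $\scount{\vec{C}}$ is unbounded and re-trace the proof of Lemma~\ref{lem:indsubs_hard}. That proof shows, via Lemma~\ref{lem:indsub_lower}, that the class $\vec{A}$ of arc supergraphs of digraphs in $\vec{C}$ admits as MR minors the canonical DAGs $\dsplit{F}$ for $F$ ranging over complete graphs, a class of unbounded adaptive width; Lemma~\ref{lem:main_hardness_homs} then yields $\#\dirhomsprobd(\vec{A})\notin\ccFPT$ under ETH, and Lemma~\ref{lem:bottleneck_indsubs_setup} transfers this to $\#\dirindsubsprobd(\vec{C})$. The task is to check that every pattern and every host appearing in this chain stays loop-free. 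The patterns $\dsplit{F}$ are canonical DAGs, hence loop-free, and the hosts built inside Lemma~\ref{lem:main_hardness_homs} --- i.e.\ in Lemmas~\ref{lem:cphoms_hard}, \ref{lem:cphom_to_cpdirhom} and~\ref{lem:coloured_CSP_reduction} --- are loop-free, indeed acyclic, by construction. Along the MR-minor chains every intermediate pattern is loop-free: $\vec{C}$ is loop-free by hypothesis, arc supergraphs of loop-free digraphs are loop-free by definition, and no MR-minor operation ever creates a loop (a contraction explicitly does not), so by Observation~\ref{obs:wminor_obs}(M1) the $\dsplit{F}$ are loop-free MR minors of loop-free digraphs. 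Consequently, when those operations are reverted via the algorithms $\mathbb{A}_1,\mathbb{A}_2,\mathbb{A}_3$ in the proof of Lemma~\ref{lem:minor_reduction}, the loop-reintroducing algorithm $\mathbb{A}_3$ is never invoked, while $\mathbb{A}_1$ and $\mathbb{A}_2$ add no loops to the host; colour removal (Lemma~\ref{lem:colour_removal}) only queries subgraphs of the input host; and in Lemma~\ref{lem:indsub_lower} the arc supergraph $\vec{H}'$ is explicitly loop-free whenever $\vec{H}$ is, since $\dsplit{F}$ has no loops.

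The only remaining point, and the one I expect to be the main obstacle, is the interpolation step, which is the sole place in the chain where a host is transformed nontrivially. In the proof of Lemma~\ref{lem:bottleneck_indsubs_setup}, which runs through Lemma~\ref{lem:complexity_monotonicity}, every oracle query to $\#\dirindsubsprobd(\vec{C})$ has the form $\#\indsubs{\vec{H}}{\vec{G}'\otimes\vec{H}''}$, where $\vec{G}'$ is the given loop-free host and $\vec{H}''$ depends only on the finitely supported coefficient function $\iota=\aindsub_{\vec{H}}$. By Definition~\ref{def:tensor}, a loop $((a,b),(a,b))$ in $\vec{G}'\otimes\vec{H}''$ would require $(a,a)\in E(\vec{G}')$, which is impossible since $\vec{G}'$ has no loops; hence every host queried during the interpolation is loop-free regardless of $\vec{H}''$, and in particular the loop-free version of Lemma~\ref{lem:complexity_monotonicity} (stated there explicitly) applies. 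Combining these observations, the hardness argument of Lemma~\ref{lem:indsubs_hard} goes through verbatim within the class of loop-free digraphs and loop-free hosts, so $\#\dirindsubsprobd(\vec{C})$ restricted to loop-free hosts is not fixed-parameter tractable unless ETH fails, completing the ``only if'' direction and the theorem.
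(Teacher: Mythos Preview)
Your proposal is correct and takes essentially the same approach as the paper: the paper simply remarks (after Theorem~\ref{thm:main_result_subs}) that the entire hardness chain starts from canonical DAGs, and that no intermediate step---including the interpolation via Lemma~\ref{lem:complexity_monotonicity}---introduces loops in the host, then states Theorem~\ref{thm:main_result_indsubs_self-loop-free} without further argument. You have spelled out exactly this bookkeeping in detail, correctly identifying the tensor-product step as the only place requiring a nontrivial check and handling it via the observation that $\vec{G}'\otimes\vec{H}''$ is loop-free whenever $\vec{G}'$ is.
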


\begin{theorem}\label{thm:main_result_indsubs_DAGs}
Let $\vec{C}$ be a recursively enumerable class of DAGs and assume that ETH holds. Then the problem $\#\dirindsubsprobd(\vec{C})$, restricted on acyclic host graphs, is fixed-parameter tractable if and only if $\scount{\vec{C}}$ is bounded.
\end{theorem}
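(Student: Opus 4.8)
The plan is to deduce both directions of Theorem~\ref{thm:main_result_indsubs_DAGs} from the unrestricted statements Theorem~\ref{thm:indsubs_algo} and Lemma~\ref{lem:indsubs_hard}, the only additional work being to certify that every host digraph arising in the relevant reductions can be taken to be acyclic. The ``if'' direction is essentially free: restricting to acyclic hosts only removes instances, so the algorithm of Theorem~\ref{thm:indsubs_algo} already solves $\#\dirindsubsprobd(\vec C)$ on acyclic hosts in time $f(|\vec H|,d)\cdot|\vec G|^{\scount{\vec H}+O(1)}$, which is an FPT running time whenever $\scount{\vec C}$ is bounded. Thus the content lies in the ``only if'' direction, where I would turn the informal remark preceding the theorem into a careful pass through the chain of reductions used in the proof of Lemma~\ref{lem:indsubs_hard}.

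Two elementary observations do almost all of the work. First, if $\vec G$ is a DAG and $\vec F$ contains a directed cycle, then $\homs{\vec F}{\vec G}=\emptyset$, since that cycle would map onto a closed directed walk in $\vec G$; dually, for a DAG pattern $\vec H$ every $\vec H$-coloured digraph $(\vec G,c)$ automatically has $\vec G$ acyclic, by applying the same argument to the homomorphism $c$. Second, the construction in the proof of Lemma~\ref{lem:indsub_lower} already establishes that $\SCCquot{H'}$ is acyclic, so when the pattern $\vec H\in\vec C$ is a DAG (hence $\Hsim=\vec H$ and all strongly connected components are singletons) the arc supergraph $\vec H'$ it produces coincides with $\SCCquot{H'}$ and is itself a DAG. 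Consequently, assuming $\scount{\vec C}=\infty$, Lemma~\ref{lem:indsub_lower} yields a recursively enumerable class $\vec A$ of \emph{acyclic} arc supergraphs of digraphs in $\vec C$ that admits the canonical DAGs $\dsplit{K}=\{\dsplit{F}:F\in K\}$ (with $K$ the class of complete graphs) as MR minors, and $\dsplit{K}$ has unbounded adaptive width because $\reduct{\dsplit{K}}=K$ and adaptive width equals treewidth on graphs.

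I would then replay the hardness chain of Lemma~\ref{lem:main_hardness_homs} with $\vec A$ and $\dsplit{K}$ in the roles of $\vec C$ and $\vec C'$, verifying acyclicity of hosts step by step. Intractability of $\#\cpdirhomsprobd(\dsplit{K})$ holds already with acyclic hosts: the host built in Lemma~\ref{lem:cphom_to_cpdirhom} is a canonical DAG (its only arcs run from the gadget vertices $x_{e,s}$ into $V(\mathcal G)$), and the earlier steps of Lemma~\ref{lem:coloured_CSP_reduction} build no host digraph at all. The reduction $\#\cpdirhomsprobd(\dsplit{K})\fptred\#\cpdirhomsprobd(\vec A)$ of Lemma~\ref{lem:minor_reduction} applies verbatim on acyclic hosts, since every intermediate pattern is a DAG (MR minors of DAGs are DAGs) and hence every host produced by $\mathbb A_1,\mathbb A_2,\mathbb A_3$ is acyclic by the first observation above; colour removal (Lemma~\ref{lem:colour_removal}) only queries subgraphs of the host, which remain acyclic. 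This gives that $\#\dirhomsprobd(\vec A)$ restricted to acyclic hosts is not FPT unless ETH fails. Finally, in Lemma~\ref{lem:bottleneck_indsubs_setup} I would take an acyclic-host instance $(\vec H',\vec G')$ of $\#\dirhomsprobd(\vec A)$, locate $\vec H\in\vec C$ with $\vec H'$ an arc supergraph of $\vec H$, and feed $\iota=\aindsub_{\vec H}$ to the interpolation of Lemma~\ref{lem:complexity_monotonicity}.

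The step I expect to be the main obstacle is precisely this last one, because $\aindsub_{\vec H}$ is in general supported on digraphs with cycles, so it is not literally an input to the DAG variant of Lemma~\ref{lem:complexity_monotonicity}. The resolution is the first observation once more: over an acyclic host $\vec G$ the term $\#\homs{\vec F}{\vec G}$ vanishes for every cyclic $\vec F$, so $\#\indsubs{\vec H}{\vec G}=\sum_{\vec F}\aindsub_{\vec H}(\vec F)\cdot\#\homs{\vec F}{\vec G}$ equals $\sum_{\vec F\text{ a DAG}}\aindsub_{\vec H}(\vec F)\cdot\#\homs{\vec F}{\vec G}$; that is, the oracle for $\#\indsubs{\vec H}{\cdot}$ restricted to acyclic hosts already realises exactly the linear form over DAGs needed by the DAG version of Lemma~\ref{lem:complexity_monotonicity} applied with $\iota$ replaced by its restriction to DAG-supported terms. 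That version makes only oracle queries of the form $\vec G'\otimes\vec H''$ with $\vec G'$ and $\vec H''$ both DAGs, hence DAGs; and since $\vec H'$ is an \emph{acyclic} arc supergraph of $\vec H$ we still have $\aindsub_{\vec H}(\vec H')\neq 0$ (condition~(3) of Lemma~\ref{lem:indsubs_transformation}), so the interpolation returns $\#\homs{\vec H'}{\vec G'}$. All oracle calls are thus to $\#\dirindsubsprobd(\vec C)$ on acyclic hosts, completing the reduction; together with the ``if'' direction this proves the theorem. Everything beyond this point is routine verification that the cited reductions never convert an acyclic host into a cyclic one.
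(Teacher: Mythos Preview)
Your approach is correct and mirrors the paper's: trace the hardness chain and verify that every oracle query lands on an acyclic host, using that an $\vec H$-coloured digraph is automatically a DAG whenever $\vec H$ is. Your handling of the subtlety with the support of $\aindsub_{\vec H}$ (restricting to DAG-supported terms and invoking the DAG variant of Lemma~\ref{lem:complexity_monotonicity}) is spot-on and in fact spells out what the paper only gestures at.

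One small slip: the parenthetical ``MR minors of DAGs are DAGs'' is false in general --- contracting the arc $(a,c)$ in the DAG with arcs $(a,b),(b,c),(a,c)$ creates a 2-cycle. This does not break your argument, for two independent reasons. First, you only ever query the oracle for $\#\cpdirhomsprobd(\vec A)$ on the \emph{final} host in the chain, which is $\vec H$-coloured with $\vec H\in\vec A$ a DAG, hence acyclic; intermediate hosts need not be. Second, in the specific MR-minor sequence produced by Lemma~\ref{lem:indsub_lower}, when $\vec H$ (and hence $\vec H'$) is already a DAG, the ``contract each strongly connected component'' step is vacuous and only sink deletions remain, so every intermediate pattern is in fact a DAG. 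Either observation repairs the justification; the rest goes through unchanged.
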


\bibliography{conference.bib}

\pagebreak

\appendix

\section{Proof of Dedekind's Theorem}\label{app:dedekind}

\begin{theorem}[Theorem~\ref{thm:dedekind}, restated]\label{thm:dedekind_restated}
Let $(\mathrm{G},\ast)$ be a semigroup. Let $(\varphi_i)_{i\in[k]}$ with $\varphi_i: \mathrm{G} \to \Q$ be pairwise distinct semigroup homomorphisms of  $(\mathrm{G},\ast)$ into $(\Q,\cdot)$, that is, $\varphi_i(g_1\ast g_2)= \varphi_i(g_1)\cdot \varphi_i(g_2)$ for all $i\in[k]$ and $g_1,g_2\in \mathrm{G}$. Let $\phi : G \to \Q$ be a function
\begin{equation}\label{eq:dedekind_restated}
    \phi : g \mapsto \sum_{i=1}^k a_i \cdot \varphi_i(g)\,, 
\end{equation}
where the $a_i$ are rational numbers.
Suppose furthermore that the following functions are computable:
\begin{enumerate}
    \item The operation $\ast$.
    \item The mapping $(i,g) \mapsto \varphi_i(g)$.
    \item A mapping $i\mapsto g_i$ such that $\varphi_i(g_i)\neq 0$.
    \item A mapping $(i,j) \mapsto g_{i,j}$ such that $\varphi_i(g_{i,j})\neq \varphi_j(g_{i,j})$ whenever $i\neq j$.
\end{enumerate}
Then there is a constant $B$ only depending on the $\varphi_i$ (and not on the $a_i$), and an algorithm  $\hat{\mathbb{A}}$ such that the following conditions are satisfied:
\begin{itemize}
    \item $\hat{\mathbb{A}}$ is equipped with oracle access to $\phi$.
    \item $\hat{\mathbb{A}}$ computes $a_1,\ldots,a_k$.
    \item Each oracle query $\hat{g}$ only depends on the $\varphi_i$ (and not on the $a_i$).
    \item The running time of $\hat{\mathbb{A}}$ is bounded by $O\left( B \cdot \sum_{i=1}^k \log a_i \right)$
\end{itemize}
\end{theorem}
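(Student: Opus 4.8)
The plan is to mimic the classical Artin/Dedekind argument for linear independence of characters, but to make it constructive by realizing each $\varphi_j$ as the \emph{unique surviving eigenfunction} of an explicitly built projection operator, and then expressing the value of that projection on $\phi$ as a short linear combination of oracle calls to $\phi$. First I would introduce, for every $h\in\mathrm{G}$, the shift operator $T_h$ acting on functions $f:\mathrm{G}\to\Q$ by $(T_hf)(x)=f(h\ast x)$. Since the $\varphi_i$ are semigroup homomorphisms into $(\Q,\cdot)$, we get $T_h\varphi_i=\varphi_i(h)\cdot\varphi_i$, so every $\varphi_i$ is an eigenfunction of every $T_h$ with eigenvalue $\varphi_i(h)$. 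In particular all operators of the form $T_h$ act diagonally on the subspace $W:=\mathrm{span}_\Q(\varphi_1,\dots,\varphi_k)$ and hence commute when restricted to $W$; and $\phi\in W$. Note also that $T_h\circ T_{h'}=T_{h'\ast h}$, so any composition of shifts is again a single shift.

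Next, fix $j\in[k]$. For each $i\neq j$ define
\[
P_i \;=\; \frac{1}{\varphi_j(g_{i,j})-\varphi_i(g_{i,j})}\bigl(T_{g_{i,j}}-\varphi_i(g_{i,j})\cdot\mathrm{Id}\bigr),
\]
which is well defined because $\varphi_i(g_{i,j})\neq\varphi_j(g_{i,j})$ by the defining property of $g_{i,j}$. A one-line computation using the eigenvalue relation gives $P_i\varphi_i=0$, $P_i\varphi_j=\varphi_j$, and $P_i\varphi_\ell=\lambda_{i,\ell}\varphi_\ell$ for $\ell\notin\{i,j\}$ with a scalar $\lambda_{i,\ell}=\alpha_i\varphi_\ell(g_{i,j})+\beta_i$ depending only on the $\varphi_i$. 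Setting $Q_j:=\prod_{i\neq j}P_i$ (the order is irrelevant since the $P_i$ are polynomials in the $T_{g_{i,j}}$, which commute on $W$), we obtain $Q_j\varphi_i=0$ for every $i\neq j$ and $Q_j\varphi_j=\varphi_j$. Therefore $Q_j\phi=\sum_i a_i\,Q_j\varphi_i=a_j\varphi_j$, and evaluating at $g_j$ yields
\[
a_j \;=\; \frac{(Q_j\phi)(g_j)}{\varphi_j(g_j)},
\]
which is legitimate since $\varphi_j(g_j)\neq0$.

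It remains to compute $(Q_j\phi)(g_j)$ from the oracle. Expanding the product defining $Q_j$ and using $T_h\circ T_{h'}=T_{h'\ast h}$, we get $Q_j=\sum_{S\subseteq\{i:i\neq j\}}c_S\,T_{h_S}$, where $h_S$ is the $\ast$-product of the elements $g_{i,j}$ for $i\in S$ (in a fixed order) and $c_S$ is a product of the computable quantities $\alpha_i,\beta_i$; this identity holds for the actual function $\phi$ because $\phi\in W$ and all involved operators commute on $W$. Hence $(Q_j\phi)(g_j)=\sum_S c_S\cdot\phi(h_S\ast g_j)$, a combination of at most $2^{k-1}$ oracle queries, each at an element built only from the $g_{i,j}$ and $g_j$, hence depending only on the $\varphi_i$ and not on the $a_i$. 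Running over $j\in[k]$ recovers all $a_j$. For the running time: the set of query elements and all coefficients $c_S$ depend only on $k$ and the $\varphi_i$; each returned value $\phi(h_S\ast g_j)=\sum_i a_i\varphi_i(h_S\ast g_j)$ has bit length $O\!\bigl(\sum_i\log a_i\bigr)$ up to a $\varphi_i$-dependent additive constant, and the same holds for every intermediate rational, so each arithmetic step costs $O\!\bigl(\sum_i\log a_i+B'\bigr)$ for a constant $B'$ depending only on the $\varphi_i$; folding the $\mathrm{poly}(k,2^k)$ number of such steps and $B'$ into one constant $B$ gives the claimed bound $O\!\bigl(B\cdot\sum_{i=1}^k\log a_i\bigr)$.

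The eigenfunction bookkeeping of the middle step is routine. The point that needs genuine care is the justification that the operator expansion used to turn $(Q_j\phi)(g_j)$ into oracle calls is valid for the concrete function $\phi$ rather than merely for its class in an abstract eigenspace — this is precisely what the observation "$T_h$'s commute on $W$ and $\phi\in W$" provides — and, secondarily, the verification that $B$ can be chosen independent of the $a_i$, which rests on the fact that every query element and every expansion coefficient is a function of the $\varphi_i$ alone. For the restricted settings of loopless digraphs and DAGs one simply checks that the semigroups $(\vec{\mathcal U},\otimes)$ and $(\vec{\mathcal D},\otimes)$ and the distinguishers from Lemma~\ref{lem:distinct_hom} supply the computable data (1)--(4), so the same argument applies verbatim.
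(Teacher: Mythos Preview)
Your argument is correct and is a genuinely different proof from the paper's. The paper proceeds by recursion on $k$: using the distinguisher $g_{1,k}$ it forms the combination $\phi(g_{1,k}\ast g)-\varphi_k(g_{1,k})\phi(g)$, which is a linear combination of $\varphi_1,\dots,\varphi_{k-1}$ only; recursing recovers $a_1$, after which the oracle $g\mapsto\phi(g)-a_1\varphi_1(g)$ is a combination of $\varphi_2,\dots,\varphi_k$, and a second recursion recovers $a_2,\dots,a_k$. Your approach is instead a direct Lagrange-style projection: you build, for each $j$, an operator $Q_j$ that is a product of one-step annihilators $P_i$ and satisfies $Q_j\varphi_i=\delta_{ij}\varphi_j$, then read off $a_j$ from $(Q_j\phi)(g_j)$. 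Both arguments ultimately exploit that the shifts $T_h$ diagonalize the $\varphi_i$; the paper's version is the ``Gaussian elimination'' presentation while yours is the ``interpolation polynomial'' presentation. Your route has the advantage of making explicit, in closed form, the full set of query elements $h_S\ast g_j$ and coefficients $c_S$ up front, which makes the claims ``each query depends only on the $\varphi_i$'' and ``$B$ is independent of the $a_i$'' immediate. The paper's recursive version is slightly more economical in that it avoids the $2^{k-1}$-term expansion, but both costs are absorbed into $B$ anyway.

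Two minor points you should tighten. First, when you write $Q_j=\sum_S c_S\,T_{h_S}$, the term for $S=\emptyset$ is $c_\emptyset\cdot\mathrm{Id}$, and a semigroup need not have an identity element; just treat that term separately as $c_\emptyset\,\phi(g_j)$ rather than writing it as $T_{h_\emptyset}$. Second, your remark that ``the order is irrelevant since the $P_i$ commute on $W$'' is true but not actually needed for the expansion step: once you \emph{fix} an order on $[k]\setminus\{j\}$, the identity $Q_j=\sum_S c_S\,T_{h_S}+c_\emptyset\,\mathrm{Id}$ holds as an equality of operators on \emph{all} functions $\mathrm{G}\to\Q$, simply because $T_h\circ T_{h'}=T_{h'\ast h}$ and distributivity; commutativity on $W$ only tells you that a different order would give the same action on $\phi$. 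Finally, the closing sentence about loopless digraphs and DAGs belongs to the application (Lemma~\ref{lem:complexity_monotonicity}), not to the abstract semigroup statement here.
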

\begin{proof}
Let $g_i$ and $g_{i,j}$ be as in 3.\ and 4.\ in the statement of the theorem. 
The algorithm $\hat{\mathbb{A}}$ will perform recursion over $k$: If $k=1$, then we just output
\[\varphi_1(g_1)^{-1} \cdot \phi(g_1) = a_1\,.\]
If $k> 1$, we consider the element $g_{1,k}$ and observe that for all $g\in \mathrm{G}$ we have
\begin{equation}\label{eq:dedeproof1}
    \sum_{i=1}^k a_i \cdot \varphi_i(g_{1,k}\ast g) - \varphi_k(g_{1,k})\cdot \sum_{i=1}^k a_i \cdot \varphi_i(g) = \sum_{i=1}^{k-1} a_i \cdot (\varphi_i(g_{1,k})- \varphi_k(g_{1,k})) \cdot \varphi_i(g)\,. 
\end{equation}
Now set $\hat{a}_i:= a_i \cdot (\varphi_i(g_{1,k})- \varphi_k(g_{1,k}))$ for all $0<i<k$, and observe that \eqref{eq:dedeproof1} enables us to use our oracle to simulate an oracle for
\[g \mapsto \sum_{i=1}^{k-1} \hat{a}_i \cdot \varphi_i(g)\,.\]
By recursion, we can thus obtain the value $\hat{a}_1$. Since $\varphi_1(g_{1,k})\neq \varphi_k(g_{1,k})$, we are able to compute $a_1 = {\hat{a}_1}\cdot (\varphi_1(g_{1,k})- \varphi_k(g_{1,k}))^{-1}$. Knowing $a_1$, we can use our oracle to simulate an oracle for
\[g \mapsto \sum_{i=1}^k a_i \cdot \varphi_i(g) - a_1\cdot \varphi_1(g) = \sum_{i=2}^{k} a_i \cdot \varphi_i(g) \,.\]
Thus we can go into recursion again and compute $a_2,\dots,a_k$.

Note that all oracle queries only depend on the $\varphi_i$, and the same holds true for the recursion depth, and thus the number of arithmetic operations. This yields the desired running time; we emphasize that we need the factor of $\sum_{i=1}^{k} \log a_i$ in the running time to perform the arithmetic operations.
\end{proof}

\section{Classifications for Unbounded Outdegrees}\label{app:unbounded_degree}
In this final section we quickly describe how the classifications for counting homomorphisms, subgraphs and induced subgraphs are easy consequences of the works of Dalmau and Jonsson~\cite{DalmauJ04}, and of Curticapean, Dell and Marx~\cite{CurticapeanDM17}, using our interpolation method via Dedekind's Theorem.

Recall that, given a class of digraphs $\vec{C}$, the problems $\#\dirhomsprob(\vec{C})$, $\#\dirsubsprobd(\vec{C})$, and $\#\dirindsubsprob(\vec{C})$ ask, respectively, given as input a pair $\vec{H}\in \vec{C}$ and $\vec{G}$, to compute $\#\homs{\vec{H}}{\vec{C}}$, $\#\subs{\vec{H}}{\vec{C}}$, and $\#\indsubs{\vec{H}}{\vec{C}}$. The crucial difference to the problems considered so far is that the parameter is just $|\vec{H}|$, rather than $|\vec{H}|+d(\vec{G})$, that is, we do not assume anymore that the host digraph has small outdegree.

Dalmau and Jonsson~\cite{DalmauJ04} proved their classification for counting homomorphisms not only for graphs, but in the more general setting of bounded arity relational structures. Since digraphs are precisely the relational structures over the signature containing one binary relation symbol, we can apply their main result and obtain:
\begin{theorem}\label{thm:homsdicho_unrestricted}
Let $\vec{C}$ be a recursively enumerable class of digraphs and assume ETH holds. Then $\#\dirhomsprob(\vec{C})$ is fixed-parameter tractable if and only if $\vec{C}$ has bounded treewidth.
\end{theorem}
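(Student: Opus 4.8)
The plan is to read off both directions of the dichotomy from the counting-homomorphism classification of Dalmau and Jonsson~\cite{DalmauJ04}, once it is spelled out that digraphs are a particular kind of bounded-arity relational structure. A digraph $\vec H$ is exactly the relational structure $\mathcal A_{\vec H} = (V(\vec H), E(\vec H))$ over the signature with a single binary relation symbol (a loop $(u,u)$ being just the tuple $(u,u)$ in the relation), and a map $\varphi$ is a digraph homomorphism $\vec H \to \vec G$ iff it is a structure homomorphism $\mathcal A_{\vec H} \to \mathcal A_{\vec G}$, so $\#\homs{\vec H}{\vec G}$ equals the number of structure homomorphisms. Moreover, the hypergraph of $\mathcal A_{\vec H}$ in the sense of Section~\ref{sec:rel_structs_and_CSP} has vertex set $V(\vec H)$ and a hyperedge $\{u,v\}$ for every arc $(u,v)$ (and a singleton $\{u\}$ for every loop); it is therefore the underlying undirected multigraph of $\vec H$, and its treewidth equals the treewidth of $\vec H$. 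Hence ``$\vec C$ has bounded treewidth'' is literally the statement that the class of structures $\{\mathcal A_{\vec H} : \vec H \in \vec C\}$ has bounded treewidth.

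For the ``if'' direction (which needs neither ETH nor recursive enumerability), assume $\vec C$ has treewidth at most $t$. I would invoke the algorithmic half of~\cite{DalmauJ04}: counting homomorphisms from a bounded-arity structure of treewidth $\le t$ into an arbitrary structure can be done in time $\mathrm{poly}(|\mathcal A_{\vec H}|) \cdot |\mathcal A_{\vec G}|^{O(t)}$, hence in $\ccFPT$ with parameter $|\vec H|$. Equivalently, one may reduce to $\#\textsc{CSP}$ and apply Theorem~\ref{thm:csp_algo}: after splitting off the finitely many isolated vertices of $\vec H$---which contribute a multiplicative factor $|V(\vec G)|^{c}$ computable by repeated squaring---the remaining structure has fractional hypertreewidth $\le t+1$, since every vertex of a width-$\le t$ bag lies in some incident edge, so Theorem~\ref{thm:csp_algo} runs in time $f(|\vec H|) \cdot |\vec G|^{O(t)}$.

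For the ``only if'' direction, suppose $\vec C$ has unbounded treewidth. Since $\vec C$ is recursively enumerable and digraphs are structures of bounded arity, the hardness half of~\cite{DalmauJ04} gives $\#\textsc{Clique} \fptred \#\dirhomsprob(\vec C)$, i.e.\ $\#\dirhomsprob(\vec C)$ is $\#\W{1}$-hard. By the theorem of Chen et al.~\cite{Chenetal05,Chenetal06}, $\#\textsc{Clique} \notin \ccFPT$ unless ETH fails; and since parameterised Turing reductions preserve membership in $\ccFPT$, the same holds for $\#\dirhomsprob(\vec C)$. Combined with the ``if'' direction, this yields the claimed dichotomy.

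The only point that requires genuine care---and it is a bookkeeping point rather than a mathematical obstacle---is the correspondence of the first paragraph: one must verify that the ``treewidth of a digraph'' intended in the statement (treewidth of the underlying undirected graph) is precisely the parameter controlled by~\cite{DalmauJ04} (treewidth of the Gaifman graph, equivalently of the hypergraph, of the structure), and that loops and isolated vertices introduce no mismatch. Everything else is a direct citation or a routine use of machinery already developed in the paper; in particular, the hardness direction could alternatively be re-derived from scratch through Lemma~\ref{lem:complexity_monotonicity} and Lemma~\ref{lem:main_hardness_homs}, but appealing to~\cite{DalmauJ04} is cleaner.
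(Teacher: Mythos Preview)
Your proposal is correct and takes essentially the same approach as the paper: both observe that digraphs are precisely the relational structures over a signature with a single binary relation symbol, so that the Dalmau--Jonsson classification for bounded-arity structures applies verbatim. The paper gives no further detail beyond this one-line observation, whereas you spell out the treewidth correspondence and the handling of loops/isolated vertices explicitly, which is harmless extra care.
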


Using Theorem~\ref{thm:homsdicho_unrestricted} as the starting point, and relying on the transformation of subgraph and induced subgraph counts as a linear combination of homomorphism counts (see Lemma~\ref{lem:subs_transformation} and Lemma~\ref{lem:indsubs_transformation}), we can mimic the proof of the classification theorems in the undirected setting due to Curticapean, Dell and Marx~\cite{CurticapeanDM17} almost verbatim. The only difference is the application of Dedekind's Theorem (Theorem~\ref{thm:dedekind}) as an interpolation method for reducing the computation of a linear combination of directed homomorphism counts from the computation of its individual terms (Lemma~\ref{lem:complexity_monotonicity}). This yields the following results; for the first one, we define the vertex-cover number of a digraph as the vertex-cover number of its underlying undirected graph.

\begin{theorem}\label{thm:subsdicho_unrestricted}
Let $\vec{C}$ be a recursively enumerable class of digraphs and assume ETH holds. Then $\#\dirsubsprob(\vec{C})$ is fixed-parameter tractable if and only if $\vec{C}$ has bounded vertex-cover number.
\end{theorem}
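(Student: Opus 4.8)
## Proof Proposal for Theorem~\ref{thm:subsdicho_unrestricted}

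The plan is to derive the dichotomy for $\#\dirsubsprob(\vec{C})$ from the homomorphism dichotomy (Theorem~\ref{thm:homsdicho_unrestricted}) by exactly the Curticapean--Dell--Marx strategy, using the directed homomorphism basis (Lemma~\ref{lem:subs_transformation}) together with our interpolation lemma (Lemma~\ref{lem:complexity_monotonicity}) as the ``complexity monotonicity'' tool. The first observation is that Lemma~\ref{lem:complexity_monotonicity} also comes in a version \emph{without} the outdegree bound --- indeed, the reduction only ever queries tensor products $\vec{G}'\otimes\vec{H}$ where $\vec{H}$ depends only on $\iota$, so in the unbounded-outdegree setting this is just a polynomial-time Turing reduction with parameter depending only on $|\iota|$. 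Hence, exactly as in Section~\ref{sec:hom_basis}, $\#\dirsubsprob(\vec{C})$ is FPT-interreducible with $\#\dirhomsprob(\vec{Q})$ where $\vec{Q}$ is the class of all quotients of digraphs in $\vec{C}$: one direction sums finitely many homomorphism counts via Lemma~\ref{lem:subs_transformation}, the other applies Lemma~\ref{lem:complexity_monotonicity} with $\iota=\asub_{\vec{H}}$ to extract every $\#\homs{\vec{F}}{\vec{G}'}$ with $\vec{F}$ a quotient of $\vec{H}$. By Theorem~\ref{thm:homsdicho_unrestricted}, this composite problem is FPT if and only if $\vec{Q}$ has bounded treewidth, and not FPT otherwise (under ETH).

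It therefore remains to show the purely combinatorial equivalence: \emph{$\vec{C}$ has bounded vertex-cover number if and only if the class $\vec{Q}$ of all quotients of digraphs in $\vec{C}$ has bounded treewidth.} For the forward direction, note contracting vertices can only decrease the vertex-cover number of the underlying undirected graph, and treewidth is at most the vertex-cover number (a vertex cover of size $c$ yields a tree decomposition with bags of size $c+1$, ignoring orientations); so if $\vc(\vec{C})\le c$ then every quotient has treewidth $\le c$. For the converse, suppose $\vc(\vec{C})$ is unbounded. By the classical fact (essentially König-type / matching arguments; cf.\ the undirected case in~\cite{CurticapeanM14,CurticapeanDM17}) that graphs of large vertex-cover number contain either a large matching or a large independent set of high-degree vertices, one shows that for every $k$ some $\vec{H}\in\vec{C}$ admits a quotient whose underlying undirected graph contains $K_k$ as a minor --- or more directly, that $\vec{C}$ admits quotients of unbounded treewidth. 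The cleanest route is to mirror the undirected argument verbatim at the level of the underlying undirected graphs: unbounded $\vc$ of $\vec{C}$ means the underlying undirected graphs have unbounded $\vc$, hence (by the undirected statement) admit undirected quotients of unbounded treewidth, and each such undirected quotient lifts to a directed quotient of $\vec{H}$ with the same underlying graph (choosing a compatible partition of $V(\vec{H})$), which thus also has unbounded treewidth.

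I expect the main obstacle to be the converse combinatorial direction, specifically the bookkeeping that an undirected quotient of the underlying graph of $\vec{H}$ genuinely lifts to a digraph quotient of $\vec{H}$: contracting a block that is not strongly connected is allowed for quotients (Section~\ref{sec:prelims} permits contracting arbitrary, not necessarily connected, vertex subsets), so there is no real obstruction, but one must check that the resulting digraph's underlying undirected graph is exactly the undirected quotient (it is, since an arc between blocks exists iff an edge between the blocks existed). A secondary subtlety is that Theorem~\ref{thm:homsdicho_unrestricted} is stated for $\#\dirhomsprob$ of a single class, whereas we need it for the class $\vec{Q}$ of all quotients --- but $\vec{Q}$ is recursively enumerable when $\vec{C}$ is, and bounded treewidth of $\vec{Q}$ is precisely the dichotomy criterion, so the theorem applies directly. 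Everything else --- the interpolation, the finiteness of the basis, the recursive-enumerability preservation --- is routine and inherited from the machinery already developed in Sections~\ref{sec:hom_basis} and the Dedekind section. Hence the theorem follows by chaining the interreducibility with the combinatorial equivalence and Theorem~\ref{thm:homsdicho_unrestricted}.
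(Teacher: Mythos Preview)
Your proposal is correct and follows essentially the same route the paper sketches: the paper simply says to mimic the Curticapean--Dell--Marx argument verbatim, using Lemma~\ref{lem:subs_transformation} for the basis change and Lemma~\ref{lem:complexity_monotonicity} (via Dedekind) for the interpolation, then invoke Theorem~\ref{thm:homsdicho_unrestricted}. You have in fact supplied more detail than the paper does, in particular the combinatorial equivalence between bounded vertex-cover number of $\vec{C}$ and bounded treewidth of its quotients, and your observation that a partition of $V(\vec{H})$ yields a directed quotient whose underlying undirected graph is the undirected quotient is exactly what is needed to transfer the undirected argument.
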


\begin{theorem}\label{thm:indsubsdicho_unrestricted}
Let $\vec{C}$ be a recursively enumerable class of digraphs and assume ETH holds. Then $\#\dirindsubsprob(\vec{C})$ is fixed-parameter tractable if and only if $\vec{C}$ is finite.
\end{theorem}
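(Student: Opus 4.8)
The plan is to follow the proof of the undirected classification of Chen, Thurley and Weyer~\cite{ChenTW08} (see also~\cite{CurticapeanDM17}), substituting our Dedekind-based interpolation (Lemma~\ref{lem:complexity_monotonicity}) and the directed homomorphism dichotomy (Theorem~\ref{thm:homsdicho_unrestricted}) for their undirected counterparts. The ``if'' direction is immediate: if $\vec{C}$ is finite then $k:=\max_{\vec{H}\in\vec{C}}|V(\vec{H})|$ is a constant, so brute-forcing over all $|V(\vec{H})|$-subsets of $V(\vec{G})$ and testing each for inducing a copy of $\vec{H}$ solves $\#\dirindsubsprob(\vec{C})$ in time $|\vec{G}|^{k+O(1)}$, which is polynomial and hence fpt.

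For the ``only if'' direction, suppose $\vec{C}$ is infinite. First I would observe that, since there are only finitely many digraphs on any fixed vertex set, the set $N:=\{|V(\vec{H})|:\vec{H}\in\vec{C}\}$ is infinite, and it is recursively enumerable since $\vec{C}$ is. The key structural observation is that for every digraph $\vec{H}$ on $n$ vertices the complete digraph $\overrightarrow{K_n}$ on $n$ vertices, with arc set $\{(u,v):u,v\in[n]\}$ (thus including every loop and both directions of every pair), is an arc supergraph of $\vec{H}$; by condition~\ref{bul:b3} of Lemma~\ref{lem:indsubs_transformation} (together with the remark on loops following it) the coefficient $\aindsub_{\vec{H}}(\overrightarrow{K_n})$ in the expansion $\#\indsubs{\vec{H}}{\star}=\sum_{\vec{F}}\aindsub_{\vec{H}}(\vec{F})\cdot\#\homs{\vec{F}}{\star}$ is therefore nonzero. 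Setting $\vec{D}:=\{\overrightarrow{K_n}:n\in N\}$, I would then prove $\#\dirhomsprob(\vec{D})\fptred\#\dirindsubsprob(\vec{C})$ as follows: given $(\overrightarrow{K_n},\vec{G}')$, enumerate $\vec{C}$ until a digraph $\vec{H}$ on $n$ vertices is found (in time bounded in $n$), put $\iota:=\aindsub_{\vec{H}}$, and invoke Lemma~\ref{lem:complexity_monotonicity}; its required oracle for $\vec{G}\mapsto\sum_{\vec{F}}\iota(\vec{F})\cdot\#\homs{\vec{F}}{\vec{G}}=\#\indsubs{\vec{H}}{\vec{G}}$ is simulated by the $\#\dirindsubsprob(\vec{C})$ oracle, each such query carrying parameter $|\vec{H}|$, a function of $n$. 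The returned list then contains $\#\homs{\overrightarrow{K_n}}{\vec{G}'}$, all within fpt time, and the outdegree bookkeeping of Lemma~\ref{lem:complexity_monotonicity} is simply irrelevant here since the present parameterisation ignores $d(\vec{G})$.

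To finish, I would use that $\overrightarrow{K_n}$ has treewidth $n-1$ — its underlying undirected graph is $K_n$ and loops do not change treewidth — so $\vec{D}$ has unbounded treewidth because $N$ is infinite; Theorem~\ref{thm:homsdicho_unrestricted} then yields $\#\dirhomsprob(\vec{D})\notin\ccFPT$ unless ETH fails, which transfers through the reduction above to $\#\dirindsubsprob(\vec{C})$. I do not expect any genuine obstacle here: the argument is essentially an assembly of results already established in the excerpt, and the only points needing a line of care are (i) that $\overrightarrow{K_n}$ receives a nonzero coefficient even when $\vec{H}$ contains loops, which is exactly condition~\ref{bul:b3} of Lemma~\ref{lem:indsubs_transformation} read with the remark on loops, and (ii) that the reduction is parameter-preserving, i.e.\ that $|\iota|$ depends only on $\vec{H}$ and hence only on $n$.
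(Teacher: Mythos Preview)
Your proposal is correct and follows essentially the same route the paper sketches: express $\#\indsubs{\vec H}{\star}$ in the homomorphism basis via Lemma~\ref{lem:indsubs_transformation}, use the Dedekind-based interpolation of Lemma~\ref{lem:complexity_monotonicity} to extract the term for the complete digraph on $|V(\vec H)|$ vertices (which survives by condition~\ref{bul:b3}), and conclude via the unbounded-treewidth case of Theorem~\ref{thm:homsdicho_unrestricted}. Your handling of the two minor points---loops in the arc supergraph and the parameter bookkeeping---is accurate.
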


\end{document}